\title{Deepening the  (Parameterized) Complexity Analysis of Incremental Stable Matching Problems} 
\author{Niclas Boehmer}
\author{Klaus Heeger}
\author{Rolf Niedermeier}
\affil{\small
  Technische Universit\"at Berlin, Faculty~IV, Algorithmics and Computational 
  Complexity\protect\\
  \{niclas.boehmer,heeger\}@tu-berlin.de}
\date{\today}
\newlength{\RoundedBoxWidth}
\newsavebox{\GrayRoundedBox}
\newenvironment{GrayBox}[1]%
   {\setlength{\RoundedBoxWidth}{.93\columnwidth}
    \def\boxheading{#1}
    \begin{lrbox}{\GrayRoundedBox}
       \begin{minipage}{\RoundedBoxWidth}}%
   {   \end{minipage}
    \end{lrbox}
    \begin{center}
    \begin{tikzpicture}%
       \node(Text)[draw=black!20,fill=white,rounded corners,inner sep=2ex,text width=\RoundedBoxWidth]
             {\usebox{\GrayRoundedBox}};
        \coordinate(x) at (current bounding box.north west);
        \node [draw=white,rectangle,inner sep=3pt,anchor=north west,fill=white]
        at ($(x)+(6pt,.75em)$) {\boxheading};
    \end{tikzpicture}
    \end{center}}
\newenvironment{defproblemx}[1]{\noindent\ignorespaces%
                                \FrameSep=6pt%
                                \parindent=0pt%
                \begin{GrayBox}{#1}%
                \begin{tabular*}{\columnwidth}{!{\extracolsep{\fill}}@{\hspace{.1em}}
                 >{\itshape} p{1.4cm} p{0.88\columnwidth} @{}}%
            }{
                 \vspace*{-1em}
                 \end{tabular*}%
                \end{GrayBox}%
                \ignorespacesafterend
            }
\newcommand{\mytodo}[2]{\xspace}
\newcommand{\myrevtodo}[2]{{%
		\let\marginpamarginnote
		\reversemarginpar
		\renewcommand{\baselinestretch}{0.8}%
		}}
\newcommand{\myinlinetodo}[2]{\todo[size=\small, color=#1!50!white, inline,
	caption={}]{#2}\xspace}
\newcommand{\registerAuthor}[3]{%
	\expandafter\newcommand\csname #2com\endcsname[1]{\mytodo{#3}{\textsc{#2}:
			##1}}%
	\expandafter\newcommand\csname
	#2revcom\endcsname[1]{\myrevtodo{#3}{\textsc{#2}: ##1}}%
	\expandafter\newcommand\csname
	#2inline\endcsname[1]{\myinlinetodo{#3}{\textsc{#2}: ##1}}%
	\expandafter\newcommand\csname
	#2inlineLater\endcsname[1]{\lv{\myinlinetodo{#3}{\textsc{#2}: ##1}}}%
}
\DeclareMathOperator{\lb}{\text{lb}}
\DeclareMathOperator{\lm}{\text{lm}}
\DeclareMathOperator{\lt}{\text{lt}}
\DeclareMathOperator{\rb}{\text{rb}}
\DeclareMathOperator{\rrm}{\text{rm}}
\DeclareMathOperator{\rt}{\text{rt}}
\newcommand{\temp}{\text{temp}}
\crefname{claim}{Claim}{Claims}
\crefname{observation}{Observation}{Observations}
\Crefname{observation}{Observation}{Observations}
\crefname{paragraph}{Paragraph}{Paragraphs}
\Crefname{paragraph}{Paragraph}{Paragraphs}
\tikzstyle{vertex}=[draw, circle, fill, inner sep = 2pt]
\tikzstyle{squared-vertex}=[draw, fill, inner sep = 2pt]
\DeclareMathOperator{\single}{single}
\DeclareMathOperator{\bc}{bc}
\DeclareMathOperator{\wc}{wc}
\newcommand{\after}{\text{after}}
\newcommand{\before}{\text{before}}
\tikzstyle{vertex}=[draw, circle, fill, inner sep = 2pt]
\tikzstyle{bedge}=[line width=1.8pt]
\tikzstyle{squared-vertex}=[draw, fill, inner sep = 2pt]
\algnewcommand\algorithmicinput{\textbf{Input:}}
\algnewcommand\algorithmicoutput{\textbf{Output:}}
\algnewcommand\Input{\item[\algorithmicinput]}
\algnewcommand\Output{\item[\algorithmicoutput]}
\algnewcommand\algorithmicgoto{\textbf{GoTo }}
\algnewcommand\GoTo{\item[\algorithmicgoto]}
\algnewcommand{\IIf}[1]{\State\algorithmicif\ #1\ \algorithmicthen}
\algnewcommand{\EndIIf}{\unskip\ \algorithmicend\ \algorithmicif}
\algnewcommand{\IfThenElse}[3]{
	\State \algorithmicif\ #1\ \algorithmicthen\ #2\ \algorithmicelse\ #3}
\newcommand\tsup[2][2]{%
	\def\useanchorwidth{T}%
	\ifnum#1>1%
	\stackon[-.5pt]{\tsup[\numexpr#1-1\relax]{#2}}{\scriptscriptstyle\sim}%
	\else%
	\stackon[.5pt]{#2}{\scriptscriptstyle\sim}%
	\fi%
}
\newcommand{\largeConst}{\nu m}
\DeclareMathOperator{\pend}{
    \succ 
    \overset{\raise0.3em\hbox{\text{\scriptsize{(rest)}}}}{\ldots}}
\providecommand*{\cupdot}{%
	\mathbin{%
		\mathpalette\@cupdot{}%
	}%
}
\newcommand*{\@cupdot}[2]{%
	\ooalign{%
		$\m@th#1\cup$\cr
		\hidewidth$\m@th#1\cdot$\hidewidth
	}%
}
\DeclareMathOperator{\Ac}{Ac}
\newcommand{\decprob}[3]{%
  \begin{center}%
    \begin{minipage}{0.9\linewidth}%
      \textsc{#1}\\
      \textbf{Input:} #2\\
      \textbf{Question:} #3
    \end{minipage}%
  \end{center}%
}
\newtheorem{observation}{Observation}
\newtheorem{claim}{Claim}
\newtheorem{lemma}{Lemma}
\newcommand{\ISMTwithoutspace}{\textsc{ISM-T}}
\newcommand{\ISRTwithoutspace}{\textsc{ISR-T}}
\newcommand{\ISR}{\textsc{ISR}\xspace}
\newcommand{\SRT}{\textsc{SR-T}\xspace}
\newcommand{\ISRT}{\ISRTwithoutspace\xspace}
\newcommand{\ISM}{\textsc{ISM}\xspace}
\newcommand{\ISMT}{\ISMTwithoutspace\xspace}
\newenvironment{claimproof}{%
	\proof}{\endproof}
\begin{document}
\maketitle

\begin{abstract}
 	When computing stable matchings, it is usually 
 	assumed that the preferences of the agents in 
the matching market are fixed. However, in many realistic scenarios, preferences change over time.
 	Consequently, an initially stable matching may become 
 	unstable.
 	Then, a natural goal is to find a  
 	matching which is stable with respect to the modified preferences and 
 	as close as possible to the initial one.
 	For \textsc{Stable Marriage/Roommates}, this problem was 
formally defined as \textsc{Incremental Stable Marriage/Roommates} by Bredereck et 
al.\ [AAAI '20].
 	As they showed that \textsc{Incremental Stable Roommates} and 
\textsc{Incremental Stable Marriage with Ties} are NP-hard, we focus on
    the parameterized complexity of these problems.
 	We answer two open questions of Bredereck et 
al.~[AAAI~'20]: We show that \textsc{Incremental Stable Roommates} is W[1]-hard 
 	parameterized by the number of changes in the preferences, yet admits an intricate 
 	XP-algorithm, and we show that \textsc{Incremental 
 	Stable Marriage with Ties} is W[1]-hard parameterized by the number 
 	of ties.
 	Furthermore, we analyze the influence of the degree of ``similarity'' between the agents'
preference lists, identifying several polynomial-time solvable and fixed-parameter 
tractable cases, but also proving that \textsc{Incremental Stable Roommates} and 
\textsc{Incremental Stable Marriage with Ties} parameterized by the number of 
different preference lists are W[1]-hard. 
\end{abstract}

\section{Introduction} 
Efficiently adapting solutions to changing inputs is a core issue in modern algorithmics~\cite{DBLP:conf/icalp/BhattacharyaHHK15,DBLP:conf/atal/BoehmerN21,DBLP:journals/siamcomp/CharikarCFM04,DBLP:conf/icalp/EisenstatMS14,HMS21}. 
In particular, in \emph{incremental} combinatorial problems, roughly speaking, 
the goal is to build new solutions incrementally while adapting to changes 
in the input. Typically, one wants to avoid (if possible) too radical 
changes in the solution relative to perhaps moderate changes in the input.
The corresponding study of incremental algorithms attracted 
research on numerous problems and scenarios~\cite{DBLP:conf/approx/GoemansU17}, including among many others
shortest path computation~\cite{DBLP:journals/jal/RamalingamR96}, flow computation~\cite{DBLP:journals/jmma/KumarG03},
clustering problems~\cite{DBLP:journals/siamcomp/CharikarCFM04,LMNN21}, and graph coloring~\cite{DBLP:journals/tcs/HartungN13}.

In this paper, we study the problem of adapting stable matchings under preferences to change. Consider for instance the following two scenarios:
First, as reported by Feigenbaum et al.~\cite{Feigenbaum17}, school seats in public schools are centrally assigned in New York.
Ahead of the start of the new year, all interested students are asked to submit their preferences over public schools. Then, a stable matching of students to public schools is computed and transmitted. 
However, in the past, shortly before the start of the new year typically around 10\% of students changed their preferences and decided to attend a private school instead, leaving the initially implemented matching unstable and triggering lengthy decentralized ad hoc updates.
Second, consider the assignment of freshmen to double bedrooms 
in college accommodation. 
After the orientation weeks, it is quite likely that students got to know each other (and in particular their roommates) better and thus their initially uninformed preferences changed, making the matching unstable. 

In our work, we focus on the problem of finding a stable matching after the ``change'' that is as close as possible to a given initially stable matching. 
The closeness condition here is due to the fact that in most applications reassignments come at some cost which we want to minimize (e.g., in the above New York example, reassigning students might make it necessary for the family to reallocate within the city). 
We build upon the work of Bredereck et al.~\cite{DBLP:conf/aaai/BredereckCKLN20}, who
performed a first systematic study of incremental versions of stable matching 
problems, and the recent (partially empirical) follow-up work by Boehmer et al.~\cite{uschanged}, who proved among others that different types of changes are ``equivalent'' to each other. 
The central focus of our studies lies on the \textsc{Stable Roommates}~(SR) problem: given a set of agents with each agent having preferences over other agents,
the task is to find a stable matching, i.e., a matching so that there are no two agents preferring each other to their assigned partner.
We also consider a famous special case of SR, called \textsc{Stable Marriage} (SM), where the set of agents is partitioned into two sets, and each agent may only be matched to an agent from the other set.
Formally, in the incremental versions of SR and SM, called \textsc{Incremental Stable Roommates}~(\ISR) and \textsc{Incremental Stable Marriage} (\ISM), we are given two preference profiles containing the preferences of each agent before and after the ``change'' and a matching that is stable in the preference profile before the change.
Then, the task is to find a matching that is stable after the change and as close as possible to the given matching, i.e., has a minimum symmetric difference to it.

	\subparagraph{Related Work.}
 	Bredereck et al.~\cite{DBLP:conf/aaai/BredereckCKLN20} formally introduced 
\textsc{Incremental Stable Marriage [with ties]} (\ISM/[\ISMTwithoutspace]) and \textsc{Incremental Stable Roommates [with ties]} (\ISR/[\ISRTwithoutspace]).
They showed that \ISM without ties (in the preference lists) is solvable in polynomial time by a simple reduction to finding a stable matching maximizing the weight of the included agent pairs (which is solvable in polynomial time~\cite{DBLP:journals/jcss/Feder92}). In contrast to this, \ISR is NP-complete \cite[Theorem~4.2]{DBLP:journals/disopt/CsehM16}, yet admits an FPT-algorithm for the parameter~$k$, that is, the maximum allowed size of 
the symmetric difference between the two matchings~\cite{DBLP:conf/aaai/BredereckCKLN20}.
With ties, Bredereck et al.~\cite{DBLP:conf/aaai/BredereckCKLN20} showed that \ISMT and \ISRT are NP-complete and W[1]-hard for $k$ even if the two preference profiles only differ in a single swap in some agent's preference list.
As \ISRT can be considered as a generalization of \ISMT, their results motivate us to focus on the NP-hard \ISR and \ISMT problems (which are somewhat incomparable problems).
Recently, Boehmer et al.~\cite{uschanged} followed up on the work of Bredereck et al.~\cite{DBLP:conf/aaai/BredereckCKLN20}, proving that different types of changes such as deleting agents or performing swaps of adjacent agents in some preference list are ``equivalent''.
Moreover, they introduced incremental variants of further stable matching problems and performed empirical~studies.

More broadly considered, matching under preferences problems in the presence of change are of high current interest in several application domains. Many such works fall into the category of ``dynamic matchings''~\cite{ALO20,baccara2020optimal,DBLP:journals/geb/DamianoL05,DBLP:journals/corr/abs-1906.11391,DBLP:journals/corr/abs-2007-03794}. 
However, different from our work, there the focus is typically  on adapting classic stability notions to dynamic settings while we rather aim for ``reestablishing'' 
(classic) stability at minimal change cost.

Closer to our work, there are several papers on adapting a given matching to change (while minimizing the number of reassignments): 
First, Gajulapalli et 
 	al.~\cite{DBLP:conf/fsttcs/GajulapalliLMV20} designed a polynomial-time (and incentive-compatible) algorithm for an incremental variant of the many-to-one version of \textsc{Stable Marriage} (known as \textsc{Hospital Residents}) where new agents are added. 
 	Second, Feigenbaum et al.~\cite{Feigenbaum17} considered an incremental variant of \textsc{Hospital 
 	Residents} where some agents may leave the 
 	system. They designed a ``fair'', Pareto-efficient, and strategy-proof 
algorithm for finding a matching before and after the change. 
The problems studied in both works are closest related to the polynomial-time solvable \ISM problem, which we do not study. 
Third, Bhattacharya et 
 	al.~\cite{DBLP:conf/icalp/BhattacharyaHHK15} studied one-to-one matching markets where 
 	agents are added and deleted over time and for some agents the set of 
 	acceptable partners may change. Their focus is on updating the matching in each 
 	step such that the number of reassignments is small while maintaining a 
 	small unpopularity factor. So in contrast to our work, they do not maintain that the matching is stable but (close to)~popular. 

Also motivated by temporally evolving preferences, several papers study the robustness of stable matchings subject to changing preferences~\cite{DBLP:conf/sagt/BoehmerBHN20,DBLP:journals/teco/ChenSS21,DBLP:conf/ijcai/Genc0OS17,DBLP:conf/aaai/Genc0OS17,DBLP:journals/tcs/GencSSO19,DBLP:conf/esa/MaiV18}.
By selecting a robust initial stable matching, one can increase the odds that it remains stable after some~changes. 

\subparagraph{Our Contributions.}

\begin{table}
	\begin{center}
	\resizebox{\textwidth}{!}{\begin{tabular}{ c|c|c } 
		
		 & \ISR & \ISMT \\  \hline
		$|\mathcal{P}_1\oplus \mathcal{P}_2|$ & W[1]-h. (Th. \ref{th:ISR-WP1P2}) \& XP (Th. \ref{th:ISR-XPP1P2})  & NP-h. for $|\mathcal{P}_1\oplus \mathcal{P}_2|=1$ (Th. \ref{co:ISMWties})   \\ \hline \hline
		\multirow{2}{*}{\#ties$+ k$} & \multirow{2}{*}{FPT wrt. $k$ (Th. 1 in \cite{DBLP:conf/aaai/BredereckCKLN20})} & W[1]-h. even if $|\mathcal{P}_1\oplus \mathcal{P}_2|=1$ (Th. \ref{co:ISMWties}) \\
		& & XP (even for parameter \#agents with ties) (Pr. \ref{pr:ISMNT})\\ \hline\hline
		\#outliers & FPT (Th. \ref{pr:ISRFPTO}) & ? \\  \hline 
		\#master lists 	& \multicolumn{2}{c}{W[1]-h. even for complete preferences (Th. \ref{thm:isr-master-lists}/\ref{pr:ISMWML})}  
	\end{tabular}}
\end{center}
\caption{Overview of our main results where each row contains results for one parameterization. Note that \ISM is polynomial-time solvable as proven by Bredereck et al.~\cite{DBLP:conf/aaai/BredereckCKLN20}.
} \label{table:ov}
\end{table}

 	Focusing on the two NP-hard problems \ISR and \ISMT, we significantly extend the work of  
 	Bredereck et al.~\cite{DBLP:conf/aaai/BredereckCKLN20} on incremental stable matchings.
 	In particular, we answer their two main open questions. Moreover, we strengthen several of their results. In addition, we analyze the impact of the degree of ``similarity'' between the agents' preference lists. Doing so, from a conceptual perspective, we complement work of Meeks and Rastegari~\cite{DBLP:journals/tcs/MeeksR20}.
 	They studied the influence of the number of agent types on the computational complexity of stable matching problems (two agents are of the same type if they have the same preferences and all other agents are indifferent between them).
 	By way of contrast, we consider the smaller so far unstudied parameter ``number of different~preference~lists''.

 	Next, we present a brief summary of the structure of the paper (for each section marking the main studied problem(s)) and our main contributions (see \Cref{table:ov} for an overview):
 	 
 	 \begin{description}
 		\item[\Cref{ISR} (\ISR)] Motivated by the observation that \ISMT is NP-hard even if just one swap has been performed, Bredereck et al.\ \cite{DBLP:conf/aaai/BredereckCKLN20} asked for 
 		the parameterized complexity 
 		of 
 		\ISR with respect to the  
 		difference $|\mathcal{P}_1 
 		\oplus \mathcal{P}_2|$ between the two given preference profiles $\mathcal{P}_1$ and~$\mathcal{P}_2$. We design and analyze an involved algorithm solving \ISR in polynomial time if $|\mathcal{P}_1 
 		\oplus \mathcal{P}_2|$ is constant (in other words, this is an XP-algorithm).
 		Our algorithm relies on the observation that if we know how certain
 		agents are matched in the matching to be constructed and we adapt the given matching
 		accordingly, then we can find an optimal solution by propagating these changes through
 		the matching until a new stable matching is reached; a general approach that might be of independent interest.
 		We complement this result by proving 
 		that \ISR parameterized by~$|\mathcal{P}_1 
 		\oplus \mathcal{P}_2|$ is W[1]-hard.
 		\item[\Cref{ISM} (\ISMT)] Bredereck et al.\ 
 		\cite{DBLP:conf/aaai/BredereckCKLN20} considered the total number of ties 
 		to be a promising parameter to potentially achieve 
 		fixed-parameter tractability results. We prove that this is 
        not the case as \ISMT is W[1]-hard with 
 		respect to $k$ \emph{plus} the total number of ties even if 
 		$|\mathcal{P}_1\oplus \mathcal{P}_2|=1$. Notably, this result  
 		strengthens  the W[1]-hardness with respect to 
 		$k$ for $|\mathcal{P}_1\oplus \mathcal{P}_2|=1$ of Bredereck et al.\ 
 		\cite{DBLP:conf/aaai/BredereckCKLN20} for \ISMT, while presenting a fundamentally different yet less technical proof. 
 		On the positive side, we devise an XP-algorithm for the number of agents with at least one tie in their preferences.
 		\item[\Cref{se:ML} (\ISR; \ISMT)] We study different cases where the agents have ``similar'' preferences.
 		For instance, we consider the case where all but $s$ agents have the same preference list (we call these $s$ agents \emph{outliers}), or the case where each agent has one out of only $p$ different \emph{master preference lists}.
 		We devise an algorithm that enumerates all stable matchings in an \textsc{SR} instance in FPT time with respect to $s$, implying an FPT algorithm for \ISR parameterized by~$s$.
 		In contrast to this and to a simple FPT algorithm for the number of agent types \cite{uschanged}, we prove that \ISR and \ISMT are W[1]-hard with respect to the number~$p$ of different preference lists.
 	 \end{description}

\section{Preliminaries}

The input of \textsc{Stable Roommates with Ties} (\SRT) is a set
 $A=\{a_1,\dots, a_{2n}\}$ of agents. 
 Each agent $a\in A$ has a 
 subset $\Ac(a)\subseteq A\setminus \{a\}$ of agents it \emph{accepts} and a
 preference relation~$\succsim_a$ which is a weak order over the agents 
 $\Ac(a)$.
 We assume that acceptance is symmetric, i.e., for two agents $a,a'\in A$, we have $a'\in \Ac(a)$ if and only if~$a\in \Ac(a')$.  
 The collection $\mathcal{P} = (\succsim_a)_{a\in A}$ of the preferences of all agents is called \emph{preference 
profile}.
 For two agents $a',a''\in \Ac(a)$, agent~$a$ \emph{weakly prefers} $a'$ to~$a''$ if $a'\succsim_a a''$.  If $a$ 
 weakly prefers $a'$ to $a''$ but does not weakly prefer~$a''$ to~$a'$, then $a$ 
 \emph{strictly prefers} $a'$ to~$a''$, and we write $a'\succ_a a''$. If $a$ 
weakly but not strictly prefers~$a'$ to~$a''$, then $a$ 
 is \emph{indifferent} between $a'$ and $a''$ and we write $a'\sim_a a''$; in other 
words, $a'$ and~$a''$ are \emph{tied}.
If~$a$ strictly prefers~$a'$ to~$a''$ or~$a' = a''$ holds, then we write~$a'\succeq_a a''$.
We 
say that an agent~$a$ has strict preferences, which we denote as $\succ_a$, 
if $\succsim_a$ is a strict order, and, in this case, we use the terms ``strictly 
prefer'' and ``prefer'' interchangeably. 
The \textsc{Stable Roommates} (\textsc{SR}) problem is the special case of \textsc{SR-T} where the preference relation of each agent is strict.

For two preference 
 relations $\succsim$ and~$\succsim'$ defined 
over the same set, the swap distance between~$\succsim$ and 
 $\succsim'$ is the number of agent pairs that are ordered differently by the two 
 relations, i.e., $|\{(a,b): a\succ b \wedge b\succsim' 
 a\}|+|\{(a,b): a\sim b \wedge \neg (a\sim'b)\}|$;
 for two preference relations over different sets, we define the swap distance 
to be infinite.
 For two preference profiles~$\mathcal{P}_1$ and~$\mathcal{P}_2$ containing the preferences of the same agents, 
$|\mathcal{P}_1\oplus 
 \mathcal{P}_2|$ denotes the sum over all agents~$a \in A$ of the swap distance of the two preference of~$a$ in $\mathcal{P}_1$ and $\mathcal{P}_2$.\footnote{Notably, by the equivalence theorem of Boehmer et al. \cite[Theorem 1]{uschanged}, all our results (except for \Cref{co:ISMWties} where the constant $|\mathcal{P}_1\oplus \mathcal{P}_2|$ increases slightly) still hold if $|\mathcal{P}_1\oplus 
 \mathcal{P}_2|$ instead denotes the number of agents whose preferences changed, the number of deleted agents, or the number of added agents. }
  
 A \emph{matching} $M$ is a set of pairs $\{a,a'\}$ with $a\neq a'\in A$, $a \in \Ac (a')$, and $a' \in \Ac (a)$, where 
each 
 agent appears in at most one pair. In a matching $M$, an agent $a$ 
 is \emph{matched} if $a$ is part of one pair from~$M$; otherwise, $a$ is 
 \emph{unmatched}. 
  A \emph{perfect matching} is a matching in which all agents are 
matched. 
  For a matching $M$ and an agent $a\in A$, we denote by 
$M(a)$ the 
  partner of~$a$ in~$M$, i.e., $M(a)=a'$ if $\{a,a'\}\in M$ and  
$M(a):=\square$ if $a$ is unmatched in $M$. All agents~$a\in A$ 
strictly prefer any agent from $\Ac(a)$ to being unmatched, i.e., 
$a'\succ_a\square$ for all~$a\in A$ and~$a'\in \Ac(a)$.  
  
  Two agents $a\neq a'\in A$ \emph{block} a 
  matching~$M$ if $a$ and~$a'$ accept each other and strictly prefer each other to their 
  partners in $M$, i.e., $a\in \Ac(a')$, $a'\in \Ac(a)$, $a'\succ_a M(a)$, and 
  $a\succ_{a'} M(a')$. A matching~$M$ is \emph{stable} if it is not 
  blocked by any 
  agent pair.\footnote{This definition of stability in the presence of ties is the by far most frequently studied variant known as \emph{weak stability}. \emph{Strong stability} 
  and \emph{super stability}
  are the two most popular alternatives. Notably, \ISMT (as defined later) becomes polynomial-time solvable for both strong and super stability, as for these two stability notions a stable matching maximizing a given weight function on all pairs of agents can be found in polynomial time~\cite{DBLP:journals/tcs/FleinerIM07,DBLP:conf/isaac/Kunysz18,DBLP:conf/soda/KunyszPG16}.
}
  An agent pair~$\{a,a'\}$ is called a \emph{stable pair} if there is a 
  stable matching $M$ with $\{a,a'\}\in M$. For two matchings~$M$ and~$M'$, we denote by~$M 
  \triangle M'$ the set of pairs that appear only in $M$ or only in $M'$, 
  i.e., $M \triangle M'=\{\{a,a'\}\mid \big(\{a,a'\}\in M \wedge \{a,a'\}\notin M' 
  \big) \vee \big(\{a,a'\}\notin M \wedge \{a,a'\}\in M' \big)\}$. 
  
  The incremental variant of \textsc{Stable Roommates [with Ties]} is defined 
as follows.
  \decprob{\textsc{Incremental Stable Roommates [with Ties]} (\ISR/[\ISRTwithoutspace])}{A 
  set $A$ of agents, two preference profiles $\mathcal{P}_1$ 
	and $\mathcal{P}_2$ containing the strict [weak] preferences of all agents, 
	a stable 
	matching~$M_1$ in $\mathcal{P}_1$, and 
	an integer~$k$.}{Is there a matching~$M_2$ that 
	is stable in $\mathcal{P}_2$ such that  $|M_1 \triangle 
	M_2| \le k$?}
  
  We also consider the incremental variant of \textsc{Stable Marriage (SM)}. 
Instances of \textsc{Stable Marriage} are instances of \textsc{Stable Roommates} 
where the set of agents is partitioned into two sets $U$ and $W$ such that agents from one of the 
sets 
only accept agents from the other set, i.e., $\Ac(m)\subseteq W$ for all $m\in 
U$ and $\Ac(w)\subseteq U$ for all $w\in W$. Following traditional conventions, 
we refer to the agents from~$U$ as men and to the 
agents from~$W$ as women.
All definitions from above still analogously apply to 
\textsc{Stable Marriage}. Thus, in \textsc{Incremental Stable Marriage 
[with Ties]} (\ISM/[\ISMTwithoutspace]), we are given a set~$A=U\cupdot W$ of 
agents and two preference profiles $\mathcal{P}_1$ 
	and $\mathcal{P}_2$ containing the strict [weak] preferences of all agents, 
where each~$m\in U$ accepts only agents from $W$ and the other way round.
The preferences of an 
agent $a\in A$ are \emph{complete} if $\Ac(a)=A\setminus \{a\}$. In 
\textsc{Stable Marriage}, the preferences of an agent~$a\in U 
\cupdot W$ are \emph{complete} if $\Ac(a)=W$ for $a\in U$ or if $\Ac(a)=U$ for 
$a\in W$. 
If the preferences of an agent are not complete, then they are 
\emph{incomplete}.

Note that for \textsc{SM} and \textsc{SM-T}, we know that a stable matching always exists and that we can check whether an \textsc{SR} instance admits a stable matching in $\mathcal{O}(|A|^2)$ time. 
Thus, in our algorithms for the incremental variants of these problems we always assume without loss of generality that there exists a stable matching in $\mathcal{P}_2$.
Moreover, the stable matchings in \textsc{SM} and \textsc{SR} instances have some structure:
If the preferences of agents in an \textsc{SM} or \textsc{SR} instance are complete, then all stable matchings are perfect. 
If the preferences are incomplete, then in both variants by the Rural Hospitals Theorem \cite{DBLP:books/daglib/0066875}, all stable matchings match the same set of agents. 
This in particular implies that for \ISR and \ISM instead of minimizing $|M_1\triangle M_2|$, we can alternatively minimize the number of agents that are matched differently in $M_1$ and $M_2$. 

 \subparagraph*{Parameterized Complexity Theory.}
A \emph{parameterized problem}~$L$ consists of a problem instance~$\mathcal{I}$ and a parameter value~$k\in \mathbb{N}$.
An \emph{XP-algorithm} for $L$ with respect to $k$ is an algorithm deciding $L$ in $ | \mathcal{I} |^{f(k)}$ time for some computable function $f$. 
A \emph{fixed-parameter tractable algorithm} for $L$ with respect to $k$ is an algorithm  deciding~$L$ in $f(k) \cdot \vert \mathcal{I} \vert ^{\mathcal{O}(1)}$ time for some computable function~$f$. The corresponding complexity classes are called XP, resp., FPT.
There is also a theory of hardness for parameterized problems. 
For our purposes, the central class here is W[1], where it holds FPT $\subseteq$ W[1] $\subseteq$ XP and it is commonly believed that both inclusions are strict. 
Thus, W[1]-hard problems are commonly believed not to be in FPT. 
To show that a problem is W[1]-hard, one typically constructs a parameterized reduction from a known W[1]-hard problem~$L'$.
Such a reduction from a parameterized problem~$L'$ to another parameterized problem~$L$ is a function that maps instances of $L'=(\mathcal{I}',k')$ to equivalent instances of $L=(\mathcal{I},k)$ with $k \le f(k')$ running in $f(k')\cdot  \vert \mathcal{I} \vert ^{\mathcal{O}(1)}$ time for some 
computable function $f$.
Finally, a parameterized problem~$L$ is \emph{para-NP-hard} if it is NP-hard even when the parameter is bounded by a constant.

\section{Incremental Stable Roommates Parameterized by~$|\mathcal{P}_1 \oplus 
 \mathcal{P}_2|$} \label{ISR}
Bredereck et al.~\cite{DBLP:conf/aaai/BredereckCKLN20} showed that \ISRT and \ISMT are  
NP-hard even if $\mathcal{P}_1$ and $\mathcal{P}_2$ differ only by a single 
swap.
While Bredereck et al.\ showed that \ISR (without ties) is NP-hard, they 
asked whether it is fixed-parameter tractable
parameterized by $|\mathcal{P}_1 \oplus \mathcal{P}_2|$.
We show that \ISR is W[1]-hard with respect to~$|\mathcal{P}_1 
\oplus \mathcal{P}_2|$ (\Cref{sec:swap-dist-ISR}), yet admits an intricate polynomial-time algorithm for constant  $|\mathcal{P}_1 
\oplus \mathcal{P}_2|$ (\Cref{sub:XP}), thus still clearly 
distinguishing it from the case with ties.

 \subsection{W[1]-Hardness}
\label{sec:swap-dist-ISR}

This section is devoted to proving that \ISR with respect to $|\mathcal{P}_1 
\oplus \mathcal{P}_2|$ is W[1]-hard:
\begin{restatable}{theorem}{WPP}
	\label{th:ISR-WP1P2}
	\ISR parameterized by $|\mathcal{P}_1 
	\oplus \mathcal{P}_2|$ is W[1]-hard.
\end{restatable}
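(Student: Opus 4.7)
The plan is to give a parameterized reduction from a standard W[1]-hard graph problem, most naturally \textsc{Multicolored Clique} (or \textsc{Multicolored Independent Set}) with parameter $k'$ equal to the number of color classes. The goal is to build an \ISR instance consisting of profiles $\mathcal{P}_1, \mathcal{P}_2$, an initial stable matching $M_1$ in $\mathcal{P}_1$, and a budget $k$, such that a matching $M_2$ stable in $\mathcal{P}_2$ with $|M_1\triangle M_2|\le k$ exists if and only if the input graph contains a multicolored clique; crucially, $|\mathcal{P}_1 \oplus \mathcal{P}_2|$ must be bounded by some function of $k'$ only, while $k$ may depend on the input graph.

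First, I would design a \emph{selection gadget} for each color class $V_i$: a \SR sub-instance whose preferences in $\mathcal{P}_1$ admit a canonical stable matching that $M_1$ uses on the gadget, but whose preferences in $\mathcal{P}_2$, after only a constant number of targeted swaps, admit a long rotation chain producing one local stable matching per vertex of $V_i$. This exploits the fact that a single swap in an \SR preference list can activate a rotation which cascades through many pairs, so that unboundedly many new gadget-local stable matchings can be opened up by $O(1)$ swaps. The different local stable matchings within a gadget would differ from $M_1$ in a number of pairs that encodes the chosen vertex, so that the symmetric difference $|M_1\triangle M_2|$ can be made a faithful accounting of the selection.

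Second, I would connect the selection gadgets via \emph{edge-consistency gadgets} that appear identically in both $\mathcal{P}_1$ and $\mathcal{P}_2$ and therefore contribute nothing to $|\mathcal{P}_1\oplus\mathcal{P}_2|$. These gadgets would be crafted so that a globally stable matching in $\mathcal{P}_2$ exists only when, for every pair of color classes $i\neq j$, the vertices selected by the corresponding selection gadgets are adjacent in the source graph. Calibrating the budget $k$ to equal the symmetric difference incurred when exactly one vertex from each $V_i$ is selected and a clique is formed, the equivalence with \textsc{Multicolored Clique} follows.

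The main obstacle will be to engineer the interaction between the very sparse swap modifications (of total size $f(k')$) and the large static edge-consistency infrastructure: the few swaps must simultaneously destabilise $M_1$ on every selection gadget, open up the intended rotation chains, and, combined with the edge gadgets, forbid every matching that does not encode a multicolored clique; meanwhile, $M_1$ itself must remain stable in $\mathcal{P}_1$. Since \ISR forbids ties, the usual tie-based enforcement tricks available in \ISMT are unavailable, so stability has to be controlled purely by the rotation poset of the constructed \SR instance, which must be tracked carefully gadget by gadget. I expect this rotation-structure bookkeeping, rather than the overall reduction template, to absorb most of the technical effort.
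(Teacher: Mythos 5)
Your template coincides with the paper's: a parameterized reduction from \textsc{Multicolored Clique}, one vertex-selection gadget per color that is ``switched on'' by a constant number of swaps (the paper uses exactly two swaps per color, in the lists of auxiliary agents $t^c$ and $\bar t^c$, which evict $s^c$ and $\bar s^c$ from their $M_1$-partners and force them into one of $\nu$ vertex 4-cycles), plus edge gadgets whose preferences are identical in $\mathcal{P}_1$ and $\mathcal{P}_2$, with the budget $k$ calibrated so that at least $\binom{\ell}{2}$ edge gadgets must retain their $M_1$-configuration. Your intuition that one swap can cascade into unboundedly many gadget-local stable matchings is precisely the mechanism the paper exploits via the preference order of $s^c$ and $\bar s^c$ over the 4-cycles.

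Two points in your plan would need correction before it becomes a proof. First, you propose that the local stable matchings within a selection gadget ``differ from $M_1$ in a number of pairs that encodes the chosen vertex.'' This is exactly what must be avoided: if selecting $v^c_i$ costs a different amount than selecting $v^c_j$, the budget cannot separate ``cheap vertex, few clique edges'' from ``expensive vertex, clique formed.'' The paper neutralizes this by adding a mirrored second family of 4-cycles $\bar a^c_{i,\cdot}$ so that every vertex choice contributes the same $4\nu+5$ pairs to $M_1\triangle M_2$, and the entire clique accounting is carried by the edge gadgets alone. Second, you suggest the edge gadgets should be crafted so that a stable matching in $\mathcal{P}_2$ \emph{exists} only when the selected vertices are pairwise adjacent; in the paper a stable matching in $\mathcal{P}_2$ always exists (each edge 4-cycle admits two internally stable configurations), and it is the budget, not stability, that forces $\binom{\ell}{2}$ edge gadgets into the configuration shared with $M_1$ --- which is only stable globally when both endpoints are selected. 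Making existence of stability itself encode the clique is a much stronger requirement and there is no indication it can be achieved here. With these two repairs your plan lands on the paper's construction; the remaining work is the explicit gadget bookkeeping you already anticipate.
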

To prove \Cref{th:ISR-WP1P2}, we reduce from the W[1]-hard \textsc{Multicolored Clique} problem 
parameterized 
by the solution size~$\ell$ 
\cite{DBLP:journals/jcss/Pietrzak03}. In \textsc{Multicolored Clique}, we are given an $\ell$-partite graph $G = (V^1 \cupdot V^2 \cupdot \dots \cupdot V^\ell, E)$  and the question is whether there is a clique~$X$ of size $\ell$ in~$G$ with $X\cap V^c\neq \emptyset$ for 
all $c\in [\ell]$.
To simplify notation, we assume that $V^c = \{v^c_1, \dots, v^c_\nu\}$ for all~$c\in [\ell]$ and that the given graph $G$ is $r$-regular for some~$r\in \mathbb{N}$.
We refer to the elements of~$[\ell]$ as \emph{colors} and say that a vertex $v$ has
color~$c\in [\ell]$ if~$v\in V^c$. 
The structure of the reduction is as follows.
For each color $c\in [\ell]$, there is a vertex-selection gadget, encoding 
which vertex from~$V^c$ is part of 
the multicolored clique.
Furthermore, there is one edge gadget for each edge.
Unless both endpoints of an edge are selected by the corresponding 
vertex-selection gadgets, the matching $M_2$ selected in the edge gadget 
contributes to the difference $M_1 \triangle M_2$ between $M_1$ and $M_2$.
We set $k$ (that is, the upper bound on~$|M_1 \triangle M_2|$) such that at least $\binom{\ell}{2}$ edges need to have both endpoints 
in 
the selected set of vertices, implying that the selected set of vertices forms a clique.

\subparagraph{Vertex-Selection Gadget.}
For each color~$c\in [\ell]$, we add a vertex selection gadget.
For each vertex $v^c_i\in V^c$, we add one 4-cycle consisting of agents $a_{i,1}^c$, $a_{i,2}^c$, $a_{i,3}^c$, and $a_{i,4}^c$. 
Further, in~$\mathcal{P}_2$, two agents $s^c$ and $\bar{s}^c$ are 
``added'' to the gadget (more specifically, $s^c$ and $\bar s^c$ are matched to 
dummy agents~$t^c$ and $\bar t^c$ 
in all stable matching in $\mathcal{P}_1$ but cannot be matched to~$t^c$ and $\bar t^c$ in a stable matching in~$\mathcal{P}_2$).
We construct the vertex-selection gadget such that the agents~$s^c$ and~$\bar{s}^c$ have to be matched to agents from the same  
4-cycle in a stable matching in~$\mathcal{P}_2$. This encodes the selection 
of the 
vertex corresponding to this 4-cycle to be part of the multicolored clique. 
Lastly, we add a second 4-cycle consisting of agents $\bar{a}_{i,1}^c$, $\bar{a}_{i,2}^c$, $\bar{a}_{i,3}^c$, and~$\bar{a}_{i,4}^c$ for each vertex~$v_i^c\in V^c$ to achieve that $M_1 \triangle M_2$ contains the same number of pairs from the vertex-selection gadget, independent of which vertex is selected to be part of the~clique. 
See \Cref{fig:vsg} for an example. 
\begin{figure}[t]
	
	\resizebox{\textwidth}{!}{\begin{tikzpicture}
		\node[vertex, label={[xshift=-0.05cm]180:$s^c$}] (si) at (0, 0) {};
		\node[vertex, label=0:$\bar s^c$] (sib) at ($(si) + (-6, 0)$) {};
		\node[vertex, label=90:$t^c$] (ti) at ($(si) + (1.5, 0)$) {};
		\node[vertex, label=90:$\bar t^c$] (tib) at ($(sib) + (-1.5, 0)$) {};
		\node[vertex, label=0:$u^c$] (ui) at ($(ti) + (1.5, 0)$) {};
		\node[vertex, label=180:$\bar u^c$] (uib) at ($(tib) + (-1.5, 0)$) {};
		
		\node (a11) at ($(si) + (3.5, 1.75)$) {};
		
		\node[vertex, label=0:$a^c_{2,1}$] (a21) at ($(a11) + ( -4, 0)$) {};
		\node[vertex, label=0:$a^c_{2,2}$] (a22) at ($(a21) + ( 0, 1.5)$) {};
		\node[vertex, label=180:$a^c_{2,3}$] (a23) at ($(a21) + (-1.5, 0)$) {};
		\node[vertex, label=180:$a^c_{2,4}$] (a24) at ($(a21) + (-1.5, 1.5)$) {};
		
		\node (dots) at ($0.5*(a21)+0.5*(a22) + (3,0)$) {\Huge \dots};
		
		\node[vertex, label=270:$a^c_{1,1}$] (a31) at ($(a21) + (-6, 0)$) {};
		\node[vertex, label=0:$a^c_{1,2}$] (a32) at ($(a31) + (0, 1.5)$) {};
		\node[vertex, label=180:$a^c_{1,3}$] (a33) at ($(a31) + (-1.5, 0)$) {};
		\node[vertex, label=180:$a^c_{1,4}$] (a34) at ($(a31) + (-1.5, 1.5)$) {};
		
		\node (ba11) at ($(si) + (3.5, -1.75)$) {};
		
		\node[vertex, label=0:$\bar a^c_{2,1}$] (ba21) at ($(ba11) + (-4, 0)$) {};
		\node[vertex, label=0:$\bar a^c_{2,2}$] (ba22) at ($(ba21) + (0, -1.5)$) {};
		\node[vertex, label=180:$\bar a^c_{2,3}$] (ba23) at ($(ba21) + (-1.5, 0)$) {};
		\node[vertex, label=180:$\bar a^c_{2,4}$] (ba24) at ($(ba21) + (-1.5, -1.5)$) {};
		
		\node[vertex, label=90:$\bar a^c_{1,1}$] (ba31) at ($(ba21) + (-6, 0)$) {};
		\node[vertex, label=0:$\bar a^c_{1,2}$] (ba32) at ($(ba31) + (0, -1.5)$) {};
		\node[vertex, label=180:$\bar a^c_{1,3}$] (ba33) at ($(ba31) + (-1.5, 0)$) {};
		\node[vertex, label=180:$\bar a^c_{1,4}$] (ba34) at ($(ba31) + (-1.5, -1.5)$) {};
		
		\node (dots) at ($0.5*(ba21)+0.5*(ba22) + (3,0)$) {\Huge \dots};
		
		\draw[bedge] (ti) edge node[pos=0.2, fill=white, inner sep=2pt] 
		{\color{red}\scriptsize $1$}  node[pos=0.76, fill=white, inner sep=2pt] 
		{\scriptsize $1$} (si);
		\draw[bedge] (tib) edge node[pos=0.2, fill=white, inner sep=2pt] 
		{\color{red}\scriptsize $1$}  node[pos=0.76, fill=white, inner sep=2pt] 
		{\scriptsize $1$} (sib);
		\draw (ti) edge node[pos=0.2, fill=white, inner sep=2pt] {\color{red}\scriptsize $2$}  node[pos=0.76, fill=white, inner sep=2pt] {\scriptsize $1$} (ui);
		\draw (tib) edge node[pos=0.2, fill=white, inner sep=2pt] {\color{red}\scriptsize $2$}  node[pos=0.76, fill=white, inner sep=2pt] {\scriptsize $1$} (uib);
		
		\draw (a21) edge node[pos=0.2, fill=white, inner sep=2pt] {\scriptsize $2$}  node[pos=0.8, fill=white, inner sep=2pt] {\scriptsize $r + 4$} (si);
		\draw (a24) edge node[pos=0.2, fill=white, inner sep=2pt] {\scriptsize $2$}  node[pos=0.85, fill=white, inner sep=2pt] {\scriptsize $x$} (sib);
		
		\draw (a31) edge[bend left = 15] node[pos=0.2, fill=white, inner sep=2pt] {\scriptsize $2$}  node[pos=0.7, fill=white, inner sep=2pt] {\scriptsize $2$} (si);
		\draw (a34) edge node[pos=0.2, fill=white, inner sep=2pt] {\scriptsize $2$}  node[pos=0.9, fill=white, inner sep=2pt] {\scriptsize $y$} (sib);
		
		\draw[bedge] (a21) edge node[pos=0.2, fill=white, inner sep=2pt] 
		{\scriptsize $1$}  node[pos=0.76, fill=white, inner sep=2pt] 
		{\scriptsize $2$} (a22);
		\draw (a21) edge node[pos=0.2, fill=white, inner sep=2pt] {\scriptsize $3$}  node[pos=0.76, fill=white, inner sep=2pt] {\scriptsize $1$} (a24);
		\draw (a23) edge node[pos=0.2, fill=white, inner sep=2pt] {\scriptsize $2$}  node[pos=0.76, fill=white, inner sep=2pt] {\scriptsize $1$} (a22);
		\draw[bedge] (a23) edge node[pos=0.2, fill=white, inner sep=2pt] 
		{\scriptsize $1$}  node[pos=0.76, fill=white, inner sep=2pt] 
		{\scriptsize $3$} (a24);
		
		\draw[bedge] (a31) edge node[pos=0.2, fill=white, inner sep=2pt] 
		{\scriptsize $1$}  node[pos=0.76, fill=white, inner sep=2pt] 
		{\scriptsize $2$} (a32);
		\draw (a31) edge node[pos=0.2, fill=white, inner sep=2pt] {\scriptsize $3$}  node[pos=0.76, fill=white, inner sep=2pt] {\scriptsize $1$} (a34);
		\draw (a33) edge node[pos=0.2, fill=white, inner sep=2pt] {\scriptsize $2$}  node[pos=0.76, fill=white, inner sep=2pt] {\scriptsize $1$} (a32);
		\draw[bedge] (a33) edge node[pos=0.2, fill=white, inner sep=2pt] 
		{\scriptsize $1$}  node[pos=0.76, fill=white, inner sep=2pt] 
		{\scriptsize $3$} (a34);
		
		\draw (ba21) edge node[pos=0.2, fill=white, inner sep=2pt] {\scriptsize $2$}  node[pos=0.8, fill=white, inner sep=2pt] {\scriptsize $r + 5$} (si);
		\draw (ba24) edge node[pos=0.2, fill=white, inner sep=2pt] {\scriptsize $2$}  node[pos=0.85, fill=white, inner sep=2pt] {\scriptsize $x+1$} (sib);
		
		\draw (ba31) edge[bend right = 15] node[pos=0.2, fill=white, inner sep=2pt] {\scriptsize $2$}  node[pos=0.7, fill=white, inner sep=2pt] {\scriptsize $3$} (si);
		\draw (ba34) edge node[pos=0.2, fill=white, inner sep=2pt] {\scriptsize $2$}  node[pos=0.9, fill=white, inner sep=2pt] {\scriptsize $y+1$} (sib);
		
		\node at ($(a34)+ (-3, 0)$) {$ x = (\nu -2) (r + 2) + 2$};
		\node at ($(a34)+ (-3, -1)$) {$ y = (\nu -1) (r + 2) + 2$};
		
		\draw (ba21) edge node[pos=0.2, fill=white, inner sep=2pt] {\scriptsize $1$}  node[pos=0.76, fill=white, inner sep=2pt] {\scriptsize $2$} (ba22);
		\draw[bedge] (ba21) edge node[pos=0.2, fill=white, inner sep=2pt] 
		{\scriptsize $3$}  node[pos=0.76, fill=white, inner sep=2pt] 
		{\scriptsize $1$} (ba24);
		\draw[bedge] (ba23) edge node[pos=0.2, fill=white, inner sep=2pt] 
		{\scriptsize $2$}  node[pos=0.76, fill=white, inner sep=2pt] 
		{\scriptsize $1$} (ba22);
		\draw (ba23) edge node[pos=0.2, fill=white, inner sep=2pt] {\scriptsize $1$}  node[pos=0.76, fill=white, inner sep=2pt] {\scriptsize $3$} (ba24);
		
		\draw (ba31) edge node[pos=0.2, fill=white, inner sep=2pt] {\scriptsize $1$}  node[pos=0.76, fill=white, inner sep=2pt] {\scriptsize $2$} (ba32);
		\draw[bedge] (ba31) edge node[pos=0.2, fill=white, inner sep=2pt] 
		{\scriptsize $3$}  node[pos=0.76, fill=white, inner sep=2pt] 
		{\scriptsize $1$} (ba34);
		\draw[bedge] (ba33) edge node[pos=0.2, fill=white, inner sep=2pt] 
		{\scriptsize $2$}  node[pos=0.76, fill=white, inner sep=2pt] 
		{\scriptsize $1$} (ba32);
		\draw (ba33) edge node[pos=0.2, fill=white, inner sep=2pt] {\scriptsize $1$}  node[pos=0.76, fill=white, inner sep=2pt] {\scriptsize $3$} (ba34);
		\end{tikzpicture}}
	
	\caption{An example for the vertex-selection gadget from \Cref{th:ISR-WP1P2}. 
		For an edge between two agents $a$ and $a'$, the number~$x$ closer to agent $a$ means that $a$ ranks~$a'$ at position~$x$, i.e., there are ${x -1}$~agents which $a$ prefers to~$a'$. 
		For example, the preferences of~$a_{1,4}^c$ are $a_{1,1}^c \succ \bar s^c \succ a_{1,3}^c$.
		The depicted preferences are those in $\mathcal{P}_1$. The preferences 
		in~$\mathcal{P}_2$ arise from swapping the red numbers.
		The matching 
		$M_1$ is marked in bold.
	}
	\label{fig:vsg}
\end{figure}
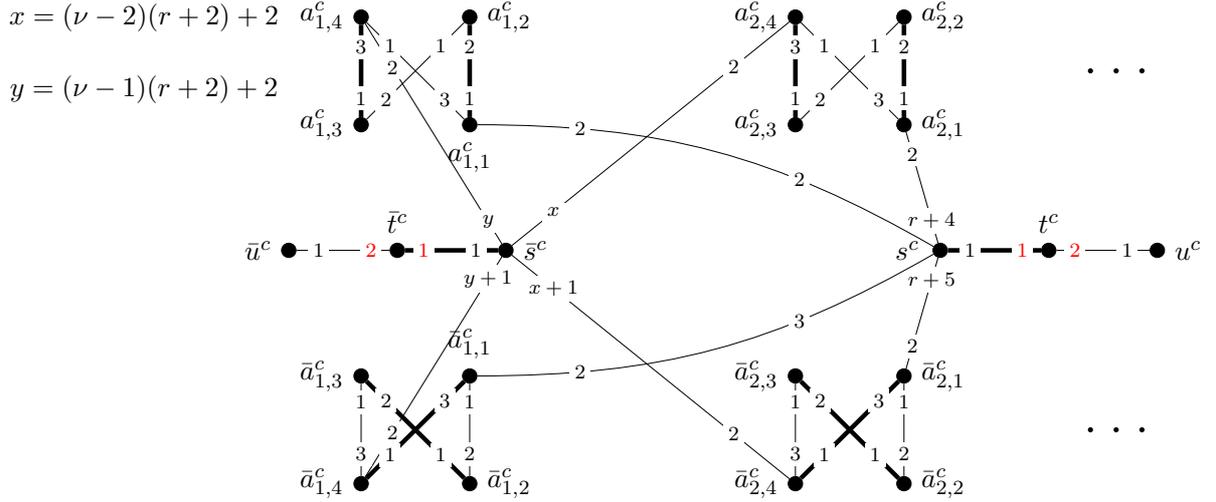

Apart from agents $s^c$ and $\bar s^c$, all agents from the vertex-selection gadget only find agents from the gadget acceptable, while $s^c$ and $\bar s^c$ also find agent~$a_{e, 1}$ (this agent will be introduced in the next paragraph ``Edge Gadget'') for each edge~$e$ incident to a vertex from~$V^c$ acceptable.
For 
$v_i^c \in V^c$, 
let 
$A_{\delta(v^c_i),1}$ denote the set of agents~$a_{e,1}$ such that~$e$ is an edge 
incident to $v^c_i$, i.e., $A_{\delta (v^c_i), 1 } := \{a_{e,1}\mid e\in E \wedge 
e\cap v^c_i \neq \emptyset \}$, and let $[A_{\delta(v^c_i),1}]$ denote an arbitrary strict order of~$A_{\delta (v^c_i), 1}$. 
For all $c\in [\ell]$ and $i\in [n]$, each vertex-selection gadget contains the following agents with the indicated preferences in $\mathcal{P}_1$: 
\begin{align*}
s^c & : t^c \succ a^c_{1, 1} \succ \bar{a}^c_{1, 1} \succ  
[A_{\delta(v^c_1),1}] \succ a^c_{2, 1}\succ \bar{a}^c_{2, 1} \succ 
[A_{\delta(v^c_2),1}]  \succ \dots \succ a^c_{\nu, 1}
 \succ \bar{a}^c_{\nu,1}\succ [A_{\delta(v^c_{\nu}),1}] \\
\bar{s}^c & : \bar{t}^c  \succ a^c_{\nu, 4} \succ \bar{a}^c_{\nu,4} \succ 
[A_{\delta(v^c_{\nu}),1}] \succ a^c_{\nu-1, 4}\succ \bar{a}^c_{\nu-1,4} \succ  
[A_{\delta(v^c_{\nu-1}),1}] 
\succ \dots \succ a^c_{1,4}
 \succ \bar{a}^c_{1, 4} 
\succ [A_{\delta(v^c_{1}),1}] \\
t^c & : s^c \succ u^c, \qquad \qquad \qquad \bar t^c  : \bar s^c \succ \bar 
u^c,
\qquad \qquad \qquad
u^c : t^c, \qquad \qquad \qquad
\bar u^c  : \bar t^c\\
a^c_{i, 1} & : a^c_{i, 2} \succ s^c \succ a^c_{i, 4}, \qquad a^c_{i, 2}  : a^c_{i, 3} \succ a^c_{i, 1}, \qquad a^c_{i, 3} : a^c_{i, 4} \succ a^c_{i, 2}, \qquad a^c_{i, 4}  : a^c_{i, 1} \succ \bar{s}^c \succ a^c_{i, 3} \\
\bar{a}^c_{i,1} & : \bar a^c_{i,2} \succ s^c \succ\bar a^c_{i,4}, \qquad \bar{a}^c_{i,2}  : \bar a^c_{i,3} \succ \bar a^c_{i,1}, \qquad \bar{a}^c_{i,3}  : \bar a^c_{i,4} \succ \bar a^c_{i,2}, \qquad \bar{a}^c_{i,4}  : \bar a^c_{i,1} \succ \bar s^c \succ \bar a^c_{i,3}
\end{align*}

In $\mathcal{P}_2$, only the preferences of agents $t^c$ and $t^{\bar{c}}$ change to $u^c\succ s^c$, respectively, $\bar{u}^c\succ \bar{s}^c$.
Notably, in each of the added 4-cycles, there exist two matchings of the four agents that are stable within the cycle in both~$\mathcal{P}_1$ and~$\mathcal{P}_2$ (i.e., $\bigl\{\{a^c_{i,1},a^c_{i,2}\},\{a^c_{i,3},a^c_{i,4}\}\bigr\}$ or $\bigl\{\{a^c_{i,1},a^c_{i,4}\},\{a^c_{i,2},a^c_{i,3}\}\bigr\}$ and $\bigl\{\{\bar{a}^c_{i,1},\bar{a}^c_{i,2}\},\{\bar{a}^c_{i,3},\bar{a}^c_{i,4}\}\bigr\}$ or $\bigl\{\{\bar{a}^c_{i,1},\bar{a}^c_{i,4}\},\{\bar{a}^c_{i,2},\bar{a}^c_{i,3}\}\bigr\}$  for $c\in [\ell]$ and~$i\in [\nu]$).
From each 4-cycle~$a^c_{i,1}$-$ a^c_{i,2}$-$ a^c_{i,3}$-$ a^c_{i,4}$-$ a^c_{i,1}$, matching~$M_1$ contains edges~$\{a^c_{i,1},a^c_{i,2}\}$ and~$\{a^c_{i,3},a^c_{i,4}\}$, and from each 4-cycle~$\bar a^c_{i,1}$-$ \bar a^c_{i,2}$-$ \bar a^c_{i,3}$-$ \bar a^c_{i,4}$-$ \bar a^c_{i,1}$, matching~$M_1$ contains edges~$\{\bar a^c_{i,1},\bar a^c_{i,4}\}$ and~$\{\bar a^c_{i,2}, \bar a^c_{i,3}\}$.

\subparagraph{Edge Gadget.}
For each edge $e = \{v_i^c, v^{\hat{c}}_j\}\in E$, we add an edge 
gadget.
This gadget consists of a 4-cycle with agents~$a_{e, 1}$, $a_{e, 2}$, $a_{e, 3}$, and $a_{e, 4}$, admitting two different 
matchings that are stable \emph{within} the gadget in both $\mathcal{P}_1$ and
$\mathcal{P}_2$.
The matching~$M_1$ contains $\{a_{e,1}, a_{e,4}\}$ and~$\{a_{e,3}, a_{e,2}\}$ 
in 
this 4-cycle and remains stable in~$M_2$ if all of~$s^c$, 
$\bar{s}^c$, $s^{\hat{c}}$, and~$\bar{s}^{\hat{c}}$ are matched at least 
as good as the respective agents corresponding to
$v_i^c$ and 
$v^{\hat{c}}_j$. This notably only happens if the vertex-selection 
gadgets of 
$V^c$ and 
$V^{\hat{c}}$ ``select'' the endpoints of $e$. Otherwise, the agents in this 
component 
need to be matched as  $\{a_{e,1}, a_{e,2}\}$ and $\{a_{e,3}, a_{e,4}\}$ in $M_2$, 
thereby contributing four pairs to $M_1 \triangle M_2$. For each~$e= \{v_i^c, v^{\hat{c}}_j\}\in E$, the 
agent's preferences are as follows:

\begin{align*}
a_{e, 1} & : a_{e, 2} \succ s^c \succ \bar s^c \succ s^{\hat{c}} \succ 
\bar 
s^{\hat{c}} 
\succ a_{e, 4}, \qquad &
a_{e, 2} & : a_{e, 3} \succ a_{e, 1}, \\
a_{e, 3}  &: a_{e, 4} \succ a_{e, 2}, \qquad &
a_{e, 4}  &: a_{e, 1} \succ a_{e, 3}.
\end{align*}

\subparagraph{The Reduction.}

To complete the description of the parameterized reduction, we
set $M_1 := \{ \{s^c, t^c\}, \{\bar s^c, \bar t^c\} \mid c \in [\ell]\} \cup 
\{ \{a^c_{i, 1}, a^c_{i, 2}\}, \{a^c_{i,3 } , a^c_{i,4}\}, \{\bar 
a^c_{i, 1}, \bar a^c_{i, 4}\}, \{\bar a^c_{i,3} , \bar a^c_{i,2}\} \mid c\in 
[\ell ], i\in [\nu]\}
\cup \{\{a_{e,1}, a_{e,4}\}, \{a_{e,3}, a_{e,2}\} \mid e\in E\}$ and $k 
:=  \ell \cdot (4 \nu + 5) + 4 (|E| - \binom{\ell}{2})$.

For the correctness of the reduction one can show that in $M_2$ for each $c\in [\ell]$  there is some~$i^*\in [\nu]$ such that the
matching $M_2$ contains pairs $\{s^c, a^c_{i^*, 1}\}$, $\{\bar{s}^c, a^{c}_{i^*, 
	4}\}$ (this corresponds to selecting vertex $v^c_{i^*}$ for color $c$). Then, the only agents $a_{e,1}$ for an edge~$e\in E$ incident to some vertex from $V^c$ that both $s^c$ and $\bar s^c$ do not prefer to their partner in~$M_2$ are those in~$A_{\delta(v^c_{i^*}),1}$.
This implies that for all edges $e=\{v_i^c, v^{\hat{c}}_j\}\in E$ with both endpoints selected we can match $a_{e,1}$ worse than $s^c$, $\bar s^c$, $s^{\hat{c}}$, and $\bar s^{\hat{c}}$. Thus, we can select $\{a_{e,1}, a_{e,4}\},\{a_{e,2}, a_{e,3}\}$ as in~$M_1$ in the respective edge gadget. In contrast, for all other edges we have to select the other matching in the edge gadget. To upper-bound the overall symmetric difference, one needs to further prove that for all $j<i^*$, matching~$M_2$ contains~$\{\{\bar a^c_{j, 1}, \bar a^c_{j, 2}\}, \{\bar a^c_{j,3}, \bar a^c_{j, 4}\}\}$, and that for all $j>i^*$, matching~$M_2$ contains~$\{\{ a^c_{j, 1}, a^c_{j, 4}\}, \{ a^c_{j,2}, a^c_{j, 3}\}\}$.
Thus, independent of the selected vertex, each vertex-selection gadget contributes~$4\nu + 5$ pairs to~$M_1 \triangle M_2$.

\subparagraph{Proof of Correctness.}

 We start the proof of correctness by showing that the constructed instance is indeed a feasible 
instance of \textsc{Incremental Stable Roommates}, i.e., that $M_1$ is 
stable in $\mathcal{P}_1$.

\begin{lemma}
	$M_1$ is stable in $\mathcal{P}_1$.
\end{lemma}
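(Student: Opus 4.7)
The plan is to verify, agent by agent (grouping by the symmetric roles), that no pair blocks $M_1$ in $\mathcal{P}_1$. The strategy is to observe that every agent is matched to one of its very top choices inside the local gadget structure, so any would-be blocking partner is itself already matched to someone it strictly prefers. I would organize the case analysis by the three kinds of gadgets plus the central $s^c/\bar s^c/t^c/\bar t^c/u^c/\bar u^c$ agents.

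First I would handle the ``selector'' agents. For each $c \in [\ell]$ the pairs $\{s^c,t^c\}$ and $\{\bar s^c,\bar t^c\}$ are in $M_1$; in $\mathcal{P}_1$ both $s^c$ and $\bar s^c$ rank $t^c$ respectively $\bar t^c$ first, and $t^c,\bar t^c$ rank $s^c,\bar s^c$ first, so none of these four agents can be in a blocking pair. The agents $u^c$ and $\bar u^c$ are unmatched, but each accepts only $t^c$ (resp.\ $\bar t^c$), who already has its top-ranked partner; so no blocking pair is created here either.

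Next I would handle the two 4-cycles attached to each vertex $v_i^c$. For the cycle $a^c_{i,1}\text{-}a^c_{i,2}\text{-}a^c_{i,3}\text{-}a^c_{i,4}$, $M_1$ uses $\{a^c_{i,1},a^c_{i,2}\}$ and $\{a^c_{i,3},a^c_{i,4}\}$. Consulting the preferences, $a^c_{i,1}$ and $a^c_{i,3}$ get their respective top choice, so they cannot be in a blocking pair. The only partners $a^c_{i,2}$ strictly prefers to $a^c_{i,1}$ is $a^c_{i,3}$, who is matched to its own top choice $a^c_{i,4}$; the partners $a^c_{i,4}$ strictly prefers to $a^c_{i,3}$ are $a^c_{i,1}$ and $\bar s^c$, both already matched to someone they rank higher. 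Thus no pair inside this cycle blocks, and no pair involving $s^c$ or $\bar s^c$ blocks either (they already have their top partners). The analysis for the barred 4-cycle with $M_1$-pairs $\{\bar a^c_{i,1},\bar a^c_{i,4}\}$ and $\{\bar a^c_{i,2},\bar a^c_{i,3}\}$ is symmetric: $\bar a^c_{i,4}$ and $\bar a^c_{i,2}$ get their first choice, while $\bar a^c_{i,1}$'s and $\bar a^c_{i,3}$'s preferred alternatives are already matched to partners they prefer.

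Finally I would dispatch the edge gadgets. For each $e\in E$, $M_1$ contains $\{a_{e,1},a_{e,4}\}$ and $\{a_{e,3},a_{e,2}\}$; here $a_{e,4}$ and $a_{e,2}$ receive their first choice, so cannot participate in a blocking pair. For $a_{e,1}$, the only partners it prefers to $a_{e,4}$ are $a_{e,2}$ and the four selector agents $s^c,\bar s^c,s^{\hat c},\bar s^{\hat c}$, all of whom are matched in $M_1$ to an agent they rank above $a_{e,1}$ (top-ranked $t^\cdot$ or $\bar t^\cdot$, respectively $a_{e,3}$). For $a_{e,3}$, the only better partner is $a_{e,4}$, which is matched to its top choice $a_{e,1}$. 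Covering these cases exhausts all pairs in the instance, so $M_1$ is stable in $\mathcal{P}_1$. No single step is genuinely difficult; the only place to be careful is bookkeeping the positions of $s^c,\bar s^c$ in the preferences of the cycle agents and vice versa, to ensure that the ``crossing'' candidate blocking pairs $\{a^c_{i,4},\bar s^c\}$, $\{\bar a^c_{i,1},s^c\}$, and $\{a_{e,1},s^c\}$ (and their analogues) are all ruled out because the selector agents are matched to their first choices in $\mathcal{P}_1$.
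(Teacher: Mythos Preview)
Your proof is correct and follows essentially the same approach as the paper: show that $s^c,\bar s^c,t^c,\bar t^c$ are matched to their first choices and hence cannot participate in any blocking pair, then verify that no pair inside any of the 4-cycles blocks. The paper's version simply asserts the last step as ``easy to verify'' whereas you spell it out explicitly, which is fine.
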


\begin{proof}
	Since $s^c$, $\bar s^c$, $t^c$, and $\bar t^c$ are matched to their first 
	choice, they are not part of a blocking pair.
	Thus, the only possible remaining blocking pairs are inside a 4-cycle 
	$a^c_{i, 1}$-$a^c_{i, 2}$-$a^c_{i,3}$-$a^c_{i, 4}$-$a^c_{i,1}$ or \mbox{$\bar a^c_{i, 
			1}$-$\bar a^c_{i, 2}$-$\bar a^c_{i,3}$-$\bar a^c_{i, 4}$-$\bar a^c_{i,1}$} for some $i\in 
	[\nu]$ and $c\in [\ell]$, or $a_{e, 1}$-$a_{e, 2}$-$a_{e,3}$-$a_{e, 4}$-$a_{e, 1}$ for 
	some $e\in E$.
	It is easy to verify that there is no blocking pair inside such a 4-cycle.
	Thus, $M_1$ is stable.
\end{proof}

 Next, we classify the stable matchings in $\mathcal{P}_2$.
 In order to simplify notation, we define for every vertex~$v^c_i$ with $i\in 
 [\nu]$ and $c\in [\ell]$,  
 matchings~$M^c_i := \{ \{a^c_{i,1 }, a^c_{i, 2}\},\allowbreak \{a^c_{i, 3}, a^c_{i, 4}\}, \allowbreak
 \{\bar a^c_{i,1 }, \bar a^c_{i, 2}\}, \allowbreak\{\bar a^c_{i, 3}, \bar a^c_{i, 4}\}\}$ 
 and $\bar M^c_i := \{ \{a^c_{i,1 }, a^c_{i, 4}\}, \{a^c_{i, 3}, a^c_{i, 2}\}, \allowbreak
 \{\bar a^c_{i,1 }, \bar a^c_{i, 4}\}, \allowbreak \{\bar a^c_{i, 3}, \bar a^c_{i, 2}\}\}$.

 \begin{lemma}\label{lem:stable-matchings}
	A matching $M$ is stable in $\mathcal{P}_2$ if and only if
	\begin{enumerate}
		\item for every $c\in [\ell]$, matching~$M$ contains $\{t^c, u^c\}$ and 
		$\{\bar t^c , \bar u ^c\}$,\label{item:tu}
		\item for every $c \in [\ell]$, there exists some $i\in [\nu]$ such that 
		matching $M$ contains pairs $\{s^c, a^c_{i, 1}\}$, $\{\bar{s}^c, a^{c}_{i, 
			4}\}$, $\{a^c_{i, 2}, a^c_{i,3}\}$, matching $M^c_j$ for all $j < i$, 
		matching $\bar M^c_j$ for $  j > i$, and  $\{\{\bar a^c_{i, 1}, 
		\bar a^c_{i, 2} \},\{\bar{a}^c_{i, 3}, \bar a^c_{i, 4}\} \}$ or 
		$\{\{\bar a^c_{i, 1}, \bar a^c_{i, 4}\},\{\bar a^c_{i, 3}, \bar 
		a^c_{i, 2}\}\}$, 
		\label{item:vertex-gadget}
		\item for every edge $e\in E$, matching $M$ contains pairs $\{a_{e,1}, 
		a_{e, 2}\}$ and $\{a_{e, 3}, a_{e, 4}\}$ or $\{a_{e, 1} , a_{e, 4}\}$ and 
		$\{a_{e, 3}, a_{e, 2}\}$, and
		\label{item:edge-gadget}
		\item for every edge $e = \{v^c_i, v^{\hat{c}}_j\}\in E$ with $\{s^c, 
		a^c_{i, 1}\} \notin M$ or $\{s^{\hat{c}}, a^{\hat{c}}_{j, 1}\} \notin 
		M$, we have that 
		$\{a_{e, 1}, a_{e, 2}\}$ and $\{a_{e, 3}, a_{e, 4}\}$ are contained in $M$.
		\label{item:consistency}
	\end{enumerate}
	
\end{lemma}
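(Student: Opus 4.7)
The plan is to prove the ``if and only if'' by treating the four conditions as successive structural constraints forced by stability, and then verifying in the converse direction that no blocking pair can survive these constraints. I would carry out the forward direction (stability implies the four conditions) as a sequence of cascading deductions, and then check the reverse direction case-by-case.

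For the forward direction, I would first establish Condition~\ref{item:tu}. In $\mathcal{P}_2$ the preferences of $t^c$ and $\bar t^c$ were swapped so that $t^c$ now prefers $u^c$ to $s^c$ and similarly for $\bar t^c$; since $u^c$ accepts only $t^c$ and ranks $t^c$ first, the pair $\{t^c, u^c\}$ (and symmetrically $\{\bar t^c, \bar u^c\}$) must lie in every stable matching, else it blocks. Next, I would argue that $s^c$ must be matched to some $a^c_{i,1}$ and $\bar s^c$ to some $a^c_{i,4}$ for the \emph{same} index $i$. The step for $s^c$ rules out each alternative: being matched to $\bar a^c_{j,1}$ or to any $a_{e,1}$ would mean that some $a^c_{j',1}$ earlier in $s^c$'s preference list must prefer its partner to $s^c$; by the local preferences of $a^c_{j',1}$ this forces $\{a^c_{j',1},a^c_{j',2}\}\in M$, and iterating (or contradicting stability inside the relevant 4-cycle via $\{a^c_{j',3},a^c_{j',4}\}$) pins $s^c$ to the first $a^c_{i,1}$ that appears in its list that is not so blocked. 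Once $\{s^c, a^c_{i,1}\}\in M$, the 4-cycle preferences force $\{a^c_{i,2},a^c_{i,3}\}\in M$, and a symmetric analysis on $\bar s^c$ forces $\{\bar s^c, a^c_{i,4}\}$ with the \emph{same} $i$ (any mismatch leads to a blocking pair inside the cycle at index $i$ or $i'$).

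The main obstacle is then proving the precise ``split'' structure of Condition~\ref{item:vertex-gadget} for $j \ne i$. For $j < i$, agent $s^c$ strictly prefers both $a^c_{j,1}$ and $\bar a^c_{j,1}$ to its partner $a^c_{i,1}$ (by the ordering in $s^c$'s list); to avoid the corresponding blocking pairs, each of these agents must prefer its own partner to $s^c$, which by their preferences $a^c_{j,2}\succ s^c$ and $\bar a^c_{j,2}\succ s^c$ forces $\{a^c_{j,1},a^c_{j,2}\}$ and $\{\bar a^c_{j,1},\bar a^c_{j,2}\}$ in $M$. Internal stability of the two 4-cycles then forces $\{a^c_{j,3},a^c_{j,4}\}$ and $\{\bar a^c_{j,3},\bar a^c_{j,4}\}$, i.e., $M^c_j\subseteq M$. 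The symmetric argument on $\bar s^c$ (whose list reverses the order of the $a^c_{\cdot,4}$) yields $\bar M^c_j\subseteq M$ for $j > i$. At $j = i$ the barred 4-cycle is unconstrained by $s^c,\bar s^c$, giving the two options listed in Condition~\ref{item:vertex-gadget}. Conditions~\ref{item:edge-gadget} and~\ref{item:consistency} follow similarly: each edge 4-cycle admits only two internally stable matchings, and for any $e=\{v^c_i,v^{\hat c}_j\}$ with at least one unselected endpoint, one of $s^c,\bar s^c,s^{\hat c},\bar s^{\hat c}$ strictly prefers $a_{e,1}$ to its current partner, so $a_{e,1}$ must prefer its partner to them; by $a_{e,1}$'s preferences the only possibility is $\{a_{e,1},a_{e,2}\}\in M$ and hence $\{a_{e,3},a_{e,4}\}\in M$.

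For the converse direction, I would assume Conditions~\ref{item:tu}--\ref{item:consistency} hold and verify that no blocking pair exists. Agents $t^c,\bar t^c,u^c,\bar u^c$ are matched to their top choices and cannot block. Inside each vertex or edge 4-cycle, the matchings allowed by the four conditions are exactly those that are stable within the cycle, so no blocking pair lies inside. The only remaining candidates involve $s^c$ or $\bar s^c$ against an $a^c_{j,1}$, $\bar a^c_{j,1}$, $a^c_{j,4}$, $\bar a^c_{j,4}$, or some $a_{e,1}$; here I would walk down the explicit rankings of $s^c$ and $\bar s^c$ and, in each case, use the configuration forced by Conditions~\ref{item:vertex-gadget} and~\ref{item:consistency} to exhibit that either $s^c$ (resp.~$\bar s^c$) already prefers its current partner, or the other agent does. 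This verification is mechanical; the genuinely hard part of the proof is the forward cascade, which hinges on lining up the long preference lists of $s^c$ and $\bar s^c$ with the local preferences inside each 4-cycle so that the ``selected index'' $i$ emerges uniquely per color.
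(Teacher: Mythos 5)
Your overall plan---verify the ``if'' direction by exhaustive blocking-pair checks and derive the ``only if'' direction by cascading local deductions---matches the paper's proof in most respects, and the parts you spell out in detail (Condition~1 from mutual top choices; $M^c_j$ for $j<i$ and $\bar M^c_j$ for $j>i$ from the non-blocking of $\{s^c,a^c_{j,1}\}$ and $\{\bar s^c,a^c_{j,4}\}$ plus internal 4-cycle stability; Condition~4 from one of the four $s$-agents preferring $a_{e,1}$) are essentially the arguments used there. The one genuine divergence is in how the global structure is pinned down: the paper first observes that \emph{some} matching satisfying Conditions~1--4 exists and is perfect, invokes the Rural Hospitals Theorem to conclude that \emph{every} stable matching in $\mathcal{P}_2$ is perfect, and then lets perfectness do the work of ruling out $\{s^c,a_{e,1}\}\in M$ (it would strand $a_{e,2}$ or $a_{e,4}$), establishing Condition~3, and forcing $\{a^c_{i,2},a^c_{i,3}\}$ and $\{\bar s^c,a^c_{i,4}\}$ (with the same index $i$) once $\{s^c,a^c_{i,1}\}\in M$ is known. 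You replace this shortcut by purely local cascades, which is feasible but is exactly where your sketch does not close.

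Concretely: the cascade ``every $a^c_{j',1}$ that $s^c$ prefers to its partner must be matched to $a^c_{j',2}$, iterate'' does not by itself rule out $M(s^c)=a_{e,1}$, nor $s^c$ being unmatched. If $e$ is incident to $v^c_i$, that cascade only tells you that the 4-cycles at levels $j\le i$ are matched as $\{a^c_{j,1},a^c_{j,2}\},\{a^c_{j,3},a^c_{j,4}\}$; the ``first $a^c_{i',1}$ not so blocked'' then lies \emph{after} $a_{e,1}$ in $s^c$'s list, so no contradiction arises on the $s^c$ side alone. You need an extra step: either argue directly that $\{s^c,a_{e,1}\}\in M$ leaves one of $a_{e,2},a_{e,4}$ unmatched inside the edge 4-cycle and hence produces a blocking pair there (this is the local analogue of the paper's perfectness argument), or push the constraint through $\bar s^c$, which is then forced strictly above $a^c_{i,4}$ to some level $j_0>i$ at which the unbarred 4-cycle can no longer be consistently completed. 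Likewise, excluding $M(s^c)=\bar a^c_{j,1}$ and getting the \emph{same} index $i$ for $\bar s^c$ both require combining the $s^c$- and $\bar s^c$-cascades (e.g., $\{s^c,\bar a^c_{j,1}\}\in M$ forces both $\{\bar s^c,\bar a^c_{j,4}\}\in M$ and, via the unbarred cycle at level $j$, that $\bar s^c$ be matched strictly better than $a^c_{j,4}$---a contradiction since $a^c_{j,4}\succ_{\bar s^c}\bar a^c_{j,4}$). All of this is fixable with the tools you already deploy, but as written the forward direction has a real hole at the step that selects $a^c_{i,1}$ as the partner of $s^c$.
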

\begin{proof}
	We first show that every matching~$M $ fulfilling the above conditions is stable.
	So assume for a contradiction that matching~$M$ fulfills the above conditions but there exists a blocking pair~$\{a, b\}$.
	By \Cref{item:tu}, for every $c\in [\ell]$, agents~$t^c$, $\bar t^c$, $u^c$, 
	and $\bar u^c$ are 
	matched to their first choice and thus not part of a blocking pair.
	It is easy to check that for each~$c \in [\ell]$ and~${i \in [\nu]}$, none of 
	$\{a^c_{i, 1}, a^c_{i,2}\}, \{a^c_{i, 1}, a^c_{i,4}\}, \{a^c_{i, 3}, 
	a^c_{i,2}\}$, and $\{a^c_{i, 3}, a^c_{i,4}\}$ is blocking, and the same 
	holds for~$\{\bar a^c_{i, 1}, \bar a^c_{i,2}\}, \{\bar a^c_{i, 1}, \bar 
	a^c_{i,4}\}, \{\bar a^c_{i, 3}, \bar a^c_{i,2}\}$, and $\{\bar 
	a^c_{i, 3}, \bar a^c_{i,4}\}$. 
	
	We now check whether for some $c\in [\ell]$, agent~$s^c$ or $\bar{s}^c$ forms a 
	blocking pair together with an agent from the same vertex-selection gadget.
	By \Cref{item:vertex-gadget}, there exists some $i\in [\nu]$ such that $\{s^c, 
	a^c_{i, 1}\}\in M$ and $\{\bar{s}^c, 
	a^c_{i, 4}\}\in M$.
	For~$ j< i$, agent $a^c_{j, 1}$ prefers~$M(a^c_{j, 1}) = a^c_{j, 2}$ to 
	$s^c$, while for~${ j > i}$, agent $s^c$ prefers $M(s^c ) = a^c_{i, 1}$ to 
	$a^c_{j, 1}$.
	Thus, $\{s^c, a^c_{j, 1}\}$ is not blocking.
	Similarly, for each $c\in [\ell]$ and $j\in [\nu]$, $\{ s^c, 
	\bar a^c_{j, 1}\} $ is not blocking.
	For $j < i$, agent $\bar s^c$ prefers $M(\bar s^c) = a^c_{i, 4}$ to~$a^c_{j, 
		4}$, while for $j > i$, agent $a^c_{j, 4}$ prefers $M(a^c_{j, 4} ) = a^c_{j, 
		1}$ to $\bar s^c$.
	Thus, $\{\bar s^c, a^c_{j, 4}\}$ is not blocking.
	Similarly, for each $c\in [\ell]$ and $j\in [\nu]$, pair~$\{\bar s^c, \bar a^c_{j, 
		4}\}$ is not blocking.
	
	It is easy to verify that there is no blocking pair inside an edge gadget.
	Lastly, consider a pair~$\{s^c, a_{e, 1}\}$ for an edge $e = \{v^c_i, 
	v^{\hat{c}}_j\}$.
	If $\{a_{e,1}, a_{e, 2}\} \in M$, then $a_{e, 1}$ prefers $M(a_{e, 1}) = 
	a_{e, 2}$ to~$s^c$ and $\bar s^c$.
	Otherwise, \Cref{item:consistency} implies that $\{s^c, a^c_{i, 1}\} \in M$ 
	and by \Cref{item:vertex-gadget}  $\{ \bar s^c, a^c_{i,4}\}\in M$.
	Thus, $s^c$ prefers $M(s^c) = a^c_{i, 1}$ to $a_{e, 1}$ and $\bar s^c$ 
	prefers $M( \bar s^c) = a^c_{i, 4}$ to $a_{e, 1}$.
	Consequently, $\{s^c, a_{e, 1}\} $ and $\{\bar s^c, a_{e, 1}\} $ are not blocking.
	Therefore, $M$ is stable.
	
	We now show that every stable matching~$M$ fulfills the above conditions.
	Since for all~$c\in [\ell]$, agents~$t^c $ and $u^c$ as well as $\bar t^c $ and 
	$\bar u^c$ are their 
	mutual top choice, it follows that $M$ fulfills \Cref{item:tu}.
	Note that there clearly always exists a matching~$M^*$ fulfilling 
	\Cref{item:tu,item:vertex-gadget,item:edge-gadget,item:consistency}, and that 
	$M^*$ is perfect.
	As proven above, $M^*$ is stable, and by the Rural-Hospitals-Theorem for SR \cite{DBLP:books/daglib/0066875}, every stable 
	matching in $\mathcal{P}_2$ is perfect.
	It follows that no stable matching contains $\{s^c, a_{e, 1}\}$ or~$\{\bar s^c, a_{e, 1}\}$ for some~${c\in [\ell]}$ and~${e\in E}$, as otherwise 
	$a_{e,2}$ or $a_{e,4}$ remains 
	unmatched.
	As $M$ is perfect, it follows that $M$ fulfills \Cref{item:edge-gadget}.
	We now turn to \Cref{item:vertex-gadget}. Fix some~$c\in [\ell]$.
	Since $s^c$ is matched in~$M$, matching~$M$ contains 
	pair~$\{s^c, \bar a^c_{i,1}\}$ or 
	$\{s^c, a^c_{i, 1}\}$ for some $i \in [\nu]$.
	We first assume for a contradiction that $M$ contains $\{s^c ,\bar 
	a^c_{i,1}\}$ for some~$ i \in [\nu]$.
	As $M$ is perfect, this implies that $M$ also contains $\{\bar s^c, \bar 
	a^c_{i,4}\}$. We now distinguish two cases. 
	Assuming that $M$ contains $\{a^c_{i,1}, a^c_{i,2}\}$, then it also contains 
	$\{a^c_{i,3}, a^c_{i,4}\}$, and it follows that $\{ \bar s^c, a^c_{i, 4}\} $ 
	blocks $M$, a contradiction.
	Otherwise $M$ contains $\{a^c_{i,1}, a^c_{i,4}\}$, and $\{ s^c, a^c_{i,1}\}$ 
	blocks~$M$, a contradiction.
	Therefore, $\{s^c, a^c_{i,1}\}$ is part of $M$ for some $i\in [\nu]$.
	As $M$ is perfect, $M$ also contains $\{a^c_{i,2}, a^c_{i,3}\}$ and $\{ 
	\bar s^c, a^c_{i,4}\}$.
	Since, for every $j\in [\nu]\setminus \{i\}$, pairs~$\{s^c, a^c_{j, 1}\}$ and 
	$\{s^c, \bar 
	a^c_{j,1}\}$ are not blocking, 
	it follows that $M$ contains~$M^c_j$ for $j < i$.
	Since $\{\bar s^c , a^c_{j,4}\} $ and $\{ \bar s^c, \bar a^c_{j,4}\}$ are 
	not blocking, it follows that $M$ contains~$\bar M^c_j $ for $j < i$.
	Therefore, $M$ fulfills \Cref{item:vertex-gadget}.
	
	Assume for a contradiction that $M$ does not fulfill \Cref{item:consistency}.
	Then there exists some $c\in [\ell]$ and $i\in [\nu]$ such that $\{s^c, 
	a^c_{i,1}\} \in M$ and 
	$\{\bar s^c, a^c_{i,4}\} \in M$, and an edge $e = \{v^c_j, 
	v^{\hat{c}}_{j'}\}$ 
	with $i 
	\neq j$ such that $\{a_{e, 1}, a_{e, 4}\} \in M$.
	If $j < i$, then $\{s^c, a_{e, 1}\}$ blocks $M$.
	If $j > i$, then $\{ \bar s^c, a_{e, 1}\}$ blocks~$M$.
	Therefore, $M$ is not stable, a contradiction.
\end{proof}

We are now ready to prove the correctness of our reduction.

\WPP*
\begin{proof}
	In this section, we have described a reduction from 
	\textsc{Multicolored Clique} parameterized by solution size~$\ell$ to \ISR 
	parameterized by  $|\mathcal{P}_1 
	\oplus \mathcal{P}_2|$.
	This reduction clearly runs in polynomial time. Further, as the preference 
	profiles~$\mathcal{P}_1$ and $\mathcal{P}_2$ only differ  by a single swap 
	in the preference 
	profiles of $t^c $ and $\bar t^c$ for every $c\in [\ell]$, we have 
	$|\mathcal{P}_1 \oplus \mathcal{P}_2| = 2\ell$.
	It remains to show the correctness of the reduction.
	
	$(\Rightarrow):$
	Given a multicolored clique~$\{v^1_{h_1}, \dots, v^\ell_{h_\ell}\}$, we 
	construct a stable matching as follows.
	Set 
	\begin{align*}
	M_2:= &\{ \{t^c, u^c\},\{\bar t^c, \bar u^c\}, \{s^c, a^c_{h_c, 1}\}, \{\bar 
	s^c, a^c_{h_c, 4}\}, 
	\{a^c_{h_c,2}, 
	a^c_{h_c,3}\}, \{\bar a^c_{h_c, 1}, \bar a^c_{h_c, 4}\},\\
	&\{\bar a^c_{h_c, 
		3}, \bar 
	a^c_{h_c, 2}\} \mid c\in [\ell]\}\\ 
	&\cup \bigcup_{c \in [\ell], i\in [\nu]: i < h_c} M^c_i \cup \bigcup_{c \in [\ell], i > h_c} \bar M^c_i\cup \bigcup_{e = \{v^c_{h_c}, v^{\hat{c}}_{h_{\hat c}}\}\in E} \{\{a_{e, 1}, 
	a_{e, 4}\}, 
	\{a_{e, 3}, a_{e,2}\} \} \\
	&\cup \bigcup_{e \neq \{v^c_{h_c}, v^{\hat{c}}_{h_{\hat c}}\}\in E} \{\{a_{e, 1}, 
	a_{e,2}\}, 
	\{a_{e, 3}, a_{e,4}\}\}
	\end{align*}
	By \Cref{lem:stable-matchings}, $M_2$ is stable.
	Note that 
	\begin{align*}
	M_1 \triangle M_2 = &
	\bigl\{ \{a^c_{i, 1}, a^c_{i,2}\}, \{a^c_{i,1}, a^c_{i,4}\}, \{a^c_{i, 2}, 
	a^c_{i, 3}\}, \{a^c_{i, 3}, a^c_{i,4}\} \mid c\in [\ell], i > 
	h_c\bigr\}\\
	& \cup \bigl\{\{s^c, t^c\}, \{\bar s^c, \bar t^c\},  \{t^c, u^c\},\{\bar t^c, 
	\bar u^c\}, \{s^c, a^c_{h_c, 1}\}, 
	\{a^c_{h_c, 1}, a^c_{h_c, 2}\}, \{a^c_{h_c, 
		2}, a^c_{h_c, 3}\},\\ 
	&\{a^c_{h_c, 3}, a^c_{h_c, 4}\}, \{a^c_{h_c, 4}, \bar 
	s^c 
	\} \mid  c\in [\ell] \bigr\} \\
	&\cup \bigl\{ \{\bar a^c_{i, 1}, \bar a^c_{i, 2}\}, \{\bar a^c_{i,1}, \bar 
	a^c_{i,4}\}, \{ \bar a^c_{i, 2}, \bar a^c_{i,3}\}, \{\bar a^c_{i, 3}, \bar 
	a^c_{i,4}\} \mid  c\in [\ell], i < h_c\bigr\} \\
	&\cup \Bigl\{\{a_{e, 1}, a_{e, 2} \}, \{a_{e, 1}, a_{e, 4}\}, \{a_{e, 2}, 
	a_{e,3}\},\\
	&\{a_{e, 3}, a_{e, 4}\} \mid e\in E \setminus \bigr\{ \{v^c_{h_c}, 
	v^{\hat{c}}_{h_{\hat{c}}}\}\mid c,\hat{c} \in [\ell], c\neq 
	\hat{c} \bigr\}\Bigr\} 
	\end{align*}
	Thus, we have $|M_1 \triangle M_2| = \ell \cdot (4 \nu + 5) + 4 (|E| 
	-\binom{\ell}{2}) = k$.
	
	$(\Leftarrow):$
	Let $M_2$ be a stable matching in $\mathcal{P}_2$ with $|M_1 \triangle M_2 
	|\le k$.
	By \Cref{lem:stable-matchings}, it follows that the symmetric difference of 
	$M_1$ and $M_2$ in each vertex gadget is at least~$4\nu + 5$.
	Furthermore, for each color $c\in [\ell]$, there exists some $h_c$ such that 
	$\{s^c, a^c_{h_c, 1}\} \in M$.
	\Cref{lem:stable-matchings} implies that the symmetric difference of $M_1 $ 
	and $M_2$ in an edge gadget can only be smaller than four for edges~$e = 
	\{v^c_i, v^{\hat{c}}_j\}$ such that $\{s^c, a^c_{i, 1} \} \in M_2$ and 
	$\{s^{\hat{c}}, 
	a^{\hat{c}}_{j, 1}\}\in M_2$.
	Since $|M_1 \triangle M_2 | \le k$, it follows that the symmetric difference 
	of $M_1$ and $M_2$ in an edge gadget is smaller than four for at least~$\binom{\ell}{2}$ edge gadgets.
	Thus, $\{v^c_{h_c} \mid c\in [\ell]\}$ is a multicolored clique.
\end{proof}

\subsection{XP-Algorithm} \label{sub:XP} 

Complementing the above W[1]-hardness result, we now present an intricate XP-algorithm for \ISR parameterized by 
$|\mathcal{P}_1 \oplus \mathcal{P}_2|$, resulting in the following theorem: 
 \begin{restatable}{theorem}{thISRXPPP} \label{th:ISR-XPP1P2}
  \ISR can be solved in $\mathcal{O}(2^{4|\mathcal{P}_1 \oplus \mathcal{P}_2| 
   }\cdot n^{5|\mathcal{P}_1 \oplus \mathcal{P}_2| +3})$ time.
 \end{restatable}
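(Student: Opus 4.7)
Let $q := |\mathcal{P}_1 \oplus \mathcal{P}_2|$ and let $A^{\star} \subseteq A$ denote the set of agents whose preference list in $\mathcal{P}_2$ differs from the one in $\mathcal{P}_1$; clearly $|A^{\star}| \le q$. My starting observation is that every blocking pair of $M_1$ in $\mathcal{P}_2$ must contain at least one agent of $A^{\star}$, because otherwise the same pair would already block $M_1$ in $\mathcal{P}_1$, contradicting the stability of $M_1$ there. Consequently, every alternating path or cycle in $M_1 \triangle M_2$ of a target solution $M_2$ has to be ``anchored'' at an agent of $A^{\star}$, so the plan is to seed the search by guessing the $M_2$-partner of each agent in a small critical set $S$ containing $A^{\star}$ together with its $M_1$-neighbors, and then deterministically propagate.

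To record partial information, I would maintain for each agent $a$ a best-case $\bc(a)$ and a worst-case $\wc(a)$ partner, with the invariant that in every matching consistent with the current commitments $a$ is matched to some agent that $a$ ranks weakly between $\bc(a)$ and $\wc(a)$ (with $\bot$ indicating ``no constraint''). After guessing the partner of each $a \in S$, which gives $n^{\mathcal{O}(q)}$ root branches, I would set $\bc(a) = \wc(a)$ accordingly and propagate: whenever an edge $\{a,b\}$ is pinned to $M_2$, stability forces every $c$ with $c \succ_b a$ to satisfy $M_2(c) \succeq_c b$ (else $\{b,c\}$ blocks), which is recorded by tightening $\wc(c)$ to something strictly better than $b$; symmetrically on $a$'s side. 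These tightenings may in turn pin further edges (when $\bc(c)$ meets $\wc(c)$), and they may also leave the former $M_1$-partners of committed agents searching for a new match. If propagation ever reaches a state where two mutually incompatible commitments are both still possible, I perform a binary branch; I would charge each such branch to an agent of $A^{\star}$ whose altered preferences induced the choice, bounding the number of binary branches by $\mathcal{O}(q)$ and hence contributing the $2^{\mathcal{O}(q)}$ factor.

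Since each propagation round tightens at least one $\bc$ or $\wc$ value and can be carried out in $\mathcal{O}(n^2)$ time, every leaf of the search tree terminates in polynomial time and either reports a contradiction or outputs a matching that is stable in $\mathcal{P}_2$; among the surviving leaves I keep one minimizing $|M_1 \triangle M_2|$. Multiplying the $n^{\mathcal{O}(q)}$ root guesses, the $2^{\mathcal{O}(q)}$ propagation branches, and the polynomial per-leaf cost yields the claimed bound.

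The main obstacle is not feasibility but \emph{optimality}: I must show that for the correct guess of the partners in $S$ coming from an optimum solution $M_2^{\star}$, every edge pinned during propagation actually belongs to $M_2^{\star}$ (so that the procedure never deviates more from $M_1$ than necessary). I expect this to follow from an exchange/rotation-style argument showing that flipping any pinned edge either introduces a blocking pair in $\mathcal{P}_2$ or is inconsistent with the guessed partners on $S$. The second nontrivial piece is the accounting that keeps binary branching at $\mathcal{O}(q)$ rather than growing with $n$: I would trace each branching moment back along its propagation chain to the agent in $A^{\star}$ whose altered preference triggered the chain, and argue that two distinct chains can only fork at such an agent, so the total forking is bounded by $|A^{\star}|$ times a constant.
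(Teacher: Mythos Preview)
Your proposal shares the shape of the paper's algorithm---seed with guesses on a small set, maintain best/worst cases, propagate---but it is missing the ingredient that makes propagation deterministic, and the branching bound you sketch is provably unachievable. The anchoring claim is already too optimistic: $M_1 \triangle M_2$ can contain cycles disjoint from $A^{\star} \cup M_1(A^{\star})$ (in the paper's own W[1]-hardness reduction each edge gadget is such a 4-cycle containing no agent with changed preferences). More decisively, the paper records immediately after the proof that \ISR remains NP-complete even when one is told, in addition to your guesses on $S$, the set $F \subseteq M_1$ of pairs both of whose endpoints prefer $M_2$. Hence ``trace each fork back to an agent of $A^{\star}$'' cannot succeed: with only the seeds you propose, the residual search space is not $2^{\mathcal{O}(q)}$-bounded, and no charging argument can fix this.

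What the paper guesses in addition are precisely that set $F \subseteq M_1$ and the set $H \subseteq M_2$ of pairs both of whose endpoints prefer $M_1$; an alternating-cycle argument (Lemma~\ref{lem:F}) shows $|F|,|H| \le q$, so these cost only $n^{\mathcal{O}(q)}$. With $F$ and $H$ fixed, every remaining pair of $M_1$ that is still in the working matching satisfies the dichotomy needed for Lemma~\ref{lem:circular-prefs}-style reasoning (exactly one endpoint is happier in $M_2$), and this is what makes propagation \emph{branch-free}: when a blocking pair $\{a,b\}$ appears with $\bc(b)=M(b)$, the algorithm rematches $a$ to the next \emph{stable pair of $\mathcal{P}_2$} compatible with the current bounds, and the dichotomy together with the guessed $F,H$ certifies that every such step is respected by any optimum obeying the guess. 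The $2^{4q}$ factor in the stated bound comes entirely from the initialization (one bit per $M_1$-partner of an agent in the guessed set $X$, cf.\ Line~\ref{line:init-guessPart}), not from branching during propagation.
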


As our algorithm and its proof of correctness are quite complicated and uses some novel ideas, we split the reminder of this subsection into two parts. 
In \Cref{subsub:XP-high-level}, we give an high-level overview of the algorithm and some intuitive explanations how and why it works. 
Subsequently, in \Cref{subsub:XP-details}, we present the algorithm in all its details and formally prove its correctness. Throughout this section, we assume that $M_1$ and $M_2$ are both perfect matchings (we will argue why we can assume this without loss of generality in \Cref{le:wlog2}). 

\subsubsection{High-Level Description} \label{subsub:XP-high-level}
Our algorithm works in two phases: 
In the initialization phase, we make some guesses how the stable matching $M_2$ looks like and accordingly change the original stable matching~$M_1$. 
These changes and guesses then impose certain constraints how good/bad some agents must be matched in $M_2$.
Subsequently, in the propagation phase, we locally resolve blocking pairs caused by the initial changes by ``propagating'' these constraints through the instance until a new stable matching is reached. This matching is then guaranteed to be as close as possible to the original stable matching.  
We believe that our technique to propagate changes through a matching is also of independent interest and might find applications elsewhere. 
In the following, we sketch the main ingredients of our algorithm.
We say that we \emph{update a matching~$M$ to contain a pair}~$e=\{a,b\}$ if we delete all pairs containing $a$ or $b$ from~$M$ and
add pair~$e$ to~$M$.

\subparagraph{Initialization Phase (First Part of Description).} 

Our algorithm maintains a matching~$M$.
At the beginning, we set~$M:=M_1$.
Before we change~$M$, we make some guesses on how the output matching $M_2$ 
shall look like.
These guesses are responsible for the exponential part of the running time (the rest of our algorithm runs in polynomial time).
The guesses result in some changes to $M$ and, for some agents $a\in A$, in a ``best case'' and ``worst case'' to which partner $a$ can be matched in~$M_2$. Consequently, we will store in $\bc (a)$ the \emph{best case} how $a$ may be matched in~$M_2$, i.e., the most-preferred (by $a$ in~$\mathcal{P}_2$) agent~$b$ for which we cannot exclude that $a $ is matched to~$b$ in a stable matching in $\mathcal{P}_2$ respecting 
 the guesses (in other words, this means that $a$ cannot be matched to a better partner that $\bc(a)$ in $M_2$).
 Similarly, $\wc (a)$ stores the \emph{worst case} to which $a$ can be matched.
 We initialize $\bc(a)=\wc(a)=\bot$ for all $a\in A$, encoding that we do not know a best or worst case yet.
 We will say that agent $a$ has a \emph{trivial} best case if $\bc (a) = \bot$ and a \emph{trivial} worst case if $\wc (a) = \bot$.
 
 To be more specific, among others, in the initialization phase we guess for each agent~$a\in A$ with 
 modified preferences as well as for~$M_1 (a)$ how they are matched in~$M_2$ 
 and update~$M$ to include the guessed pairs. Moreover, as an unmatched agent $a$ shall always have $\bc(a)\neq \bot$ or $\wc(a)\neq \bot$,  we guess for all agents $a$ that became unmatched by this whether they prefer~$M_1(a)$ to $M_2(a)$ (in which case we set $\bc(a):=M_1(a)$) or $M_2(a)$ to~$M_1(a)$ (in which case we set $\wc(a):=M_1(a)$).  
 Our algorithm also makes further guesses in the initialization phase.
 However, in order to understand the purpose of these additional guesses, it is helpful to first understand the subsequent propagation phase in some detail. 
 Thus, we postpone the description of the additional guesses to the end of this section.
 
\subparagraph{Propagation Phase.} 
After the initialization phase, blocking pairs for the current matching~$M$ 
force the algorithm to further change~$M$ and force a  propagation of best and worst cases through the instance until a stable matching is reached.
As our updates to $M$ are in some sense ``minimally invasive'' and exhaustive, once $M$ is stable in $\mathcal{P}_2$, it is guaranteed to be the stable matching in $\mathcal{P}_2$ which is closest to $M_1$ among all matchings respecting the initial guesses.
At  the core of the technique lies the simple observation that in an SR instance for each stable pair~$\{c,d\}$ and each stable matching~$N$ not including $\{c,d\}$ exactly one of~$c$ and~$d$ prefers the other to its partner in~$N$: 
\begin{restatable}[{\cite[Lemma 4.3.9]{DBLP:books/daglib/0066875}}]{lemma}{circularprefs}\label{lem:circular-prefs}
	Let~$N$ be a stable matching and $e = \{c, d\} \notin N$ be a stable pair in an SR instance.
	Then either $N(c)\succ_{c} d$ and $c\succ_{d} N(d)$ or 
	$d\succ_{c} N(c)$ and $N(d)\succ_{d} c$.
\end{restatable}
From this we can draw conclusions in the following spirit: 
Assuming that for a stable pair~$\{c, d\}$ in $\mathcal{P}_2$ we have that $\wc(c)\succ_c d$, i.e., $c$ is matched better than $d$ in $M_2$, it follows from \Cref{lem:circular-prefs} that $d$ is matched worse than $c$ in $M_2$, implying that we can safely set $\bc(d)=c$.

 \begin{algorithm}[t!]
\caption{Simplified propagation step performed for a pair $\{a,b\}$ of two matched agents blocking~$M$ with $\bc(b)=M(b)$ or for an unmatched agent $a$ with $\wc(a)\neq \bot$.}
 \begin{algorithmic}[1]
			\If{a is unmatched}
              {Let~$e=\{a, c\}$ be the stable pair in $\mathcal P_2$ such 
              that 
              $c \succ^{\mathcal{P}_2}_a \wc(a)$ and 
$\bc(c)\succ^{\mathcal{P}_2}_c a$ (or $\bc(c)=\bot$) and $c$ is 
worst-ranked by 
              $a$ 
              among 
              all such pairs.}
         \Else  
         { Let~$e=\{a, c\}$ be the stable pair in $\mathcal P_2$ such 
         	that 
         	$c \succ^{\mathcal{P}_2}_a M(a)$ and 
         	$\bc(c)\succ^{\mathcal{P}_2}_c a$ (or $\bc(c)=\bot$) and $c$ is 
         	worst-ranked by 
         	$a$ 
         	among 
         	all such pairs.}
         \EndIf
          \If{no such pair exists}
          {Reject this guess.}
          \Else 
          {~Update~$M$ such that it contains~$e$, set $\wc (a) := c$ and 
          	$\bc (c ) : = a$.}\label{l:1}
          \If{$M(a)\neq \square$}
          {$\bc (M(a)) : =  a$.}\label{l:2}
          \EndIf
          \If{$M(c)\neq \square$}
          {$\wc (M(c)) : = c$.}\label{l:3}
          \EndIf
          \EndIf
            
			\end{algorithmic}
			\label{alg:update}
			\end{algorithm}
 In the following, we will now explain simplified versions of some parts of the propagation phase, while leaving out others.
 
Assume for a moment that the current matching $M$ is perfect and that there is a blocking pair for $M$ in $\mathcal{P}_2$ (see \Cref{alg:update} for a pseudocode-description of the procedure described in the following). 
Because $M_1$ is stable in~$\mathcal{P}_1$, all pairs that currently block~$M_1$ either involve an agent with changed preferences or resulted from previous changes made to~$M$.
Using this, one can show that at least one of the two agents from a blocking pair~$\{a, b\}$, say~$b$, will have $a \succ_b \bc (b)=M(b)$.
Thus, we know that $b$ is matched worse than $a$ in any stable matching in~$\mathcal{P}_2$ respecting our current guesses. 
Accordingly, for $\{a,b\}$ not to block~$M_2$, agent~$a$ has to be matched to~$b$ or better and, in particular, better than $M(a)$ in the solution.
As a consequence, we update the worst case of~$a$ to be the next agent~$c$ which $a$ prefers to $M(a)$ such that~$\{a, c\}$ is a stable pair in $\mathcal{P}_2$, i.e., we set $\wc(a):=c$  (see Line \ref{l:1}).
Subsequently, we propagate this change further through the instance:
 Note that from \Cref{lem:circular-prefs} it follows that if $\{a',a''\}$ is a stable pair in $\mathcal{P}_2$ and agent~$a''$ is the worst possible partner of~$a'$ in a stable matching in $\mathcal{P}_2$ (or $a'$ prefers its worst possible partner to~$a''$), then agent~$a''$ cannot be matched better than agent~$a'$ in a stable matching in~$\mathcal{P}_2$.
 Thus, by setting $\wc(a'):=a''$ we also get $\bc(a''):=a'$. 
 Consequently, applying this to our previous update $\wc(a):=c$, we can also set~$\bc(c):=a$ (see Line \ref{l:1}). 
 Moreover, recall that by the definition of $c$, agent~$a$ prefers $c$ to $a$'s current partner $M(a)$ in~$M$. 
 Thus, assuming that in a stable matching $M^*$ in~$\mathcal{P}_2$ one of $a$ and $M(a)$ prefers the other to its partner in~$M^*$, whereas the other prefers its partner in~$M^*$, as $\wc(a)=c\succ_a M(a)$, we can conclude $\bc(M(a)):=a$ (see Line \ref{l:2}; we will discuss in the next part ``Initialization Phase (Second Part of Description)'' why this assumption can be made). 
 A similar reasoning applies to the update in Line~\ref{l:3}.
 
 So far, we assumed that all agents are matched.
 If there is an unmatched agent~$a$ involved in a blocking pair, then one can show that it cannot be matched to~$\bc (a)$ or $\wc (a)$.
 Thus, as each stable matching for~$\mathcal{P}_2$ is perfect, if $\wc (a) \neq \bot$, then we match $a$ to the next-better agent~$c$ before~$\wc (a)$ in its preferences such that $\{a, c\}$ is a stable pair in $\mathcal{P}_2$ and set $\wc(a)=c$. 
 Subsequently, we propagate this change as in the  above described case of a blocking pair (see \Cref{alg:update}).
 Otherwise, we have $\bc (a) \neq \bot$ and we match $a$ to the next-worse agent~$c$ after~$\bc (a)$ in the preferences of~$a$ such that $\{a, c\}$ is a stable pair in $\mathcal{P}_2$.
 Here, a slightly more complicated subsequent propagation step is needed (as described in the next section).  
 
Repeating these steps, i.e., matching so far unmatched agents and resolving blocking pairs, eventually either results in 
a conflict (i.e., 
an agent preferring its worst case to its best case, or changing a pair which we guessed to be part of~$M_2$) or in a stable matching.
In the first case, we conclude that no stable matching obeying our guesses exists, while in the latter case, we found a stable matching obeying the guesses and minimizing the symmetric difference to~$M_1$ among all such matchings.
The reason for the optimality of this matching is that every matching obeying our initial guesses has to obey the best and worst cases at the termination of the algorithm, and the computed matching~$M$ contains all pairs from~$M_1$ which comply with the best and worst cases.

\subparagraph{Initialization Phase (Second Part of Description).}  In addition to the guesses described above, the algorithm guesses the set~$F$ of pairs from $M_1$ for which both 
endpoints prefer~$M_2$ to $M_1$.
Similarly, the algorithm guesses the set~$H$ of pairs from~$M_2$ for which both endpoints prefer~$M_1$ to $M_2$.
Notably, one can prove that the cardinality of both~$F$ and~$H$ can be upper-bounded by $|\mathcal{P}_1 \oplus \mathcal{P}_2|$, ensuring XP-running time.
The reason why we need to guess the set~$F$ is that pairs~$\{a, b\}$ from~$M_1$ may not be stable pairs in~$\mathcal{P}_2$.
In this case, if $\{a,b\}\in M$ (which initially holds) and we know that $a$ prefers~$M_2(a)$ to~$b$ it does not follow that $b$ prefers~$a$ to~$M_2(b)$.
Thus, if we would treat the pairs from~$F$ as ``normal'' pairs, we would propagate an incorrect best case in Line~\ref{l:2}.
Note that all pairs from~$M\setminus M_1$  are stable pairs in~$\mathcal{P}_2$, as throughout the algorithm, we only add pairs that are stable in~$\mathcal{P}_2$ (see Line~\ref{l:1}). 
The reason why we need to guess the set~$H$ is more subtle but also due to fact that pairs from $H$ might cause problems for our propagation step. 
To incorporate our guesses, in the initialization phase, for each $\{a, b\}\in F$, we delete~$\{a, b\}$ from~$M$ and set~$\wc (a) = b$ and $\wc (b) = a$, while for each~$\{a, b\} \in H$ we update~$M$ to include~$H$.
We remark that from the proof of \Cref{th:ISR-WP1P2} it follows that \ISR is NP-complete even if we know for each agent~$a$ whose preferences changed as well as~$M_1 (a)$ how they are matched in~$M_2$ and the set of pairs~$F\subseteq M_1$ for which both endpoints prefer~$M_2$ to~$M_1$.
This indicates that guessing the set~$H$ might be necessary for an XP-algorithm.

\subsubsection{Formal Description and Proof of Correctness} \label{subsub:XP-details}
In the following, we start by giving a formal description of the XP algorithm in Section \ref{sec:formal_desc}. 
Subsequently, in Section \ref{sec:perfect}, we show that we can indeed assume that both $M_1$ and $M_2$ are perfect matchings. 
Next, in Section \ref{sec:intial_obs}, we prove some useful observations on the best and worst cases of agents, how they relate to the current matching $M$ and to the matching $M_1$, and how they change during the execution of the algorithm. 
These observations will be crucial when proving the correctness of the algorithm. 
In Section \ref{sec:defined}, we continue by establishing that the algorithm is well-defined.
Moreover, in Section \ref{sec:propagate} we prove that we propagate the best and worst cases of agents correctly, i.e., a matching that obeys our initial guesses also obeys the best and worst cases of all agents throughout the execution of the algorithm. 
We conclude by proving in Section \ref{sec:bounding} that the number of our initial guesses that we need to make is bounded in $n^{\mathcal{O}(|\mathcal{P}_1\oplus \mathcal{P}_2|)}$ and finally combine all parts of the proof in Section \ref{sec:putting}. 

 \paragraph{Formal Description of Algorithm} \label{sec:formal_desc}
 We assume that in the given \ISR instance matchings $M_1$ and $M_2$ are perfect (we later argue in Section \ref{sec:perfect} how we can preprocess instances to achieve this).
 
 A pseudocode description of our XP-algorithm for \ISR parameterized by $|\mathcal{P}_1 \oplus \mathcal{P}_2|$ can be found in \Cref{alg:XP,alg:init-XP,alg:updateapx}.
 \Cref{alg:XP} contains the main part of the algorithm while \Cref{alg:init-XP} concerns the initialization and \Cref{alg:updateapx} the propagation step. 
 
 We start the algorithm with a call of the \textsc{Initialization}$(\cdot)$ function described in \Cref{alg:init-XP} in which we initialize the matching $M$, the best and worst cases of agents, and make some guesses as described in \Cref{subsub:XP-high-level}. 
 Next, as long as the current matching $M$ is blocked by some pair, we further update the matching and agent's best and worst cases. 
 For this, we make a case distinction based on whether there is a blocking pair involving an unmatched agent $a$. 
 In this case we call the \textsc{Propagate}$(\cdot)$ function on $a$.
 In the case that all agents that are part of a blocking pair are matched, we pick an arbitrary blocking pair $\{a,b\}$, do some preprocessing steps and call the \textsc{Propagate}$(\cdot)$ function on the agent $x\in \{a,b\}$ with $\bc (x) = \bot$ and $\bc (y) \neq \bot$ with $y\in \{a,b\}\setminus \{x\}$ (we show in Section \ref{sec:defined} that such an agent exists in every blocking pair).
 Lastly, in \Cref{alg:updateapx}, we describe the \textsc{Propagate}$(\cdot)$ function called on some agent $a$.
 Here we make a case distinction based on whether $\bc(a)=\bot$ or $\bc(a)\neq \bot$.
 In both cases, we search for a stable pair $\{a,b\}$ that respects the current best and worst cases, initial guesses, and the constraints imposed by the blocking pair and that is worst ranked among these pairs.
 Subsequently, we update the best or worst cases of $a$ and $b$ and under certain constraints also of $M(a)$ and $M(b)$. 
 Finally, we update $M$ to include $\{a,b\}$. ¸
 
 	\begin{algorithm}[t!]
		\caption{Algorithm for \textsc{Incremental SR}}\label{alg:XP}
		\begin{algorithmic}[1]
			\Input{An \ISR instance $\mathcal{I}=(A,\mathcal{P}_1, \mathcal{P}_2,M_1,k)$ where $M_1$ is a perfect stable matching in $\mathcal{P}_1$ and there is a perfect stable matching in $\mathcal{P}_2$.}
			\Output{A stable matching~$M_2$ in $\mathcal{P}_2$ with $|M_1 \triangle M_2| \le k$ if one exists.}
			
			\State \textsc{Initialization}()
			\While{there exists a blocking pair 
			for $M$ in $\mathcal{P}_2$ \label{line:while}}
			\If{there is an agent~$a$ not matched by $M$ that is part of a 
			blocking pair for $M$
			}
              \textsc{Propagate}($a$). \Comment{see \Cref{alg:updateapx}}
            \Else
            
            \State{Let~$\{a, b\}$ be a blocking pair for $M$ in 
            $\mathcal{P}_2$.\label{line:selectbp}}
            
            \If{$\bc (a) \neq \bot $ and $\bc (b) \neq \bot$ 
            \label{line:reject-bp}}
              Reject this guess.
            \EndIf
            
            \State{Let $\{x, y\} =  \{a,b\}$ such that $\bc (x) = \bot$
            and $\bc (y) \neq \bot$.
            \label{line:choose-a}}
            \If{$\wc (x) = \bot$}
              Set $\wc (x): = M(x)$. \label{line:wcupdate}
            \EndIf
            
            \State \textsc{Propagate}($x$). \Comment{see \Cref{alg:updateapx}}
            \EndIf
            
            \If{there exists an agent~$a$ with $\wc(a) \succ^{\mathcal{P}_2}_a \bc (a)$}
              {Reject this guess.} \label{line:reject-bcwc}
            \EndIf
            
			\EndWhile
			
		\IfThenElse{$|M \triangle M_1| \le k$}{Return~$M$.}{Reject this 
		guess.\label{line:reject}}
			\end{algorithmic}
		\label{alg} 
	\end{algorithm}
	
 	\begin{algorithm}[t!]
		\caption{Function \textsc{Initialization} used in \Cref{alg}}\label{alg:init-XP}
		\begin{algorithmic}[1]
		\makeatletter
\setcounter{ALG@line}{11}
\makeatother
					
			\State Set $M := M_1$.  \label{line:init}
			\State Set $\bc (a) : = \bot$ and $\wc (a):=\bot$ for every 
agent~$a$.\label{line:initcs}
\bigskip
			\State Let $B$ be the set of agents with modified preferences and 
their partners in $M_1$. Guess for each agent~$a\in B$ how 
$a$ is matched in $M_2$, and update $M$ such that it contains the 
guessed pairs.\label{line:guess}

\State Guess a set $H\subseteq A\times A$ of up to  $| \mathcal{P}_1 \oplus 
			\mathcal{P}_2| $ pairs and update $M$ to include $H$. \label{line:guessH}
\State \Comment{pairs from $M_2$ where both endpoints prefer 
			$M_1$ to~$M_2$ in $\mathcal{P}_2$.}
			\State Let $X$ be the set of all agents from $B$ and their guessed 
partners and all agents that are part of a pair from $H$. 
			\State Set $\bc (a): = M(a)$ and $\wc (a) := M(a)$ for every 
agent~$a\in 
X$.\label{line:guesswcbc}
			\If{we guessed that two agents from $X$ have the same partner} 
Reject this guess. \EndIf
			\For{every $a\in X$ with $M_1 (a) \notin X$}
              \State Guess whether $M_1 (a)$ prefers~$M_2 (a)$ to $a$ or 
              prefers $a$ to $M_2 (a)$ in $\mathcal{P}_2$. \label{line:init-guessPart}
              
              \State In the former case, set $\wc (M_1(a)) := a$, while in the 
              latter case, set $\bc (M_1(a)) : = a$. \label{line:init-guess}
			
			\EndFor
			
			\bigskip
			
			\State Guess a set~$F$ of up to $| \mathcal{P}_1 \oplus 
			\mathcal{P}_2| $ pairs from~$M_1$.\label{line:guess-F} \State \Comment{pairs for which both endpoints prefer 
			$M_2$ to~$M_1$  in $\mathcal{P}_2$.}
			\If{there is a pair $\{a,M_1(a)\}\in F$ with $a\in X$ with~$ M_1 (a) \succ_a^{\mathcal{P}_2} \bc(a)$
} Reject this guess. \EndIf
			\For{every $e = \{a, b\} \in F$}
			 Set $\wc (a ) = b$, set $\wc (b) = a$ and delete $e$ from $M$. 
\label{line:endGuesses}
			\EndFor
			\end{algorithmic}
		\label{alg-init} 
	\end{algorithm}
	
	\begin{algorithm}[t!]
    
		\caption{Function \textsc{Propagate} used in \Cref{alg}}
		\label{alg:updateapx} 
    \begin{algorithmic}[1]
			\Input{An agent~$a$.}
\makeatletter
\setcounter{ALG@line}{27}
\makeatother
			\If{$\bc (a) \neq \bot$} \Comment{$a$ prefers $M_1(a)$ to 
			$M_2(a)$ in $\mathcal{P}_2$.}
              \State{Let~$e=\{a, b\}$ be the stable pair in $\mathcal P_2$ such 
              that 
              $ \bc (a) \succ^{\mathcal{P}_2}_a 
              b$, $a\succeq^{\mathcal{P}_2}_b \wc(b)$ (or $\wc(b)=\bot$), 
and $a\succeq^{\mathcal{P}_2}_b M_1(b)$ 
and $b$ is best-ranked 
              by $a$ 
              among 
              all such pairs.\label{line:def-b}}
          	  \If{no such pair exists or $a\in X$ or $b\in X$}
          	  {Reject this guess.}\label{line:rej}
          	  \Else
            \If{$\wc (M(a))=\bot$ and $\{a,M(a)\}\in M_1$}
              {Set $\wc (M(a)) : =  a$.} \label{line:resetwcbc}
            \EndIf
            \If{$\bc (M(b))=\bot$ and $\{b,M(b)\}\in M_1$}
              {Set  $\bc (M(b)) : = b$.} \label{line:resetwcbc2}
            \EndIf
          	  \State{Update~$M$ such that it contains~$e$, set $\bc (a) := b$ 
          	  	and $\wc (b ) : = a$. \label{line:updateM1}}
              \EndIf
            \Else \Comment{$a$ prefers $M_2(a)$ to 
            	$M_1(a)$ in $\mathcal{P}_2$.}
              \State{Let~$e=\{a, b\}$ be the stable pair in $\mathcal P_2$ such 
              that 
              $b \succ^{\mathcal{P}_2}_a \wc (a)$ and 
$\bc(b)\succeq^{\mathcal{P}_2}_b a$ (or $\bc(b)=\bot$) and $b $ is 
worst-ranked by 
              $a$ 
              among 
              all such pairs.\label{line:def-b-2}}
          \If{no such pair exists or $a\in X$ or $b\in X$}
          {Reject this guess.}\label{line:rej2}
          \Else 
          \If{$\bc (M(a))=\bot$ and $\{a,M(a)\}\in M_1$}
              {Set $\bc (M(a)) : =  a$.} \label{line:wresetwcbc}
            \EndIf
            \If{$\wc (M(b))=\bot$ and $\{b,M(b)\}\in M_1$}
              {Set  $\wc (M(b)) : = b$.} \label{line:wresetwcbc2}
            \EndIf
          \State{Update~$M$ such that it contains~$e$, set $\wc (a) := b$ and 
          	$\bc (b ) : = a$. \label{line:updateM2}}
\label{line:endUpdate}
          \EndIf
            \EndIf
            
			\end{algorithmic}
	\end{algorithm}
	
	\paragraph{Initial Assumptions on $M_1$ and $M_2$}
	\label{sec:perfect}
    Note that our algorithm assumes that in the given \ISR instance $(A,\mathcal{P}_1,\mathcal{P}_2,M_1,k)$ matching~$M_1$ is a perfect matching and there is a perfect stable matching in $\mathcal{P}_2$. 
    We now show that we may indeed assume this by showing in two steps that each \ISR instance can be reduced in linear time to an \ISR instance satisfying these properties. 
    
    We first establish that we can assume that there is a perfect stable matching in $\mathcal{P}_2$. 
	\begin{lemma}\label{le:wlog1}
	 An instance $\mathcal{I}=(A,\mathcal{P}_1,\mathcal{P}_2,M_1,k)$ of \ISR can be transformed in linear time into an equivalent instance $\mathcal{I}'=(A',\mathcal{P}'_1,\mathcal{P}'_2,M'_1,k')$ of \ISR with a perfect stable matching in $\mathcal{P}'_2$ and $|\mathcal{P}_1 \oplus \mathcal{P}_2| = |\mathcal{P}_1' \oplus \mathcal{P}_2'|$.
	\end{lemma}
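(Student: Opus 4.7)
The plan is to augment the instance so that every agent that is unmatched in some (equivalently, every) stable matching of $\mathcal{P}_2$ is forced to be matched with a private dummy partner in $\mathcal{P}_2'$. First, I would run Irving's algorithm on $\mathcal{P}_2$ in time linear in the input size; if no stable matching in $\mathcal{P}_2$ exists, then $\mathcal{I}$ is trivially a no-instance and I output any fixed no-instance. Otherwise, by the Rural Hospitals Theorem for \SR, every stable matching of $\mathcal{P}_2$ leaves the same set $U \subseteq A$ of agents unmatched, which is read off directly from the output of Irving's algorithm.

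Next I construct $\mathcal{I}'$ as follows: introduce a fresh agent $d_a$ for each $a \in U$, set $A' := A \cup \{d_a : a \in U\}$, and let both $\mathcal{P}_1'$ and $\mathcal{P}_2'$ agree with their respective originals on $A$ except that $d_a$ is appended at the very bottom of $a$'s list for each $a \in U$; declare $d_a$'s preference list to be just $(a)$. Set
\[ M_1' := M_1 \cup \{\{a, d_a\} : a \in U,\ M_1(a) = \square\} \quad \text{and} \quad k' := k + |\{a \in U : M_1(a) \neq \square\}|. \]
Then $M_1'$ is stable in $\mathcal{P}_1'$: any new blocking pair must involve a dummy $d_a$, but $d_a$ accepts only $a$, and since $d_a$ sits at the bottom of $a$'s list, whenever $a$ is matched in $M_1$ to an original partner that partner is strictly preferred to $d_a$ by $a$.

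The key structural observation is that no two agents of $U$ accept each other, since otherwise such a pair would block every stable matching of $\mathcal{P}_2$, contradicting the definition of $U$. With this in hand, the equivalence of $\mathcal{I}$ and $\mathcal{I}'$ follows from the natural bijection: given a stable $M_2$ in $\mathcal{P}_2$, the extension $M_2' := M_2 \cup \{\{a, d_a\} : a \in U\}$ is perfect on $A'$ and stable in $\mathcal{P}_2'$ (any hypothetical blocking pair either lies inside $A$ and translates back to a blocking pair of $M_2$ in $\mathcal{P}_2$, or involves a dummy, which accepts no one but its designated partner; the subcase of two $U$-agents is precluded by the observation). Conversely, the existence of this perfect stable matching together with the Rural Hospitals Theorem forces every stable matching of $\mathcal{P}_2'$ to be perfect, hence to contain $\{a, d_a\}$ for every $a \in U$; stripping these dummy pairs recovers a stable matching of $\mathcal{P}_2$. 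A direct count yields $|M_1' \triangle M_2'| = |M_1 \triangle M_2| + |\{a \in U : M_1(a) \neq \square\}|$, matching the chosen $k'$. Equality of swap distances is immediate because the only modifications, namely the appended $d_a$ at the bottom of $a$'s list and $d_a$'s singleton list, are identical in $\mathcal{P}_1'$ and $\mathcal{P}_2'$ and thus contribute no inversions.

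The step requiring the most care is the stability translation in the backward direction: a blocking pair of $M_2'$ involving some $a \in U$ (whose partner is $d_a$ in $M_2'$, but who is unmatched in $M_2$) must be shown to also block $M_2$. This is precisely the purpose of placing $d_a$ at the very bottom of $a$'s list, since then any partner $b$ that $a$ prefers to $d_a$ is an original acceptable partner that $a$ strictly prefers to being unmatched, so the translation is immediate. The remainder is routine bookkeeping.
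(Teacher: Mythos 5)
Your construction is exactly the paper's: one private dummy per agent left unmatched by stable matchings of $\mathcal{P}_2$, appended at the bottom of that agent's list in both profiles, with $M_1'$ pairing the dummy only when its owner is unmatched in $M_1$ and $k' = k + |A_2\setminus A_1|$; the equivalence and counting arguments coincide as well. The proof is correct and essentially identical to the paper's.
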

	\begin{proof}
        Recall that in each stable matching in an \textsc{SR} instance the same set of agents is matched by the Rural Hospitals Theorem for \textsc{SR} \cite{DBLP:books/daglib/0066875}.
        Note further that we may assume without loss of generality that there is a stable matching in $\mathcal{P}_2$, as we can check this in linear time~\cite{DBLP:journals/jal/Irving85} and otherwise can map $\mathcal{I}$ to a trivial no-instance.
        Let $A_1$ be the set of agents that are unmatched in a stable matching in $\mathcal{P}_1$ and $A_2$ be the set of agents that are unmatched in a stable matching in $\mathcal{P}_2$. 
        We modify $\mathcal{I}$ to arrive at $\mathcal{I}'$ as follows. 
        For each agent $a\in A_2$, we add an agent $a'$ which in both preference profiles only finds $a$ acceptable and append $a'$ at the end of the preferences of~$a$ in both preference profiles.
        Moreover, for each $a\in A_1\cap A_2$, we add to matching $M_1$ the pair $\{a,a'\}$, i.e., $M_1' := M_1 \cup \{ \{a, a'\} : a \in A_1 \cap A_2\}$. 
        Lastly, we set $k'$ to $k+|A_2\setminus A_1|$.
        
        First note that $M'_1$ is a stable matching in $\mathcal{P}'_1$, as each agent $a\in A_2$ is either matched to~$a'$ or prefers its partner $M'_1(a)$ to $a'$. 
       Moreover, there exists a perfect stable matching in~$\mathcal{P}'_2$. 
       Let $M$ be a stable matching in $\mathcal{P}_2$ (we know that such a matching needs to exist). 
       Then $M\cup \{\{a,a'\}\mid a\in A_2\}$ is a perfect stable matching in $\mathcal{P}'_2$. 
        
        It remains to argue that $\mathcal{I}$ is a yes-instance if and only if $\mathcal{I}'$ is a yes-instance.
        Assume that $M_2$ is a solution for $\mathcal{I}$. 
        Then $M'_2=M_2\cup \{\{a,a'\}\mid a\in A_2\}$ is a stable matching in $\mathcal{P}'_2$. 
        Moreover we have that $M'_1\triangle M'_2=M_1\triangle M_2 \cup \{\{a,a'\}\mid a\in A_2\setminus A_1\}$ and thus $|M'_1\triangle M'_2|= k + |A_2\setminus A_1|$.
        
        For the other direction, assume that $M'_2$ is a solution for $\mathcal{I}'$. 
        First of all note that the matching $M_2$ resulting from $M'_2$ after removing all agent pairs including agents from $\{a' \mid a\in A_2\}$ is stable in $\mathcal{P}_2$, as all agents $a\in A$ prefer the same set of agents to $M_2(a)$ in $\mathcal{P}_2$ as they prefer to $M'_2(a)$ in $\mathcal{P}'_2$. 
        To bound the symmetric difference note that no agent $a\in A_2$ can be unmatched in $M'_2$, as otherwise it forms a blocking pair with agent $a'$. 
        As no agent from $A_2$ is matched in $M_2$ from this it follows that $\{\{a,a'\}\mid a\in A_2\}\subseteq M'_2$.
        Consequently, we have that $M'_1\triangle M'_2=M_1\triangle M_2 \cup \{\{a,a'\}\mid a\in A_2\setminus A_1\}$ and thus $|M_1\triangle M_2|=|M'_1\triangle M'_2|-|A_2\setminus A_1|\leq k´$.
	\end{proof}

	Using similar arguments as in the previous lemma, we now show that we can additionally assume that the given stable matching $M_1$ in $\mathcal{P}_1$ is perfect:
    \begin{lemma}\label{le:wlog2}
	 An instance $\mathcal{I}=(A,\mathcal{P}_1,\mathcal{P}_2,M_1,k)$ of \ISR with a perfect stable matching in $\mathcal{P}_2$ can be reduced transformed  linear time into an equivalent instance $\mathcal{I}'=(A',\mathcal{P}'_1,\mathcal{P}'_2,M'_1,k')$ of \ISR with $|\mathcal{P}_1 \oplus \mathcal{P}_2| = |\mathcal{P}_1' \oplus \mathcal{P}_2'|$ where $M'_1$ is a perfect stable matching in $\mathcal{P}'_1$ and there is a perfect stable matching in $\mathcal{P}'_2$.
	\end{lemma}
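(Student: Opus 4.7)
The approach is to pair up the agents unmatched by $M_1$ and introduce two dummy agents per pair, so that both profiles admit a perfect stable matching on the enlarged agent set without changing the swap distance. Let $U := \{a \in A : M_1(a) = \square\}$. First I would observe that $|U|$ is even, because by assumption $\mathcal{P}_2$ admits a perfect stable matching (so $|A|$ is even) and $|U| = |A| - 2|M_1|$. Fix an arbitrary pairing of $U$ into pairs $(a_i, b_i)$ for $i \in [|U|/2]$. For each such pair, introduce two fresh dummy agents $a'_i, b'_i$ with preferences (identical in $\mathcal{P}'_1$ and $\mathcal{P}'_2$) given by $a'_i : a_i \succ b'_i$ and $b'_i : b_i \succ a'_i$. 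Append $a'_i$ to the end of $a_i$'s preference list and $b'_i$ to the end of $b_i$'s preference list, in both profiles. Set $M'_1 := M_1 \cup \bigl\{\{a_i, a'_i\}, \{b_i, b'_i\} : i \in [|U|/2]\bigr\}$ and $k' := k + 3|U|/2$.

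Since the dummies have identical preferences in both profiles and appending agents at the end preserves the relative order of pre-existing agents, $|\mathcal{P}'_1 \oplus \mathcal{P}'_2| = |\mathcal{P}_1 \oplus \mathcal{P}_2|$. The matching $M'_1$ is perfect by construction, and its stability in $\mathcal{P}'_1$ is routine: stability of $M_1$ together with $a_i$ being unmatched in $M_1$ forces every $c \in \Ac(a_i)$ to be matched in $M_1$ to someone it strictly prefers to $a_i$, which rules out all potential blocking pairs of the form $\{a_i, c\}$ with $c \in A$; the analogous argument applies to $b_i$, and the remaining pairs involving dummies are excluded by their restricted acceptance lists. Taking any perfect stable matching $M_2$ in $\mathcal{P}_2$ and extending it by $\{\{a'_i, b'_i\} : i\}$ yields a perfect stable matching in $\mathcal{P}'_2$ (the appended dummies sit at the end of $a_i$'s and $b_i$'s lists and so cannot form a blocking pair with them), establishing the second required property.

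The forward direction of the equivalence is immediate: given a solution $M_2$ to $\mathcal{I}$, the matching $M'_2 := M_2 \cup \{\{a'_i, b'_i\} : i\}$ is stable in $\mathcal{P}'_2$ and $|M'_1 \triangle M'_2| = |M_1 \triangle M_2| + 3|U|/2 \le k'$. The delicate backward direction, which I expect to be the main obstacle, requires showing that for any stable $M'_2$ in $\mathcal{P}'_2$ one has $M'_2(a'_i) = b'_i$ for every $i$, so that $N := \{e \in M'_2 : e \subseteq A\}$ is a stable matching in $\mathcal{P}_2$. Since $\mathcal{P}'_2$ admits a perfect stable matching by the previous paragraph, the Rural Hospitals theorem forces $M'_2$ to be perfect; hence the only alternative to $M'_2(a'_i) = b'_i$ is $M'_2(a'_i) = a_i$ and $M'_2(b'_i) = b_i$. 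Assume this happens for some $i$ towards a contradiction. Any pair $\{c, d\} \subseteq A$ blocking $N$ in $\mathcal{P}_2$ would also block $M'_2$ in $\mathcal{P}'_2$, because preferences over $A$ coincide in the two profiles and every agent matched to a dummy in $M'_2$ strictly prefers all its acceptable real partners to that dummy; hence $N$ is stable in $\mathcal{P}_2$. But $N$ leaves $a_i$ and $b_i$ unmatched, contradicting Rural Hospitals together with the existence of a perfect stable matching in $\mathcal{P}_2$. Therefore $M'_2(a'_i) = b'_i$ for every $i$, and $M_2 := N$ is a stable matching in $\mathcal{P}_2$ with $|M_1 \triangle M_2| = |M'_1 \triangle M'_2| - 3|U|/2 \le k$.
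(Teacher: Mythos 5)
Your proof is correct and follows essentially the same route as the paper's: attach one dummy to each agent unmatched by $M_1$, arrange the dummies' second choices so they can pair off among themselves, set $k' = k + 3|U|/2$, and use the Rural Hospitals Theorem in $\mathcal{P}_2$ to force every real agent to be matched in the backward direction. The only cosmetic difference is the indexing of the dummy pairs.
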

	\begin{proof}
	 Recall that in each stable matching in an \textsc{SR} instance the same set of agents is matched by the Rural Hospitals Theorem for \textsc{SR} \cite{DBLP:books/daglib/0066875}.
    Let $A_1=\{a_1,\dots,a_x\}$ be the set of agents that are unmatched in a stable matching in $\mathcal{P}_1$. 
    Note that $x$ is even, as $|A|$ needs to be even for a perfect matching to exist in $\mathcal{P}_2$. 
    
        We modify $\mathcal{I}$ to arrive at $\mathcal{I}'$ as follows. 
        For each $i\in [x]$, we add a dummy agent $a'_i$ that in both preference profiles ranks $a_i$ in the first position and ranks in the second position $a'_{i+1}$ if $i$ is odd and $a'_{i-1}$ if $i$ is even.  
        Moreover, we append $a'_i$ at the end of the preferences of $a_i$ in both preference profiles.
        Moreover, for each $i\in [x]$, we add to matching $M_1$ the pair $\{a_i,a'_i\}$. 
        Lastly, we set $k'$ to $k+\frac{3x}{2}$.
        
        First note that $M'_1$ is clearly perfect and also a stable matching in $\mathcal{P}'_1$, as each dummy agent is matched to its top-choice and blocking pairs for $M'_1$ where both agents are from $A$ would also block $M_1$ in $\mathcal{P}_1$. 
        Let $M$ be a perfect stable matching in $\mathcal{P}_2$. 
        Then, $M\cup \{\{a'_{2i-1}, a'_{2i}\}\mid i\in [\frac{x}{2}]\}$ is clearly perfect and also a stable matching in $\mathcal{P}'_2$, as each agent from $A$ is matched to an agent it prefers to the newly added dummy agents and each dummy agent is matched to the only other dummy agent it finds acceptable. 
        
        It remains to argue that $\mathcal{I}$ is a yes instance if and only if $\mathcal{I}'$ is a yes instance.
        Assume that $M_2$ is a solution for $\mathcal{I}$. 
        Let $M'_2=M_2\cup \{\{a'_{2i-1}, a'_{2i}\}\mid i\in [\frac{x}{2}]\}$.
        As argued above $M'_2$ is stable in~$\mathcal{P}'_2$. 
        Moreover note that as $M'_1=M_1\cup \{ \{a'_i, a_i\} \mid i \in [x] \}$ and $M'_2=M_2\cup \{\{a'_{2i-1}, a'_{2i}\}\mid i\in [\frac{x}{2}]\}$, we have that  $M'_1\triangle M'_2=M_1\triangle M_2 \cup \{ \{a'_i, a_i\} \mid i \in [x] \} \cup \{\{a'_{2i-1}, a'_{2i}\}\mid i\in [\frac{x}{2}]\}$ and thus $|M'_1\triangle M'_2|\leq k + \frac{3x}{2}$.
        
        For the other direction, assume that $M'_2$ is a solution for $\mathcal{I}'$. 
        First of all note that the matching $M_2$ resulting from $M'_2$ after removing all agent pairs including a dummy agent is stable in $\mathcal{P}_2$, as all agents $a\in A$ prefer the same set of agents to $M_2(a)$ in $\mathcal{P}_2$ as they prefer to $M'_2(a)$ in~$\mathcal{P}'_2$. 
        To bound the symmetric difference note that all agents from $A$ must be matched in $M_2$, as $M_2$ is a stable matching in $\mathcal{P}_2$ and there is a perfect stable matching in $\mathcal{P}_2$.
        Thus in $M'_2$ no agent from $A$ can be matched to a dummy agent and thus $\{\{a'_{2i-1}, a'_{2i}\}\mid i\in [\frac{x}{2}]\}\subseteq M'_2$.
        Consequently, as above, we get that $M'_1\triangle M'_2=M_1\triangle M_2 \cup \{ \{a'_i, a_i\} \mid i \in [x] \} \cup \{\{a'_{2i-1}, a'_{2i}\}\mid i\in [\frac{x}{2}]\}$ and thus $|M_1\triangle M_2|=|M'_1\triangle M'_2|-\frac{3x}{2}\leq k´$.
	\end{proof}
	
	\paragraph{Initial observations on $\mathbf{\bc}$ and $\mathbf{\wc}$} \label{sec:intial_obs}
	
	We continue by making some  observations on $\mathbf{\bc}$ and $\mathbf{\wc}$, how they relate to one another and to $M_1$, and when they are set to a non-trivial value. 
	These observations will be vital in the reminder of the proof of correctness. 
	
	We start by showing that as soon as an agent is matched differently in 
$M$ than in~$M_1$, the agent has a non-trivial best or worst case. 

	\begin{lemma} \label{le:match-diff}
	 After calling the \textsc{Initialization}() function, for each agent $c\in A$, it always holds that if $M(c)\neq M_1(c)$, then 
$\bc(c)\neq \bot$ or $\wc(c) \neq \bot$. 
	\end{lemma}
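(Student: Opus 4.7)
The plan is to prove the lemma by induction on the number of modifications that Algorithm~\ref{alg} performs on $M$ after the call to \textsc{Initialization}(). A crucial auxiliary observation is that the algorithm never resets $\bc(c)$ or $\wc(c)$ from a non-$\bot$ value back to $\bot$: every assignment performed in Lines~\ref{line:initcs}--\ref{line:endGuesses} of \textsc{Initialization} and Lines~\ref{line:resetwcbc}--\ref{line:updateM2} of \textsc{Propagate} writes a concrete agent into these fields. This monotonicity is what makes the induction work at steps where the algorithm does not explicitly touch an agent whose matched partner has just changed.

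For the base case, I would analyse $M$ immediately after \textsc{Initialization} returns. Relative to $M_1$, the matching has been changed only by inserting the guessed pairs for $B$ (Line~\ref{line:guess}) and the pairs from $H$ (Line~\ref{line:guessH}), and by deleting the pairs from~$F$ (Line~\ref{line:endGuesses}). An agent $c$ with $M(c)\neq M_1(c)$ therefore falls into exactly one of three categories: (i) $c\in X$, in which case Line~\ref{line:guesswcbc} assigns $\bc(c)=\wc(c)=M(c)\neq\bot$; (ii) $c\notin X$ but $M_1(c)\in X$, in which case the loop at Lines~\ref{line:init-guessPart}--\ref{line:init-guess} (with $a=M_1(c)$) assigns either $\bc(c)$ or $\wc(c)$ a non-trivial value; or (iii) $c$ is an endpoint of a pair in $F$, in which case Line~\ref{line:endGuesses} sets $\wc(c)$ to the other endpoint. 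These cases exhaust all possibilities.

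For the inductive step, every subsequent modification of $M$ happens in a \textsc{Propagate} call that replaces the pairs at some $a$ and $b$ by a single new pair $e=\{a,b\}$ via Line~\ref{line:updateM1} or Line~\ref{line:updateM2}. The only agents whose partner in $M$ can change are $a$, $b$, and their former partners $c:=M(a)$ and $c':=M(b)$. For $a$ and $b$, the same line that updates $M$ simultaneously assigns $\bc(a)$ and $\wc(b)$ (respectively $\wc(a)$ and $\bc(b)$) to non-$\bot$ values, so the invariant holds for them. For $c$ (and symmetrically $c'$), the argument splits: if $\{a,c\}\in M_1$, then Line~\ref{line:resetwcbc} or~\ref{line:wresetwcbc} explicitly sets the corresponding $\wc(c)$ or $\bc(c)$ to $a$ whenever that field was still $\bot$, restoring the invariant; if $\{a,c\}\notin M_1$, then already before this step we had $M(c)=a\neq M_1(c)$, so the induction hypothesis guarantees that $c$ already possesses a non-$\bot$ best or worst case, which the monotonicity observation preserves.

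The main obstacle is precisely the second sub-case above: at that moment the algorithm performs no assignment to $\bc(c)$ or $\wc(c)$, and the invariant can only be maintained by an inductive argument that has been formulated over the \emph{entire} sequence of modifications to $M$ rather than just at termination. Once this hypothesis is correctly set up and combined with the monotonicity of $\bc$ and $\wc$, the remaining work is a straightforward bookkeeping check that each line of \textsc{Initialization} and \textsc{Propagate} updating $M$ also ensures that all four potentially affected agents carry a non-trivial best or worst case.
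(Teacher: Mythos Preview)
Your proposal is correct and follows essentially the same approach as the paper: a case analysis over the lines that modify $M$, checking that each such modification is accompanied by a non-$\bot$ assignment to $\bc$ or $\wc$ of every affected agent. The paper's proof is considerably terser and does not spell out the inductive step for the former partners $M(a)$ and $M(b)$ when the edge to them is not in $M_1$; your explicit induction together with the monotonicity observation (that $\bc$ and $\wc$ are never reset to~$\bot$) handles this case more carefully than the paper does.
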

	\begin{proof}
	For some agent $c\in A$, we only change $M (c)$ in Lines~\ref{line:guess} and \ref{line:guessH}, and Line~\ref{line:endGuesses} of \Cref{alg-init} and Line~\ref{line:updateM1} and Line~\ref{line:updateM2} of \Cref{alg:updateapx}.
	In each of these cases we either set $\wc (c) \neq \bot $ or $\bc (c) \neq \bot$. 
	For Lines~\ref{line:guess} and \ref{line:guessH} this happens in Lines~\ref{line:guesswcbc} and \ref{line:init-guess} of \Cref{alg-init}. 
	For Line \ref{line:endGuesses}, this happens in Line~\ref{line:endGuesses} of \Cref{alg-init}. 
	Moreover for Line~\ref{line:updateM1} this happens in Lines \ref{line:resetwcbc} to~\ref{line:updateM1}, and for  Line~\ref{line:updateM2} this happens in Lines~\ref{line:wresetwcbc} to~\ref{line:updateM2} from \Cref{alg:updateapx}.
	\end{proof}
From this it directly follows that for every agent for which the 
propagation function is called 
	the agent has a non-trivial best case or worst case: 
	
	\begin{lemma}\label{le:welldefined}
		Whenever $\textsc{Propagate}$ is called for an agent~$c\in A$, it follows 
that it holds that $\bc(c)\neq \bot$ or $\wc(c)\neq \bot$. 
	\end{lemma}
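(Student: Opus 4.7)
The plan is to case-split on the two locations in \Cref{alg:XP} where \textsc{Propagate} is invoked, and in each case trace why one of $\bc(c)$ or $\wc(c)$ is already (or has just been forced to be) non-trivial at the moment of the call.

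In the first call site, \textsc{Propagate}$(a)$ is invoked on an agent $a$ which is not matched by the current matching $M$ but is part of a blocking pair. First I would observe that, by the standing assumption of the algorithm (established via \Cref{le:wlog1} and \Cref{le:wlog2}), $M_1$ is a perfect matching, so $M_1(a)\neq \square$. Since $a$ is unmatched in~$M$, we have $M(a)=\square\neq M_1(a)$. Hence $M(a)\neq M_1(a)$, and \Cref{le:match-diff} immediately yields $\bc(a)\neq \bot$ or $\wc(a)\neq \bot$, as required.

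In the second call site, \textsc{Propagate}$(x)$ is invoked in the \textbf{else}-branch, i.e., after the algorithm has established that no unmatched agent participates in any blocking pair for $M$ in $\mathcal{P}_2$. Thus the blocking pair $\{a,b\}$ selected in Line~\ref{line:selectbp} consists of two matched agents; in particular the agent $x$ chosen in Line~\ref{line:choose-a} satisfies $M(x)\neq \square$. The algorithm then, in Line~\ref{line:wcupdate}, explicitly executes \emph{if $\wc(x)=\bot$ then $\wc(x):=M(x)$} before the call to \textsc{Propagate}$(x)$. So either $\wc(x)$ was already non-trivial, or it has just been set to a non-$\square$ value; in both sub-cases $\wc(x)\neq \bot$ when \textsc{Propagate}$(x)$ is entered.

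The only subtlety I anticipate is confirming that the \textbf{else}-branch really guarantees $M(x)\neq \square$ — which hinges entirely on the reading that the outer \textbf{if} exhaustively handles the blocking-pair case involving an unmatched agent, so that in the \textbf{else} both endpoints of the chosen blocking pair are matched. Apart from this structural check, the argument reduces to applying \Cref{le:match-diff} in the first case and reading off the assignment in Line~\ref{line:wcupdate} in the second.
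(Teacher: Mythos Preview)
Your proof is correct and follows essentially the same approach as the paper's: case-split on whether the agent passed to \textsc{Propagate} is currently unmatched (in which case \Cref{le:match-diff} together with $M_1$ being perfect gives the claim) or matched (in which case Line~\ref{line:wcupdate} ensures $\wc(x)\neq\bot$). Your framing via the two call sites in \Cref{alg:XP} is equivalent to the paper's framing via the matched/unmatched status of~$c$, and the extra justification you supply for $M(x)\neq\square$ in the \textbf{else}-branch is sound.
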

\begin{proof}
	If $c$ is currently unmatched by $M$, then 
	\Cref{le:match-diff} (together with our assumption that $M_1$ is perfect) implies that $\bc(c)\neq \bot$ or $\wc(c)\neq 
	\bot$. If $c$ is 
currently matched, then by Line
	\ref{line:wcupdate} of \Cref{alg}, it follows that $\wc(c)\neq \bot$.
\end{proof}

	We next show that over the course of the algorithm from the perspective 
of an agent~$c\in A$, $\bc(c)$ becomes worse and worse, while $\wc(c)$ becomes 
better and better. 
	
	\begin{lemma} \label{le:gettingbetter}
	Consider an agent $c\in A$ and two timesteps $\mathfrak{A}$ and $\mathfrak{B}$ of the execution 
of \Cref{alg:XP} where $\mathfrak{A}$ appears before $\mathfrak{B}$. If $\bc(c)\neq \bot$ 
at point $\mathfrak{A}$, then either 
$\bc(c)$ is the same at points $\mathfrak{A}$ and $\mathfrak{B}$ or 
$c$ prefers $\bc(c)$ at point $\mathfrak{A}$ to~$\bc(c)$ at 
point $\mathfrak{B}$. If $\wc(c)\neq \bot$ at point $\mathfrak{A}$, then  either 
$\wc(c)$ is the same at points $\mathfrak{A}$ and $\mathfrak{B}$ or $c$ prefers 
$\wc(c)$ at point $\mathfrak{B}$ to $\wc(c)$ at point~$\mathfrak{A}$.
	\end{lemma}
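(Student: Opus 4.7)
The plan is to reduce the claim to a line-by-line audit of the three algorithms: since $\bc(c)$ and $\wc(c)$ change only at explicit assignments, it suffices to verify that each such assignment preserves monotonicity locally, and the statement between two timesteps $\mathfrak{A}$ and $\mathfrak{B}$ then follows by chaining the per-step inequalities along the algorithm trace.

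The key overwrite sites are in \Cref{alg:updateapx}. At Line~\ref{line:updateM1}, the partner $b$ is chosen in Line~\ref{line:def-b} subject to $\bc(a)\succ^{\mathcal{P}_2}_a b$ and ($\wc(b)=\bot$ or $a\succeq^{\mathcal{P}_2}_b \wc(b)$). The first condition makes the new value $\bc(a):=b$ strictly worse for $a$ than the old one, and the second makes the new value $\wc(b):=a$ weakly better for $b$ than the old one whenever $\wc(b)\neq\bot$. Line~\ref{line:updateM2} is symmetric: Line~\ref{line:def-b-2} enforces $b\succ^{\mathcal{P}_2}_a \wc(a)$ and ($\bc(b)=\bot$ or $\bc(b)\succeq^{\mathcal{P}_2}_b a$), so the new $\wc(a):=b$ is strictly better for $a$ and the new $\bc(b):=a$ is weakly worse for $b$.

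All remaining assignments only convert a $\bot$ value into a non-trivial one and therefore trivially satisfy the claim, whose hypothesis requires the value at $\mathfrak{A}$ to be non-$\bot$. These are Line~\ref{line:wcupdate} of \Cref{alg:XP} and the four reset lines~\ref{line:resetwcbc},~\ref{line:resetwcbc2},~\ref{line:wresetwcbc},~\ref{line:wresetwcbc2} of \Cref{alg:updateapx}, each explicitly guarded by a precondition of the form $\wc(M(\cdot))=\bot$ or $\bc(M(\cdot))=\bot$, together with the initialization assignments in Lines~\ref{line:guesswcbc},~\ref{line:init-guess},~\ref{line:endGuesses} of \Cref{alg-init}, which all run within the single block before the first iteration of the while loop and may be treated as establishing the starting configuration for the comparison.

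The only step that warrants real attention is making sure the enumeration of assignment sites is exhaustive and that the reset-line preconditions really do fire only when the previous value was $\bot$; no use of \Cref{lem:circular-prefs} is required here, but that lemma will become central in the subsequent claims showing that the $\bc$- and $\wc$-values correctly bound any feasible extension of the current guesses to a stable matching in~$\mathcal{P}_2$.
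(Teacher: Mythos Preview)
Your proposal is correct and follows essentially the same approach as the paper's proof. The paper's argument is a two-sentence sketch stating that the only overwrites of non-$\bot$ values occur in Lines~\ref{line:updateM1} and~\ref{line:updateM2}, and that the constraints in Lines~\ref{line:def-b} and~\ref{line:def-b-2} enforce the required monotonicity; you make this explicit by auditing each assignment site and checking the guard conditions, which is a faithful elaboration of the same idea.
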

	\begin{proof}
	 The only points during the algorithm where $\bc(c)\neq \bot$ and 
$\bc(c)$ is changed or $\wc(c) \neq \bot$ and $\wc(c)$ is changed are  
in Line \ref{line:updateM1} or \ref{line:updateM2} of \Cref{alg:updateapx}.
By the definition of the 
pair $\{a,b\}$ from Line \ref{line:def-b} or Line \ref{line:def-b-2} of \Cref{alg:updateapx}, the lemma 
follows.
	\end{proof}

We now continue by making several general observation concerning the 
values of $\bc(\cdot)$ and $\wc(\cdot)$ of the different agents. We start with the crucial 
observation that if an agent~$c\in A\setminus X$ has a non-trivial best case~$\bc (c)$,
then $c$ prefers $M_1(c)$ to~$\bc (c)$ or $M_1(c)=\bc(c)$ and if an agent $c\in A$ has a non-trivial worst 
case~$\wc (c)$, then $c$ prefers~$\wc (c)$ to $M_1(c)$ or $M_1(c)=\wc(c)$:
\begin{lemma} \label{le:wcbcM1}
	For each agent $c\in A\setminus X$ at the beginning of each execution of 
the while 
loop in Line \ref{line:while} 
it holds that if $\bc(c)\neq \bot$ 
then 
	$M_1(c)\succeq_c^{\mathcal{P}_2}\bc(c)$ and if $\wc(c) \neq \bot$ then 
	$\wc(c)\succeq_c^{\mathcal{P}_2}M_1(c)$.
\end{lemma}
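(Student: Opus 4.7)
The plan is to prove both inequalities simultaneously by induction, treating the \textsc{Initialization} call as the base case and each subsequent line of \Cref{alg} or \Cref{alg:updateapx} that modifies $\bc$ or $\wc$ as one inductive step; the invariant will then hold in particular at the start of every iteration of the while loop in Line~\ref{line:while}.

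Most updates dispose of themselves once the right structural observation is in place. Every assignment to $\bc(c)$ or $\wc(c)$ with $c\in A\setminus X$ that is performed in Line~\ref{line:init-guess} or Line~\ref{line:endGuesses} of \Cref{alg-init}, or in Lines~\ref{line:resetwcbc}, \ref{line:resetwcbc2}, \ref{line:wresetwcbc}, or \ref{line:wresetwcbc2} of \Cref{alg:updateapx}, actually sets the value to $M_1(c)$ itself, because the pair triggering the assignment is always a pair of $M_1$; the invariant then holds with equality. Line~\ref{line:wcupdate} of \Cref{alg} fires only when $\bc(x)=\wc(x)=\bot$, and \Cref{le:match-diff} then forces $M(x)=M_1(x)$, so once more $\wc(x)^{\mathrm{new}}=M_1(x)$. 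The two assignments of Line~\ref{line:updateM1} are similarly direct: the clause $a\succeq_b^{\mathcal{P}_2} M_1(b)$ explicitly built into the definition of $b$ in Line~\ref{line:def-b} yields the $\wc(b)$-inequality, and the new value of $\bc(a)$ is strictly worse (for $a$) than $\bc(a)^{\mathrm{old}}$, so the inductive hypothesis for $\bc(a)^{\mathrm{old}}$ transports to $\bc(a)^{\mathrm{new}}$. For the $\wc(a):=b$ half of Line~\ref{line:updateM2} one chains $b\succ_a^{\mathcal{P}_2}\wc(a)^{\mathrm{old}}\succeq_a^{\mathcal{P}_2} M_1(a)$, using \Cref{le:welldefined} in the unmatched branch, or Line~\ref{line:wcupdate} in the matched branch, to guarantee $\wc(a)^{\mathrm{old}}\neq\bot$.

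The main obstacle is the assignment $\bc(b):=a$ in Line~\ref{line:updateM2}: unlike Line~\ref{line:def-b}, the definition of $b$ in Line~\ref{line:def-b-2} admits the alternative $\bc(b)^{\mathrm{old}}=\bot$ with no direct control on how well $a$ ranks in $b$'s preferences. When $\bc(b)^{\mathrm{old}}\neq\bot$, the inductive hypothesis combined with $\bc(b)^{\mathrm{old}}\succeq_b^{\mathcal{P}_2} a$ immediately gives $M_1(b)\succeq_b^{\mathcal{P}_2} a$. When $\bc(b)^{\mathrm{old}}=\bot$, I would argue by contradiction: suppose $a\succ_b^{\mathcal{P}_2} M_1(b)$. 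Line~\ref{line:rej2} ensures $a,b\notin X$, and since $B\subseteq X$ neither $a$ nor $b$ has modified preferences; therefore $a\succ_b^{\mathcal{P}_1} M_1(b)$. In particular $M_1(a)\neq b$ and $\{a,b\}\notin M_1$, so stability of $M_1$ in $\mathcal{P}_1$ prevents $\{a,b\}$ from blocking it, giving $M_1(a)\succ_a^{\mathcal{P}_1} b$, and, because $a$'s preferences are unchanged, also $M_1(a)\succ_a^{\mathcal{P}_2} b$. But the chain from the previous paragraph also delivers $b\succ_a^{\mathcal{P}_2} M_1(a)$, a contradiction. Hence $M_1(b)\succeq_b^{\mathcal{P}_2} a$, completing the induction.
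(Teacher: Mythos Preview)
Your proposal is correct and follows essentially the same approach as the paper's proof: both argue by induction over the updates to $\bc$ and $\wc$, dispatch the ``trivial'' lines by noting they assign $M_1(c)$, and handle the hard case ($\bc(b):=a$ in Line~\ref{line:updateM2}) by exploiting that $a,b\notin X$ have unchanged preferences and that $\{a,b\}$ cannot block $M_1$ in~$\mathcal{P}_1$. Your case split on whether $\bc(b)^{\mathrm{old}}=\bot$ and the contradiction phrasing are cosmetic differences; the paper gives the same blocking-pair argument directly without the split.
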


\begin{proof}
	Fix some $c\in A\setminus X$. At the beginning of the algorithm, the 
	statement is trivially fulfilled, as $\bc(c)=\wc(c)=\bot$. 
	If 
$\wc(c)$ or $\bc(c)$ is changed in 
	Line \ref{line:init-guess} or Line~\ref{line:endGuesses} of \Cref{alg:init-XP}, then it is set to~$M_1(c)$ so again the statement is fulfilled. 

	It 
	remains to consider changes made within the while-loop from Line \ref{line:while}. We prove that the 
statement holds at the beginning of the $i$th execution of the while-loop by 
induction over $i$. For~$i=0$, we have already argued that the statement holds. So assume 
that the statement holds for an arbitrary but fixed~$i\geq 0$.
We will argue that 
from 
this we can conclude that it also holds for $i+1$. 	
We denote by $\bc^{\before}$ and $\wc^{\before}$ the best and worst case after the $i$th iteration, but before the $i+1$th iteration.

The only point outside of the \textsc{Propagate} function where the best or worst case of an agent might be changed in the $i+1$th iteration of the while loop is in Line \ref{line:wcupdate} of \Cref{alg:XP}. 
However, note that here it holds that $\wc^{\before}(c)=\bot$ and $\bc^{\before}(c)=\bot$ 
and thus by \Cref{le:match-diff} that $M(c)=M_1(c)$. As we set 
$\wc(c)=M(c)=M_1(c)$, the statement clearly holds after the update.
	
	It remains to consider a call of $\textsc{Propagate}$ and let $\{a,b\}$ be the 
	examined pair (i.e., the pair defined in Line~\ref{line:def-b} or Line~\ref{line:def-b-2} of \Cref{alg:updateapx}).

First assume that $\bc^{\before}(a)\neq\bot$.
If $c=a$, then, using the induction 
hypothesis,
$M_1(a)\succeq_a^{\mathcal{P}_2}\bc^{\before}(a)$ follows. 
\Cref{le:gettingbetter} then implies that the statement also holds after the update.
If $c=b$, then we know by the definition of $b$ that $a\succeq_b^{\mathcal{P}_2} M_1(b)$. As we set  
$\wc(b)$ to $a$, after the update we have $\wc(b)\succeq_b^{\mathcal{P}_2} M_1(b)$.
If $c=M(a)$ or $c=M(b)$, then $\bc(c)$ and $\wc(c)$ either remains unchanged 
or is set to $M_1(c)$.

Now we turn to the remaining case where $\bc^{\before}(a)=\bot$. 
\Cref{le:welldefined} implies that $\wc^{\before}(a)\neq \bot$.
If $c=a$, then by the induction hypothesis  ${\wc^{\before}(a)\succeq_a^{\mathcal{P}_2}M_1(a)}$ follows.
\Cref{le:gettingbetter} then implies that $\wc(a)\succeq_a^{\mathcal{P}_2}M_1(a)$ also holds after the update.
If ${c=b}$, then by the induction hypothesis for agent~$a$ and the definition of $b$  it follows that 
${b \succ_a^{\mathcal{P}_2} \wc^{\before} (a) \succeq_a^{\mathcal{P}_2} M_1(a)}$.
Note that the preferences of $a$ and $b$ are the 
same in $\mathcal{P}_1$ and~$\mathcal{P}_2$ (otherwise the guess would have 
been rejected in Line~\ref{line:rej2} of \Cref{alg:updateapx}).
Thus, from $b\succ_a^{\mathcal{P}_2} M_1(a)$,
it follows that $M_1(b)\succ_b^{\mathcal{P}_2} a$, as otherwise $\{a,b\}$ is a 
blocking pair 
for $M_1$.
As we set $\bc(b)$ to $a$ it follows that after the update of $\bc(b)$, agent~$b$ prefers $M_1(b)$ to $\bc(b)$.
If $c=M(a)$ or $c=M(b)$, then $\bc(c)$ and $\wc(c)$ either remain unchanged 
or are set to $M_1(c)$.
\end{proof}

From the previous lemma, we can easily conclude that if an agent has a 
non-trivial best case and worst case, they need to be identical:
\begin{lemma}  \label{le:one-nontrivial}
	For each agent $c\in A $ at the beginning of each execution of 
	the 
	while-loop in Line~\ref{line:while} it holds that $\bc(b)=\wc(b)$ or $\bc(b)= 
\bot$ or $\wc(c)= \bot$.
\end{lemma}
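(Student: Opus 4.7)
The proof proceeds by induction on the number of iterations of the while loop of \Cref{alg:XP}, splitting on whether $c \in X$ or $c \in A \setminus X$. The two key ingredients are \Cref{le:wcbcM1} (which sandwiches $M_1(c)$ between $\bc(c)$ and $\wc(c)$ for $c \in A \setminus X$) and the rejection check in Line~\ref{line:reject-bcwc}, which at the end of every surviving iteration guarantees $\bc(c) \succeq^{\mathcal{P}_2}_c \wc(c)$ whenever both values are non-trivial.

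For $c \in X$, I would argue that $\bc(c) = \wc(c) = M(c)$ holds throughout the entire run. Line~\ref{line:guesswcbc} establishes this immediately after initialization. To see that it is preserved, observe that inside \textsc{Propagate} (\Cref{alg:updateapx}), Lines~\ref{line:rej} and~\ref{line:rej2} reject whenever the agent $a$ or $b$ whose best/worst case is modified in Lines~\ref{line:updateM1}/\ref{line:updateM2} belongs to $X$; the only other places that touch $\bc$ or $\wc$ are the reset Lines~\ref{line:resetwcbc}, \ref{line:resetwcbc2}, \ref{line:wresetwcbc}, and \ref{line:wresetwcbc2}, all of which are guarded by the precondition that the value is currently $\bot$---a condition never satisfied by any $c \in X$ after initialization.

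For $c \in A \setminus X$, the base case (start of the first iteration) follows from a direct inspection of \Cref{alg:init-XP}: the only initialization lines that can assign a non-trivial value to $\bc(c)$ or $\wc(c)$ are the two mutually exclusive branches of Line~\ref{line:init-guess} (applicable only when $c = M_1(a)$ for some $a \in X$, and in each case the assigned value equals $M_1(c)$) and Line~\ref{line:endGuesses} (where $F \subseteq M_1$ forces the assigned value to again equal $M_1(c)$). Hence if both $\bc(c)$ and $\wc(c)$ are non-trivial after initialization, they share the common value $M_1(c)$. For the inductive step, consider the beginning of iteration $i+1$ with both values non-trivial. \Cref{le:wcbcM1} applied at this point yields $\wc(c) \succeq^{\mathcal{P}_2}_c M_1(c) \succeq^{\mathcal{P}_2}_c \bc(c)$, while the fact that iteration $i$ was not rejected means the check in Line~\ref{line:reject-bcwc} passed, giving $\bc(c) \succeq^{\mathcal{P}_2}_c \wc(c)$. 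Chaining these inequalities collapses everything to $\bc(c) = \wc(c) = M_1(c)$.

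The main subtlety is that the invariant may be transiently violated in the middle of a \textsc{Propagate} call---for instance, Line~\ref{line:updateM1} can lower $\bc(a)$ below a previously set $\wc(a)$. The rescue is that Line~\ref{line:reject-bcwc} is executed at the end of every iteration, so any such temporary violation triggers immediate rejection; conversely, if we reach the beginning of the next while-loop iteration, the check must have been passed, which is exactly what the chain of inequalities above requires. This also explains why the lemma is stated at the beginning of while-loop iterations rather than at intermediate points during a \textsc{Propagate} call.
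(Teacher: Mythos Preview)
Your proof is correct and follows essentially the same approach as the paper: split on $c\in X$ versus $c\in A\setminus X$, handle the former by observing that $\bc(c)=\wc(c)$ is set at initialization and never touched again, and handle the latter by combining \Cref{le:wcbcM1} with the rejection check in Line~\ref{line:reject-bcwc}. Your write-up is more explicit than the paper's (in particular you spell out why agents in $X$ are never updated inside \textsc{Propagate} and you verify the base case by direct inspection of \Cref{alg:init-XP}); the only omission is that you do not mention Line~\ref{line:wcupdate} among the places that modify $\wc$, but since that line is guarded by $\wc(x)=\bot$ (and in fact $x$ is chosen with $\bc(x)=\bot$, so $x\notin X$ anyway), this does not affect your argument.
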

\begin{proof}
For every agent from~$X$, we set $\bc $ and $\wc $ accordingly in Line~\ref{line:guesswcbc} of \Cref{alg:XP} (and never change their best and worst cases afterwards again).

For every other agent~$c \in A \setminus X$, the statement clearly holds before the first execution of the while-loop.
For all other executions of the while-loop, \Cref{le:wcbcM1} implies that as soon as $\bc(c)\neq \wc(c)$ and 
$\bc(c)\neq  
\bot$ and $\wc(c)\neq  
\bot$, we have $\wc(c)\succ_c^{\mathcal{P}_2} \bc(c)$, which leads to a 
rejection in Line \ref{line:reject-bcwc} of \Cref{alg:XP}.
\end{proof}

Using this, we conclude that if one of $\bc(c)$ or $\wc(c)$ is set 
to some agent, then $c$ is matched to this agent in $M$ or $c$ is currently 
unmatched and this agent was $c$'s most recent partner:
\begin{lemma} \label{le:wcbc}
	For each agent~$c\in A$ at the beginning of each execution of 
	the 
	while-loop in Line~\ref{line:while}, the 
	following holds:
	\begin{enumerate}
		\item If $c$ is matched by $M$ and $\bc(c) \neq \bot$, then $M(c) 
		= \bc (c)$. \label{it:1}
		\item If $c$ is matched by $M$ and $\wc (c) \neq \bot$, then $M(c) 
		= \wc 
		(c)$. \label{it:2}
		\item If $c$ is unmatched by $M$ and $\bc(c) \neq \bot$, then $\bc 
		(c)$ is the last agent $c$ was matched to by~$M$. \label{it:3}
		\item If $c$ is unmatched by $M$ and $\wc (c) \neq \bot$, then 
		$\wc (c)$ is the last agent $c$ was matched to by~$M$. \label{it:4} 
	\end{enumerate}
\end{lemma}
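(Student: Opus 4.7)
The plan is to prove Lemma~\ref{le:wcbc} by induction on the number of iterations of the while loop in \Cref{alg:XP}, establishing that the four conditions form an invariant at the test on Line~\ref{line:while}. The argument amounts to tracing through every modification of $M$, $\bc$, or $\wc$ and verifying preservation of the invariant, with essential use of the previously established Lemmas~\ref{le:welldefined} and~\ref{le:one-nontrivial}.

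For the base case I would walk through \Cref{alg:init-XP}. Lines~\ref{line:init}--\ref{line:guessH} only initialize $M$ and leave all $\bc$ and $\wc$ values at $\bot$, so the invariant is vacuous. Line~\ref{line:guesswcbc} assigns $\bc(a)=\wc(a)=M(a)$ for every $a\in X$, which yields points~1 and~2 directly. In Line~\ref{line:init-guess}, for each $c=M_1(a)$ with $a\in X$ and $c\notin X$, agent $c$ has just become unmatched in $M$ (it was matched to $a$ in $M_1$, and neither the $B$-guess nor the $H$-guess touches $c$) with last partner~$a$, so setting $\bc(c)=a$ or $\wc(c)=a$ matches point~3 or~4. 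Finally, Line~\ref{line:endGuesses} deletes each $\{a,b\}\in F$ from $M$ and sets $\wc(a)=b$, $\wc(b)=a$, leaving $a$ and $b$ mutually unmatched with last partner equal to the new $\wc$ value.

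For the inductive step, suppose the invariant holds at the top of an iteration of the while loop. The only modification outside \textsc{Propagate} is Line~\ref{line:wcupdate} of \Cref{alg:XP}, which sets $\wc(x)=M(x)$ and trivially preserves point~2. Inside \textsc{Propagate} I would focus on Case~1 (where $\bc(a)\neq\bot$); Case~2 is symmetric. The optional assignments on Lines~\ref{line:resetwcbc}--\ref{line:resetwcbc2} set $\wc(M(a)):=a$ and $\bc(M(b)):=b$, and by the induction hypothesis $M(M(a))=a$ and $M(M(b))=b$ at that moment, so the new values satisfy points~1 and~2. After Line~\ref{line:updateM1} rematches $a$ with $b$ and assigns $\bc(a)=b$, $\wc(b)=a$, these new best/worst cases agree with the new $M(a)=b$ and $M(b)=a$. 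The former partners of $a$ and $b$ become unmatched; by the induction hypothesis together with the updates just performed, any non-trivial best or worst case they carried equaled their previous match, which is exactly their last partner after the rematching.

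The main obstacle is controlling the pre-existing $\wc(a)$ and $\bc(b)$ values when Line~\ref{line:updateM1} fires. If $\wc(a)\neq\bot$ at the start of Case~1, then Lemma~\ref{le:one-nontrivial} forces $\wc(a)$ to equal the value of $\bc(a)$ at the start of the iteration; together with the choice of $b$ on Line~\ref{line:def-b} (which is strictly worse than that initial $\bc(a)$ from $a$'s perspective), this yields $\wc(a)\succ_a^{\mathcal{P}_2}\bc(a)$ once $\bc(a)$ has been overwritten to $b$, triggering rejection on Line~\ref{line:reject-bcwc}. Hence in any iteration surviving to the next while-loop test we have $\wc(a)=\bot$ and point~2 is vacuous for $a$; the symmetric argument handles $\bc(b)$. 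A further subtlety arises when Lines~\ref{line:resetwcbc}--\ref{line:resetwcbc2} do not fire because $\{a,M(a)\}\notin M_1$ or $\{b,M(b)\}\notin M_1$: here \Cref{le:match-diff} guarantees that $M(a)$ or $M(b)$ already carries a non-trivial best or worst case, which the induction hypothesis pins to the current match ($a$ or $b$ respectively), so the invariant persists once these agents become unmatched.
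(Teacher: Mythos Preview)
The proposal is correct and follows essentially the same inductive trace-through-all-modifications approach as the paper. The one structural difference is how you handle the ``stale'' value of $\wc(a)$ in Case~1 (and symmetrically $\bc(b)$): you argue explicitly that any iteration surviving to the next while-loop test must have had $\wc(a)=\bot$, via the rejection on Line~\ref{line:reject-bcwc}. The paper instead observes globally that whenever $c$ is rematched one of $\bc(c),\wc(c)$ is set to the new $M(c)$, and that all other $\bc/\wc$ updates leave $c$ unmatched; it then invokes Lemma~\ref{le:one-nontrivial} directly to force the remaining value to be $\bot$ or equal. Since Lemma~\ref{le:one-nontrivial} is itself established through exactly that rejection, the two framings are equivalent---the paper's is just more compact. (Minor note: you list Lemma~\ref{le:welldefined} as essential but never actually invoke it; Lemma~\ref{le:one-nontrivial} together with \Cref{le:match-diff} are what your argument uses.)
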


\begin{proof}
	At the beginning of the algorithm, all four items
trivially hold, as all agents have trivial worst and best cases. 
We start by proving \Cref{it:1} and \Cref{it:2}. Observe that each time the 
partner of  $c$ is changed and $c$ is matched 
by $M$ after this change (this can happen in Lines~\ref{line:guess} or~\ref{line:guessH} of \Cref{alg-init} or Lines~\ref{line:updateM1} or~\ref{line:updateM2} of \Cref{alg:updateapx}), 
$\bc(c)$ or $\wc(c)$ is set to $M(c)$ 
after the change. 
The only other changes of~$\bc (c)$ or $\wc (c)$ appear in Lines~\ref{line:init-guess} and \ref{line:endGuesses} of \Cref{alg:init-XP} and in Lines~\ref{line:resetwcbc}, \ref{line:resetwcbc2}, \ref{line:wresetwcbc}, and \ref{line:wresetwcbc2} of \Cref{alg:updateapx}, and after these changes, $c$ is unmatched.
Applying \Cref{le:one-nontrivial}, 
\Cref{it:1} and \Cref{it:2}
follow.
We now turn to \Cref{it:3} and \Cref{it:4}.
Note that before the first execution of the while-loop, if an agent $c$ becomes unmatched, then its best or worst case is set to $M_1(c)$, which is the last agent $c$ was matched to. 
To see that \Cref{it:3} and \Cref{it:4} also hold before the $i$th iteration of the while-loop for $i>1$, we distinguish two cases.

First, observe that we never change the best and worst case of an agent that is unmatched before and after the change.
If agent $c$ becomes 
unmatched and $\bc(c)$ and $\wc(c)$ are not modified in this execution of the while-loop, then by \Cref{it:1} and 
\Cref{it:2}, we know that before $c$ becomes unmatched, agent~$c$ is matched to 
$\bc(c)$ if $\bc(c) \neq \bot$ and to $\wc(c)$ if $\wc(c) \neq 
\bot$. Thus, \Cref{it:3} and \Cref{it:4} follow. 

If $c$ becomes unmatched 
and $\bc(c)$ or $\wc(c)$ is modified (Lines 
\ref{line:resetwcbc} and \ref{line:resetwcbc2} and Lines~\ref{line:wresetwcbc} and~\ref{line:wresetwcbc2} of \Cref{alg:updateapx}), then $\bc(c)$ or 
$\wc(c)$ is set to the agent to which $c$ is matched by $M$ before the last 
modification of~$M$. Thus, \Cref{it:3} and \Cref{it:4} follow.
\end{proof}	

We conclude this part by proving two further useful lemmas:
	\begin{lemma}\label{lem:missing}
	 For each agent~$c \in A \setminus X$ at the beginning of each execution of the while-loop in Line \ref{line:while} with~$M(c) \neq M_1 (c)$ and $M( c) \neq \square$, we have $M(c) = \bc (c) $ or $M (c) = \wc (c)$.
	\end{lemma}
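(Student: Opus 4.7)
The plan is to combine \Cref{le:match-diff} and \Cref{le:wcbc} in a direct way. The hypothesis that $M(c) \neq M_1(c)$ is exactly the trigger for applying \Cref{le:match-diff}, while the hypothesis that $c$ is matched by $M$ is what puts us into the setting of Items~\ref{it:1} and~\ref{it:2} of \Cref{le:wcbc}. No new invariant needs to be established; we only need to observe that the hypotheses of these two earlier lemmas are both satisfied at the relevant timepoints.

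Concretely, I would argue as follows. Fix any execution of the while-loop in Line~\ref{line:while} at whose beginning the stated assumptions hold, and let $c \in A \setminus X$ with $M(c) \neq M_1(c)$ and $M(c) \neq \square$. First, since $M(c) \neq M_1(c)$, \Cref{le:match-diff} gives that $\bc(c) \neq \bot$ or $\wc(c) \neq \bot$. Now split into these two cases. If $\bc(c) \neq \bot$, then because $c$ is matched by $M$, Item~\ref{it:1} of \Cref{le:wcbc} yields $M(c) = \bc(c)$. If $\wc(c) \neq \bot$, then analogously Item~\ref{it:2} yields $M(c) = \wc(c)$. In either case the claim holds.

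The only step that requires even a moment of care is checking that \Cref{le:match-diff} is applicable here: its statement covers every point after the call to \textsc{Initialization}, and in particular the beginning of any iteration of the while-loop, so this is immediate. There is no genuine obstacle; the lemma is really a corollary of the two preceding structural lemmas, packaged in the form that will be convenient later in the correctness argument.
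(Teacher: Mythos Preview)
Your proof is correct and follows essentially the same two-step route as the paper: first use \Cref{le:match-diff} to obtain a non-trivial best or worst case from $M(c)\neq M_1(c)$, then apply Items~\ref{it:1}/\ref{it:2} of \Cref{le:wcbc} since $c$ is matched. (In fact the paper's printed proof cites \Cref{le:welldefined} at the first step, which appears to be a typo; your citation of \Cref{le:match-diff} is the right one.)
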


	\begin{proof}
	From $M(c)\neq M_1(c)$, by \Cref{le:welldefined}, it follows that $\bc(c)\neq \bot$ or $\wc(c)\neq \bot$. 
	As $c$ is matched by $M$, 
	 the lemma now follows from \Cref{it:1} of \Cref{le:wcbc} or \Cref{it:2} of \Cref{le:wcbc}.
	\end{proof}

	\begin{lemma} \label{le:abM1}
		If $\textsc{Propagate}(\cdot)$ is called for an agent
		$a\in A\setminus X$ and we have that the selected stable pair is $e=\{a,b\}$ with $b\in A\setminus X$, then $\{a,b\}\notin M_1$.  
	\end{lemma}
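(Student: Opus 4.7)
The plan is a direct case split according to which branch of \Cref{alg:updateapx} is taken when $\textsc{Propagate}(a)$ selects the pair $e = \{a, b\}$. In each branch the selection criterion places $b$ strictly on one side of $a$'s preference list relative to $\bc(a)$ or $\wc(a)$, and \Cref{le:wcbcM1} places $M_1(a)$ on the opposite side; chaining these gives $M_1(a) \neq b$, hence $\{a,b\} \notin M_1$.

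Concretely, if $\bc(a) \neq \bot$, the pair is chosen in Line~\ref{line:def-b} so that $\bc(a) \succ^{\mathcal{P}_2}_a b$. Since $a \in A \setminus X$, \Cref{le:wcbcM1} yields $M_1(a) \succeq^{\mathcal{P}_2}_a \bc(a)$, so transitivity gives $M_1(a) \succ^{\mathcal{P}_2}_a b$ and thus $M_1(a) \neq b$. Otherwise $\bc(a) = \bot$, so the pair is chosen in Line~\ref{line:def-b-2} with $b \succ^{\mathcal{P}_2}_a \wc(a)$. \Cref{le:welldefined} ensures $\wc(a) \neq \bot$ at the time of the call (in the matched-blocking-pair branch of \Cref{alg:XP} this is reinforced by Line~\ref{line:wcupdate}), and \Cref{le:wcbcM1} then gives $\wc(a) \succeq^{\mathcal{P}_2}_a M_1(a)$, whence $b \succ^{\mathcal{P}_2}_a M_1(a)$ and again $M_1(a) \neq b$.

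The only subtlety worth flagging is that \Cref{le:wcbcM1} is stated as an invariant at the \emph{beginning} of each while-loop iteration, whereas $\textsc{Propagate}$ may be invoked after the small modification in Line~\ref{line:wcupdate} of \Cref{alg:XP}. That modification only rewrites $\wc(a)$ when it was previously $\bot$, in which case, having $\bc(a) = \bot$ as well, \Cref{le:match-diff} forces $M(a) = M_1(a)$; the new value $\wc(a) := M(a) = M_1(a)$ then trivially satisfies $\wc(a) \succeq^{\mathcal{P}_2}_a M_1(a)$. Since the assumption $b \in A \setminus X$ plays no role in this argument, it is used only implicitly: it is precisely the condition under which $\textsc{Propagate}$ does not reject in Line~\ref{line:rej} or Line~\ref{line:rej2}, so the lemma is vacuous in the rejected cases and our case split covers the rest.
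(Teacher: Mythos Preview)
Your proof is correct and follows essentially the same approach as the paper: the same case split on whether $\bc(a)\neq\bot$, the same appeal to \Cref{le:wcbcM1} and \Cref{le:welldefined}, and the same chaining with the selection criterion in Lines~\ref{line:def-b}/\ref{line:def-b-2}. Your additional remarks about the timing of \Cref{le:wcbcM1} relative to Line~\ref{line:wcupdate} and the (non-)role of the hypothesis $b\in A\setminus X$ are valid clarifications that the paper leaves implicit.
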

	\begin{proof}
		If $\bc(a)\neq \bot$, then from \Cref{le:wcbcM1}, we get that $M_1(a)\succeq_a^{\mathcal{P}_2}\bc(a)$. 
		As we require that $\bc(a)\succ_a^{\mathcal{P}_2} b$, we get that $M_1(a)\succ_a^{\mathcal{P}_2} b$. 
		
		Otherwise, by \Cref{le:welldefined}, we have  $\wc(a)\neq \bot$. 
		Then from \Cref{le:wcbcM1}, we get that $\wc(a)\succeq_a^{\mathcal{P}_2} M_1(a)$. 
		As we require that $b\succ_a^{\mathcal{P}_2} \wc(a)$, we get that $b \succ_a^{\mathcal{P}_2} M_1(a)$. 
	\end{proof}
	
	\paragraph{Well-Definedness} \label{sec:defined}
	We continue by showing that \Cref{alg} is well-defined. To do so, we need 
	to show that in Line~\ref{line:choose-a} of \Cref{alg:XP} in every blocking 
	pair one agent has a trivial best case and the other a non-trivial best 
	case. As otherwise the guess would have been rejected in Line 
	\ref{line:reject-bp} of \Cref{alg:XP}, we know that at least one agent from the blocking 
	pair has a trivial best case. Thus, it suffices to show the following: 
	
	\begin{lemma}\label{lem:well-def-1}
	  At any point of the execution of the algorithm, for every agent pair~$\{c, d\}$ currently blocking $M$ where both $c$ and $d$ are currently matched by $M$, 
	  we have $\bc (c) \neq \bot$ or $\bc (d) \neq \bot$.
	\end{lemma}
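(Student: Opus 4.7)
The plan is to argue by contradiction: assume the blocking pair $\{c,d\}$ satisfies $\bc(c)=\bc(d)=\bot$ while both $c$ and $d$ are matched in the current $M$, and derive that $M_1$ would already be blocked in $\mathcal{P}_1$, contradicting the stability of $M_1$.

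First I would dispose of the possibility that $c\in X$ or $d\in X$. Looking at \Cref{alg:init-XP}, Line~\ref{line:guesswcbc} sets $\bc(a):=M(a)$ for every $a\in X$ after $M$ has been updated to include the guessed pairs and the pairs of $H$; and \Cref{alg:updateapx} never modifies $\bc$ or $\wc$ of an $X$-agent because Lines~\ref{line:rej} and \ref{line:rej2} reject any propagation step touching such an agent. Hence for any currently matched $a\in X$ we have $\bc(a)=M(a)\ne\bot$. So we may assume $c,d\in A\setminus X$, and in particular their preferences coincide in $\mathcal{P}_1$ and $\mathcal{P}_2$.

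Next, I would show that for each of $c$ and $d$, the current partner is weakly at least as good as the $M_1$-partner: if $M(c)=M_1(c)$ this is trivial; if $M(c)\ne M_1(c)$, then \Cref{le:match-diff} together with $\bc(c)=\bot$ forces $\wc(c)\ne\bot$, \Cref{le:wcbc}(\ref{it:2}) gives $M(c)=\wc(c)$, and \Cref{le:wcbcM1} (applicable because $c\in A\setminus X$) yields $\wc(c)\succeq_c^{\mathcal{P}_2} M_1(c)$. Thus $M(c)\succeq_c^{\mathcal{P}_2} M_1(c)$, and symmetrically $M(d)\succeq_d^{\mathcal{P}_2} M_1(d)$.

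Combining this with the blocking-pair inequalities $d\succ_c^{\mathcal{P}_2} M(c)$ and $c\succ_d^{\mathcal{P}_2} M(d)$ gives $d\succ_c^{\mathcal{P}_2} M_1(c)$ and $c\succ_d^{\mathcal{P}_2} M_1(d)$. Since $c,d\notin X$ their preferences are identical in $\mathcal{P}_1$ and $\mathcal{P}_2$, so the same strict inequalities hold in $\mathcal{P}_1$, meaning $\{c,d\}$ blocks $M_1$ in $\mathcal{P}_1$, a contradiction.

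The main obstacle is really the careful bookkeeping in the middle step: one has to keep straight that $\bc(c)=\bot$ together with $M(c)\ne M_1(c)$ forces a non-trivial $\wc(c)$ equal to the current partner (via Lemmas~\ref{le:match-diff} and \ref{le:wcbc}) and that $\wc(c)$ is weakly better than $M_1(c)$ in $\mathcal{P}_2$ (via \Cref{le:wcbcM1}). The $X$-case and the identity of preferences for $A\setminus X$ agents are essential sanity checks but only require locating the right line in \Cref{alg:init-XP}.
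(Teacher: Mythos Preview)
Your proof is correct and follows essentially the same approach as the paper's: both reduce to the observation that if neither agent in the blocking pair has a non-trivial best case, then (using \Cref{le:match-diff}, \Cref{le:wcbc}, and \Cref{le:wcbcM1}) each is matched in~$M$ at least as well as in~$M_1$, so $\{c,d\}$ would block~$M_1$ in~$\mathcal{P}_1$. The paper phrases this directly---picking the agent matched worse than in~$M_1$ and showing its best case is non-trivial---while you phrase it by contradiction; your treatment of the case $c\in X$ or $d\in X$ is actually slightly more careful than the paper's (which only explicitly mentions agents with modified preferences rather than all of~$X$).
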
 
	\begin{proof}
	  If the preferences of $c$ or $d$ differ in $\mathcal{P}_1$ and $\mathcal{P}_2$, then a 
	  non-trivial best case for~$c$ or~$d$ was set in Line~\ref{line:guesswcbc} of \Cref{alg:XP}. So assume that 
	  $c$ and $d$ 
	  have the same preferences in $\mathcal{P}_1$ and $\mathcal{P}_2$. Then, 
	  as $\{c,d\}$ does not block~$M_1$, agent~$c$ or $d$ is matched worse in 
	  $M$ than in $M_1$. Assume that $c$ is matched worse to an agent $c'$, i.e., we have $M (c) = c'$ with $M_1 (c) \succ_c c'$. By 
	  \Cref{lem:missing}, it follows that $\bc(c)=c'$ or $\wc(c)=c'$. By 
	  \Cref{le:wcbcM1}, and as $c$ prefers~$M_1 (c)$ to $c'$, it follows that 
	  $\bc(c)=c'$.
	\end{proof}
	
\paragraph{Propagation of best and worst cases} \label{sec:propagate}

	Now we show that we correctly propagate $\bc(\cdot)$ and $\wc(\cdot)$, i.e., 
	every stable matching obeying a given guess must for every agent $c\in A$ 
	obey 
	$\bc (c) $ and $\wc 
	(c) $ as set at each point during the execution of the algorithm: 
	For $c\in A$, we say that a 
	matching~$M$ \emph{obeys} $\bc (c)$ if $\bc (c)= \bot $ or $\bc (c) 
	\succeq^{\mathcal{P}_2}_c M(c) $, and we say that $M$ \emph{obeys} 
	$\wc (c)$ if $\wc (c)= \bot$ or $M(c) \succeq^{\mathcal{P}_2}_c \wc 
	(c)$.
	Further, we say that a matching obeys the best and worst cases if it 
	obeys $\bc (c)$ and $\wc (c)$ for every agent~$c\in A$.
	Lastly, we say that a matching $M$ \emph{obeys our guesses} if $M$ obeys all best and worst cases after the call of the \textsc{Initialization}($\cdot$) function and the guessed set $F$ contains exactly the set of agent pairs from $M_1$ where both endpoints prefer $M$ to $M_1$ in $\mathcal{P}_2$ and the guessed set~$H$ contains exactly the agent pairs from $M$ where both endpoints prefer $M_1$ to $M$ in $\mathcal{P}_2$. 

As the final piece before proving that we propagate the best and worst cases 
correctly, we show the following:
	
    \begin{lemma} \label{le:change-quo}
	 If $\textsc{Propagate}(\cdot)$ is called for an agent
$c\in A$, then it 
holds 
that 
    \begin{enumerate}
		\item If $\bc(c) \neq \bot$, then $c$ cannot be matched to 
$\bc(c)$ in any stable matching obeying the current guess and the current best 
and worst cases.
		\item If $\wc(c) \neq \bot$, then $c$ cannot be matched to 
$\wc(c)$ in any stable matching obeying the current guess and the current best 
and worst cases.
	\end{enumerate}
	\end{lemma}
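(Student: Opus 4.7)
My plan is to prove the lemma by induction over the sequence of updates performed by \Cref{alg:XP} (the Initialization followed by successive \textsc{Propagate} calls), jointly with an auxiliary invariant: at every moment during the execution, every stable matching~$N$ in $\mathcal{P}_2$ that obeys the initial guesses (the pair sets $F$ and $H$ and the guessed partners for agents in $X$) also obeys all the current best and worst cases. The base case after \textsc{Initialization} reduces to inspecting the four sources of non-trivial values: the pins $\bc(a) = \wc(a) = M(a)$ for $a \in X$ (which coincide with $N(a)$ by the definition of the initial guess), the branchings in Line~\ref{line:init-guess} for every $M_1(a)$ with $a \in X$, the updates for pairs of $F$ in Line~\ref{line:endGuesses} (each of whose endpoints strictly prefers $N$ to $M_1$ by the definition of $F$), and the pairs from $H$ that are pinned into $M$.

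For the inductive step, I split on how the \textsc{Propagate} call was triggered. In the \emph{matched case}, \Cref{alg:XP} invokes \textsc{Propagate} on $c$ only through the ``else'' branch, so $\bc(c) = \bot$ (making Item~1 vacuous) and $\wc(c) = M(c)$ by Line~\ref{line:wcupdate} combined with \Cref{le:wcbc}. Let $\{c, d\}$ be the blocking pair chosen in Line~\ref{line:selectbp}; Lines~\ref{line:reject-bp}--\ref{line:choose-a} together with \Cref{le:wcbc} yield $\bc(d) = M(d)$. If a valid~$N$ satisfied $N(c) = \wc(c) = M(c)$, then $d \succ_c^{\mathcal{P}_2} N(c)$ from $\{c, d\}$ blocking $M$, while the invariant applied to~$d$ and the blocking condition give $N(d) \preceq_d^{\mathcal{P}_2} \bc(d) = M(d) \prec_d^{\mathcal{P}_2} c$; hence $\{c, d\}$ would block $N$, contradicting stability.

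In the \emph{unmatched case}, \Cref{le:welldefined} together with \Cref{le:wcbc} tells us that any non-trivial value among $\bc(c)$ and $\wc(c)$ equals the most recent partner $d^{\ast}$ of $c$ in $M$, so it suffices to prove $N(c) \neq d^{\ast}$. I would trace back to the step at which $c$ became unmatched. If this occurred in Line~\ref{line:endGuesses}, then $\{c, d^{\ast}\} \in F$ and the definition of $F$ immediately forces $N(c) \neq d^{\ast}$. Otherwise it occurred during an earlier \textsc{Propagate} call on an agent~$a$ that introduced a pair $\{a, b\}$ displacing $\{c, d^{\ast}\}$, so $d^{\ast} \in \{a, b\}$. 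If $d^{\ast} = a$, then \Cref{le:wcbc} at the start of that earlier call gives $\bc(a) = c$ or $\wc(a) = c$ according to the branch taken, and the inductive hypothesis of the lemma applied to that call yields $N(a) \neq c$, i.e.\ $N(c) \neq d^{\ast}$. If $d^{\ast} = b$, the freshly set $\wc(b) := a$ (first branch) or $\bc(b) := a$ (second branch) combined with the auxiliary invariant forces $N(b)$ weakly onto the correct side of $a$; the pair-selection conditions ($a \succeq_b^{\mathcal{P}_2} \wc(b)$ together with $a \succeq_b^{\mathcal{P}_2} M_1(b)$ in the first branch, or $\bc(b) \succeq_b^{\mathcal{P}_2} a$ in the second) together with $a \neq c$ strictly separate $a$ from $c$ in the preferences of $b$, giving $N(b) \neq c$.

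The main obstacle is the sub-case $d^{\ast} = b$, since the inductive hypothesis of the lemma concerns \textsc{Propagate} on~$a$ and does not directly speak about~$b$. Handling it cleanly requires a separate analysis of whether $\bc(b) = c$ or $\wc(b) = c$ held at the start of the earlier call, and in the awkward scenario where $\bc(b) = c$ while $\{b, c\} \notin M_1$, tracing one further step back through Lines~\ref{line:updateM1} or \ref{line:updateM2} of \Cref{alg:updateapx} to locate the \textsc{Propagate} call in which this value was created. The same induction must simultaneously verify, after each \textsc{Propagate} call, that the auxiliary invariant is preserved for the newly assigned best and worst cases in Lines~\ref{line:resetwcbc}--\ref{line:updateM2}; this is where \Cref{lem:circular-prefs} enters, applied to the stable pair $\{a, b\}$ to force a dichotomy about $N$, and the pair-selection conditions together with the strict inequalities from Items~1 and~2 for $a$ (themselves supplied by the inductive hypothesis) rule out the wrong side of the dichotomy.
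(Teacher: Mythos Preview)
Your mutual induction (folding what is essentially \Cref{lem:bounds} into the proof of \Cref{le:change-quo}) is workable, but it is both heavier than needed and, as written, has a real hole in the sub-case $d^{\ast}=b$. You yourself flag it: when you are in the second branch of \textsc{Propagate} and $\bc(b)=\bot$ before the call (so your pair-selection condition ``$\bc(b)\succeq_b a$'' is vacuous), nothing in your argument gives $c\succ_b a$, which is what you need to turn the invariant $a\succeq_b N(b)$ into $N(b)\neq c$. Your suggested fix, ``trace one further step back,'' is not a proof; and your description of the awkward scenario (``$\bc(b)=c$ while $\{b,c\}\notin M_1$'') does not match the actual obstruction, which is $\bc(b)=\bot$. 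You also miss one initialization case: an agent can become unmatched because its $M_1$-partner lies in~$X$ and is re-matched in Line~\ref{line:guess} or~\ref{line:guessH}, with the corresponding $\bc/\wc$ set in Line~\ref{line:init-guess}, not in Line~\ref{line:endGuesses}.

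The paper avoids all of this by a single observation that dispenses with the auxiliary invariant and treats $d^{\ast}\in\{a,b\}$ uniformly. At the moment the earlier \textsc{Propagate} call unmatches~$c$, agent $d:=d^{\ast}$ was matched to~$c$, so by \Cref{le:wcbc} any non-trivial $\bc(d)$ or $\wc(d)$ equalled~$c$. In that call, $d$'s best or worst case is updated (to its new partner, which is not~$c$); by \Cref{le:gettingbetter} the update is strict, so afterwards either $c\succ_d \bc(d)$ or $\wc(d)\succ_d c$, and this persists for the rest of the run. Hence any matching obeying the \emph{current} best and worst cases already satisfies $N(d)\neq c$, directly from the definition of ``obeying $\bc/\wc$''. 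No appeal to \Cref{lem:bounds} is needed, no case split on $d^{\ast}=a$ versus $d^{\ast}=b$, and no further tracing back. (The strictness in the problematic second-branch case follows from \Cref{le:wcbcM1}: one gets $M_1(b)\succ_b a$ and $c\succeq_b M_1(b)$, hence $c\succ_b a$.) This is the idea your proposal is missing; once you see it, your mutual induction and the incomplete $d^{\ast}=b$ analysis become unnecessary.
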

	\begin{proof}
	 We distinguish two cases based on whether $c$ is matched in $M$ when $\textsc{Propagate}(\cdot)$ is called  or not. 
    If~$c$ is currently matched, then there exists a 
blocking pair $\{c,d\}$ for $M$ selected in Line~\ref{line:selectbp} of \Cref{alg:XP} where both $c$ and $d$ are matched. Moreover, by Lines~\ref{line:choose-a} and \ref{line:wcupdate}  of \Cref{alg:XP}, it 
needs to hold that $\bc(d)\neq \bot$ and $\wc(c)\neq \bot$. By 
\Cref{le:wcbc}, from this it follows that $M(d)=\bc(d)$ and $M(c)=\wc(c)$.
Note in particular that $c\succ_d^{\mathcal{P}_2} M(d)=\bc(d)$.
Thus, each matching that obeys the current best and 
worst cases needs to match $d$ 
to an agent to which $d$ prefers~$c$.
Using this and that 
$\{c,d\}$ blocks $M$, implying that $d\succ_c^{\mathcal{P}_2}M(c)=\wc(c)$, it follows that $c$ needs to be 
matched to an agent which it prefers to $d$ and thus that~$\{c,\wc(c)\}$ cannot 
be part of a stable matching.  

If $c$ is currently unmatched by $M$, then we again distinguish two cases.
First, if $c$ became unmatched before the first call of the while-loop, then 
it holds that $\bc(c)$ or $\wc(c)$ was set to~$M_1(c)$ and 
no 
matching respecting the current guess can match $c$ to $M_1(c)$. 

Second, assume that $c$ became unmatched in the while-loop and, more 
specifically, in a previous call of the $\textsc{Propagate}(\cdot)$ function. Let 
$d$ be the last 
agent $c$ was matched to by $M$ and $\{a,b\}$ the pair examined in the call of 
the $\textsc{Propagate}(\cdot)$ function during which $c$ became unmatched (note 
that it needs to hold that $d\in \{a,b\}$). Then by \Cref{le:wcbc}, it follows 
that currently it holds that $\bc(c)=d$ if $\bc(c)\neq \bot$ and 
$\wc(c)=d$ if $\wc(c)\neq 
\bot$. Thus, we need to exclude that $\{c,d\}$ is part of a stable matching. 
By \Cref{le:wcbc}, we have that before the call of 
the $\textsc{Propagate}(\cdot)$ function in which $c$ got unmatched it holds that 
$\bc(d)=c$ if $\bc(d)\neq \bot$ and $\wc(d)=c$ if $\wc(d)\neq \bot$. 
Using this, as in the call of $\textsc{Propagate}(\cdot)$ examining the pair $\{a,b\}$ the best or the worst cases of both $a$ and $b$ are changed,  after the call of the $\textsc{Propagate}(\cdot)$ function in which $c$ got unmatched, by 
\Cref{le:gettingbetter}, it holds that 
$c\succ_d \bc(d)$ if $\bc(d)\neq 
\bot$ or $\wc(d)\succ_d c$ if $\wc(d)\neq \bot$. 
Thus, a matching 
respecting the current worst and best cases cannot contain~$\{c, d\}$. 
	\end{proof}

	We are now ready to prove that we correctly propagate agent's best and worst cases, the cornerstone of our proof of correctness:
	\begin{lemma}\label{lem:bounds}
	  Every stable matching~$M^*$ in $\mathcal{P}_2$ obeying a guess obeys the best and worst cases at every point during the execution of the algorithm for this guess. 
	  
	  If a stable matching $M^*$ obeying a guess exists, then we do not reject  in  Line~\ref{line:rej} and Line~\ref{line:rej2} of \Cref{alg:updateapx} during the execution of the algorithm for this guess.
	\end{lemma}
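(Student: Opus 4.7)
The plan is to prove both statements simultaneously by induction along the sequence of atomic updates (to $\bc$, $\wc$, or $M$) and guess-rejection decision points encountered by the algorithm during the execution for the fixed guess. Let $M^*$ denote an arbitrary stable matching in $\mathcal{P}_2$ that obeys the guess. I would maintain the invariant that immediately before every atomic step, $M^*$ obeys all current values $\bc(c)$, $\wc(c)$ for $c \in A$, and that no rejection at Line~\ref{line:rej} or~\ref{line:rej2} has occurred.

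For the base case I would walk through each assignment performed inside \textsc{Initialization}. The assignments for agents in $X$ (Line~\ref{line:guesswcbc}) follow directly from the fact that $M^*$ contains the pairs guessed in Lines~\ref{line:guess} and~\ref{line:guessH}; the assignments driven by the $M_1(a) \notin X$ guess (Line~\ref{line:init-guess}) follow from the case distinction the algorithm has just guessed; and the assignments induced by $F$ (Line~\ref{line:endGuesses}) are obeyed because the definition of obeying the guess pins down $F$ as exactly the set of $M_1$-pairs whose two endpoints both strictly prefer their $M^*$-partner.

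The substantial work is the inductive step for each call of \textsc{Propagate}($a$). I would apply \Cref{lem:circular-prefs} to the stable matching $M^*$ together with the stable pair $e = \{a, b\}$ selected in Line~\ref{line:def-b} or~\ref{line:def-b-2}. If $e \in M^*$ the new assignments to $\bc(a)$, $\wc(b)$ (or $\wc(a)$, $\bc(b)$) are trivially obeyed. If $e \notin M^*$, \Cref{lem:circular-prefs} gives exactly two orientations; I would rule out the bad one by showing that it forces the stable pair $\{a, M^*(a)\}$ to also satisfy all filter conditions of Line~\ref{line:def-b} (resp.\ Line~\ref{line:def-b-2}) while being strictly better-ranked than $b$ by $a$ in the $\wc$-branch or strictly worse-ranked than $b$ by $a$ in the $\bc$-branch, contradicting the extremal choice of $b$. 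Verifying that $\{a, M^*(a)\}$ does pass the filter uses the inductive hypothesis applied to both $a$ and $M^*(a)$, plus \Cref{le:wcbcM1} to handle the $M_1(b)$-condition and \Cref{le:abM1} to handle $a,b \notin X$.

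Once the correct orientation is established, the direct updates $\bc(a)$, $\wc(b)$ (respectively $\wc(a)$, $\bc(b)$) in Lines~\ref{line:updateM1}/\ref{line:updateM2} are obeyed by $M^*$. For the side updates of $\wc(M(a))$ and $\bc(M(b))$ in Lines~\ref{line:resetwcbc}--\ref{line:resetwcbc2} and \ref{line:wresetwcbc}--\ref{line:wresetwcbc2}, which fire only when the replaced edge lies in $M_1$, I would use the guess set $F$: had both endpoints of that $M_1$-edge preferred $M^*$ to $M_1$, the edge would have been in $F$ and removed during initialization, so it cannot have survived until now, forcing the correct orientation. The only update outside \textsc{Propagate} is Line~\ref{line:wcupdate} of \Cref{alg:XP}, where $\bc(x) = \wc(x) = \bot$ beforehand; by \Cref{le:match-diff} this forces $M(x) = M_1(x)$, and because $x$ is part of a blocking pair for $M$ one verifies directly from the structure of the guess that $M^*(x) \succeq^{\mathcal{P}_2}_x M_1(x)$. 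Finally, non-rejection at Lines~\ref{line:rej}, \ref{line:rej2} is immediate: the stable pair $\{a, M^*(a)\}$ is always a legitimate candidate (so the ``no such pair'' clause cannot trigger), and if $a$ or $b$ lay in $X$ then the guess's pinning of that $X$-agent's partner would conflict with the new assignment, contradicting the inductive hypothesis. The main obstacle is the case analysis ruling out the wrong orientation in \Cref{lem:circular-prefs}; the subtlety is that the filter in Line~\ref{line:def-b} refers to $M_1(b)$ rather than $M(b)$, which is precisely the reason $F$ and $H$ are guessed in the first place, and correctly matching up the inductive hypothesis on $M^*(M^*(a))$ with these $M_1$-based conditions is where essentially all the structural lemmas of \Cref{sec:intial_obs} are invoked.
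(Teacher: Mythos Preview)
Your overall strategy is the paper's: induct over the algorithm's updates and, for the key step in \textsc{Propagate}, show that the pair $\{a, M^*(a)\}$ itself satisfies the filter in Line~\ref{line:def-b} (resp.\ \ref{line:def-b-2}), so that the extremality of $b$ forces $b \succeq_a^{\mathcal{P}_2} M^*(a)$ (resp.\ $M^*(a) \succeq_a^{\mathcal{P}_2} b$). That said, several concrete steps in your plan are wrong or incomplete.

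First, your orientation bookkeeping is reversed: in the $\bc$-branch $b$ is \emph{best}-ranked, so the contradiction arises when $M^*(a)$ is strictly \emph{better}-ranked than $b$; in the $\wc$-branch $b$ is \emph{worst}-ranked and you need $M^*(a)$ strictly \emph{worse}-ranked. Second, to verify that $\{a,M^*(a)\}$ passes the filter you need two ingredients you do not name. The strict inequality $\bc(a)\succ_a^{\mathcal{P}_2} M^*(a)$ (resp.\ $M^*(a)\succ_a^{\mathcal{P}_2}\wc(a)$) is not delivered by the inductive hypothesis alone; it requires \Cref{le:change-quo}. And the clause $a\succeq_{M^*(a)}^{\mathcal{P}_2} M_1(M^*(a))$ in Line~\ref{line:def-b} is \emph{not} handled by \Cref{le:wcbcM1}; it is handled by the guessed set $H$: from \Cref{le:wcbcM1} applied to $a$ one gets $M_1(a)\succ_a^{\mathcal{P}_2} M^*(a)$, so if also $M_1(M^*(a))\succ_{M^*(a)}^{\mathcal{P}_2} a$ then $\{a,M^*(a)\}\in M^*$ has both endpoints preferring $M_1$, hence lies in $H$, forcing $a\in X$---contradiction. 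You allude to this role of $H$ in your last sentence but do not place it where it is actually used. Third, \Cref{le:abM1} does not establish $a,b\notin X$; it only gives $\{a,b\}\notin M_1$, which the paper uses for the side updates to $M_1(a)$ and $M_1(b)$. The argument that $a\notin X$ combines \Cref{le:change-quo}, \Cref{le:gettingbetter}, and \Cref{le:wcbc} (an agent in $X$ has $\bc=\wc=M=M^*$ pinned), and $b\notin X$ follows because the new $\bc^{\after}(b)$ or $\wc^{\after}(b)$ would force $\wc(b)\succ_b\bc(b)$.
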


	\begin{proof}
      Let $M^*$ be a stable matching in $\mathcal{P}_2$ obeying our guesses.
      We now show that $M^*$  obeys the  best and worst cases at every point during the execution of the algorithm for this guess.
      Before the first execution of the while-loop in Line \ref{line:while} of \Cref{alg:XP}, 
      the current best and worst cases are obeyed since~$M^*$ obeys the guess.
      We now argue that 
the matching $M^*$ continues to obey the best and worst cases. For this we examine  all points during the execution of the algorithm where best and worst cases may change.

\subparagraph{Changes outside of \textsc{Propagate}$\bm{(\cdot)}$.}
The only point where an update may happen inside of the while-loop but outside 
of the \textsc{Propagate}$(\cdot)$  function is in Line~\ref{line:wcupdate} of \Cref{alg:XP}. 
Let~$M^{\before}$, $\bc^{\before}$ and $\wc^{\before}$ be the matching $M$, best and worst case before the update, and $M^{\after}$, $\bc^{\after}$ and $\wc^{\after}$ the matching $M$, best and worst case after the update. 
We set 
$\wc^{\after} (x)$ to $M^{\before}(x)$ in Line~\ref{line:wcupdate} if $\{x, y\}$ forms a blocking pair for $M^{\before}$, both~$x$ and $y$ are matched in $M^{\before}$, $\bc^{\before}(y)\neq \bot$ and $\wc (x) = \bot = \bc (x)$. By  
\Cref{le:wcbc} and as $\bc^{\before}(y)\neq \bot$, it follows that $\bc^{\before}(y) = M^{\before}(y)$. Thus, as $\{x, y\}$ does not block~$M^*$ and $M^*$ obeys $\bc^{\before}(y)$ (i.e., $x\succ_y^{\mathcal{P}_2} M^{\before}(y) = \bc^{\before} (y) \succeq^{\mathcal{P}_2}_y M^* (y)$) it needs to hold that $M^*(x)\succeq^{\mathcal{P}_2}_x y$ and thus 
that $M^*(x)$ respects~$\wc^{\after}(x)=M^{\before}(x)\prec_x^{\mathcal{P}_2} y$, where $M^{\before}(x)\prec_x^{\mathcal{P}_2} y$ holds as $\{x,y\}$ blocks $M$.

\subparagraph{\textsc{Propagate}$\bm{(\cdot)}$.}
Let us now consider a call of the function 
\textsc{Propagate}$(\cdot)$ and let $a\in A$ be the agent on which the 
function was called and $\{a,b\}$ the considered stable pair from Line~\ref{line:def-b} or~\ref{line:def-b-2} of \Cref{alg:updateapx}.  
We prove that $M^*$ respects the best and worst cases by induction over the number of iterations of the the while loop. 
Assume that the statements holds after iteration $i\geq 0$. 
And let us examine the $i+1$th iteration of the while-loop. 
For this let $M^{\before}$, $\bc^{\before}$ and $\wc^{\before}$ be the matching $M$, best and worst case at the beginning of this call of \textsc{Propagate}$(\cdot)$, and $M^{\after}$, $\bc^{\after}$ and $\wc^{\after}$ the matching~$M$, best and worst case after this call of \textsc{Propagate}$(\cdot)$. 
By our induction hypothesis, we know that $M^*$ respects $\bc^{\before}$ and $\wc^{\before}$ (for $i=0$ this follows from the paragraph above). 

We first show that $a\notin X$. 
Suppose towards a contradiction that $a\in X$. 
As $M^*$ respects our guesses, we have guessed in the \textsc{Initialization}$(\cdot)$ function the partner of all agents from $X$ in $M^*$ and in particular set $M(a)=M^*(a)$ and $\bc(a)=\wc(a)=M^*(a)$.
Moreover, by \Cref{le:gettingbetter} and as $M^*$ still respects the current guesses, we  have $\bc^{\before}(a)=\wc^{\before}(a)=M^*(c)$ and by \Cref{le:wcbc} thus also $M^{\before}=M^*(a)$. 
However, from \Cref{le:change-quo} we get that if \textsc{Propagate}$(\cdot)$ is called on $a$, then $a$ cannot be matched to $\bc^{\before}(a)=M^*(a)$ in a stable matching obeying the current guess and $\bc^{\before}$ and $\wc^{\before}$, a contradiction to the existence of $M^*$.

We now argue chronologically one change of best and worst cases after each other why $M^*$ still obeys the best and worst cases $\bc^{\after}$ and $\wc^{\after}$.

\subparagraph{\textsc{Propagate}$\bm{(\cdot)}$---Changing best and worst cases of $\bm{a}$.}
First, we consider the case~$\bc^{\before} (a)\neq \bot$.
We now argue that $b\succeq^{\mathcal{P}_2}_a M^*(a)$ and thus that 
$M^*$  respects $\bc^{\after} (a)=b$.
First of all note that $\{a, M^* (a)\}$ clearly needs to be a 
stable pair in~$\mathcal{P}_2$. 
As $M^*$ respects $\bc^{\before}$ and $\wc^{\before}$, 
it needs to 
hold that $\bc(a)^{\before} \succ_a^{\mathcal{P}_2} M^*(a)$ (where we additionally apply \Cref{le:change-quo}) and that 
$a\succeq_{M^*(a)}^{\mathcal{P}_2} \wc^{\before} (M^*(a))$ or $\wc^{\before} (M^*(a))=\bot$. Moreover, by \Cref{le:wcbcM1} 
it holds that 
$M_1(a)\succeq_{a}^{\mathcal{P}_2} \bc^{\before}(a)\succ_a^{\mathcal{P}_2} M^*(a)$. We now make a case distinction based on the preferences of $M^*(a)$.

If $M^*(a)$  
prefers $M_1(M^*(a))$ to~$a$, then for the pair~$\{a,M^*(a)\}\in 
M^*$ both endpoints prefer~$M_1$ to $M^*$.
This means that $\{a,M^*(a)\}\in H$, 
as 
$M^*$ respects the current guess. 
Thus, we have $a \in X$, a contradiction to~$a \notin X$.
Consequently, we have
$a\succeq_{M^*(a)}^{\mathcal{P}_2} M_1(M^*(a))$.
Thus, we have proven that $\{a,M^*(a)\}$ is a stable pair, that 
$a\succeq_{M^*(a)}^{\mathcal{P}_2} \wc^{\before} (M^*(a))$ or $\wc^{\before} (M^*(a))=\bot$, and
$\bc(a)^{\before} \succ_a^{\mathcal{P}_2} M^*(a)$. 
These are exactly the constraints $b$ has to fulfill in Line \ref{line:def-b}. As $b$ is the agent best ranked by $a$ fulfilling these constraints, we clearly get 
$b\succeq^{\mathcal{P}_2}_a M^*(a)$.
Thus, $M^*$ also respects $\bc^{\after} (a)=b$. 

We now turn to the case that  $\bc^{\before}(a)=\bot$. Then, very similar to the previous case,  
\Cref{le:welldefined} implies
$\wc^{\before}(a)\neq \bot$.
First of all note that in~$M^*$, the pair containing $a$ clearly needs to be a 
stable pair in~$\mathcal{P}_2$. 
By  
\Cref{le:change-quo} and as $M^*$ respects the current best and worst cases, 
it needs to 
hold that $M^*(a) \succ_a^{\mathcal{P}_2} \wc^{\before}(a)$ and that 
$\bc^{\before} (M^*(a)) \succeq_{M^*(a)}^{\mathcal{P}_2}  a$ or $\bc^{\before } (M^* (a)) = \bot$. 
By the definition of $b$ it follows that $M^*(a)\succeq^{\mathcal{P}_2}_a b$. Thus, $M^*$ also respects $\wc^{\after}(a)=b$. 

\subparagraph{\textsc{Propagate}$\bm{(\cdot)}$---Changing best and worst cases of $\bm{b}$.}
      \textsc{Propagate}$(\cdot)$ also updates $\bc (b) $ and $\wc (b)$. First 
assume that we modified $\wc (b)$ (which happens if and only if ${\bc^{\before} (a) \neq \bot}$). 
Note that in this case we set $\bc^{\after} (a)$ to~$b$.
Since we have shown above that $M^*$ respects $\bc^{\after}(a)=b$, we have $b \succeq^{\mathcal{P}_2}_a 
M^*(a)$. As 
$\{a, b\}$ does not block~$M^*$ and $b \succeq^{\mathcal{P}_2}_a 
M^*(a)$, it follows that $M^* (b) \succeq_b^{\mathcal{P}_2} a$, and thus, $M^*$ respects $\wc^{\after}(b)=a$.
      
      Now assume that we modified $\bc (b)$ by setting $\bc^{\after}(b) = a$ (which happens if and only if $\bc^{\before} (a) = \bot$).
      In this case, we set $\wc^{\after} (a)$ to~$b$.
      As $M^*$ respects  
      $\wc^{\after}(a)=b$, it needs to hold that $M^*(a) \succeq^{\mathcal{P}_2}_a b$.
      As $\{a,b \}$ is a stable pair in $\mathcal{P}_2$ and $M^*(a) \succeq^{\mathcal{P}_2}_a b$ it follows from \Cref{lem:circular-prefs} that  $a\succeq_{b}^{\mathcal{P}_2} M^*(b)$. 
      This proves 
      that $M^*$ respects $\bc^{\after}(b)=a$.
      
Note that we have not shown (or assumed) so far that $b\notin X$. 
To establish that we do not reject this guess in Line \ref{line:rej}, assume for the sake of contradiction that $b\in X$. 
As $M^*$ respects our guesses, we have guessed in the \textsc{Initialization}$(\cdot)$ function the partner of all agents from $X$ in $M^*$ and in particular set $M(b)=M^*(b)$ and $\bc(b)=\wc(b)=M^*(b)$.
Moreover, by \Cref{le:gettingbetter} and as $M^*$ respects $\bc^{\before}(b)$ and $\wc^{\before}(b)$, we have $\bc^{\before}(b)=\wc^{\before}(b)=M^*(b)$ and by \Cref{le:wcbc} thus also $M^{\before}=M^*(b)$.
However, as clearly $\{a,b\}\notin M^{\before}$, we either have $\bc^{\before}(b)\succ_b^{\mathcal{P}_2}\bc^{\after}(b)$ or $\wc^{\after}(b)\succ_b^{\mathcal{P}_2}\wc^{\before}(b)$. 
In both cases, we afterwards have  $\wc^{\after}(b)\succ_b^{\mathcal{P}_2}\bc^{\after}(b)$. 
However, this cannot be the case as we have shown that $M^*$ respects both $\wc^{\after}(b)$ and $\bc^{\after}(b)$.

 \subparagraph{\textsc{Propagate}$\bm{(\cdot)}$---Changing best and worst cases of $\bm{M_1(a)}$.}     
       \textsc{Propagate}$(\cdot)$ may also update $\bc (M_1(a)) $ or $\wc 
(M_1(a))$ if $M(a)=M_1(a)$.
      First, we consider the case that the algorithm modified~$\wc (M_1(a))$.
      As $\bc^{\before}(a)\neq \bot$ in this case, by \Cref{le:wcbcM1}, we get that 
      $M_1(a)\succeq^{\mathcal{P}_2}_a \bc^{\before}(a)$.
      Moreover, as we have set $\bc^{\after}(a)=b$ with $\bc^{\before}(a)\succ^{\mathcal{P}_2}_a b$, it follows that $M_1(a)\succ^{\mathcal{P}_2}_a \bc^{\after}(a)$
      and as we have established above that $M^*$ respects $\bc^{\after}(a)$ we get 
      $M_1(a)\succ^{\mathcal{P}_2}_a 
M^*(a)$.
      As $\{a, M_1(a)\}$ does not block~$M^*$, it follows that
      $M^*(M_1(a))\succ^{\mathcal{P}_2}_{M_1(a)} a=\wc^{\after}(M_1(a))$ and thus that $M^*$ respects 
      $\wc^{\after} (M_1(a))$.
      
      Now we consider the case that the algorithm modified $\bc (M_1(a))$.
      In this case we have $\wc^{\before}(a)\neq \bot$ by \Cref{le:welldefined}. 
       By \Cref{le:wcbcM1}, we get that 
      $\wc^{\before}(a)\succeq^{\mathcal{P}_2}_a M_1(a)$.
      Moreover, as we have set $\wc^{\after}(a)=b$ with $b \succ^{\mathcal{P}_2}_a \wc^{\before}(a)$, it follows that $\wc^{\after}(a)\succ^{\mathcal{P}_2}_a M_1(a)$
      and as we have established above that $M^*$ respects $\wc^{\after}(a)$ we get 
      $M^*(a)\succ^{\mathcal{P}_2}_a M_1(a)$.
      Assume for the sake of contradiction that 
$M^*$ does not 
respect $\bc^{\after}(M_1(a))=a$, i.e., $M^*(M_1(a))\succ^{\mathcal{P}_2}_{M_1(a)} a$. However, as 
$\{a,M_1(a)\}\in M_1$ this 
implies that $\{a,M_1(a)\}$ is a pair from $M_1$ where both endpoints prefer $M^*$ to $M_1$. 
Thus, as $M^*$ respects the current guess it needs to hold that $\{a,M_1(a)\}\in F$ 
and that the pair~$\{a,M_1(a)\}$ was deleted from $M$ already before the first call 
of the $\textsc{Propagate}(\cdot)$ function (and was clearly never inserted again as we only insert the selected stable pairs $\{a,b\}$ in a call of the $\textsc{Propagate}(\cdot)$ function and by  \Cref{le:abM1}, $\{a,b\}\notin M_1$), a contradiction to $M(a)=M_1(a)$. 

\subparagraph{\textsc{Propagate}$\bm{(\cdot)}$---Changing best and worst cases of $\bm{M_1(b)}$.} 
Similar arguments as for the previous case apply here. 
If $\wc(M_1(b))$ was modified, then we have $\bc^{\after}(b)=a$ and as we have established above that $M^*$ respects $\bc^{\after}(b)$ that $a\succeq_b^{\mathcal{P}_2} M^*(b)$. 
From \Cref{le:wcbcM1} (applied to the next iteration of the while loop) it further follows that $M_1(b)\succeq_b^{\mathcal{P}_2} \bc^{\after}(b)=a$. 
It follows that $M_1(b)\succeq_b^{\mathcal{P}_2} M^*(b)$. 
Thus, for $\{b,M_1(b)\}$ not to form a blocking pair for $M^*$, it needs to hold that $M^*(M_1(b))\succeq_{M_1(b)}^{\mathcal{P}_2} b=\wc^{\after}(M_1(b))$. Consequently, $M^*$ respects $\wc^{\after}(M_1(b))$.

If $\bc(M_1(b))$ was modified, then we have by the definition of $b$ that $\wc^{\after}(b)=a\succeq_b^{\mathcal{P}_2} M_1(b)$.
Moreover, as we know that $\{a,b\}\notin M_1$ by \Cref{le:abM1}, it even holds that $\wc^{\after}(b)=a\succ_b^{\mathcal{P}_2} M_1(b)$.
As we have established above that $M^*$ respects $\wc^{\after}(b)$ it follows that $M^*(b) \succ_b^{\mathcal{P}_2} M_1(b)$.
Assume for the sake of contradiction that $M^*$ violates $\bc^{\after}(M_1(b))=b$, i.e., ${M^*(M_1(b))\succ_{M_1(b)}^{\mathcal{P}_2} b}$. 
However, as 
$\{b,M_1(b)\}\in M_1$ this 
implies that $\{b,M_1(b)\}$ is a pair from $M_1$ where both endpoints prefer $M^*$ to $M_1$. 
Thus, as $M^*$ respects the current guess it needs to hold that $\{b,M_1(b)\}\in F$ 
and that the pair $\{b,M_1(b)\}$ was deleted from $M$ already before the first call 
of the $\textsc{Propagate}(\cdot)$ function (and was clearly never inserted again as we only insert the selected stable pairs $\{a,b\}$ in a call of the $\textsc{Propagate}(\cdot)$ function and by  \Cref{le:abM1}, $\{a,b\}\notin M_1$), a contradiction to $M_1(b)=M(b)$.  

\medskip
We now turn to the second part of the lemma. 
First of all note that from the existence of~$M^*$ it follows that a stable pair as defined in Line \ref{line:def-b} or Line \ref{line:def-b-2} of \Cref{alg:updateapx} always exists, as we have shown that the partner of $a$ in $M^*$ fulfills all the required properties.
Moreover, we have argued above that we always have $a\notin X$ and $b\notin X$ implying that we do not reject in Line \ref{line:rej} or Line \ref{line:rej2} in case $M^*$ exists. 	\end{proof}
	
	\paragraph{Bounding the Number of Guesses} \label{sec:bounding}
	
	It remains to argue that our guesses are exhaustive and indeed cover all 
cases. In particular, we now show that the set~$F$ guessed in Line~\ref{line:guess-F} of 
	\Cref{alg-init} is 
	large 
	enough, i.e., for every stable matching~$M_2$ in $\mathcal{P}_2$ there are 
at 
	most $|\mathcal{P}_1 \oplus \mathcal{P}_2|$ pairs of~$M_1$ for which both 
	endpoints strictly prefer~$M_2$ to $M_1$, and that the set~$H$ guessed in Line~\ref{line:guessH} of \Cref{alg-init} is large enough, i.e., for every stable matching $M_2$ in $\mathcal{P}_2$ there are only $|\mathcal{P}_1 \oplus \mathcal{P}_2|$ pairs of~$M_2$ for which both 
	endpoints strictly prefer~$M_1$ to $M_2$ .
	
	We start by proving this bound for the guessed set $F$:
	\begin{lemma}\label{lem:F}
	 Let $M_2$ be a perfect stable matching in $\mathcal{P}_2$ and $M_1$ be a perfect stable 
	 matching in~$\mathcal{P}_1$.
	The number of pairs $e = \{b, c\} \in M_1$ such that $b$ prefers~$M_2(b)$ to $c$ and $c $ prefers~$M_2(c)$ to $b$ in $\mathcal{P}_2$ is bounded by the number 
	 of agents whose preferences differ in $\mathcal{P}_1$ and $\mathcal{P}_2$.
	\end{lemma}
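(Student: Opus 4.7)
The plan is to prove $|F|\le|D|$ where $F$ denotes the set of pairs in $M_1$ whose endpoints both strictly prefer their $M_2$-partner in $\mathcal{P}_2$, and $D$ denotes the set of agents whose preferences differ in $\mathcal{P}_1$ and $\mathcal{P}_2$. I would begin by decomposing $M_1\triangle M_2$ into vertex-disjoint alternating cycles (possible since both matchings are perfect, and isolated agents matched identically in $M_1$ and $M_2$ contribute nothing to $F$). Every pair in $F$ is an $M_1$-edge of some such cycle, so it suffices to show that on each alternating cycle $C$, the number of $F$-pairs is at most the number of agents on $C$ belonging to $D$; summing over cycles then gives $|F|\le|D|$.

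For a fixed cycle, I would orient it as $v_0,v_1,\dots,v_{2r-1},v_0$, with $M_1$-edges $\{v_{2i},v_{2i+1}\}$ and $M_2$-edges $\{v_{2i+1},v_{(2i+2)\bmod 2r}\}$, and define $P_\ell\in\{1,2\}$ to be $2$ iff $v_\ell$ strictly prefers its $M_2$-partner to its $M_1$-partner in $\mathcal{P}_2$ (well-defined as \ISR preferences are strict), with $P_\ell^{(1)}$ the analogous quantity for $\mathcal{P}_1$. For unchanged agents, $P_\ell=P_\ell^{(1)}$. By definition, $\{v_{2i},v_{2i+1}\}\in F$ iff $P_{2i}=P_{2i+1}=2$. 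Stability of $M_2$ in $\mathcal{P}_2$ forces, for every $M_1$-edge $\{v_{2i},v_{2i+1}\}$, that $P_{2i}=2$ or $P_{2i+1}=2$; stability of $M_1$ in $\mathcal{P}_1$ forces, for every $M_2$-edge $\{v_{2i+1},v_{(2i+2)\bmod 2r}\}$, that $P_{2i+1}^{(1)}=1$ or $P_{(2i+2)\bmod 2r}^{(1)}=1$.

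I would then enumerate the $F$-pairs on $C$ in cyclic order as $e_1=\{v_{2a_1},v_{2a_1+1}\},\dots,e_t=\{v_{2a_t},v_{2a_t+1}\}$ and define the gap $G_k:=\{v_{2a_k+1},v_{2a_k+2},\dots,v_{2a_{k+1}}\}$ (indices modulo $2r$, with $a_{t+1}:=a_1+r$). These gaps partition the $2r$ vertices of $C$. The core of the argument is to show each $G_k$ contains at least one agent of $D$: assuming every $v_\ell\in G_k$ has unchanged preferences, I would propagate forward from $e_k$, starting with $P_{2a_k+1}=2=P_{2a_k+1}^{(1)}$, then using the $M_2$-edge stability constraint to derive $P_{2a_k+2}^{(1)}=1$ and hence $P_{2a_k+2}=1$, then using the $M_1$-edge stability constraint to derive $P_{2a_k+3}=2$ and hence $P_{2a_k+3}^{(1)}=2$, and iterating to conclude $P_{2a_k+2j}=1$ for every $1\le j\le a_{k+1}-a_k$; in particular $P_{2a_{k+1}}=1$, contradicting $e_{k+1}\in F$.

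The main obstacle is executing this propagation cleanly: each step genuinely uses the unchanged-preference hypothesis on the specific vertex whose $P$ and $P^{(1)}$ values are being identified, and the argument must still close when $t=1$ (in which case the propagation traverses the entire cycle and the contradiction appears at $v_{2a_1}$ itself, whose $P$ value is simultaneously forced to be $1$ by propagation and $2$ by membership in $e_1$). Once the per-gap claim is established, disjointness of the $G_k$ gives $t\le |D\cap V(C)|$, and summing over the alternating cycles of $M_1\triangle M_2$ yields $|F|\le|D|$, as required.
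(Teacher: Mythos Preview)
Your proposal is correct and takes essentially the same approach as the paper: both decompose $M_1\triangle M_2$ into alternating cycles and propagate preference constraints along each cycle using stability of $M_1$ in $\mathcal{P}_1$ and of $M_2$ in $\mathcal{P}_2$, together with the fact that unchanged agents have identical preferences in both profiles. The only cosmetic difference is that the paper phrases the conclusion as an explicit injection (map each $F$-pair to the first changed agent encountered when walking forward along the cycle), whereas you partition the cycle into gaps between consecutive $F$-pairs and show each gap contains a changed agent; these are two packagings of the same counting argument.
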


	\begin{proof}
	 $M_1 \triangle M_2$ is a disjoint union of cycles of even length. 
	 For each such cycle~$C$, fix an orientation of~$C$.
	 Let $C= (v_1, \dots, v_{2r}, v_{2r+1}= v_1)$ be a cycle containing a
pair~$e = \{b, c\}\in M_1$ such that $b$ prefers~$M_2 (b)$ to $c$ and $c$ 
prefers $M_2 (c)$ to $b$ in $\mathcal{P}_2$.
	 We assume without loss of generality that $e = \{v_1, v_2\}$.
	 Let $i$ be the smallest index in $[1,2r]$ such that the preferences 
	 of~$v_i$ differ in 
	 $\mathcal{P}_1$ and $\mathcal{P}_2$ (and $i = 
	 \infty $ if no such agent exists).
	 We show by induction on $j$ that $v_j$ prefers $v_{j+1}$ to $v_{j-1}$ (in $\mathcal{P}_1$ and $\mathcal{P}_2$) for 
every $j \in \{2, 3, \dots, i-1\}$.
	 For $j= 2$, this follows by the definition of~$e$.
	 So consider~$j \ge 3$, and let $\{v_{j}, v_{j+1} \} \in M_p$ for 
some~$p\in 
	 \{1,2\}$.
	 By induction, $v_{j-1}$ prefers~$v_{j}$ to~$v_{j-2}$.
	 Since $\{v_{j-1}, v_{j}\}$ does not block $M_p$, it follows that $v_j$ 
	 prefers $v_{j+1} $ to~$v_{j-1}$.
	 This implies that $i < \infty$ because if $ i = \infty$, then $v_{2r + 1} = v_1$ prefers $v_{2} 
	 $ to $v_{2r}$ by the above proven claim. This leads to a contradiction, as $v_1$ prefers~$v_{2r}$ to $v_{2}$ by 
the 
	 definition of~$e$. Note that from the above claim it also follows that apart from $\{b,c\}$ there cannot be a second pair from $M_1$ in $\{v_1, \dots, v_i\}$ with both agents from this pair preferring $M_2$ to $M_1$, as
	 we have proven above that there cannot exist a $j\in [2,i]$ such that $\{v_j,v_{j+1}\}\in M_1$ and 
	 $v_{j-1}\succ_{v_{j}} v_{j+1}$. Thus, 
	 mapping pair $e$ to agent $v_i$ yields an injection from the 
	 set of pairs~$\{b, c\}\in M_1$ with both $b$ and $c$ preferring $M_2$ to 
	 $M_1$ to the set of agents 
with modified preferences.
	\end{proof}
    By swapping the roles of $M_2$ and $M_1$ and the roles of $\mathcal{P}_1$ 
and $\mathcal{P}_2$,  \Cref{lem:F} also shows that the set $H$ of pairs guessed 
in Line~\ref{line:guessH} of \Cref{alg-init} is large enough, i.e., for every stable matching 
$M_2$ in $\mathcal{P}_2$ there exist at most $|\mathcal{P}_1 \oplus 
\mathcal{P}_2|$ many pairs from $M_2$ where both endpoints prefer $M_1$ to 
$M_2$. 

\paragraph{Proof of Correctness: Putting the Pieces Together} \label{sec:putting}

	We are now ready to put the pieces together and prove the correctness of the algorithm.
 \thISRXPPP*
 \begin{proof}
  \Cref{lem:well-def-1}, and Line 
\ref{line:reject-bp} of \Cref{alg:XP} show that Line~\ref{line:choose-a} of \Cref{alg:XP} and thus the algorithm is 
  well-defined.
 
  Note that for each guess executing the algorithm takes $\mathcal{O}(n^3)$ time: 
  For each agent~$a$, the best case and worst case can only be changed $n$ times by \Cref{le:gettingbetter}.
   Since in every iteration of the while-loop 
$\textsc{Propagate}(\cdot)$ for some agent~$a$ is called and either 
$\bc(a)$ or $\wc(a)$ is modified,  the while-loop is executed at most~$n^2$ 
  times.
  Moreover, given the set of stable pairs in $\mathcal{P}_2$, which can be precomputed in  $\mathcal{O} (n^3)$ time \cite{DBLP:journals/algorithmica/Feder94}, each iteration of the while-loop takes~$\mathcal{O}(n)$~time. 
  Moreover, for our guesses made in the initialization phase in \Cref{alg-init} there exist overall $\mathcal{O}(2^{4|\mathcal{P}_1 \oplus \mathcal{P}_2| 
  }\cdot n^{5|\mathcal{P}_1 \oplus \mathcal{P}_2| })$ possibilities: 
  There are $n^{2|\mathcal{P}_1\oplus \mathcal{P}_2|}$ possibilities how the agents with  changed preferences and their partners in $M_1$  are matched (Line~\ref{line:guess}).
  There are $n^{2|\mathcal{P}_1\oplus \mathcal{P}_2|}$ possibilities for the set $H$ from Line \ref{line:guessH}. 
  Further, there are $n^{|\mathcal{P}_1\oplus \mathcal{P}_2|}$ possibilities for the set $F$ from Line \ref{line:guess-F} (as we only need to iterate over $|\mathcal{P}_1\oplus \mathcal{P}_2|$-subsets of $M_1$). 
  Lastly, as $X$ contains at most $4|\mathcal{P}_1\oplus \mathcal{P}_2|$ agents, there exist at most  $2^{4|\mathcal{P}_1\oplus \mathcal{P}_2|}$ possibilities for our guesses in Line \ref{line:init-guessPart}. 
  Thus, we get an 
  overall running time of $\mathcal{O}(2^{4|\mathcal{P}_1 \oplus \mathcal{P}_2| 
  }\cdot n^{5|\mathcal{P}_1 \oplus \mathcal{P}_2| }\cdot n^3)$.
  It remains to show the correctness of \Cref{alg:XP}.
 
  If \Cref{alg:XP} returns a matching, then this matching $M$ is stable, as all 
  blocking pairs get resolved in the while-loop and also satisfies $|M\triangle 
  M_1|\leq k$ by Line~\ref{line:reject} of \Cref{alg:XP}.
  
  It remains to prove that if there exists a stable matching~$M_2$ in $\mathcal{P}_2$ with~$|M_1 \triangle 
M_2| \leq k$, then  \Cref{alg:XP} returns a matching for some guess. 
Let $F'$ be the set of agent pairs from~$M_1$ where both endpoints prefer $M_2$ to $M_1$ in $\mathcal{P}_2$ and let $H'$ be the set of agent pairs from~$M_2$ where both endpoints prefer $M_1$ to $M_2$ in $\mathcal{P}_2$. 
Then, as proven in \Cref{lem:F}, we have $\max\{|F'|, |H'|\}\leq |\mathcal{P}_1\oplus \mathcal{P}_2|$.
So assume for the sake of contradiction that \Cref{alg:XP} rejected the guess with~$F= F'$, $H = H'$, and guesses in Lines \ref{line:guess} and \ref{line:init-guessPart} in \Cref{alg-init} made according to~$M_2$. 
  \Cref{lem:bounds} implies that $M_2$ obeys $\bc (c) $ and $\wc (c)$ for every 
agent~$c\in A$ at any point of the execution of the algorithm for this guess.
The guess clearly cannot be rejected during the initialization in \Cref{alg-init} because there exists a stable matching obeying the guess.

Assume for the sake of contradiction that the guess is rejected in Line~\ref{line:reject-bp} of \Cref{alg:XP} because there 
exists two agents~$\{a,b\}$ with $\bc(a)\neq \bot$ and $\bc(b)\neq 
\bot$ that form a blocking pair for~$M$. As \Cref{le:wcbc} implies
that $a$ is matched to $\bc(a)$ and $b$ to $\bc(b)$ in $M$ and as $M_2$ 
respects $\bc(a)$ and $\bc(b)$, it follows that $\{a,b\}$ also blocks $M_2$, a 
contradiction.  

Moreover, the existence of $M_2$ implies that the current 
guess cannot be rejected in Line~\ref{line:rej} or Line~\ref{line:rej2} of \Cref{alg:updateapx} by \Cref{lem:bounds}.
Lastly, as $M_2$ obeys the best and worst cases of all agents at any point during the execution of the algorithm, the guess cannot be rejected in Line~\ref{line:reject-bcwc} of \Cref{alg:XP}. 
 
  This means that the algorithm can only reject in Line~\ref{line:reject} of \Cref{alg:XP}.
  Let~$M$ be the matching in Line~\ref{line:reject} of \Cref{alg:XP}.
  Matching~$M$ contains every 
  pair~$\{a, b\} \in M_1$ with $\bc (a) = \bot \lor \bc (a)= M_1 (a)$, 
  $\wc (a) = \bot\lor M_1 (a)= \wc (a)$, $\bc (b) = \bot \lor \bc 
  (b)= M_1 (a)$, and $\wc (b) = \bot\lor M_1 (b)= \wc (b)$.
  As $M_2$ respects the best and worst cases, by \Cref{le:wcbcM1}, 
  if a pair from~$M_1$ does not satisfy the aforementioned criteria, then it cannot 
  be part of $M_2$. Thus, it holds that $|M_1 
  \triangle M| \le |M_1 \triangle M_2| \le 
  k$, a contradiction.
  
  Finally note that \Cref{alg:XP} (which requires that $M_1$ is a perfect matching and that there is a perfect stable matching in $\mathcal{P}_2$) also gives rise to an algorithm for general \ISR instances, as we have shown in \Cref{le:wlog1,le:wlog2} that each \ISR instance can be reduced to an \ISR instance fulfilling the above-described properties in linear time.
 \end{proof}
 
We remark that \ISR is NP-complete even if we know for each agent~$a$ whose preferences changed as well as~$M_1 (a)$ how they are matched in~$M_2$ and the set of pairs~$F\subseteq M_1$ for which both endpoints prefer~$M_2$ to~$M_1$.
This indicates that guessing the set~$H$ might be necessary for the XP-algorithm.
To prove this we only need to slightly alter the construction from \Cref{th:ISR-WP1P2}.
Specifically, for each $c\in [\ell]$, we add agents $x^c$, $y^c$, $\bar{x}^c$, and $\bar{y}^c$ and  replace pair~$\{s^c, t^c\}$ by pairs~$\{s^c, x^c\}$, $\{x^c, y^c\}$, $\{y^c, t^c\}$ and pair $\{\bar s^c, \bar t^c\}$ by pairs~$\{\bar s^c, \bar x^c\}$, $\{ \bar x^c, \bar y^c\}$, and $\{\bar y^c, \bar t^c\}$.
Here, $x^c$ prefers $y^c$ to $s^c$, and $y^c $ prefers $t^c$ to $x^c$ (and symmetrically, we have that $\bar x^c$ prefers $\bar y^c$ to~$\bar s^c$ and $\bar y^c$ prefers $\bar t^c $ to $\bar x^c$) in $\mathcal{P}_1$ and $\mathcal{P}_2$.
Matching $M_1$ contains edges $\{s^c,x^c\}$, $\{y^c,t^c\}$, $\{\bar{s}^c,\bar{x}^c\}$, and  $\{\bar{y}^c,\bar{t}^c\}$ instead of $\{s^c,t^c\}$ and $\{\bar{s}^c,\bar{t}^c\}$.
Note that every stable matching in~$\mathcal{P}_2$ then contains pairs~$\{u^c, t^c\}$, $\{x^c, y^c\}$, $\{\bar u^c, \bar t^c\}$, and $\{\bar x^c, \bar y^c\}$ (so there is nothing to guess how the agents with changed preferences and their partners in $M_1$ are matched).
Setting~$F:= \emptyset$ now results in an NP-complete problem by the same proof as the one of \Cref{th:ISR-WP1P2}.

\section{Incremental Stable Marriage with Ties Parameterized by the Number of 
Ties}\label{ISM}
 
Bredereck et al.~\cite{DBLP:conf/aaai/BredereckCKLN20} 
raised the question how the total number of ties influences the computational complexity of \ISMT.
Note that the number of ties in a preference relation is the number of maximal sets of pairwise tied agents containing more than one agent.
For instance the preference relation $a\sim b\sim c\succ d\sim e\succ f$ contains two ties, where the first tie ($a\sim b\sim c$) has size three and the second tie ($d \sim e$) has size two.
In this section, following a fundamentally different and significantly 
simpler path than Bredereck et al., we show that their W[1]-hardness result for \ISMT 
parameterized by $k$ 
for~$|\mathcal{P}_1 \oplus \mathcal{P}_2|=1$ still holds if we parameterize by $k $ \emph{plus} the number of ties.
To prove this, we introduce a natural extension of \ISMT called \textsc{Incremental Stable Marriage with Forced Edges and Ties} (\textsc{ISMFE-T}).
\textsc{ISMFE-T}  differs from \ISMT in that as part of the input we are additionally given a subset~$Q\subseteq M_1$ of the initial matching, and the question is whether there is a stable matching~$M_2$ for the changed preferences with $|M_1 
\triangle M_2| \le k$ containing all pairs from $Q$, i.e., $Q \subseteq M_2$. 

We first show that \textsc{ISMFE-T} is intractable even if $|Q|=1$ by reducing from a W[1]-hard local search problem related to finding a perfect stable matching with ties~\cite{DBLP:journals/algorithmica/MarxS10}:
\begin{restatable}{proposition}{ISMFEEE}
	\label{th:ISMFE}
	\textsc{ISMFE-T} parameterized by $k$ and the 
	summed number of ties in $\mathcal{P}_1 $ and $\mathcal{P}_2$ is W[1]-hard, even if~$|\mathcal{P}_1 \oplus \mathcal{P}_2| = 
	1$ and $|Q| = 1$ and only 
	women have ties in their preferences.
	
   \textsc{ISMFE-T} parameterized by $k$ is W[1]-hard, even if~$|\mathcal{P}_1 \oplus \mathcal{P}_2| =1$, $|Q| = 1$, only women have ties in their preferences, and each tie has size at most two. 
\end{restatable}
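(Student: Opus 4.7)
The plan is to reduce from the W[1]-hard local-search problem studied by Marx and Schlotter: given an SMT instance~$I$ with a stable matching~$N$ that leaves some agents unmatched, decide whether $I$ admits a larger (indeed, perfect) stable matching~$N'$ with $|N\triangle N'|\le q$, parameterized by the number of ties in~$I$. The second statement will follow from the same scheme either by invoking the size-two-tie variant of this source problem or, equivalently, by preprocessing each large tie into a chain of size-two ties using a standard splitting gadget, absorbing only an $O(1)$ overhead into~$k$.

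First I would build the \textsc{ISMFE-T} instance as follows. Take $\mathcal{P}_2$ to contain all of $I$'s preferences (ties only on the women side), augmented by a constant-size gadget with new men $m^\star,m^\circ$ and women $w^\star,w^\circ$ whose acceptability lists reach into the $N$-unmatched agents of~$I$. Let $\mathcal{P}_1$ be identical to $\mathcal{P}_2$ except for a single swap in the strict preference list of $m^\circ$, chosen so that the matching
\[
M_1 := N \,\cup\, \{\{m^\star,w^\circ\},\{m^\circ,w^\star\}\} \,\cup\, (\text{dummy pairs absorbing the agents unmatched by }N)
\]
is stable in $\mathcal{P}_1$. Take the single forced edge to be $Q := \{\{m^\star,w^\star\}\}$, which is \emph{not} contained in $M_1$, and set $k := q + c$ for a small gadget-dependent constant~$c$.

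Next I would verify correctness in both directions. In the forward direction, a perfect stable $N'$ in $I$ with $|N\triangle N'|\le q$ yields $M_2 := N' \cup \{\{m^\star,w^\star\},\{m^\circ,w^\circ\}\} \cup (\text{updated dummy pairs})$, which by the gadget's design is stable in $\mathcal{P}_2$, contains $Q$, and satisfies $|M_1\triangle M_2|\le q+c$. In the backward direction, any stable $M_2\supseteq Q$ with $|M_1\triangle M_2|\le k$ must, by case analysis on the gadget, pair $m^\circ$ with $w^\circ$ and send the former $M_1$-partners of $m^\star,w^\star$ into the unmatched slots of~$I$; the restriction of $M_2$ to $I$'s agents is then a perfect stable matching in $I$ with symmetric difference at most $q$ from~$N$. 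Since the gadget contributes only $O(1)$ new ties (of size at most two in the variant used for the second statement), both the parameter ``$k$ plus the total number of ties'' and the parameter ``$k$ alone'' are bounded by a function of the source parameter.

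The main obstacle will be engineering the gadget so that it is simultaneously (i)~stable in $\mathcal{P}_1$ under $M_1$ despite $N$ being non-perfect, (ii)~\emph{rigid} in $\mathcal{P}_2$ in the sense that no cheap alternative $M_2$ containing $Q$ avoids triggering the propagation into $I$, and (iii)~realizable with a single swap between $\mathcal{P}_1$ and $\mathcal{P}_2$ and only $O(1)$ additional ties of the prescribed shape. The delicate point is to force $m^\circ$ and $w^\circ$ to ``escape'' into $I$ through exactly the unmatched slots of $N$, which in turn triggers the propagation producing the larger stable matching in $I$ whose existence encodes the W[1]-hard source problem.
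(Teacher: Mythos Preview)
Your high-level strategy---reduce from the Marx--Schlotter local-search problem for perfect stable matchings with ties---is exactly the paper's route, and the parameter bookkeeping is fine. However, two concrete problems would cause the argument as written to fail.

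First, the definition of \textsc{ISMFE-T} in this paper requires $Q\subseteq M_1$, but you explicitly take the forced pair $\{m^\star,w^\star\}$ to lie outside~$M_1$. This is not a cosmetic point: the subsequent reduction from \textsc{ISMFE-T} to \ISMT (\Cref{co:ISMWties}) relies on forced pairs already sitting in $M_1$ when it replaces them by chains of parallel-pairs gadgets. The paper's construction resolves this by placing the forced pair $\{m^*,w^*\}$ \emph{inside} $M_1$ from the start; its role is not to trigger any change but to enforce that $M_2$ restricted to the original instance is perfect (since $m^*,w^*$ sit at the bottom of every original agent's list, any unmatched original agent would block with one of them). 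The single swap lives elsewhere, in the preferences of a fresh agent~$m_{\single}^*$, and it is this swap that frees the two previously unmatched agents $m_{\single},w_{\single}$ in~$\mathcal{P}_2$.

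Second, your backward-direction sketch is internally inconsistent: you say $M_2$ must pair $m^\circ$ with $w^\circ$ \emph{and} ``send the former $M_1$-partners of $m^\star,w^\star$ into the unmatched slots of~$I$'', but those former partners are precisely $w^\circ$ and $m^\circ$. Some agents do have to be displaced into the original instance to force perfection there; in the paper's construction these are $m_{\single}$ and $w_{\single}$ themselves, displaced when the swap makes $\{m_{\single}^*,w_{\single}^*\}$ a mutual top choice, while the forced pair $\{m^*,w^*\}$ stays put in both $M_1$ and $M_2$.
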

\begin{proof}
	We show both parts by reducing from the following problem related to 
	finding a perfect matching in an \textsc{Stable Marriage with Ties} 
	instance: Given a 
	\textsc{Stable Marriage with Ties} instance consisting of $n$ men and $n$ 
	women, an integer $\ell$, and a stable matching~$N$ 
	of 
	size~$n-1$, decide whether there exists a perfect stable 
	matching~$N^*$ with $|N \triangle N^*| \le \ell$. Marx and 
	Schlotter showed that this problem 
	is W[1]-hard parameterized by the number ties plus~$\ell$, even if only the 
	preferences of women contain ties \cite[Theorem 
	2]{DBLP:journals/algorithmica/MarxS10}  and W[1]-hard parameterized 
	by~$\ell$, 
	even if all ties have size two and are in the preferences of 
	women \cite[Theorem 
	3]{DBLP:journals/algorithmica/MarxS10}.\footnote{Notably, 
		Marx and 
		Schlotter \cite{DBLP:journals/algorithmica/MarxS10} use a different measure 
		for the difference between two matchings, i.e., the number of agents that 
		are matched differently. However, as we here know the number of pairs in $N$ 
		and $N^*$ the two distance measures can be directly converted to each other and differ by at most a factor of two.}
	
	We now establish a reduction from the above defined problem 
	to \textsc{Incremental Stable Marriage with Forced Edges and Ties}. As, in the reduction, the number of ties and the length of each tie 
	remain unchanged, we thereby establishing both statements at the same 
	time. The reduction works as follows.
	Let $(U \cupdot W,\mathcal{P})$ be an instance of \textsc{Stable Marriage} 
	with ties,
	let $N$ be a stable matching of size $n-1$, where $n = |U| = |W|$, and let~$\ell \in \mathbb{N}$. 
	Let $m_{\single}$ and $w_{\single}$ be the two agents unmatched by $N$.
	We add a man~$m^*$ and a woman~$w^*$ accepting each other and all agents from $U\cup W$ which prefer to be matched to any agent 
	from~$W $ respectively~$U$ to being matched together, and set $Q:= 
	\{\{m^*, w^*\}\}$ to be the set of forced pairs. Moreover, we add $m^*$ at 
	the 
	end of the preferences of all women and $w^*$ at the end of the preferences 
	of all men. 
	We now modify the instance such that $m_{\single}$ and $w_{\single}$ are 
	``bounded'' in $\mathcal{P}_1$ but become ``free'' and are thereby added to 
	the 
	relevant part of the instance in $\mathcal{P}_2$. As $M_2$ needs to 
	contain the pair $\{\{m^*, w^*\}\}$, we need to 
	find a matching that matches all agents from $U \cupdot W$ and has 
	a large 
	intersection with~$N$. 
	To realize these constraints on $m_{\single}$ and $w_{\single}$, we additionally add a man~$m_{\single}^*$ and 
	woman~$w_{\single}^*$ with preferences $m_{\single}^* : w_{\single} \succ 
	w_{\single}^*$ and $w_{\single}^* : m_{\single}^* \succ m_{\single}$. In 
	$\mathcal{P}_1$, we modify the preferences of man~$m_{\single}$ such that he 
	prefers~$w_{\single}^*$ to all other agents, and change the preferences of 
	$w_{\single}$ such that she prefers~$m_{\single}^*$ to all other agents.
	The preferences of~$\mathcal{P}_2$ now arise from $\mathcal{P}_1$ by 
	swapping the first two women in the 
	preferences 
	of~$m^*_{\single}$, i.e., the preferences of~$m^*_{\single}$ in $\mathcal{P}_2$ are $w_{\single}^* \succ w_{\single}$.
	Matching~$M_1$ is defined as $M_1 := N\cup \{\{m^*, w^*\}, \{m_{\single}, 
	w_{\single}^*\}, \{m_{\single}^*, w_{\single}\}\}$.
	The stability of~$M_1$ follows from the stability of $N$, as none of 
	$w_{\single}^*$, $m_{\single}^*$, $w^*$, and $m^*$ is part of a blocking 
	pair.
	We set $k: = \ell + 3$. We now prove that there exists a perfect matching~$N^*$ with $|N\triangle N^*| \leq \ell$ if and only if there exists a stable 
	matching $M_2$ in~$\mathcal{P}_2$ such that $\{m^*, w^*\} \in 
	M_2$ and $| M_1 \triangle M_2| \le k$.
	
	$(\Rightarrow):$
	Given a perfect stable matching $N^*$ with $|N \triangle N^*| \le \ell$, we 
	claim that $M_2 : = N^* \cup \{ \{ m_{\single}^*, w_{\single}^*\}, \{m^*, 
	w^*\}\}$ is a stable matching in $\mathcal{P}_2$.
	As $N$ is stable, every blocking pair must contain $w_{\single}^*, m_{\single}^*$, $w^*$, or $m^*$.
	Agents $w_{\single}^*$ and~$m_{\single}^*$ are not part of a blocking pair, as they are matched to their first choice.
	Since $N$ is a perfect matching, no agent prefers to be matched to $m^*$ or $w^*$.
	Therefore, $M_2$ is stable with respect to~$\mathcal{P}_2$.
	Since $|N \triangle N^*| \le \ell$, it follows that $|M_1 \triangle M_2| \le 
	\ell+ 3 = k$.
	Thus, $M_2$ is a solution to the constructed \ISMT instance.
	
	$(\Leftarrow):$   
	Vice versa, let $M_2$ be a stable matching with respect to $\mathcal{P}_2 $ 
	such that $\{m^*, w^*\} \in M_2$ and $| M_1 \triangle M_2| \le k$.
	As $w_{\single}^*$ and $m_{\single}^*$ are their mutual unique first choice, it follows that $M_2$ contains~$\{w_{\single}^*, m_{\single}^*\}$.
	As $M_2$ contains $\{m^*,w^*\}$, it follows that for each agent~$a\in U \cup W$, agent~$a$ has to be matched to an agent it prefers to $m^*$ and $w^*$, i.e., an agent from $U \cup W$.
	Therefore, $N := M_2 \setminus \{ \{m^*, w^*\}, \{w_{\single}^*, m_{\single}^*\}$ is a perfect matching on~$U \cup W$.
	It is also a stable one, as any blocking pair would also be a blocking pair for $M_2$.
	Furthermore, $|N \triangle N^*| =|M_1 \triangle M_2|  - 3 \le \ell$. 
\end{proof}

 Second, we reduce \textsc{ISMFE-T} to \ISMT. The general idea of this parameterized reduction is to replace a forced pair $\{m,w\}\in Q$ by a gadget consisting of~$6(k+1)$~agents.
 In~$M_1$, we match the agents from the gadget in a way such that if  ``$m$ and~$w$ are not matched to each other'' in $M_2$, then, compared to $M_1$, the matching in the whole gadget needs to be changed, thereby exceeding the given budget $k$.
 This reduction implies: 
 
\begin{restatable}{theorem}{ISMWties}
	\label{co:ISMWties}
	\ISMT parameterized by $k$ and the summed number of ties in $\mathcal{P}_1$ and $\mathcal{P}_2$  is W[1]-hard,
	even if $|\mathcal{P}_1 \oplus \mathcal{P}_2|=1$ and only 
	women have ties in their preferences.
	
 	Parameterized by $k$, \ISMT is W[1]-hard, even if $|\mathcal{P}_1 \oplus \mathcal{P}_2| = 1 $ and each tie has size at most two and is in the preferences of some women.
\end{restatable}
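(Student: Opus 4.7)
The plan is to reduce from \textsc{ISMFE-T} (using the two hard parameterizations delivered by Proposition~\ref{th:ISMFE}) to \ISMT by replacing the unique forced pair $Q=\{\{m^*,w^*\}\}$ by a gadget on $6(k+1)$ fresh agents with \emph{strict} preferences. Because we do not touch the existing single swap between $\mathcal P_1$ and $\mathcal P_2$ and add no new ties, the parameters $|\mathcal P_1\oplus\mathcal P_2|$, the total number of ties, and the maximum tie size are all preserved; setting $k':=k$, both statements of the theorem follow at once from the two variants of Proposition~\ref{th:ISMFE}.

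The gadget consists of $k+1$ independent copies $G_1,\dots,G_{k+1}$ of a $6$-agent subgadget, built from three men and three women arranged in an alternating preference cycle, so that each $G_i$ admits exactly two internal stable matchings $M_i^{\mathrm{in}}$ and $M_i^{\mathrm{out}}$, each consisting of three pairs and mutually disjoint, hence $|M_i^{\mathrm{in}}\triangle M_i^{\mathrm{out}}|=6$. A small number of acceptability edges between designated agents of $G_i$ and $m^*$, respectively $w^*$, inserted at a poorer rank than $w^*$, respectively $m^*$, in those lists, couple the subgadget to the ``real'' pair: $M_i^{\mathrm{in}}$ is stable in context iff $\{m^*,w^*\}\in M_2$, whereas $M_i^{\mathrm{out}}$ is stable in context iff $\{m^*,w^*\}\notin M_2$. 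The new initial matching $M_1^{\mathrm{new}}:=M_1\cup\bigcup_i M_i^{\mathrm{in}}$ remains stable in $\mathcal P_1$ because $\{m^*,w^*\}\in M_1$ and the gadget-links sit below $w^*$, $m^*$ in the relevant preference lists.

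For correctness, any \textsc{ISMFE-T} solution $M_2^*$ lifts to $M_2^*\cup\bigcup_i M_i^{\mathrm{in}}$, which is stable in $\mathcal P_2$ and at distance $|M_1\triangle M_2^*|\le k$ from $M_1^{\mathrm{new}}$. Conversely, any \ISMT solution $M_2^{\mathrm{new}}$ at distance $\le k$ from $M_1^{\mathrm{new}}$ must contain $\{m^*,w^*\}$, since otherwise the local stability analysis forces every $G_i$ into $M_i^{\mathrm{out}}$, producing $6(k+1)>k$ differences; restricting $M_2^{\mathrm{new}}$ to the original agents then yields an \textsc{ISMFE-T} solution. The hard part will be the \emph{local} design of the subgadget together with the placement of the coupling acceptabilities in $m^*$'s and $w^*$'s preferences: we need strict preferences on six agents and a precisely tuned interface so that (a) both $M_i^{\mathrm{in}}$ and $M_i^{\mathrm{out}}$ are stable in exactly their intended regimes, (b) no ``mixed'' configuration in which different copies sit in different regimes is stable (so that the $6(k+1)$ lower bound is actually attained), and (c) $M_1^{\mathrm{new}}$ remains stable in $\mathcal P_1$ without introducing any additional swap between $\mathcal P_1$ and $\mathcal P_2$ or any new tie. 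Once the subgadget is specified, the verification reduces to a case analysis of stability in the same spirit as Lemma~\ref{lem:stable-matchings}.
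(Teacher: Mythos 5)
Your high-level plan (reduce from \textsc{ISMFE-T} via \Cref{th:ISMFE} and simulate the single forced pair by a budget-exceeding gadget of $6(k+1)$ strict-preference agents, preserving $|\mathcal{P}_1\oplus\mathcal{P}_2|$ and the ties) matches the paper. But the gadget architecture you propose --- $k+1$ \emph{independent} copies, each coupled to $m^*$ and $w^*$ by acceptabilities ranked below $w^*$ resp.\ $m^*$, with the pair $\{m^*,w^*\}$ itself kept in the instance --- cannot work, and the obstruction is not a matter of tuning the subgadget. A blocking pair $\{m^*,g_i\}$ with a gadget agent $g_i$ can only certify that $m^*$ is matched \emph{worse} than $g_i$; it can never certify that $m^*$ is matched \emph{better} than $w^*$. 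Yet the event you must penalize, $\{m^*,w^*\}\notin M_2$, is exactly the event in which $m^*$ is matched to some $w\in W$ that he prefers to $w^*$ (and hence to every gadget agent ranked below $w^*$), and symmetrically $w^*$ is matched to some $u\in U$. In that situation neither $m^*$ nor $w^*$ forms a blocking pair with any gadget agent, so every copy $G_i$ may happily remain in $M_i^{\mathrm{in}}$ at zero cost. Concretely, in the instance produced by \Cref{th:ISMFE} the matching $N\cup\{\{m^*_{\single},w^*_{\single}\},\{m^*,w_{\single}\},\{m_{\single},w^*\}\}\cup\bigcup_i M_i^{\mathrm{in}}$ is stable in $\mathcal{P}_2$ and has symmetric difference $6\le k$ with your $M_1^{\mathrm{new}}$, so your reduction outputs yes-instances regardless of whether a perfect stable matching exists. (Placing the gadget agents \emph{above} $w^*$ in $m^*$'s list inverts the polarity in the wrong way: it would destabilize the good case $\{m^*,w^*\}\in M_2$ instead.)

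The paper escapes this by not keeping the pair $\{m^*,w^*\}$ at all: it \emph{replaces} it by a chain of $k+1$ concatenated Cechl\'arov\'a--Fleiner six-cycles, with $m^*$'s list entry for $w^*$ replaced by the first middle agent $a^{\lm}_{e,1}$ and $w^*$'s entry for $m^*$ replaced by the last middle agent $a^{\rrm}_{e,k+1}$. ``Being matched to each other'' is re-encoded as ``all $k+2$ path edges $\{m^*,a^{\lm}_{e,1}\},\{a^{\rrm}_{e,p},a^{\lm}_{e,p+1}\},\{a^{\rrm}_{e,k+1},w^*\}$ lie in the matching,'' and the chain has an all-or-nothing property: a stable matching contains either all of these path edges or none of them. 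Since $M_1'$ contains all $k+2$ of them, omitting the path costs more than $k$. That serial propagation along the chain is what your parallel detectors are missing, and it is where the penalization of ``$m^*$ matched too well'' actually comes from: $a^{\lm}_{e,1}$, not $m^*$, is the agent that gets displaced.
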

\begin{proof}
	We show the theorem by reducing \textsc{Incremental Stable Marriage with Forced Edges and Ties} to \ISMT 
	without increasing $k$, $|\mathcal{P}_1 \oplus \mathcal{P}_2|$, and the 
	number and length of ties (the theorem then follows from \Cref{th:ISMFE}).
	Let $\mathcal{I} = (A,\mathcal{P}_1, \mathcal{P}_2, M_1, k, Q)$ be an 
	instance 
	of \textsc{Incremental Stable Marriage with Forced Edges and Ties}. From $\mathcal{I}$ we construct an instance~$\mathcal{I}' = 
	(A',\mathcal{P}'_1, 
	\mathcal{P}'_2, M'_1, k')$ of \ISM:
	We create preference profiles~$\mathcal{P}_1'$ and $\mathcal{P}_2'$ by 
	replacing every forced pair~$e=\{v,w\}\in Q$ by a gadget inspired by a gadget 
	designed by 
	Cechl{\'{a}}rov{\'{a}} and Fleiner~\cite{DBLP:journals/talg/CechlarovaF05}  
	which allows to replace a pair by a cycle of six agents (see 
	\Cref{fig:eg}; note that the superscripts of each agent describes its 
	position in the gadget).
	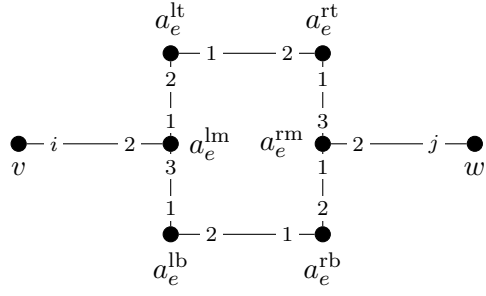
\begin{figure}[bt]
		\begin{center}
			\begin{tikzpicture}[xscale =2 , yscale = 1.2]
			\node[vertex, label=270:$v$] (m) at (-1, 0) {};
			\node[vertex, label=0:$a_{e}^{\lm}$] (wf) at (0, 0) {};
			\node[vertex, label=90:$a_{e}^{\lt}$] (mt) at (0, 1) {};
			\node[vertex, label=270:$a_{e}^{\lb}$] (mb) at (0, -1) {};
			\node[vertex, label=90:$a_{e}^{\rt}$] (wt) at (1, 1) {};
			\node[vertex, label=270:$a_{e}^{\rb}$] (wb) at (1, -1) {}; 
			\node[vertex,
			label=180:$a_{e}^{\rrm}$] (mf) at (1, 0) {};
			\node[vertex, label=270:$w$] (w) at (2, 0) {};
			\draw (m) edge node[pos=0.2, fill=white, inner sep=2pt] {\scriptsize
				$i$}  node[pos=0.76, fill=white, inner sep=2pt] {\scriptsize $2$}
			(wf);
			\draw (mf) edge node[pos=0.2, fill=white, inner sep=2pt] {\scriptsize
				${2}$}  node[pos=0.76, fill=white, inner sep=2pt]
			{\scriptsize $j$} (w);
			\draw (mf) edge node[pos=0.2, fill=white, inner sep=2pt] {\scriptsize
				${3}$}  node[pos=0.76, fill=white, inner sep=2pt]
			{\scriptsize $1$} (wt);
			\draw (mf) edge node[pos=0.2, fill=white, inner sep=2pt] {\scriptsize
				$1$}  node[pos=0.76, fill=white, inner sep=2pt] {\scriptsize $2$}
			(wb);
			\draw (wf) edge node[pos=0.2, fill=white, inner sep=2pt] {\scriptsize
				$3$}  node[pos=0.76, fill=white, inner sep=2pt] {\scriptsize $1$}
			(mb);
			\draw (wf) edge node[pos=0.2, fill=white, inner sep=2pt] {\scriptsize
				$1$}  node[pos=0.76, fill=white, inner sep=2pt] {\scriptsize $2$}
			(mt);
			\draw (wt) edge node[pos=0.2, fill=white, inner sep=2pt] {\scriptsize
				$2$}  node[pos=0.76, fill=white, inner sep=2pt] {\scriptsize $1$}
			(mt);
			\draw (wb) edge node[pos=0.2, fill=white, inner sep=2pt] {\scriptsize
				$1$}  node[pos=0.76, fill=white, inner sep=2pt] {\scriptsize $2$}
			(mb);
			\end{tikzpicture}
			
		\end{center}
		\caption{Gadget by 
			Cechl{\'{a}}rov{\'{a}} and Fleiner~\cite{DBLP:journals/talg/CechlarovaF05} for pair $e = \{v, w\}$, where $v$ ranks $w$ at the $i$-th position and $w$ ranks $v$ at the $j$-th position.}\label{fig:eg}
	\end{figure}
	Our gadget replaces every forced pair by concatenating $k+1$ copies of 
	the gadget by Cechl{\'{a}}rov{\'{a}} and Fleiner~\cite{DBLP:journals/talg/CechlarovaF05}  
	(see \Cref{fig:weigthed-edge}).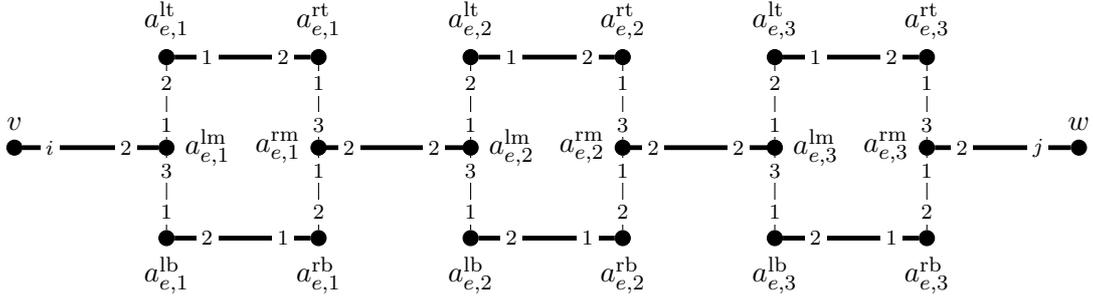
\begin{figure}[bt]
		\begin{center}
			\begin{tikzpicture}[xscale =2 , yscale = 1.2]
			\node[vertex, label=90:$w$] (w) at (6, 0) {};
			
			\draw[bedge] (1, 0) edge node[pos=0.2, fill=white, inner sep=2pt] 
			{\scriptsize
				${2}$}  node[pos=0.76, fill=white, inner sep=2pt]
			{\scriptsize $2$} (2,0);
			\draw[bedge] (3,0) edge node[pos=0.2, fill=white, inner sep=2pt] 
			{\scriptsize
				${2}$}  node[pos=0.76, fill=white, inner sep=2pt]
			{\scriptsize $2$} (4,0);
			
			\node[vertex, label=90:$v$] (m) at (-1, 0) {};
			\node[vertex, label=0:$a_{e,1}^{\lm}$] (wf) at (0, 0) {};
			\node[vertex, label=90:$a_{e,1}^{\lt}$] (mt) at (0, 1) {};
			\node[vertex, label=270:$a_{e,1}^{\lb}$] (mb) at (0, -1) {};
			\node[vertex, label=90:$a_{e,1}^{\rt}$] (wt) at (1, 1) {};
			\node[vertex, label=270:$a_{e,1}^{\rb}$] (wb) at (1, -1) {}; 
			\node[vertex,
			label=180:$a_{e,1}^{\rrm}$] (mf) at (1, 0) {};
			\draw[bedge] (m) edge node[pos=0.2, fill=white, inner sep=2pt] 
			{\scriptsize
				$i$}  node[pos=0.76, fill=white, inner sep=2pt] {\scriptsize $2$}
			(wf);
			\draw (mf) edge node[pos=0.2, fill=white, inner sep=2pt] {\scriptsize
				${3}$}  node[pos=0.76, fill=white, inner sep=2pt]
			{\scriptsize $1$} (wt);
			\draw (mf) edge node[pos=0.2, fill=white, inner sep=2pt] {\scriptsize
				$1$}  node[pos=0.76, fill=white, inner sep=2pt] {\scriptsize $2$}
			(wb);
			\draw (wf) edge node[pos=0.2, fill=white, inner sep=2pt] {\scriptsize
				$3$}  node[pos=0.76, fill=white, inner sep=2pt] {\scriptsize $1$}
			(mb);
			\draw (wf) edge node[pos=0.2, fill=white, inner sep=2pt] {\scriptsize
				$1$}  node[pos=0.76, fill=white, inner sep=2pt] {\scriptsize $2$}
			(mt);
			\draw[bedge] (wt) edge node[pos=0.2, fill=white, inner sep=2pt] 
			{\scriptsize
				$2$}  node[pos=0.76, fill=white, inner sep=2pt] {\scriptsize $1$}
			(mt);
			\draw[bedge] (wb) edge node[pos=0.2, fill=white, inner sep=2pt] 
			{\scriptsize
				$1$}  node[pos=0.76, fill=white, inner sep=2pt] {\scriptsize $2$}
			(mb);
			
			\begin{scope}[xshift = 2cm]
			\node[vertex, label=0:$a_{e,2}^{\lm}$] (wf) at (0, 0) {};
			\node[vertex, label=90:$a_{e,2}^{\lt}$] (mt) at (0, 1) {};
			\node[vertex, label=270:$a_{e,2}^{\lb}$] (mb) at (0, -1) {};
			\node[vertex, label=90:$a_{e,2}^{\rt}$] (wt) at (1, 1) {};
			\node[vertex, label=270:$a_{e,2}^{\rb}$] (wb) at (1, -1) {}; 
			\node[vertex,
			label=180:$a_{e,2}^{\rrm}$] (mf) at (1, 0) {};
			\draw (mf) edge node[pos=0.2, fill=white, inner sep=2pt] {\scriptsize
				${3}$}  node[pos=0.76, fill=white, inner sep=2pt]
			{\scriptsize $1$} (wt);
			\draw (mf) edge node[pos=0.2, fill=white, inner sep=2pt] {\scriptsize
				$1$}  node[pos=0.76, fill=white, inner sep=2pt] {\scriptsize $2$}
			(wb);
			\draw (wf) edge node[pos=0.2, fill=white, inner sep=2pt] {\scriptsize
				$3$}  node[pos=0.76, fill=white, inner sep=2pt] {\scriptsize $1$}
			(mb);
			\draw (wf) edge node[pos=0.2, fill=white, inner sep=2pt] {\scriptsize
				$1$}  node[pos=0.76, fill=white, inner sep=2pt] {\scriptsize $2$}
			(mt);
			\draw[bedge] (wt) edge node[pos=0.2, fill=white, inner sep=2pt] 
			{\scriptsize
				$2$}  node[pos=0.76, fill=white, inner sep=2pt] {\scriptsize $1$}
			(mt);
			\draw[bedge] (wb) edge node[pos=0.2, fill=white, inner sep=2pt] 
			{\scriptsize
				$1$}  node[pos=0.76, fill=white, inner sep=2pt] {\scriptsize $2$}
			(mb);
			
			\end{scope}
			
			\begin{scope}[xshift = 4cm]
			\node[vertex, label=0:$a_{e,3}^{\lm}$] (wf) at (0, 0) {};
			\node[vertex, label=90:$a_{e,3}^{\lt}$] (mt) at (0, 1) {};
			\node[vertex, label=270:$a_{e,3}^{\lb}$] (mb) at (0, -1) {};
			\node[vertex, label=90:$a_{e,3}^{\rt}$] (wt) at (1, 1) {};
			\node[vertex, label=270:$a_{e,3}^{\rb}$] (wb) at (1, -1) {}; 
			\node[vertex,
			label=180:$a_{e,3}^{\rrm}$] (mf) at (1, 0) {};
			\draw (mf) edge node[pos=0.2, fill=white, inner sep=2pt] {\scriptsize
				${3}$}  node[pos=0.76, fill=white, inner sep=2pt]
			{\scriptsize $1$} (wt);
			\draw (mf) edge node[pos=0.2, fill=white, inner sep=2pt] {\scriptsize
				$1$}  node[pos=0.76, fill=white, inner sep=2pt] {\scriptsize $2$}
			(wb);
			\draw (wf) edge node[pos=0.2, fill=white, inner sep=2pt] {\scriptsize
				$3$}  node[pos=0.76, fill=white, inner sep=2pt] {\scriptsize $1$}
			(mb);
			\draw (wf) edge node[pos=0.2, fill=white, inner sep=2pt] {\scriptsize
				$1$}  node[pos=0.76, fill=white, inner sep=2pt] {\scriptsize $2$}
			(mt);
			\draw[bedge] (wt) edge node[pos=0.2, fill=white, inner sep=2pt] 
			{\scriptsize
				$2$}  node[pos=0.76, fill=white, inner sep=2pt] {\scriptsize $1$}
			(mt);
			\draw[bedge] (wb) edge node[pos=0.2, fill=white, inner sep=2pt] 
			{\scriptsize
				$1$}  node[pos=0.76, fill=white, inner sep=2pt] {\scriptsize $2$}
			(mb);
			
			\draw[bedge] (mf) edge node[pos=0.2, fill=white, inner sep=2pt] 
			{\scriptsize
				${2}$}  node[pos=0.76, fill=white, inner sep=2pt]
			{\scriptsize $j$} (w);
			\end{scope}
			
			\end{tikzpicture}
			
		\end{center}
		\caption{The replacement for pair $e = \{v, w\}$, where $v$ ranks 
			$w$ at the $i$-th position and $w$ ranks $v$ at the $j$-th position, 
			for $k = 2$. The matching corresponding to matching $v$ and $w$ in the 
			original instance is marked in bold.}\label{fig:weigthed-edge}
	\end{figure}
	Formally, for each forced pair $e=\{v,w\}\in Q$, we add $6 (k+1)$ agents $a^{\lb}_{e, p}$, $a^{\lm}_{e,p}$, 
	$a^{\lt}_{e,p}$, $a^{\rb}_{e, p}$, $a^{\rrm}_{e, p}$, and  
	$a^{\rt}_{e, p}$ for $p\in [k+1]$. 
	Agent~$v$ replaces $w$ in its preferences by $a^{\lm}_{e, 1}$, and $w$ 
	replaces $v$ by~$a^{\rrm}_{e, k+1}$.
	The newly added agents have the following preferences:
	\begin{align*}
	a^{\lt}_{e, p} &: a^{\rt}_{e, p} \succ a^{\lm}_{e, p}, \qquad & p 
	\in [k+1];\\
	a^{\lm}_{e, 1} &: a^{\lt}_{e, 1} \succ v \succ a^{\lb}_{e, 1}; & 
	\\
	a^{\lm}_{e, p} &: a^{\lt}_{e, p} \succ a^{\rrm}_{e, p-1} \succ 
	a^{\lb}_{e, p}, \qquad & p \in \{2, 3, \dots, k+1\};\\
	a^{\lb}_{e, p} &: a^{\lm}_{e, p} \succ a^{\rb}_{e, p}, \qquad & p 
	\in [k+1];\\
	a^{\rt}_{e, p} &: a^{\rrm}_{e, p} \succ a^{\lt}_{e, p}, \qquad & p \in 
	[k+1];\\
	a^{\rrm}_{e, p} &: a^{\rb}_{e, p} \succ a^{\lm}_{e, p+1} \succ 
	a^{\rt}_{e, p}, \qquad & p \in [k];\\
	a^{\rrm}_{e, k+1} &: a^{\rb}_{e, k+1} \succ w \succ 
	a^{\rt}_{e, k+1}; &\\
	a^{\rb}_{e, p} &: a^{\lb}_{e, p} \succ a^{\rrm}_{e, p}, \qquad & p \in 
	[k+1].
	\end{align*}
	The preferences of the other agents remain unchanged. For the sake of 
	readability, for $e=\{v,w\}\in Q$, we define $M_e:=\{\{v, a^{\lm}_{e, 1}\}, 
	\{a^{\rrm}_{e, k+1}, w\}\}\cup  \{
	\{a^{\lt}_{e, p}, a^{\rt}_{e, p}\}, \{a^{\lb}_{e, p}, a^{\rb}_{e, 
		p}\}\mid p\in [k+1]\} \cup  \{ \{a^{\rrm}_{e, q}, a^{\lm}_{e, q+1}\}\mid 
	q\in 
	[k] \}$ (this matching is marked by bold edges in \Cref{fig:weigthed-edge}).
	
	To finish the construction of the instance~$\mathcal{I}'$ of 
	\ISM, we set $M_1' := \bigl ( M_1 
	\setminus Q\bigr) \cup \{M_e \mid e\in Q\}$ and $k' := k$.
	Note that by definition we have $Q\subseteq M_1$ and thus $M_1'$ is a matching and is stable in $\mathcal{P}'_1$. 
	
	It remains to show that $\mathcal{I}$ and $\mathcal{I}'$ are equivalent.
	Given solution~$M_2$ to $\mathcal{I}$, we get a solution~$M_2'$ 
	to~$\mathcal{I}'$ by replacing every forced pair~$e\in Q$ (note that $e$ needs to 
	be part of $M_2$) by~$M_e$.
	Clearly, it holds that $|M_1' \triangle M_2'| = |M_1 \triangle M_2| \le k = k'$.
	It is straightforward to verify that $M_2$ is indeed stable.
	
	Given a solution~$M_2'$ to~$\mathcal{I}'$, first observe that every stable
	matching~$M$ in $\mathcal{P}_2'$ which does not contain both~$\{v, a^{\lm}\}$ and~$\{a^{\rrm}, w\}$ for some forced pair~$e =\{v, w\} \in Q$ contains none of 
	the pairs~$\{v, a^{\lm}\}$, $\{a^{\rrm}, w\}$, and $\{a^{\rrm}_{e, p}, 
	a^{\lm}_{e, p+1}\}$ for $p \in [k]$.
	As $|M_1' \triangle M_2' | \le k$, it follows that
	$M_2'$ contains pairs~$\{v, a^{\lm}_{e, 1}\}$, $\{w, a^{r, 
		m}_{e, k+1}\}$, and $\{a^{\rrm}_{e, p}, a^{\lm}_{e, p+1}\}$ for every 
	$p \in [k]$ for every forced pair~$\{v, w\} \in Q$.
	Then $M_2 := \bigl ( M_2' \setminus \{M_e\mid 
	e\in Q\}\bigr) \cup Q$ is a stable 
	matching in $\mathcal{P}_2$.
	We have $|M_1 \triangle M_2| = |M_1 ' \triangle M_2'| \le k' =k$, and thus 
	$M_2$ is a solution to $\mathcal{I}$.
\end{proof}

On the algorithmic side, parameterized by 
the number of agents with at least one tie 
in their preferences in $\mathcal{P}_2$, \ISMT lies 
in XP. The idea of our algorithm is to first guess the partners of all agents in $M_2$ with a tie in their preferences in $\mathcal{P}_2$ and subsequently reduce the problem to an instance of \textsc{Weighted Stable Marriage}, which is polynomial-time solvable~\cite{DBLP:journals/jcss/Feder92}. 
Moreover, parameterizing by the summed size of all ties results in fixed-parameter tractability, as we can iterate over all possibilities of breaking the ties and subsequently apply the algorithm for \ISM. 

\begin{restatable}{proposition}{ISMNT}
	\label{pr:ISMNT}
	\ISMT parameterized by the number of agents with at least one tie 
	in their preferences in $\mathcal{P}_2$ lies in XP.
	\ISMT parameterized by the summed size of all ties in~$\mathcal{P}_2$ is fixed-parameter tractable.
\end{restatable}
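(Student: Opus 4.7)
Let $T$ be the set of agents with at least one tie in $\mathcal{P}_2$; we address the XP claim first. The plan is to guess, for each $a\in T$, a target partner $g(a)\in \Ac(a)\cup\{\square\}$, giving at most $(n+1)^{|T|}$ branches, and in each branch reduce the remaining problem to a \textsc{Weighted Stable Marriage} instance with strict preferences. Concretely, from $\mathcal{P}_2$ we build a strict profile $\mathcal{P}_2'$ as follows: for every $a\in T$ with $g(a)=b\neq\square$, break the tie of $a$ containing $b$ so that $b$ sits strictly above the other agents in that tie; break all remaining ties arbitrarily (say, by index). The key lemma to prove is that, for a matching $M$ satisfying $M(a)=g(a)$ for every $a\in T$, $M$ is stable in $\mathcal{P}_2$ if and only if it is stable in $\mathcal{P}_2'$. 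One direction is immediate, since strict preferences in $\mathcal{P}_2$ are preserved by tie-breaking. For the converse, if $\{a,c\}$ blocks $M$ in $\mathcal{P}_2'$ and $a\in T$, the only way $c\succ_a^{\mathcal{P}_2'}g(a)$ can hold without $c\succ_a^{\mathcal{P}_2}g(a)$ strictly is $c\sim_a^{\mathcal{P}_2}g(a)$; but then our rule placed $g(a)$ strictly above $c$ in $\mathcal{P}_2'$, a contradiction. The symmetric argument handles $c\in T$, and agents outside $T$ have the same preferences in both profiles.

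With $\mathcal{P}_2'$ fixed, the branch reduces to: find a stable matching in $\mathcal{P}_2'$ that contains all forced pairs $\{a,g(a)\}$ and maximises $|M\cap M_1|$ (equivalently, minimises $|M_1\triangle M|$ by the Rural Hospitals Theorem). I would encode this as \textsc{Weighted Stable Marriage} by assigning weight $n^2+1$ to each forced pair, weight $1$ to each pair of $M_1$, and weight $0$ to all others; Feder's polynomial-time algorithm~\cite{DBLP:journals/jcss/Feder92} returns the desired matching if any stable matching containing every forced pair exists (otherwise the branch is discarded). Inconsistent guesses (e.g.\ two agents in $T$ sharing a non-$\square$ target, or $g(a)=b\in T$ with $g(b)\neq a$) are filtered out up front. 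Taking the best solution over all branches yields total running time $n^{\mathcal{O}(|T|)}$, establishing the XP bound.

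For the fixed-parameter tractability in the summed tie size $t$, the plan is to enumerate all completions of $\mathcal{P}_2$ to a strict order. If agent $a$ has ties of sizes $k_{a,1},\dots,k_{a,r_a}$, it contributes $\prod_i k_{a,i}!$ completions, and the overall count is bounded by $t!=2^{\mathcal{O}(t\log t)}$ via the standard multinomial inequality. On each resulting strict-preferences instance one runs the polynomial-time \ISM algorithm of Bredereck et al.~\cite{DBLP:conf/aaai/BredereckCKLN20} and keeps the best result. Correctness follows from two observations: (i) every stable matching of a tie-broken profile remains stable in $\mathcal{P}_2$, since strict preferences are preserved; and (ii) breaking every tie so that the $M_2$-partner of a tied agent (if any) comes first within its tie yields a strict profile in which the optimum $M_2$ is itself stable, so the optimal symmetric difference to $M_1$ is attained in that branch.

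The main obstacle in both parts is the tie-breaking correctness lemma: one must verify carefully that no spurious blocking pair is created when \emph{both} endpoints of a potential blocker lie in $T$ and interact through ties that were broken independently on the two sides. Once that bookkeeping is settled, the algorithmic core is just an off-the-shelf invocation of \textsc{Weighted Stable Marriage} (for the XP branch) or of the \ISM algorithm (for the FPT branch).
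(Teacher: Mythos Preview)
Your proposal is correct. For the FPT part it coincides with the paper's one-line argument (enumerate all tie-breakings and solve \ISM on each).

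For the XP part you take a genuinely different route from the paper. After guessing the partners $g(a)$ of the tied agents, the paper \emph{removes} these agents and their guessed partners from the instance, then prunes the preference lists of the remaining agents (for every removed agent~$a$ and every remaining agent~$b$ that $a$ strictly prefers to its guessed partner, all agents that $b$ ranks below $a$ are deleted from $b$'s list), and finally solves \textsc{Weighted Stable Marriage} on this smaller strict instance; stability of the combined matching in $\mathcal{P}_2$ must then be verified explicitly. Your approach instead keeps all agents, breaks every tied agent's ties by promoting $g(a)$ to the top of its indifference class, and enforces the guessed pairs through large weights in a single \textsc{Weighted Stable Marriage} call. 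Your key lemma---that a matching with $M(a)=g(a)$ for all tied $a$ is stable in $\mathcal{P}_2$ iff it is stable in the tie-broken profile---is sound, and together with the trivial direction (any stable matching in the refinement is stable in $\mathcal{P}_2$) makes the post-hoc stability check unnecessary. What the paper's approach buys is a smaller residual instance and no need for artificial weight scaling; what yours buys is a cleaner reduction with no preference surgery and an automatic stability guarantee. Both give the same $n^{\mathcal{O}(|T|)}$ bound.
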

\begin{proof}
  For the FPT-algorithm, we just enumerate all possibility of breaking the ties and subsequently apply the algorithm for \ISM.
  
  We now turn to the XP-algorithm.
	For each agent $a\in A$ with ties in their preferences in~$\mathcal{P}_2$, 
	we guess its partner in $M_2$. Let $B\subseteq 
	A$ be the set of agents who do not have an assigned partner in $M_2$ after 
	this. 
	To decide how these agents are matched, we now construct an instance of  
	\textsc{Weighted Stable Marriage} with agent set $B$.
	For every~$a\in A\setminus B$ and $b \in B$ with $b \succ_a M_2 (a)$, we modify the 
	preferences of $b$ by deleting each agent~$b' \in B$ with $a \succ_b b'$.
	The reason for this is that $ b$ needs to be matched better than $a$ in the stable matching $M_2$, as otherwise $\{a,b\}$ blocks $M_2$. 
	Finally, we delete all agents from $A\setminus B$ from the 
	preferences of 
	agents from 
	$B$, as we already know the partners of agents from $A\setminus B$ in $M_2$ and in particular that none of them is matched to an agent from $B$.
	Turning to the weights, all pairs from~$M_1$ have weight one, while all other pairs have weight zero.
	Thereby, we maximize the overlap between~$M_1$ and the computed matching.
	Let~$M'$ be the computed maximum-weight stable matching in this instance.
	We set~$M_2 (b):= M' (b)  $ for every~$a \in B$.
	If $M_2$ is stable and fulfills~$|M_1 \triangle M_2| \le k$, then we return~$M_2$;
	otherwise we reject the 
	current guess.
	
	It remains to argue that if there is a solution to the problem, then our algorithm finds one. 
	Assume that $M^*$ is a stable matching in $\mathcal{P}_2$ with $|M_1 \triangle M^*| \le k$.
	Consider the guess in which we guess for each agent~$a$ with ties in its preferences that it is matched to~$M^* (a)$.
	Let $B$ be the set of agents who are unmatched after this guess and let $\mathcal{I}'$ be the \textsc{Weighted Stable Marriage} instance constructed for this guess.
	Moreover, let $N^*$ be the stable matching $M^*$ restricted to the agents from $B$. 
	Then, $N^*$ is a stable matching in  $\mathcal{I}'$ (as argued above, none of the pairs deleted in the process of creating $\mathcal{I}'$ can be included in $M^*$). 
	Note that $N^*$ has weight $|N^*\cap M_1|$ in $\mathcal{I}'$.
	Moreover, let $M'$ be the computed maximum-weight stable matching in $\mathcal{I}'$ and $M_2$ the final matching constructed by the algorithm for this guess. 
	
    We continue by showing that that $M_2$ is stable and $|M_1\triangle M_2|\leq k$.
	We start with the latter. 
	As $N^*$ is of weight $|N^*\cap M_1|$, we get that $M'$  overlaps with $M_1$ in at least $|N^*\cap M_1|$ pairs. 
	As in $\mathcal{I}'$ all agents have strict preferences, by the Rural Hospitals Theorem, $N^*$ and $M'$ match the same set of agents and thus we have that $|N^*\triangle M_1|\geq |M'\triangle M_1|$ and thus $k\geq |M^* \triangle M_1| \geq |M_2 \triangle M_1|$. 
	
	To see that $M_2$ is stable, note first of all that there cannot be a blocking pair involving two agents from $A\setminus B$, as such a pair would also block $M^*$. 
	Next, we prove that there cannot be a blocking pair involving two agents from $B$. 
	Assume towards a contradiction that $\{u, w\}$ with $u, w\in B$ is a blocking pair for~$M_2$.
	We will show that both $u$ and $w$ find each other acceptable in $\mathcal{I}'$, directly implying that $\{u,w\}$ blocks $M'$ in $\mathcal{I}'$, a contradiction to the stability of~$M'$. 
	Note that when creating $\mathcal{I}'$ for each agent $a\in B$ which is matched in $M'$ it holds that if an agent~$b$ from $B$ is deleted from $a$'s preferences, then $a$ prefers~$M' (a)$ to~$b$. 
	Thus, if $u$ is matched in~$M'$, then as $u$ prefers $w$ to $M'(u)$, agent~$u$ still finds $w$ acceptable.
	So consider the case that $u$ is unmatched in $M'$.
	Then $u$ is also unmatched in $N^*$ and $M^*$ (due to the Rural Hospitals Theorem for $\mathcal{I}'$). 
	Recall that we only delete a woman~$b$ from $B$ from the preferences of $u$ if there exists a woman~$w'\in A\setminus B$ for which we have that $u\succ_{w'} M^*(w')$ (and $w' \succ_u b$);
	however if such a women $w'$ exists, then $\{u,w'\}$ blocks $M^*$, as $u$ is unmatched in $M^*$. 
	Thus, if $u$ is unmatched, then he finds the same women from $B$ acceptable in $\mathcal{I}$ and $\mathcal{I}'$. 
	Symmetric arguments apply for~$w$, proving that $u$ and $w$ still find each other acceptable in $\mathcal{I}'$ and thus block $M'$.
	
	It remains to consider pairs $\{u,w\}$ where one of them is contained in $A\setminus B$ and the other in~$B$. 
	Without loss of generality assume that $u\in A\setminus B$ and $w\in B$. 
	Then, as $\{u,w\}$ blocks~$M_2$, we have that $w\succ_u M_2(u)=M^*(u)$. 
	Consequently, all agents that come after $u$ in the preferences of $w$ are deleted from the preferences of $w$ in $\mathcal{I}'$.
	Thus, if $w$ is matched in $M'$, then $w$ prefers $M_2(w)=M'(w)$ to $u$, implying that $\{u,w\}$ does not block $M_2$. 
	If $w$ is unmatched in $M'$, then it is also unmatched in $N^*$ and $M^*$ (due to the Rural Hospitals Theorem for $\mathcal{I}'$), implying that $\{u,w\}$ blocks $M^*$, a contradiction.
\end{proof}
 
 \section{Master Lists} \label{se:ML}
After having shown in the previous section that \ISMT and \ISR
mostly remain  intractable even if we restrict several problem-specific parameters, in this section we analyze the influence of the structure of the preference profiles by considering what happens if the agents' preferences are similar to each other. 
The arguably most popular approach  in this direction is to assume that there exists a single central order (called master list) and that all agents derive their preferences from this order. This approach has already been applied to different stable matching problems in the quest for making them tractable \cite{DBLP:conf/wine/BredereckHKN20,DBLP:journals/cn/CuiJ13,DBLP:journals/dam/IrvingMS08,DBLP:conf/atal/Kamiyama19}. 
Specifically, we analyze in \Cref{sub:oml} the case where the preferences of all agents follow a single master list, in \Cref{sub:out} the case where all but few agents have the same preference list, and in \Cref{sub:fml} the case where each agent has one of few different preference lists (which generalizes the setting considered in \Cref{sub:out}).

\subsection{One Master List}\label{sub:oml}

In an instance of \textsc{Stable Marriage/Roommates} with agent set $A$, we say that the preferences 
of agent $a\in A$ can be \emph{derived} from some preference list
$\succsim^*$ over agents $A$ if the preferences of $a$ are 
$\succsim^*$ restricted to $\Ac(a)$. 
If the preferences 
of all agents in $\mathcal{P}_2$ can be derived from the same strict preference list (which is typically called \emph{master list}), then there is a unique 
stable matching in $\mathcal{P}_2$ which iteratively matches the so-far unmatched
top-ranked agent in the master list to the highest ranked agent it accepts: 
\begin{observation} \label{ob:ML}
 If all preferences in $\mathcal{P}_2$ can be derived from the same strict preference list, then \ISR can be solved in linear time.
\end{observation}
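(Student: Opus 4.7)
The plan is to argue that the greedy ``serial dictatorship along the master list'' procedure produces \emph{the} unique stable matching in $\mathcal{P}_2$, and that this matching can be computed in linear time, after which the problem is trivially reduced to comparing symmetric differences with $M_1$.

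First I would formalize the serial procedure: iterate through the master list $\succsim^*$ from top to bottom; whenever we reach an agent $a$ that has not already been matched, match $a$ to the highest-ranked agent in $\Ac(a)$ (according to $\succsim^*$, equivalently $\succsim_a$) that is still unmatched, or leave $a$ unmatched if no such agent exists. Call the resulting matching $M_2$. The stability of $M_2$ follows by the standard argument: if $\{a,b\}$ were a blocking pair and, without loss of generality, $a$ comes before $b$ in the master list, then at the moment $a$ was processed either $a$ had already been matched to some agent $a$ prefers to $b$, or $a$ chose its partner $M_2(a)$ knowing that $b$ was still available, so $M_2(a) \succ_a b$; in either case $a$ does not strictly prefer $b$. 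Uniqueness already follows from the textual observation preceding the statement (and can be seen by noting that the top agent of the master list must be matched to its best acceptable partner in every stable matching, so the choice propagates downward).

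Next I would address the running time. With the master list represented as a doubly linked list of agents and each $\Ac(a)$ represented as a list sorted by $\succsim^*$ (which we may assume given, since the preferences of each agent are $\succsim^*$ restricted to $\Ac(a)$), we can implement the procedure so that each agent is touched a constant number of times per incident acceptable pair: when we process $a$, walk down $a$'s acceptability list, skipping over any agent already matched, and match $a$ to the first survivor. Using a bit array to mark matched agents and removing visited entries from the lists, the total work is $O\!\left(|A| + \sum_{a \in A} |\Ac(a)|\right)$, which is linear in the input size.

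Finally, since $\mathcal{P}_2$ admits a unique stable matching $M_2$, the \ISR instance is a yes-instance if and only if $|M_1 \triangle M_2| \leq k$; this check is clearly linear. The only mild obstacle is spelling out the data-structure bookkeeping that keeps the serial procedure truly linear rather than, say, $O(|A| \cdot \max_a |\Ac(a)|)$, but this is a standard amortization since each acceptability-list entry is inspected at most twice in total (once when its owner is processed, and once as a skipped already-matched candidate during some earlier agent's turn, after which it is removed from the list).
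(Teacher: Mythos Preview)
Your proposal is correct and follows the same greedy master-list procedure the paper sketches in the sentence immediately preceding the observation; your write-up simply supplies the details (stability, uniqueness, linear-time bookkeeping) that the paper leaves implicit. The only spot to tighten is the stability argument: you assert that $b$ is still available when $a$ is processed, which is true---had some earlier agent $c$ claimed $b$, then $c$ precedes $a$ in the master list and hence $b$ prefers $c$ to $a$, so $\{a,b\}$ could not block---but this deserves a one-line justification rather than being taken for granted.
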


This raises the question what happens when the master list is not a strict but a 
weak order. If the preferences of the agents may be 
incomplete, then reducing from the NP-hard \textsc{Weakly Stable Pair} problem (the question is whether there is a stable matching in an \textsc{SM-T}/\textsc{SR-T} instance containing a given pair \cite[Lemma 3.4]{DBLP:journals/dam/IrvingMS08}), one can show that even assuming that all preferences are derived from the same weak 
master list is not sufficient to make \ISMT or \ISRT
polynomial-time solvable.
\begin{restatable}{proposition}{MLtiesincomplete}
	\label{pr:ML-tiesincomplete}
	\ISMT and \ISRT are NP-hard even if 
	all preferences in $\mathcal{P}_1$ and~$\mathcal{P}_2$ can be derived from the same weak preference list. 
\end{restatable}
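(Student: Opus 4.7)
The plan is to reduce from the \textsc{Weakly Stable Pair} (\textsc{WSP}) problem cited in the statement: given an SM-T (resp.\ SR-T) instance $(A, \mathcal{P})$ whose preferences are all derivable from a single weak master list $L$, together with an acceptable pair $e^* = \{m^*, w^*\}$, decide whether some weakly stable matching contains $e^*$. This is NP-hard even in the master-list setting by~\cite[Lemma~3.4]{DBLP:journals/dam/IrvingMS08}. I use the complete-preferences variant, where every agent accepts every agent on the opposite side (resp.\ every other agent); this guarantees that every weakly stable matching is perfect of size $n := |A|/2$, because two simultaneously unmatched agents would strictly prefer each other to being unmatched and form a blocking pair.

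Given such a \textsc{WSP} instance, I construct an \ISRT (resp.\ \ISMT) instance on the same agent set $A$ and using the same master list $L$ for both $\mathcal{P}_1$ and $\mathcal{P}_2$. Only the acceptability sets differ between the two profiles; since the order among any remaining acceptable partners is always the one induced by $L$, the derivability-from-$L$ condition is preserved in both. Concretely, take $\mathcal{P}_2 := \mathcal{P}$, and in $\mathcal{P}_1$ let $m^*$ and $w^*$ accept only each other while every other agent has empty acceptability. Set $M_1 := \{e^*\}$ and $k := n - 1$. The matching $M_1$ is trivially stable in $\mathcal{P}_1$: no two agents outside $\{m^*, w^*\}$ mutually accept each other, and $m^*, w^*$ are matched to their only acceptable partner.

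For correctness, if the \textsc{WSP} instance is a yes-instance with witness $M^*$, then $M^*$ is perfect, contains $e^*$, and satisfies $|M_1 \triangle M^*| = |M^* \setminus \{e^*\}| = n - 1 = k$, so $M^*$ is a valid $M_2$. Conversely, if the \textsc{WSP} instance is a no-instance, every weakly stable matching $M_2$ in $\mathcal{P}_2$ is perfect of size $n$ and does not contain $e^*$, so $|M_1 \triangle M_2| = 1 + n = n + 1 > k$. The same construction handles both \ISRT (from \textsc{WSP} in SR-T) and \ISMT (from \textsc{WSP} in SM-T). The main subtlety is ensuring the cited hardness is available with complete preferences compatible with the single master list; if this assumption needs justification, I would pad the \textsc{WSP} input by adding, for each agent, a single private partner placed at the very bottom of $L$ accepted only by that agent, which forces every real agent to be matched in every weakly stable matching while leaving the set of weakly stable matchings containing $e^*$ essentially unchanged after a straightforward recalibration of $k$.
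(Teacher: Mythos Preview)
Your reduction follows the same blueprint as the paper's: both start from \textsc{Weakly Stable Pair} under a single weak master list, take $\mathcal{P}_2$ to be (essentially) the WSP preferences, make $\mathcal{P}_1$ nearly empty so that $M_1=\{e^*\}$ is trivially stable, and choose $k$ so that the budget is met iff $e^*\in M_2$. The paper adds dummy agents in the construction to force every ``real'' agent to be matched in $\mathcal{P}_2$, whereas you assume complete preferences so that every weakly stable matching is perfect; under that assumption your argument is correct and slightly cleaner than the paper's.

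The gap is in your fallback. The cited hardness (Lemma~3.4 of Irving--Manlove--Scott) is for incomplete lists, so the padding step is actually required, and one private partner $p_a$ per agent does not pin down $|M_2|$: an original agent may end up matched either to another original agent or to its dummy, so stable matchings in the padded instance have different sizes. Since $|M_1\triangle M_2|=1+|M_2|-2\cdot[e^*\in M_2]$, whether the budget is met then depends on both $|M_2|$ and the membership of $e^*$, and no single $k$ separates the yes and no cases---``a straightforward recalibration of $k$'' does not exist as written. One clean fix that stays within a single master list is to also let the dummies accept one another (placed in a tail block of $L$); then any two unmatched dummies would block each other, forcing every stable matching in the padded instance to be perfect of fixed size, after which your original counting goes through.
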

\begin{proof}
	As \ISRT generalizes \ISMT it suffices to prove NP-hardness 
	of the latter problem. We do so by reducing from the \textsc{Weakly Stable 
		Pair} problem where we are given a set $U$ of men and a set $W$ of women 
	with preference profile $\mathcal{P}$ and a man-woman pair $\{u,w\}\in U\times 
	W$ and the question is to decide whether there exists a stable matching $M$ 
	with  $\{u,w\} \in M$. Irving et 
	al.~\cite[Lemma 3.4]{DBLP:journals/dam/IrvingMS08} proved that \textsc{Weakly 
		Stable 
		Pair} 
	 is NP-hard even if all preferences from $\mathcal{P}$ are 
	derived from the same weak master list. Given an instance of \textsc{Weakly Stable 
		Pair} $(U\cup W,\mathcal{P}, \{u,w\})$, we construct an instance of \ISM with 
	ties as follows. For each agent $a\in U\cup W$, we add two 
	agents~$c_a$ and $d_a$.
	Agent~$d_a$ has empty preferences in~$\mathcal{P}_1$ and only accepts $d_a$ in $\mathcal{P}_2$, thereby ensuring that~$c_a$ is matched in every stable matching in~$\mathcal{P}_2$.
	If~$a \in (U\cup W) \setminus \{u, w\}$, then agent~$c_a$ has empty preferences in $\mathcal{P}_1$ and the preferences of~$a$ in~$\mathcal{P}_2$, appended by~$d_a$, i.e., in $\mathcal{P}_2$ we have $\Ac (c_a) = \{c_{b} \mid b \in \Ac (a)\} \cup \{d_a\}$ and $c_b \succ_{c_a} c_{b'}$ if and only if $b \succ_a b'$, and $c_b \succ_{c_a} d_a$ for every $b \in \Ac (a)$.
	For each agent $a\in \{u,w\}$, we add an 
	agent which has the preferences of~$a $ (where each agent $b \in \Ac (a)$ is replaced by $c_b$) in~$\mathcal{P}_1$ and~$\mathcal{P}_2$.
	We set $M_1:=\{\{u,w\}\}$ and $k:= |U| + |W| -1$. Note that every stable matching in~$\mathcal{P}_2$ has size~${|U | + |W|}$. This implies that there is a 
	solution to the constructed \ISM with ties instance if and only if there 
	exists a stable matching in $\mathcal{P}_2$ containing the pair~$\{u,w\}$.
\end{proof}

In contrast to this,  if we assume that the preferences 
of agents in $\mathcal{P}_2$ are complete and derived from a weak master list, then 
we can solve \ISMT and \ISRT in polynomial time. 
While for \ISMT this follows from a characterization of stable matchings in such instances as the perfect matchings in a bipartite graph due to Irving et 
al.~\cite[Lemma 4.3]{DBLP:journals/dam/IrvingMS08}, for \ISRT this characterization does not directly carry over.
Thus, we need a new algorithm that we present below.
For this, we define an \emph{indifference class} of a master list to be a maximal set of tied agents (note that an indifference class may consist of only one agent).
Assume that the master list consists of $q$ indifference classes and let $A_i\subseteq 
A$ be the set of agents from the $i$th indifference class (where we order the indifference class according to the master list from most preferred to least preferred).
For instance, for the master list $a\succ b\sim c\sim d \succ e \sim f$, we have $p=3$, $A_1=\{a\}$, $A_2=\{b,c,d\}$, and $A_3=\{e,f\}$.
Distinguishing between several cases, we build the matching~$M_2$ by dealing for increasing $i\in [q]$ with each tie separately while greedily maximizing the overlap of the so-far constructed matching with $M_1$. Our algorithm exploits the observation that in a stable matching, for $i\in [q]$, all 
agents from $A_i$ are matched to agents from $A_i$ except if  
(i) $|\bigcup_{j\in[i-1]} 
A_j|$ is odd in which case one agent from~$A_i$ is matched to an agent from 
$A_{i-1}$, or
(ii) if  $|\bigcup_{j\in[i]} 
A_j|$ is odd in which case one agent from~$A_i$ is matched to an agent from 
$A_{i+1}$.
\begin{restatable}{proposition}{ISRTML}
	\label{pr:ISRTML}
	If the preferences of agents in $\mathcal{P}_2$ are complete and derived from a weak master 
	list, then \ISMT/\ISRT can be solved in polynomial time.
\end{restatable}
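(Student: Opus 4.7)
The plan is to establish a structural characterization of stable matchings in $\mathcal{P}_2$ and then solve the optimization via a dynamic program whose transitions require polynomially many maximum-weight perfect matching computations. I focus on \ISRT; the \ISMT case follows along the same lines, using the characterization of stable matchings as perfect matchings of an explicit bipartite graph due to Irving et al.~\cite[Lemma 4.3]{DBLP:journals/dam/IrvingMS08}.

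First, I verify the structural claim announced just before the proposition statement: in every stable matching $M$ in $\mathcal{P}_2$ and for every $i \in [q]$, each agent of $A_i$ is matched to an agent in $A_{i-1}$, $A_i$, or $A_{i+1}$, and at most one pair of $M$ crosses the boundary between $A_1 \cup \cdots \cup A_i$ and $A_{i+1} \cup \cdots \cup A_q$. Indeed, if two distinct agents $a, a' \in A_i$ were matched to agents in strictly later classes, then $\{a, a'\}$ would block $M$, because $a$ and $a'$ are tied in the master list but each is ranked strictly above the other's partner. A similar blocking-pair argument involving an arbitrary $A_{i+1}$-agent rules out $M$-pairs between $A_i$ and $A_j$ for $j \ge i + 2$. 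A parity count of the matched agents in $A_1 \cup \cdots \cup A_i$ then shows that a cross pair between $A_i$ and $A_{i+1}$ exists in $M$ if and only if $|A_1 \cup \cdots \cup A_i|$ is odd. A short variant of the same argument shows that if $|A|$ is odd, then the unique unmatched agent in any stable matching must lie in $A_q$.

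Consequently, every stable matching in $\mathcal{P}_2$ is determined by (i) for each $i \in [q-1]$ with $|A_1 \cup \cdots \cup A_i|$ odd, a spillover pair $(s_i, r_{i+1}) \in A_i \times A_{i+1}$, and (ii) for each $j \in [q]$, a perfect matching on $A_j \setminus \{r_j, s_j\}$ (omitting $r_j$ or $s_j$ when the corresponding spillover is absent). Conversely, any such choice yields a stable matching, since within each $A_j$ all agents are tied (so no within-class pair can block) and cross-class blocking pairs are ruled out by the spillover structure.

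To minimize $|M_1 \triangle M_2|$ I equivalently maximize $|M_1 \cap M_2|$. I run a left-to-right dynamic program whose state is a pair $(i, s_i)$ with $s_i \in A_i \cup \{\bot\}$. The transition from $(i, s_i)$ to $(i+1, s_{i+1})$ enumerates the partner $r_{i+1} \in A_{i+1}$ of $s_i$ (setting $r_{i+1} = \bot$ if $s_i = \bot$), adding $1$ to the objective if $\{s_i, r_{i+1}\} \in M_1$ and $0$ otherwise, plus the maximum weight of a perfect matching on $A_{i+1} \setminus \{r_{i+1}, s_{i+1}\}$ where each edge in $M_1$ has weight $1$ and every other edge has weight $0$. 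Each within-class maximum-weight perfect matching is computed in polynomial time by Edmonds' blossom algorithm, so the whole procedure runs in polynomial time and returns an $M_2$ maximizing $|M_1 \cap M_2|$; we accept if and only if $|M_1 \triangle M_2| \le k$. The main obstacle is the structural characterization; once it is established, the dynamic program itself is routine.
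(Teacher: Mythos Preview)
Your proof is correct and rests on the same structural characterization as the paper, but the algorithmic step is genuinely different. The paper does \emph{not} use a dynamic program: after establishing the same ``only adjacent classes interact, at most one spillover per cut'' structure, it runs a one-pass greedy (their Algorithm~5) through the indifference classes with a short case analysis (even/odd prefix, whether an $M_1$-edge crosses into the next class, whether $M_1|_{A_i}$ is already perfect) and argues that this greedy \emph{simultaneously} maximizes the within-class overlap $|M_1\cap M_2\cap\binom{A_i}{2}|$ and the cross-class overlap for every~$i$.

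Your DP is cleaner to verify---correctness is immediate from the characterization once one checks the parity constraints---while the paper's greedy avoids any matching subroutine. In fact your call to Edmonds is overkill: inside a class the acceptability graph is complete and the weight-$1$ edges (those of $M_1$) already form a matching, so the optimum on $A_{i+1}\setminus\{r_{i+1},s_{i+1}\}$ is simply the number of $M_1$-pairs both of whose endpoints survive, and the remaining (even-many) agents can be paired arbitrarily. Two small points worth tightening: (i) when $|A|$ is odd, your ``perfect matching on $A_q\setminus\{r_q,s_q\}$'' must be a near-perfect matching leaving one $A_q$-agent unmatched; (ii) your parity statement for the $i$-cut tacitly uses that the unmatched agent lies in $A_q$, so that claim should be established first (your argument for it is fine: if the unmatched agent $a$ were in $A_i$ with $i<q$, every agent in later classes would have to be matched into $A_1\cup\dots\cup A_i$, forcing $q=i+1$ and $|A_q|=1$, and then $a$ together with the partner of the lone $A_q$-agent blocks).
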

\begin{proof}
	We first give an algorithm for \ISMT.
	Consider an 
	instance of \ISMT consisting of agents $U\cup W$, preference 
	profiles $\mathcal{P}_1$ and $\mathcal{P}_2$, and a stable matching~$M_1$ for $\mathcal{P}_1$, where the preferences 
	of agents in $\mathcal{P}_2$ are complete and derived from a weak master list. 
	We can find a stable matching as close as possible to~$M_1$ in polynomial-time as follows: Irving et 
	al.~\cite[Lemma 4.3]{DBLP:journals/dam/IrvingMS08} proved that if 
	$\mathcal{P}_2$ satisfies these constraints, then it is possible to construct 
	in polynomial-time a bipartite graph $G$ on $U\cup W$ such that the 
	perfect matchings in $G$ one-to-one correspond to stable matching of agents 
	$U\cup W$ in $\mathcal{P}_2$. 
	By assigning each edge in $G$ that is contained in $M_1$ weight one and all 
	other edges weight zero and computing a perfect maximum-weight matching, we get 
	the stable matching in $\mathcal{P}_2$ with maximum overlap and thus minimum symmetric difference with $M_1$.
	
	For \ISRT, assume that the master list consists of $q$ indifference classes and let $A_i\subseteq 
	A$ be the set of agents from the $i$th indifference class for $i\in [q]$. Given a matching $M$ 
	of agents $A$, for $A'\subseteq A$, let $M|_{A'}$ be the matching $M$ 
	restricted to agents $A'$.
	
	First of all we observe that, in a stable matching, for $i\in [q]$ all 
	agents from~$A_i$ are matched to agents from $A_i$ except if  
	\begin{enumerate*}[label=(\roman*)]
		\item $|\cup_{j\in[i-1]} 
		A_j|$ is odd in which case one agent from~$A_i$ is matched to an agent from 
		$A_{i-1}$ and/or,\label{item:C1}
		\item if  $|\cup_{j\in[i]} 
		A_j|$ is odd in which one agent from $A_i$ is matched to an agent from 
		$A_{i+1}$.\label{item:C2}
	\end{enumerate*}
	It now remains to find a matching fulfilling this constraint maximizing the 
	intersection with the given matching $M_1$. 
	
	To achieve this, we build the matching $M_2$ by iterating over the master list 
	and dealing with each indifference class~$A_i$ separately. To deal with situations where 
	agents are matched outside their indifference class, we introduce a variable $a$. After the 
	processing of the $i$th indifference classs, this variable is set to~$\{b\}$ for an agent~$b\in A_i$ if 
	$|\cup_{j\in[i]} 
	A_j|$ is odd, which implies that $b$ needs to be matched to an agent from~$A_{i+1}$, and to $\emptyset$ otherwise.
	For each indifference class~$i\in [q]$, we distinguish several
	cases. First of all, we distinguish based on whether $|\cup_{j\in[i]} 
	A_j|$ is even (Case~1) or odd (Case 2).
	In Case 1, all agents from $A_i\cup a$ 
	need to be matched among themselves and we simply match them in a way
	maximizing the overlap with $M_1$.
	In Case 2, in which one agent $a$ from~$A_i$  
	needs to be selected to be matched to 
	an agent from $A_{i+1}$, we again distinguish different cases: If there are 
	agents $b''\in A_{i+1}$ and $b'\in A_i$ with $\{b',b''\}\in M_1$ (Case 2a), we 
	match agents from~$A_i\setminus \{b'\}$ as to maximize the overlap with~$M_1$ 
	and set $a$ to~$\{b'\}$. Otherwise, we match the agents from $A_i\cup a$ in a 
	way to maximize the overlap with $M_1$, while leaving one agent from $A_i$ 
	unmatched.
	This reasoning gives rise to \Cref{algML}. 
	\begin{algorithm}[t]
		\caption{Algorithm for \ISRT with complete preferences derived from a weak master 
			list}\label{algML} 
		\begin{algorithmic}[1]
			\Input{A matching $M_1$ and sets of agents $A_1, \dots , 
				A_q$ constituting the indifference classes of the master list (with all agents from~$A_i$ being preferred to all agents from~$A_{i+1}$).}
			\Output{A stable matching~$M_2$ with $|M_1 \triangle M_2| \le k$ 
				if one 
				exists.}
			\State $M_2:=\{\}$; $a:=\emptyset$
			\For{$i = 1 $ to $q$}
			\If{$|A_i \cup a|$ is even} 
			\Comment{Case 1}
			\State Add all pairs $M_1|_{A_i\cup a}$ to 
			$M_2$ and match remaining agents from $A_i\cup a$ arbitrarily
			\State $a:=\emptyset$
			\Else 
			\If{ there are agents $b''\in A_{i+1}$ and $b'\in 
				A_i$ with $\{b',b''\}\in M_1$} \Comment{Case 2a}
			\State Add all pairs $M_1|_{(A_i\setminus\{b'\})\cup a}$ to $M_2$ and 
			match 
			remaining agents from $(A_i\setminus\{b'\})\cup a$ arbitrarily
			\State $a:=\{b'\}$
			\Else
			\If{ $M_1|_{A_i}$ is a perfect matching for $A_i$} \Comment{Case 2b}
			\State Pick an arbitrary pair $\{a',a''\}\in M_1|_{A_i}$
			\State Add $M_1|_{A_i\setminus \{a',a''\}}$ and $\{a,a'\}$ to $M_2$
			\State $a:=\{a''\}$
			\Else \Comment{Case 2c}
			\State Add $M_1|_{A_i\cup a}$ to $M_2$ and match 
			remaining agents from $A_i\cup a$ arbitrarily such that $a$ gets 
			assigned a partner
			\State Set $a:=\{b\}$ with $b$ being the agent from $A_i$ that is not 
			matched by $M_2$
			\EndIf
			\EndIf
			\EndIf
			\EndFor
			\If{$|M_1 \triangle M_2| \le k$ } Return $M_1$
			\Else { Return NO}
			\EndIf
		\end{algorithmic}
	\end{algorithm}
	
	It is easy to verify that the matching returned by \Cref{algML} satisfies 
	the above stated conditions for being a stable matching so it remains to 
	show that it maximizes the intersection between $M_1$ and a stable matching in 
	$\mathcal{P}_2$.
	Note that maximizing the intersection is equivalent to minimizing the symmetric difference because all stable matchings for~$\mathcal{P}_2$ have the same size as the preferences are complete and thus every stable matching matches every or all but one agents (depending on whether the number of agents is even).
	Let $M_2$ be the returned matching.
	We claim that 
	for all~$i\in [q]$, matching~$M_2$ simultaneously maximizes $|M_1\cap M_2 \cap \{\{a, b\}: a \in A_i, 
	b\in A_i\}|$ and $|M_1\cap M_2 \cap \{\{a, b\}: a \in A_i, b\in A_{i+1}\}|$ 
	among all stable matchings in 
	$\mathcal{P}_2$. As for all pairs~$\{a,b\}$ in a stable matching it 
	either holds that $a,b\in A_i$ or $a\in A_i$ and $b\in A_{i+1}$, from this the 
	maximality of $|M_1\cap M_2|$ follows.
	
	The maximality of~$|M_1\cap M_2 \cap \{\{a, b\}: a \in A_i, b\in A_i\}|$ is clearly ensured in Cases 1, 2a, and 2c, as in each case we add to $M_2$ the matching $M_1|_{A_i}$. 
	For Case 2b, we add all but one pair from $M_1|_{A_i}$ to $M_2$. 
	Note however that no stable matching can add all pairs from $M_1|_{A_i}$ to $M_2$, as in this case one agent from $\bigcup_{j=1}^{i-1} A_{j}$ is matched to an agent from $\bigcup_{j=i + 1}^{q}A_{j}$, resulting in a blocking pair. 

	The maximality of~$|M_1\cap M_2 \cap \{\{a, b\}: a \in A_i, b\in A_{i+1}\}|$ is ensured by always adding a pair $\{a,b\}\in M_1$ with $a\in A_i$ and $b\in A_{i+1}$ to $M_2$ in Case 2a if this is possible in a matching fulfilling the necessary criteria for stability.
\end{proof}

 \subsection{Few Outliers} \label{sub:out}

Next, we consider the case that almost all agents derive their complete 
preferences from a single strict preference list (we will call these agents 
\emph{followers}), while the remaining agents (we will call those agents 
\emph{outliers}) have arbitrary preferences.
We will show that \ISR is fixed-parameter tractable with respect to the number of outliers by showing that all stable matchings in a \textsc{Stable Roommates} instance can be enumerated in FPT time with respect to this parameter:
 
\begin{restatable}{theorem}{ISRFPTO}
	\label{pr:ISRFPTO}
	Given a \textsc{Stable Roommates} instance $(A,\mathcal{P})$ and a 
	partitioning~$F\cupdot S$ of the agents $A$ such that all agents from $F$ have 
	complete preferences that can be derived from the same strict preference list in $\mathcal{P}_2$, one can 
	enumerate all stable matchings in $(A,\mathcal{P})$ in~$\mathcal{O}(n^2 \cdot 
	|S|^{|S|+1})$~time. Consequently, \ISR is solvable in~$\mathcal{O}(n^2 \cdot 
	|S|^{|S|+1})$~time.
\end{restatable}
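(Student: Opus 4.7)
The plan is to enumerate all stable matchings of $(A,\mathcal{P})$ by fixing the ``outlier part'' of each stable matching and showing that, once this part is fixed, the rest is uniquely determined by the master-list structure. The \ISR algorithm then follows as an immediate corollary.

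The cornerstone is a structural observation: for any stable matching $M$, let $M_S := \{e \in M : e \cap S \neq \emptyset\}$ denote the pairs of $M$ incident to at least one outlier, and let $F' := F \setminus V(M_S)$ be the followers untouched by $M_S$. Since every agent in $F'$ retains complete preferences derived from the restriction of the master list to $F'$, by \Cref{ob:ML} the unique stable matching on $F'$ is the ``top-to-top'' pairing obtained by iteratively matching the two highest-ranked remaining agents; any deviation would itself constitute a blocking pair inside $F'$, contradicting stability of $M$. Hence $M \setminus M_S$ is uniquely determined by $M_S$, and it suffices to enumerate the possible sets $M_S$, reconstruct the induced candidate matching in $O(n^2)$ time via the master-list rule on $F'$, and verify stability in $O(n^2)$ time.

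The heart of the argument is bounding the number of candidate sets $M_S$ by $O(|S|^{|S|+1})$. For each outlier $s \in S$, its partner in $M$ is (i) another outlier, (ii) $\square$ (unmatched), or (iii) a follower in $F$. Options (i) and (ii) combined account for at most $|S|$ choices per outlier. The delicate part is option (iii): a priori there are up to $|F|$ candidates, which would only yield an XP bound. I would prove that in any stable matching, the possible follower partners of a given outlier $s$ lie in a set of size $O(|S|)$ determined by the master list and $S$ (independent of $|F|$). The idea is that matching $s$ to a follower $f$ displaces $f$ from its ``top-to-top'' partner and triggers a cascade of re-matchings propagating along the master list; such a cascade can only terminate by reaching another outlier, reaching the end of the master list, or returning to $s$, so the number of valid cascades (and hence of possible partners $f$) is controlled by $|S|$. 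Combining the $O(|S|)$ choices per outlier across all $|S|$ outliers yields the claimed $O(|S|^{|S|+1})$ bound on the number of $M_S$ to enumerate.

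Putting these pieces together produces an algorithm enumerating all stable matchings of $(A,\mathcal{P})$ in $O(n^2 \cdot |S|^{|S|+1})$ time. For \ISR, one runs the enumeration on $\mathcal{P}_2$ and returns a stable matching $M$ minimizing $|M_1 \triangle M|$ (or reports \textsc{no} if this minimum exceeds $k$); the comparison against $M_1$ costs only $O(n)$ per candidate and is absorbed into the overall bound. The main obstacle I expect is rigorously proving the per-outlier $O(|S|)$ bound in the second paragraph: pinning down the cascade/rotation argument so that it correctly accounts for the ways multiple outliers can interact along the master list, in the spirit of Irving's rotation analysis for \textsc{SR} but exploiting the fact that uninterrupted stretches of the master list resolve deterministically by top-to-top pairing.
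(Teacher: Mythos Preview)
Your decomposition—fix $M_S$ (all pairs touching $S$), then observe that $M$ restricted to $F' = F\setminus V(M_S)$ must be the unique top-to-top matching on $F'$—is correct and is a clean way to see that the whole matching is determined by $M_S$. The difficulty is entirely in your enumeration bound for $M_S$, and here there is a genuine gap.

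You claim that each outlier has only $O(|S|)$ possible follower partners, arguing by a ``cascade'' that must terminate at another outlier or the end of the list. This is not established. With several outliers present simultaneously, the cascades interact: the follower that $s$ can stably take depends on which followers the \emph{other} outliers have taken, which in turn depends on the whole configuration. In fact, the paper's own argument shows only that the follower partner of $s$ is determined by the outlier--outlier part $(S^*,M^*)$; since the number of partial matchings on $S\setminus\{s\}$ grows like $|S|^{\Theta(|S|)}$, there is no obvious reason a single outlier cannot have far more than $O(|S|)$ follower partners across all stable matchings. Your cascade sketch does not rule this out. (There is also a minor arithmetic slip: even if the per-outlier bound were $c|S|$ for some constant $c>1$, the product $(c|S|)^{|S|}=c^{|S|}|S|^{|S|}$ is not $O(|S|^{|S|+1})$.)

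The paper sidesteps this issue entirely by guessing \emph{less}: only the set $S^*\subseteq S$ of outliers matched to other outliers together with the matching $M^*$ on $S^*$. This is trivially at most $|S|^{|S|+1}$ options. All the work then goes into showing that, for each such guess, there is \emph{at most one} stable matching respecting it, and that it can be produced by a single greedy left-to-right pass over the master list. The greedy step is not the naive one: when the first unmatched agent $a$ is a follower and outliers $b_1,\dots,b_j$ sit between $a$ and the next follower $a'$, one tentatively assigns each $b_i$ to its favourite remaining follower; if $a$ gets picked up in this tentative assignment one keeps it, and otherwise one discards it and takes $\{a,a'\}$. Proving this greedy rule is forced (the two claims in the paper) is where the substance lies. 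So rather than trying to bound the follower choices of each outlier, you should try to show that once the outlier--outlier pairs are fixed, the outlier--follower pairs are \emph{determined}; this is both what is needed and what is actually true.
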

\begin{proof}
	We start by guessing the set~$S^*\subseteq S$ of outliers which are matched to 
	another outlier.
	For every~$a \in S^*$, we additionally guess to which agent from~$S^*$ it is matched.
	We denote this guess by~$M^* (a)$.
	
	We say that a stable matching $M$ \emph{respects} a guess $(S^*,M^*)$ if $S^*=\{s\in S \mid M(s)\in S\}$ and $M^*(a)=M(a)$ for each $a\in S^*$.
	The basic idea of our algorithm now is that we can pass through the master list and greedily find pairs contained in the stable matching.
	
	We claim that there is at most one stable matching 
	respecting the guess~$(S^*,M^*)$, and that the respective stable matching $M$ can be 
	found, if it exists, as follows.
	We start with $M := \{ \{a, M^* (a)\} : a \in S^*\}$ and delete all agents from $S^*$ from the instance including the master list. 
	Subsequently, we construct the matching for the agents from~$A\setminus S^*$ iteratively: As long as there are at least two agents unmatched by~$M$, we 
	consider 
	the first unmatched agent~$a$ from the master list. If $a\in S$, then we match $a$ to the unmatched (by~$M$) agent~$a'$ from~$A \setminus S$ which $a $ likes most (i.e., we add $\{a, a'\}$ to $M$).
	If~$a\in F$, then let $a'$ be the next 
	follower in the master list and let $b_1,\dots ,b_j$ be the outliers which appear (in that order)
	between~$a$ and~$a'$ in the master list.
	Intuitively, as $a$ is 
	the so-far unmatched agent best-ranked in the master list, $a$ cannot be 
	matched worse than $a'$ in any stable matching.
	We now compute a temporary matching~$M^{\text{temp}}$ by starting with~$M^{\text{temp}}:= M$ and for~$i=1$ to $j$, if $b_i$ is unmatched in~$M^{\text{temp}}$, we iteratively add the pair~$\{b_i, b_i^*\} $ to $M^{\text{temp}}$, where $b_i^*$ is the follower which $b_i$ likes most among all agents which are currently unmatched in~$M^{\text{temp}}$.
	If after this for-loop agent $a$ is matched in~$M^{\text{temp}}$, then we set~$M := M^{\text{temp}}$.
	Otherwise, we discard $M^{\text{temp}}$ and add~$\{a, a'\}$ to~$M$.
	When all but at most one agent are matched, we check whether the resulting 
	matching is stable.
	If this is the case, then we output the matching and afterwards proceed with the next guess.
	
    We now prove the correctness of the described algorithm.
	Assume for the sake of contradiction that there exists
	a stable matching~$M'$ respecting a guess~$(S^*, M^*)$ that differs from the  
	matching $M$ constructed for this guess. As $M$ and $M'$ both respect the same 
	guess~$(S^*, M^*)$, we have $M (a)=M' (a)$ for each~$a \in S^*$. 
	
	\begin{claim}
			Let $\hat a\in A\setminus S^*$ be the first agent in the master list that is
			matched differently in $M$ and $M'$ and let~$ b:= M (\hat a)$.
			Agent $b$ prefers~$\hat a$ to~$M'  (b)$.
	\end{claim}
	\begin{claimproof}
		First we consider the case that $b$ is a follower.
		As $b$ is not matched to 
		$\hat a$ in $M'$ and all agents appearing before $\hat a$ in the master list
		are matched the same in $M$ and $M'$, agent~$b$ prefers~$\hat a$ to $M' (b)$.
		
		It remains to consider the case that $b$ is an outlier.
		Because~$\hat a \notin S^*$, it follows that~$\hat a$ is a follower.
		Therefore, by the construction of $M$, between $\hat a$ and $b$, the master list 
		contains only followers which are matched to outliers appearing before~$\hat a$ in the master list 
		(and none of these agents can be~$M' (b)$ by the definition of~$\hat a$) and some outliers~$b_1, \dots, b_k$.
		We assume that for every~$i \in [k-1]$, outlier~$b_i$ appears before~$b_{i+1}$ in the master list.
		Note that the pair $\{\hat a,b\}$ must have been added to~$M$ by setting~$M$ to a matching $M^{\temp}$.
		Let~$X$ be the set of agents which are matched in~$M$ before the algorithm set $M$ to $M^{\temp}$ (i.e., all agents appearing before $\hat{a}$ in the master list and their partners in $M$).
		
		Note that all agents from $X$ are matched the same in $M$ and $M'$ by the definition of~$\hat{a}$. 
		Moreover, we later show that $M(b_i)=M'(b_i)$ for all $i\in [k]$. 
		From this, the statement of the claim follows as follows:
		Further note that by the construction of $M$, $\hat{a}$ is the most preferred follower by $b$ among the agents $F\setminus (X\cup \{M (b_1),\dots, M (b_k)\})$.
		As $b$ is matched to a follower in~$M'$ (due to $b\notin S^*$ and $M'$ respecting our guess)
		and all agents from $(X\cup \{M ({b_1}),\dots, M ({b_k})\})$ are matched the same in $M$ and $M'$, it follows that $b$ prefers $\hat{a}$ to $M'(b)$. 
		
		Thus, it suffices to show that $M(b_i) = M' (b_i)$ for every $i \in [k]$.
		Assume towards a contradiction that $M(b_i) \neq M' (b_i)$ for some~$i \in [k]$, and let $i$ be the minimal index fulfilling~$M(b_i) \neq M'(b_i)$.
		By the definition of~$i$, all agents from $X\cup \{M(b_1), \dots, M(b_{i-1})\}$ are matched the same in~$M$ and $M'$. 
		Thus, as $M$ matches~$b_i$ to its favorite follower from~$F\setminus (X\cup \{M(b_1), \dots, M(b_{i-1})\})$ and $b_i$ is also matched to a follower in $M'$ (as $b_i\notin S^*$), it follows that $b_i$ prefers $M (b_i)$ to $M' (b_i)$.
		Because $M (b_i)$ is a follower and $M'$ can match $M (b_i)$ only to~$\hat a$ or an agent after $b_i$ in the master list, we have that either~$\{\hat a, M (b_i)\} \in M'$ or $M(b_i)$ prefers $b_i$ to~$M'(M(b _i))$.
		In the latter case, we have that $\{b_i, M(b_i) \}$ blocks~$M'$, a contradiction to the stability of~$M'$.
		Thus, we focus on the former case.
		We claim that in this case $M (b_j) \in \{ M' (b_i), M' (b_{i+1}), \dots, M' (b_j)\}$ for every~$i < j \le k$.
		To see this, assume towards a contradiction that this is not the case and let~$j$ be the minimal such index.
		Then $M(b_j)$ prefers~$b_j$ to $M' (M (b_j))$ because $M(b_j) $ is a follower, matched neither to~$\hat{a}$ nor to one of $b_i,\dots , b_{j}$ in~$M'$, and all agents from $X\cup \{b_1, \dots, b_{i-1}\}$ are matched the same in $M$ and $M'$ (and thus in particular not to $M(b_j)$).
		Moreover, outlier~$b_j$ prefers~$M(b_j)$ to~$M' (b_j)$ because all agents from $X\cup \{M(b_1), \dots, M(b_{i-1})\}$ are matched the same in $M$ and $M'$, and $b_j$ is not matched to~$M(b_\ell)$ for $\ell < j$, and $M(b_j)$ is its first choice among the remaining agents.
		Consequently, $\{b_j, M(b_j)\}$ blocks~$M'$, a contradiction to the stability of~$M'$.
		Thus, $b$ prefers~$\hat a$ to~$M'(b)$.
	\end{claimproof}
	
	Having shown that $b$ prefers~$\hat a$ to~$M' (b)$, it is enough to show that $\hat a$ prefers~$b$ 
	to~$M' (\hat a)$, which implies that $\{\hat a, b\}$ blocks~$M'$, contradicting the stability of~$M'$: 
	\begin{claim}
		Let $\hat a\in A\setminus S^*$ be the first agent appearing in the master list that is
		matched differently in $M$ and $M'$ and  let~$ b:= M (\hat a)$.
		Agent $\hat a$ prefers~$b$ to~$M'  (\hat a)$.
	\end{claim}
	\begin{claimproof}
	If $\hat a$ is an outlier, then $\hat a$ prefers~$b$ to~$M' (\hat a)$, 
	as the algorithm matches $\hat a$ to the most preferred follower that is not matched to 
	an 
	agent which comes before $\hat a$ in the master list. Since all agents before $\hat a$ in the master list are 
	matched the same in $M$ and $M'$, agent~$\hat a$ cannot be matched better than $M(\hat a)$. 
	
	Otherwise, $\hat a$ is a follower.
	Let $X$ contain the agents appearing before $\hat{a}$ in the master list and their partners in $M$.
	For the sake of contradiction, assume that~$\hat a$ prefers~$M' (\hat a)$ to~$b$ (we will later distinguish two cases and in each cases establish a contradiction, thereby ultimately proving that $\hat a$ prefers $b$ to $M'(\hat a)$).
	Let $a'$ be the next follower in the master 
	list that is not matched to an agent appearing before $\hat a$ in the master list 
	in $M$ (and thereby also in $M'$). Further, let~$b_1,\dots ,b_k$ be the
	outliers that appear between $\hat a$ and $a'$ in the 
	master list.
	We assume that $b_i$ is before $b_{i+1}$ in the master list for $i\in [k-1]$.
	By the construction of 
	$M$, agent~$\hat a$ is matched to $b_1,\dots ,b_k$ or~$a'$ in~$M$. Thus, for $\hat a$ to 
	be matched better in $M'$ than in $M$, it needs to hold that $M'(\hat a)=b_i$ for 
	some $i\in [k]$. 
	
	We make a case distinction based on whether $b=a'$. If $b \neq a'$, then
	let $i'\in [i]$ be the smallest~$i'$ such that $b_{i'}$ is 
	matched differently in $M$ and $M'$.
	Note that $b \neq b_{i'}$, as we have assumed that $\hat a$ prefers~$M' (\hat a) $ to~$b = M(\hat a)$.
	Because $b_{i'}$ is matched to a follower in both~$M$ and $M'$ (as~$b_{i'}\notin S^*$) and all agents that appear before~$b_{i'}$ in the 
	master list expect of $\hat a$ are matched the same in $M$ and~$M'$, it follows that $M (b_{i'})$ prefers~$b_{i'}$ to~$M' (M (b_{i'}))$.
	Because $M(b_{i'})$ is the most preferred follower of~$b_{i'}$ among $F\setminus (X\cup \{M(b_1), \cdot, M(b_{i'-1})\})$ and all agents from $X\cup \{M(b_1), \cdot, M(b_{i'-1})\}$ are matched the same in $M$ and $M'$, it follows that $b_{i'}$ prefers~$M (b_{i'})$ to~$M' (b_{i'})$.
	Consequently, $\{b_{i'}, M(b_{i'})\}$ blocks~$M'$, a contradiction to $M'$ being a stable matching.
	
	Otherwise, we have $b =a'$.
	As all agents from $X$
	are matched the same in $M$ and $M'$, matching~$M'$ needs to match~$b_1$ to its 
	most preferred follower~$a_1$ from $F\setminus X$, as otherwise $b_1$ would form a blocking pair together with~$a_1$ (note that $M' (\hat a)$ is an outlier and thus~$M' (a_1) \neq \hat a$).
	By induction, one easily sees that $M'$ matches agent $b_j$ for $j\in [k]$ to its most preferred follower that is not matched to an agent 
	appearing before $b_k$ in the master list. As all agents from $X$ are matched the same in $M$ and $M'$ and we know that $M'$ matches~$\hat a$ to~$b_i$
	for some~$i \in [k]$, it follows that by construction matching~$M$ also contains~$\{\hat a, b_i\}$, a contradiction to $\hat a$ being matched differently in~$M$ and~$M'$.
\end{claimproof}
	
	Finally, we analyze the running time of the algorithm.
	First observe that the number of possible sets~$S^*$ and matchings on~$S^*$ can be upper-bounded by~$|S|^{|S|+1}$.
	Passing through the master list can be done in~$\mathcal{O}(n)$ time.
	For each agent~$\hat a$ which is the first agent of the master list which is still unmatched, we may construct a matching~$M^{\temp}$, which can be done in~$\mathcal{O}(n)$ (the remaining computations when $\hat a$ is the first agent in the master list can be done in constant time).
	Checking whether a matching is stable trivially runs in $\mathcal O(n^2)$ time.
	Thus, the running time of~$\mathcal{O}(n^2 |S|^{|S|+1})$ follows.
\end{proof}

If the master list may contains ties, then enumerating stable matchings becomes a lot more 
complicated, as here we have much more flexibility on how the agents are matched.
Specifically, even if there are no outliers, there may be exponentially many stable matchings (if the master list ties every agent, then every (near-)perfect matching is stable).
We leave it open whether there exists a 
similar fixed-parameter tractability result for a weak master list (both in the 
roommates and marriage setting).

\subsection{Few Master Lists} \label{sub:fml}
Motivated by the positive result from \Cref{sub:out}, in this section we consider the smaller parameter ``number of different preference lists''. 
Recall that \Cref{ob:ML} states that if the preference lists of all 
agents are derived from a strict master list in a \textsc{Stable Roommates} instance, then there exists only one 
stable matching (even if the preferences of the agents may be incomplete). This raises the question what happens if there exist ``few'' 
master lists and each agent derives its preferences from one of the lists.
To the best of our knowledge, the parameter ``number of master lists'' has not been considered before. However, it nicely complements (and lower-bounds) the parameter ``number of agent types'' as studied by Meeks and Rastegari~\cite{DBLP:journals/tcs/MeeksR20}. Two agents are of the same type if they have the same preferences and all other agents are indifferent between them. 
Notably, Boehmer et al.~\cite[Proposition~5]{uschanged} proved that \ISMT is fixed-parameter tractable with respect to the number of agent types. Their algorithm also works for \ISRT.
	
If the preferences of agents are incomplete, then as proven in \Cref{pr:ML-tiesincomplete}, \ISMT is already NP-hard for just one weak master list. Moreover, note that a reduction of Cseh and Manlove~\cite[Theorem 4.2]{DBLP:journals/disopt/CsehM16} implies that \ISR\ with incomplete preferences is NP-hard even if the preferences of each agent are derived from one of two strict preference lists.
The preferences of all agents in this reduction can be derived from the preference list $[\{\bar p_i : i \in [n]\}] \succ [\{p_i : i\in [n]\}] \succ [\{\bar q_i: i\in [n]\}] \succ [\{q_i: i \in [n]\}]$ or the preference list $[\{q_i: i\in [n]\}] \succ [\{\bar q_i: i\in [n]\}] \succ [\{p_i : i \in [n]\}] \succ [\{\bar q_i : i \in [n]\}]$ where for a set $A'$ of agents, $[A']$ denotes an arbitrariy but fixed strict order of $A'$.
Consequently, in this subsection we focus on the case with complete preferences.

In the following, we show that \ISR (\Cref{sec:ISR-master-lists}) and \ISMT (\Cref{sec:ISMT-masterLists}) are W[1]-hard parameterized by the number of master lists even if agent's preferences are complete. 
For the sake of readability, we sometimes only 
specify parts of the agents' preference relation and end the preferences with 
``$\pend$'', which means that all remaining agents appear afterwards in some arbitrary strict ordering. 

\subsubsection{Incremental Stable Roommates}
\label{sec:ISR-master-lists}

In contrast to the two fixed-parameter tractability results for the number of outliers (\Cref{pr:ISRFPTO}) and the number of agent types \cite{uschanged}, we show that parameterized by the number~$p$ of master lists, \ISR is W[1]-hard even if the preferences of agents are complete:

\begin{restatable}{theorem}{isrmasterlists}
\label{thm:isr-master-lists}
\ISR is W[1]-hard parameterized by the minimum number~$p$ such that in~$\mathcal{P}_2$ the preferences of each agent can be derived from one of $p$~strict preference lists, even if in $\mathcal{P}_1$ as well as in~$\mathcal{P}_2$ all agents have complete preferences.
\end{restatable}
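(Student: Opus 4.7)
The plan is to reduce from \textsc{Multicolored Clique} parameterized by the clique size $\ell$, producing an \ISR instance in which every agent's preference list in $\mathcal{P}_2$ can be derived from one of $f(\ell)$ strict master lists. The overall shape of the reduction can mirror the proof of \Cref{th:ISR-WP1P2}: each color $c \in [\ell]$ gets a vertex-selection gadget that encodes the choice of one vertex from $V^c$, and each edge receives a small gadget that contributes to $|M_1 \triangle M_2|$ unless both of its endpoints have been ``selected'' by the respective vertex-selection gadgets.

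To keep the number of master lists small, I would group agents by the color (or color pair) of the gadget they belong to and give every gadget of the same color (respectively color pair) the same master list. Concretely, for each color $c \in [\ell]$ there is a master list~$L^c$ that places the agents of color~$c$ at the top (ordered so as to enforce the vertex-selection semantics) and all remaining agents at the bottom in one globally fixed canonical order. For each unordered pair of colors $\{c,\hat c\}$ there is another master list~$L^{c,\hat c}$ that places the edge-gadget agents for edges between $V^c$ and $V^{\hat c}$ at the top and, again, the remaining agents at the bottom in the canonical order. This yields $\ell + \binom{\ell}{2} \in O(\ell^2)$ master lists, which is a function of $\ell$ only, as required for a parameterized reduction. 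The budget~$k$ and the initial matching $M_1$ have to be chosen, as in \Cref{th:ISR-WP1P2}, so that each vertex-selection gadget contributes a fixed amount to $|M_1 \triangle M_2|$ no matter which vertex is chosen, while each edge gadget contributes nothing extra iff both its endpoints are selected — so fixing $k$ just below the value forced by having fewer than $\binom{\ell}{2}$ such edges encodes the clique constraint.

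The main obstacle will be the tension between \emph{complete preferences} and a \emph{small number of master lists}. In the earlier reduction many agents had short, carefully tailored preferences expressing very specific local constraints (e.g.\ $a^c_{i,1} : a^c_{i,2} \succ s^c \succ a^c_{i,4}$), and with few master lists this freedom is lost: every agent of color~$c$ must agree on the comparison of every pair of agents outside that color. I would address this by pushing all agents irrelevant to a given gadget so deep into the tail of its master list that stable matchings never reach them — i.e., designing $M_1$ and the gadget preferences so that, in any stable matching of $\mathcal{P}_2$, each agent is matched within the ``top block'' of its master list, and no potential blocking pair crosses into the shared canonical tail.

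Proving such a structural lemma (an analogue of \Cref{lem:stable-matchings}) is what I expect to be the most delicate step: one has to show that the shared-tail encoding does not introduce spurious cross-gadget blocking pairs, while still allowing the symmetric-difference bookkeeping to distinguish clique-consistent selections from non-clique ones. The rest of the correctness proof — that a multicolored clique yields a stable $M_2$ with $|M_1 \triangle M_2| \le k$ and conversely — should then follow the same pattern as the proof of \Cref{th:ISR-WP1P2}, with the vertex- and edge-gadget analyses only minimally modified to account for the reshaped preferences.
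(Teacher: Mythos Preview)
Your plan has a genuine gap at the point where you propose to ``mirror the proof of \Cref{th:ISR-WP1P2}'' while giving all agents of color~$c$ the single master list~$L^c$. Since the theorem requires \emph{complete} preferences, two agents derived from the same master list have \emph{identical} preferences. But the vertex-selection gadget of \Cref{th:ISR-WP1P2} crucially relies on vertex-specific local structure: for each~$i$, agent~$a^c_{i,1}$ has $a^c_{i,2}$ as its top choice, agent~$a^c_{i,2}$ has $a^c_{i,3}$ as its top choice, and so on around the $4$-cycle. If all $a^c_{\cdot,1}$-agents share a master list, they all agree on which $a^c_{\cdot,2}$-agent is ranked highest, so the $4$-cycles collapse and the gadget no longer encodes a choice of vertex. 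The same problem hits the edge gadgets: $a_{e,1}$'s top choice is $a_{e,2}$, which is edge-specific and cannot survive a shared master list per color pair. Pushing irrelevant agents into a tail does not help here --- the conflict is between agents in the \emph{top block} of the same list.

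What the paper does instead is abandon the local $4$-cycle encoding entirely. Each color gets \emph{four} master lists (one per role $j\in\{1,2,3,4\}$), and the selection of~$v^c_i$ is encoded by a global \emph{shift}: $M_2$ contains $\{a^c_{j,1},a^c_{j+i-1,2}\}$ and $\{a^c_{j,3},a^c_{j+i-1,4}\}$ for all~$j$ simultaneously. To make this interact with edges, the vertex-selection gadget has $\nu m$ agents per role (not~$\nu$), the edges are linearly ordered $e_1,\dots,e_m$, and there is a \emph{single} edge gadget (four master lists total, not $\binom{\ell}{2}$) in which the agents $a_{e_s,1}$ are interleaved into the master lists of the vertex-selection gadgets at positions depending on~$s$ and on the endpoint of~$e_s$. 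This shift-and-interleave mechanism is the missing idea; once you have it, the correctness argument does follow the broad pattern you describe, but the gadget design itself is not a small modification of \Cref{th:ISR-WP1P2}.
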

\begin{proof}
  The basic structure of the reduction is similar to the one in \Cref{th:ISR-WP1P2}:
  We again reduce from \textsc{Multicolored Clique}, and will construct a vertex-selection gadget for every color but this time only one edge gadget for all edges together.
  Each  gadget will have a constant number of master list in $\mathcal{P}_2$, resulting in $\mathcal{O} (\ell)$ many master lists.
  Let $(G  = (V^1\cup \dots \cup V^\ell, E), \ell)$ be an instance of \textsc{Multicolored Clique}.
  Again, we will refer to elements from $[\ell]$ as colors or color classes.
  We assume that $\ell $ is even, $G$ is $r$-regular 
  and that every color class contains $\nu$ vertices.
  Let $m := \frac{\ell \cdot r \cdot \nu}{2}$ be the number of edges in~$G$.
  \paragraph*{Vertex-selection gadget}
  Note that later in the edge gadget, we will add agents $a_{e,1},a_{e,2},a_{e,3}$, and $a_{e,4}$ for each edge $e\in E$.
  For every color class, we add a vertex-selection gadget (where the preferences of each agent are derived from one of four master lists, depending on the color).
  Fix some color $c\in [\ell]$ and an arbitrary order $v_1^c, \dots, v_\nu^c$ of the vertices of color class~$c$.
  The gadget contains agents $a^c_{i,j}$ for every $i \in \{1, \dots, \largeConst\} $ and every~$j\in \{1,2,3,4\}$.
  Let $e_{i_1}^c, \dots, e_{i_{r\nu}}^c$ be the edges with one endpoint of color $c$.
  For $t\in [r\nu]$,  let~$v_{r_t}^c$ be the endpoint of $e_{i_t}^c$ of color~$c$.
  For every $s \in [m]$, we define $\mathcal A_{s, 1}^c := a^c_{(s-1)\nu + 1, 2} 
  \succ a^c_{(s-1)\nu + 2, 2} \succ \dots \succ a^c_{(s-1) \nu + r_t, 2} \succ 
  a_{e_{s},1} \succ a^c_{(s-1) \nu + r_t +1, 2} \succ a^c_{(s-1) \nu + r_t +2, 2} 
  \succ \dots \succ a_{s\nu, 2}^c$ if $s = {i_t}$ for some~$t\in [r\nu]$ and $\mathcal A_{s, 1}^c := a^c_{(s-1)\nu + 1, 2} 
  \succ a^c_{(s-1)\nu + 2, 2} \succ \dots \succ a_{s\nu, 2}^c$ otherwise (this incomplete preference list will be part 
  of the preferences of the master list for agents~$a_{i, 1}^c$). 
  Similarly, for $s\in [m]$, let
  $\mathcal{A}_{s, 2}^c :=  a^c_{(s-1)\nu + 1, 1} \succ a^c_{(s-1)\nu + 2, 1} \succ 
  \dots \succ a^c_{s \nu + 1 - r_t, 1} \succ a_{e_{s},1} \succ a^c_{s \nu + 2 - 
  r_t, 1} \succ a^c_{s \nu +3 - r_t, 1} \succ \dots \succ a_{s\nu, 
  1}^c$ if $s = i_t$ for some~$t\in [r\nu]$ and $\mathcal A_{s, 2}^c := a^c_{(s-1)\nu + 1, 1} 
  \succ a^c_{(s-1)\nu + 2, 1} \succ \dots \succ a_{s\nu, 1}^c$ otherwise (this 
  incomplete preference list 
  will be part of the preferences of the master list for agents $a_{i, 2}^c$).
  In $\mathcal{P}_2$, the preferences of the agents in the vertex-selection gadget for color class $c$ look as follows.
  \begin{align*}
    a_{i, 1}^c & : \mathcal{A}_{1,1}^c \succ \mathcal{A}_{2,1}^c \succ \mathcal{A}_{3, 1}^c \succ \dots \succ \mathcal{A}_{\largeConst, 1}^c \succ a_{1, 4}^c \succ a_{2, 4}^c \succ \dots \succ a_{\largeConst, 4}^c \pend \\
    a_{i, 2}^c & : a_{1, 3}^c \succ \dots \succ a_{\largeConst, 3}^c \succ \mathcal{A}_{1,2}^c \succ \mathcal{A}_{2,2}^c \succ \mathcal{A}_{3, 2}^c \succ \dots \succ \mathcal{A}_{\largeConst, 2}^c \pend \\
    a_{i, 3}^c &: a_{1, 4}^c \succ \dots \succ a_{\largeConst, 4}^c \succ a_{1, 2}^c \succ \dots \succ a_{\largeConst, 2}^c\pend\\
    a_{i, 4}^c &: a_{1, 1}^c \succ \dots \succ a_{\largeConst, 1}^c \succ a_{1, 3}^c \succ \dots \succ a_{\largeConst, 3}^c\pend
  \end{align*}
  We say that a vertex-selection gadget selects vertex $v_i^c$ with $i\in [\nu]$ 
  if $M_2$ contains pairs~$\{a^c_{j, 1} , a^c_{j+ (i-1) , 2}\}$ and $\{a^c_{j, 3}, 
  a^c_{j + (i -1), 4}\}$ for $j \le \largeConst - (i - 1)$ together with $\{a^c_{j, 1}, a^c_{j- 
  (\largeConst -i + 1), 4}\}$ and $\{a^c_{j, 3}, a^c_{j -(\largeConst -i + 1), 2}\}$ 
  for $j > \largeConst - i+ 1$.

  \paragraph*{Edge gadget}
  Fix an arbitrary order $e_1, \dots, e_m$ of the edges, where $e_t = \{v, w\}\in E$ with $v\in V^{c_t}$ and $w\in V^{d_t}$.
  For every edge~$e\in E$, the gadget contains agents $a_{e, 1}$, $a_{e, 2}$, $a_{e, 3}$, and $a_{e,4}$.
  The preferences of these agents are as follows.
  \begin{align*}
    a_{e, 1} & : a_{e_1, 2} \succ a_{1, 1}^{c_1} \succ a_{\nu, 2}^{c_1} \succ a_{1, 1}^{d_1}\succ a_{\nu, 2}^{d_1} \succ a_{e_1, 4} \succ a_{e_2, 2} \succ a_{\nu + 1, 1}^{c_2} \\
    & \succ a_{2\nu, 2}^{c_2}\succ a_{\nu + 1, 1}^{d_2}\succ a_{2\nu, 2}^{d_2} \succ a_{e_2, 4} \succ \dots \succ  a_{e_m, 2} \succ a_{\nu(m-1) + 1,1}^{c_m} \\
    & \succ a_{\nu m, 2}^{c_m}\succ a_{\nu (m-1) + 1,1}^{d_m} \succ a_{\nu m, 2}^{d_m}\succ a_{e_m, 4} \pend\\
    a_{e, 2} & : a_{e_1, 3} \succ a_{e_1, 1} \succ a_{e_2, 3} \succ a_{e_2, 1} \succ \dots \succ a_{e_m, 3} \succ a_{e_m, 1} \pend\\
    a_{e, 3} & : a_{e_1, 4} \succ a_{e_1, 2} \succ a_{e_2, 4} \succ a_{e_2, 2} \succ \dots \succ a_{e_m,4} \succ a_{e_m, 2} \pend\\
    a_{e, 4} & : a_{e_1, 1} \succ a_{e_1, 3} \succ a_{e_2, 1} \succ a_{e_2, 3} \succ \dots \succ a_{e_m, 1} \succ a_{e_m, 3} \pend
  \end{align*}
  The basic idea of this gadget is that for every edge~$e = \{v, w\}\in E$ with $v\in V^c$ and $w\in V^d$ for some $c,d\in [\ell]$, we may pick either pairs~$\{a_{e, 1}, a_{e, 2}\}$ and $\{a_{e, 3}, a_{e, 4}\}$ or pairs~$\{a_{e, 1}, a_{e, 4}\} $ and $\{a_{e, 3} , a_{e, 2}\}$ to be part of a stable matching.
  The first possibility will not have any intersection with~$M_1$ (thus increasing the symmetric difference by~4), while the second possibility will intersect with~$M_1$. However, it is only possible to include these pairs if the vertex-selection gadgets for the colors~$c$ and~$d$ select the endpoints~$v$ and $w$ of $e$.

  Set $M_1 :=\{ \{a_{i, j}^{2c-1}, a_{i, j}^{2c}\} : i \in [\largeConst], j \in [4], c \in [\frac{\ell}{2}]\} \cup  \{ \{a_{e, 1}, a_{e, 4}\}, \{a_{e, 2}, a_{e, 3}\} : e\in E\}$.
  As the distance between $\mathcal{P}_1$ and $\mathcal{P}_2$ is allowed to be unbounded, it is easy to construct a preference profile~$\mathcal{P}_1$ such that $M_1$ is a stable matching for~$\mathcal{P}_1$.
  Note that no stable matching for~$\mathcal{P}_2$ contains any edge from $\{ \{a_{i, j}^{2c-1}, a_{i, j}^{2c}\} : i \in [\largeConst], j \in [4], c \in [\frac{\ell}{2}]\} $. 

  Furthermore, we set $k := 2|M_1| - 4 \binom{\ell}{2}$, as $|M_1|=|M_2|$ this enforces that
  the intersection of $M_1$ and $M_2 $ contains at least $2\binom{\ell}{2}$ pairs.
  Since the constructed instance uses $4\ell + 4$ master lists and the reduction clearly runs in polynomial time, it remains to show the correctness of the reduction.
  \paragraph*{Forward Direction}
  Let $X$ be a multicolored clique.
  We construct a stable matching $M_2 $ as follows.
  For every edge~$e$ not corresponding to an edge inside the clique, we add 
  pairs~$\{a_{e, 1}, a_{e,2}\}$ and $\{a_{e, 3}, a_{e, 4}\}$ to $M_2$, while 
  for every other edge~$ e \subseteq X$, we add pairs $\{a_{e, 1}, a_{e, 4} \} $ 
  and $\{a_{e, 2}, a_{e, 3}\}$.
  For every vertex-selection gadget, we add the matching corresponding to selecting the vertex of this color class that is part of $X$. 
  Clearly, $M_1 \cap M_2 = \{\{a_{e, 1}, a_{e, 4}\}, \{a_{e, 2}, a_{e, 3}\} : e \subseteq X\}$, and this set has cardinality $2 \binom{\ell}{2}$.
  Thus, $|M_1 \triangle M_2 |= |M_1| + |M_2| - 2\cdot 2 \binom{\ell}{2} = k$ (note that $|M_1 | = |M_2|$ as the preferences in $\mathcal{P}_1$ as well as $\mathcal{P}_2$ are complete).
  It remains to show that $M_2$ is stable.
  
  First, we show by induction on $s$ that none of the agents $a_{e_s, 1}$, $a_{e_s, 2}$, $a_{e_s, 3}$, and $a_{e_s, 4}$ is contained in a blocking pair.
  For $s= 0$, there is nothing to show. 
  So assume towards a contradiction that $a_{e_s,j}$ is contained in a blocking pair for some~$s > 0$.
  Using the induction hypothesis that no agent $a_{e_s',j'}$ for some $s'<s$ and $j'\in [4]$ is part of a blocking pair,  it is easy to see that $j=1$ and that the other of agent the blocking pair needs to be from a 
  vertex gadget.
  Let $a_{p, q}^c$ for some $c\in [\ell]$, $p\in [\nu m]$, and $q\in [4]$, be the other agent of the blocking pair, and let~$v_{t}^c$ be the endpoint of~$e_s$ of color~$c$.
  Further, let~$v_i^c$ be the vertex from the multicolored clique~$X$ of color~$c$.
  We distinguish whether $q=1$ or $q=2$.
  
  First assume that $q = 1$.
  Note that agent~$a_{p, 1}^c$ prefers $a_{e_s, 1}$ to $M_2(a_{p, 1}^c)$ only if $M_2(a_{p,1}^c)=a_{p',4}^c$ for some $p'\in [\nu m]$ or $M_2 (a_{p, 1}^c) = a_{p', 2}^c$ for some~$p' > (s-1) \nu + t$. 
  These two conditions are by the construction of~$M_2$ equivalent to~$p > (s - 1) \nu + t - i + 1$.
  If $\{a_{e_s, 1}, a_{e_s, 2}\} \in M_2$, then $a_{e_s, 1}$ prefers~$a_{p, 1}^c$ to $M_2(a_{e_s,1})$ only if $p \le (s-2) \nu + 1$.
  This contradicts $p > (s-1) \nu + t -i +1$ as $i \le \nu$.
  Otherwise we have~$\{a_{e_s, 1}, a_{e_s, 4}\} \in M_2$.
  By the construction of~$M_2$, it follows that $i = t$.
  In this case agent~$a_{e_s, 1} $ prefers~$a_{p, 1}^c$ only if $p \le (s-1) \nu + 1$.
  However, this contradicts $p > (s-1) \nu + 1=(s-1)\nu +t-i+1$.
  
  Next assume~$q = 2$.
  Agent~$a_{p, 2}^c$ prefers $a_{e_s, 1}$ to $M(a_{p, 2}^c)$ only if $M_2 (a_{p, 2}^c) = a_{p', 1}^c$ for some~$p' > s \nu + 1 - t$ which  by the construction of~$M_2$ implies that~$p > s \nu + i - t$.
  If $\{a_{e_s, 1}, a_{e_s, 2}\} \in M_2$, then $a_{e_s, 1}$ prefers~$a_{p, 2}^c$ to $M(a_{e_s, 1)}$  only if $p \le (s-1) \nu$.
  This contradicts $p > s \nu + i -t $ as $t \le \nu$.
  Otherwise we have~$\{a_{e_s, 1}, a_{e_s, 4}\} \in M_2$.
  By the construction of~$M_2$, it follows that $i = t$.
  Agent~$a_{e_s, 1} $ prefers~$a_{p, 2}^c$ to $M(a_{e_s, 1})$ only if $p \le s \nu$.
  This contradicts $p > s \nu=s\nu+i-t$.
  It follows that there cannot be a blocking pair involving an agent from the edge gadget.
  
  Finally, we show that there is no blocking pair involving an agent from a vertex-selection gadget.
  So assume for a contradiction that there is a blocking pair $\{a_{p, q}^c, b\}$ for some $p\in [\nu m]$, $q\in [4]$, and $c\in [\ell]$.
  As we have preciously shown that $b$ cannot be contained in the edge gadget, $b$ has to be contained in the 
  same vertex-selection gadget, i.e., $b = a_{p', q'}^c$ for some $p' \in [\largeConst]$ 
  and $q'\in [4]$.
  We may assume that $q\in \{1, 3\}$ and $q' \in \{2, 4\}$.
  We assume $q=1$; the case $q=3$ is symmetric.
  Let $v_i^c$ be the vertex of color $c$ which is contained in the multicolored clique $X$.
  We make a case distinction based on the value of $p$.
 
  If $p \le \largeConst - (i -1)$,
  then the only agents from the vertex selection gadget 
  which $a_{p,1}^c$ prefers to $M_2(a^c_{p,1})=a_{p+(i-1),2}$ are agents~$a_{x, 2}^c$ with $ x < p + i 
  -1$.
  However, every such agent~$a_{x, 2}^c$ is matched to an agent~$a_{y, 3}^c$ 
  for some~$y\in [\largeConst]$ or an agent $a_{z, 1}^c$ 
  with $z < p$.
  Agent $a_{x, 2}^c$ prefers these agents to $a_{p, 1}^c$, a 
  contradiction to $\{a_{p,1}^c, a_{x, 2}^c\}$ being blocking.
  
  If $p > \largeConst -i + 1$, then the only agents which $a_{p, 1}^c$ prefers to 
  $M_2(a^c_{p,1})=a_{p-(\nu m -i+1),4}$ are 
  agents $a_{x, 2}^c$ for $x\in[\largeConst]$
  or agents 
  $a^c_{y, 4} $ with $y < p - (\largeConst - i + 1)$.
  Every agent $a_{x, 2}^c$ is matched to an agent $a^c_{z, 1} $ with $z \le \largeConst - (i - 1)
  < p$ or $a_{z, 3}^c$ for some~$z\in [\largeConst]$ and thus does not prefer 
  $a_{p, 1}^c$ to $M_2(a_{x, 2}^c)$.
  Agent $a_{y, 4}^c$ with $y < p - (\largeConst -i +1)$ is 
  matched to an 
  agent $a_{z, 1}^c$ with $z < p$ and thus prefers $M_2(a_{y, 4}^c)$ to $a_{p, 1}^c$.
  Therefore, $a_{p, 1}^c$ is not part of a blocking pair, a contradiction.
  Thus, $M_2$ is stable.
  
  \paragraph*{Backward Direction}
  Let $M_2$ be a stable matching with $|M_1 \triangle M_2| \le 2|M_1| - 4\binom{\ell}{2}$.
  Note that $M^* := \{ \{a_{i, 1}^{c}, a_{i, 2}^{c}\}, \{a_{i, 3}^c, a_{i, 4}^c\} : i \in [\largeConst], c \in [\ell]\} \cup  \{ \{a_{e, 1}, a_{e, 2}\}, \{a_{e, 3}, a_{e, 4}\} : e\in E\}$ is a stable matching in~$\mathcal{P}_2$.
  By \Cref{lem:circular-prefs}, no stable matching in $\mathcal{P}_2$ can contain a pair~$\{a, b\}$ with both $a$ preferring~$M^* (a)$ to~$b$ and $ b$ preferring~$M^* (b)$ to~$a$.
  Thus, every stable matching~$M_2$ in $\mathcal{P}_2$ is also stable in the instance~$\mathcal{I}^{\pend}$ which arises from~$\mathcal{P}_2$ by deleting all edges~$\{a, b\}$ with $b$ appearing in the~$\pend $ part of the preferences of~$a$ and $a$ appearing in the~$\pend $ part of the preferences of~$b$.
  Further, as $M^*$ is perfect in $\mathcal{I}^{\pend}$, the Rural Hospitals Theorem for \textsc{SR} \cite{DBLP:books/daglib/0066875} implies that $M_2$ is perfect in $\mathcal{I}^{\pend}$.
  
  First, we show by induction on $s$ that for every edge~$e_s$, matching $M_2$ contains pairs~$\{a_{e_s, 1}, a_{e_s, 2}\}$ and $\{a_{e_s, 3}, a_{e_s, 4}\}$ or pairs $\{a_{e_s, 1}, a_{e_s, 4}\}$ and $\{a_{e_s, 2}, a_{e_s, 3}\}$.
  In $\mathcal{I}^{\pend}$, agents~$\{a_{e, 2}, a_{e, 4} : e\in E\}$ are only incident to~$\{a_{e,1}, a_{e, 3}: e \in E\}$.
  As $M_2$ is perfect in $\mathcal{I}^{\pend}$, each agent from~$\{a_{e, 2}, a_{e, 4} : e\in E\}$ is matched to an agent from~$\{a_{e,1}, a_{e, 3}: e \in E\}$.  We now turn to the induction.
  For $s = 0$ there is nothing to show.
  Fix~$s > 0$.
  If $M_2$ contains neither~$\{a_{e_s, 1}, a_{e_s, 2}\}$ nor $\{a_{e_s, 1}, a_{e_s, 4}\}$,
  then $\{a_{e_s, 1}, a_{e_s, 4}\}$ blocks~$M_2$ (note that $a_{e_s,1} $ cannot be matched to an agent it prefers to~$a_{e_s, 4}$ by the induction hypothesis and as~$a_{e_s,1}$ must be matched to an agent~$a_{e_j, q}$ for some~$j \in [m]$ and $q\in \{2,4\}$; the same holds for $a_{e_s,4}$), a contradiction to~$M_2$ being stable.
  If $M_2$ contains neither~$\{a_{e_s, 3}, a_{e_s, 2}\}$ nor $\{a_{e_s, 3}, a_{e_s, 4}\}$,
  then $\{a_{e_s, 3}, a_{e_s, 2}\}$ blocks~$M_2$, a contradiction to~$M_2$ being stable.
  
  Let us fix some color $c\in [\ell]$. Since $M_2$ matches no agent from a vertex-selection gadget to an agent from the edge gadget, it follows that $M_2$ matches every agent $a_{p, 1}^c$ or $a_{p, 3}^c$ to an 
  agent~$a_{p', 2}^c $ or $a_{p', 4}^c$ (if there existed an agent $a_{p^*, 
  q^*}^c$ with $q^* \in \{1,3\}$ for which this does not hold, then there also 
  exists an agent $a_{p', q'}$ with $q'\in \{2, 4\}$ which is not matched to an 
  agent $a_{p'', q''}^c$ with $q'' \in \{1,3\}$, and then $\{a_{p^*, q^*}^c, 
  a_{p', q'}^c\}$ forms a blocking pair).
  If $M_2 (a_{1,1}^c) = a_{i, 4}^c$ for some $i \in [\largeConst]$,
  then $\{a_{1, 1}^c, a_{e, 1}\}$ where~$e\in E\setminus \{e_1\}$ is incident to a vertex of color~$c$  blocks~$M_2$,
  contradicting the 
  stability of $M_2$.
  Thus we have $M_2 (a_{1, 1}^c) = a_{i, 2}^c$ for some $i \in [\largeConst]$.
  We now continue by proving a more detailed statement about $M_2$.
  \begin{claim} \label{claim:3}
  For each $c\in [\ell]$, we have $M_2 (a_{1, 1}^c) = a_{i, 2}^c$ for some $i \in [\largeConst]$.
   Moreover, we have $\{ \{a_{j, 1}^c, a_{j + (i - 1), 2}^c\}, \{a_{j, 3}^c, a_{j + (i -1), 4}\} 
   : j\in [\largeConst  + 1 - i]\} \cup \{\{a_{j, 1}^c, a_{j- (\largeConst -i + 1), 4}\},\allowbreak \{a_{j, 
   3}^c, a_{j-(\largeConst -i+1),2}^c \}: j \in [\largeConst -i +2, \largeConst]\} \subseteq M_2$.
  \end{claim}

  \begin{proof}
   We have already observed above that the first part holds and now argue that the second part follows from the first part. 
   
   Note that $M_2$ cannot contain pairs $\{a_{j, p}^c, a_{s, q}^c\}$ and $\{a_{j', p}^c, a^c_{s', q}\}$ with~$p \in \{1, 3\}$, $q \in \{2, 4\}$, and both $j >j' $ and $s < s'$, since otherwise $\{a_{j', p}^c, a_{s, q}^c\}$ would be blocking.
   Further note that for every $j> i$, agent $a_{j, 2}^c$ needs to be matched to an agent $a_{j', 1}^c$ for some $j'\in [\nu m]$, as otherwise it is matched to some $a_{\cdot,3}^c$ and $\{M_2 (a_{j,2}^c), a_{i,2}^c\}$ would be blocking.
   Lastly, note that, for every~$j < i$, agent $a_{j, 2}^c $ needs to be matched to an agent~$a_{j', 3}^c$ for some $j'\in [\nu m]$, as otherwise $\{a_{1,1}^c, a_{j, 2}^c\}$ would be blocking.
   From these three observations it follows that $M_2$ contains pair $\{a_{j, 1}^c, a_{j + (i-1), 2}^c\}$ for every $j \in [\largeConst +1 -i]$. 
   
   It follows that only $i-1$ agents $a_{\cdot ,2}$ can be matched to some $a_{\cdot,3}$.
   Thus, in case that $j < i$ where agent $a_{j, 2}^c $ is matched to an agent~$a_{j', 3}^c$ for some $j'\in [\nu m]$ we can conclude that $j'  > \largeConst + 1 -i$:
   Otherwise there exists some $j^*> j' $ such that $a_{j^*, 3}^c$ is not matched to some $a_{\cdot, 2}$ and thereby $\{a_{j^*, 3}^c, a_{t, 4}^c\} \in M_2$ for some $t \in [\largeConst]$. 
   In this case, $\{a_{j', 3}^c, a_{t,4}^c\}$ would be a blocking pair.
   Consequently, $M_2$ contains pair $\{a_{j, 3}^c, a_{j - (\largeConst - i + 1), 2}^c\}$ 
   for every $j \in [\largeConst -i + 2, \largeConst]$. 
   
   The remaining $i-1$ agents $a_{\cdot, 1}^c$ as well as the remaining $\largeConst - i +1$ agents~$a_{\cdot,3}^c$ need to be matched to agents $a_{\cdot, 4}^c$ by~$M_2$.
   It is straightforward to see that $M_2$ contains pairs $\{a_{j, 3}^c, a^c_{j + (i -1), 4}\}$ for $j \in [\largeConst + 1 -i]$ and $\{a_{j, 1}^c, a^c_{j- (\largeConst -i + 1), 4}\}$ for $ j \in [\largeConst -i +2, \largeConst]$.
  \end{proof}

  Let $M_2$ contain pair~$\{a_{e_{x}, 1}, a_{e_{x}, 4}\}$ for an edge $e_x = \{v_{y}^c, v_z^{d}\}$ with $y,z\in [\nu]$ and $v_y^c\in V^c$ and $v_z^d\in V^d$.
  Then both $a_{(x-1) \nu + 1, 1}^{c}$ and $a_{x\nu, 2}^c$ prefer their partner in $M_2$ to  
  $a_{e_x, 1}$, as otherwise they form a blocking pair for $M_2$ together with  $a_{e_x, 1}$.
  From $a_{(x-1) \nu + 1, 1}^c$ preferring its partner in~$M_2$ to $a_{e_x, 1}$ it follows that $a_{(x-1) \nu + 1, 1}^c$ is matched to some agent $a_{\cdot, 2}$. 
  By \Cref{claim:3} we get that  
  $M_2 (a_{(x-1) \nu + 1, 1}^c) = a_{(x-1) \nu + 1 + (i-1), 2}^c$ for some $i\in [\nu m]$ with $i\leq (m-x+1)\nu$. 
  As $a_{(x-1) \nu + 1, 1}^c$ prefers  $a_{(x-1) \nu + 1 + (i-1), 2}^c$ to $a_{e_x, 1}$  it needs to hold that $(x-1) \nu + 1 + (i-1) \le (x-1) \nu + y$, which implies
  $i \le y$. 
  Moreover, by \Cref{claim:3}, we get that $M(a^c_{x\nu,2})=a_{x\nu-i+1, 1}$ (as clearly $x\nu-i+1\leq m\nu+1-i$).
  As $a^c_{x\nu,2}$ prefers $a^c_{x\nu-i+1, 1}$ to $a_{e_x, 1}$ which is directly after $a^c_{x\nu - y+ 1,1}$ in the preferences of~$a^c_{x\nu , 2}$ it needs to hold that $x \nu - i 
  + 1\le x\nu - y+1$ which is equivalent to $y \le i$.
  Thus, we have~$i = y$, i.e., 
  the corresponding vertex gadget needs to select the vertex~$v^c_{y}$.
  Using symmetric arguments for $a_{(x-1) \nu + 1, 1}^d$ and $a_{x \nu, 2}^d$, we have that for every edge~$e_x$ 
  with $\{e_{x, 1}, e_{x, 4} \} \in M_2$, both endpoints have to be selected by 
  vertex-selection gadgets.
  
  The only edges which may contribute to~$M_1 \cap M_2$ are~$\{a_{e, 1}, a_{e,4}\} $ and $\{a_{e,2}, a_{e, 3}\}$ for some~${e \in E}$.
  By the definition of~$k$, there must be at least $\binom{\ell}{2}$ edges~$e $ such that $\{a_{e, 1}, a_{e,4}\}, \{a_{e,2}, a_{e, 3}\} \in M_2$.
  Using the above arguments it follows that the endpoint of these edges form a multicolored clique.
\end{proof}

Containment of this problem in~XP is an intriguing open question; in other words, is there 
a polynomial-time algorithm if the number of master lists is constant? 

\subsubsection{\textsc{Incremental Stable Marriage with Ties}}\label{sec:ISMT-masterLists}

Recall that \ISMT is polynomial-time solvable if agents have complete preferences derived from one weak master list (\Cref{pr:ISRTML}). 
Motivated by this, we now ask whether \mbox{\ISMT} parameterized by the number of master lists is fixed-parameter tractable for agents with complete preferences (similar to \Cref{sec:ISR-master-lists} for \ISR).
Using a similar but slightly more involved reduction than for \Cref{thm:isr-master-lists} for \ISR, we show that this problem is W[1]-hard parameterized by the number of master lists. 

\begin{restatable}{theorem}{ISMWML}
\label{pr:ISMWML}
\ISMT is W[1]-hard parameterized by the minimum number~$p$ such that in~$\mathcal{P}_2$ the preferences of each agent can be derived from one of $p$~weakly ordered preference lists, even if in $\mathcal{P}_1$ as well as in~$\mathcal{P}_2$ all agents have complete preferences.
\end{restatable}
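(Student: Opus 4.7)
The plan is to adapt the reduction from \Cref{thm:isr-master-lists} to the marriage setting with ties, again reducing from \textsc{Multicolored Clique} parameterized by solution size~$\ell$. I would first exploit the fact that the four agents of each cycle $a^c_{i,1}, a^c_{i,2}, a^c_{i,3}, a^c_{i,4}$ used in the ISR vertex-selection gadget, as well as each cycle $a_{e,1}, a_{e,2}, a_{e,3}, a_{e,4}$ in the edge gadget, already admit a natural bipartition (odd indices versus even indices) since no stable pair inside them ever connects two agents of the same parity. I would designate $a^c_{i,1}, a^c_{i,3}, a_{e,1}, a_{e,3}$ as men and $a^c_{i,2}, a^c_{i,4}, a_{e,2}, a_{e,4}$ as women, and extend each agent's preference list into a complete list over all agents of the opposite sex by appending the remaining agents in an arbitrary tail order chosen from the master lists of the previous construction.

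The main obstacle is that \ISM without ties is polynomial-time solvable, so the rigid stable-pair structure on which the ISR construction relies is no longer automatic: in the marriage setting one could otherwise always push the instance towards the man-optimal or woman-optimal matching and short-circuit the vertex-selection mechanism. To defeat this, I would introduce a small number of carefully placed ties that make several ``neutral'' matchings inside a 4-cycle simultaneously stable in $\mathcal{P}_2$, in exactly the same role that the two rotations in the ISR 4-cycles played in \Cref{thm:isr-master-lists}. Concretely, I plan to tie those positions in the preferences of the women inside each vertex-selection 4-cycle (and inside each edge gadget) that ensured the existence of two gadget-internal stable matchings in the ISR proof; since these positions occupy the same relative slots across all agents sharing a master list, the ties can be injected into the master lists themselves without creating new ones.

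With this setup, the structural lemma from the ISR proof (analogous to \Cref{claim:3}) should carry over: in every stable matching of $\mathcal{P}_2$ each vertex-selection gadget must, up to the fixed boundary pairs, implement one of $\nu$ ``rotations'' and each edge gadget must choose between its two internal matchings, where the ``cheap'' choice requires both endpoints of the edge to be selected by the corresponding vertex-selection gadgets. Setting $M_1$ and the budget $k$ exactly as in \Cref{thm:isr-master-lists} (scaled by the new constants introduced by the ties), the forward direction then associates to any multicolored clique a stable matching with the required symmetric difference, while the backward direction argues, edge gadget by edge gadget as before, that any stable matching within the budget must select endpoints forming a multicolored clique. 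The total number of master lists used remains $\mathcal{O}(\ell)$ because ties are added inside the finitely many master-list templates of the ISR construction rather than across them, yielding the claimed W[1]-hardness with respect to~$p$.
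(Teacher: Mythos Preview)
Your plan has a genuine gap: the bipartition you propose does not actually exist in the construction of \Cref{thm:isr-master-lists}. The critical cross-gadget interactions there are the pairs between $a_{e,1}$ and \emph{both} $a^{c}_{\cdot,1}$ and $a^{c}_{\cdot,2}$ (look at $\mathcal{A}^c_{s,1}$ and $\mathcal{A}^c_{s,2}$, and at the master list of $a_{e,1}$). Under your parity split, $a_{e,1}$ would have to sit on the same side as one of $a^{c}_{\cdot,1}$ or $a^{c}_{\cdot,2}$, so one of these interaction pairs becomes a same-sex pair. If you push those pairs into the $\pend$ tail, you sever exactly the mechanism that forces the edge gadget's ``cheap'' matching to require the correct vertex selection, and the backward direction collapses.

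The paper's fix is not cosmetic: it moves all interactions with $a^{c}_{\cdot,2}$ from $a_{e,1}$ to $a_{e,4}$ (which is on the other side), and then uses ties \emph{inside the edge gadget} (making $a_{e,2}$ indifferent between $a_{e,1}$ and $a_{e,3}$, and $a_{e,3}$ indifferent between $a_{e,2}$ and $a_{e,4}$) so that $a_{e,1}$ is matched to $a_{e,4}$ exactly when both the $a^{c}_{\cdot,1}$-side and the $a^{c}_{\cdot,2}$-side constraints are satisfied. This is a different placement and a different purpose of ties from what you describe; your idea of tying positions inside the vertex-selection cycles to create ``neutral'' matchings does not address the bipartiteness obstacle at all, and your stated rationale (being pushed to a man- or woman-optimal matching) is not the operative issue. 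Moreover, the split introduces a new failure mode---$a_{e,1}$ or $a_{e,4}$ could be matched directly into a vertex-selection gadget---which the paper rules out via an additional \emph{forbidden pairs gadget} (a long path of parallel-pairs cells) that costs $\mathcal{O}(1)$ master lists per colour; nothing in your plan accounts for this.
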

Similar to \ISR, it remains open whether \ISMT for a constant number of master lists is polynomial-time solvable or NP-hard. 

To show \Cref{pr:ISMWML}, we adapt the reduction from above for \ISR.
  The underlying idea is that removing the pairs $\{a_{p, 2}^c , a_{e, 1}\}$ for some $c\in [\ell]$, $p\in [\nu m]$, and $e\in E$ (and some pairs which appear only in the $\pend$ part of the preferences) from the constructed \ISR instance in \Cref{sec:ISR-master-lists} already results in an \ISM instance. 
  Moreover, adding pairs~$\{a_{p, 2}^c, a_{e, 4}\}$ to the resulting \ISM instance still maintains bipartiteness.
  Thus, if we achieve that $a_{e, 4}$ is matched ``badly'' if and only if $a_{e, 1}$ is matched ``badly'', then $a_{e,1}$ together with $a_{e, 4}$ can perform the role of~$a_{e,1 }$ in our previous reduction.
  To achieve this, we change the preferences of the edge gadget such that $a_{e, 2}$ is indifferent between $a_{e, 1}$ and $a_{e, 3}$, agent~$a_{e, 3}$ is indifferent between $a_{e, 2}$ and $a_{e, 4}$, and $a_{e, 4} $ prefers~$a_{e, 3}$ to $a_{e, 1}$.
  
  However, these changes lead to another problem:
  A stable matching could now match $a_{e,1}$ or $a_{e, 4}$ to an agent from a vertex-selection gadget.
In order to prevent this, we will use another gadget which we call \emph{Forbidden Pairs Gadget}.
The basic idea behind this gadget is the same as behind the forced pair gadget from \Cref{th:ISMFE}:
We can replace a pair~$\{a,a'\}$ by a long ``path'', which makes adding~$\{a,a'\}$ to~$M_2$ result in many pairs from $M_1$ not being contained in $M_2$, and thus, it is too ``expensive'' to add~$\{a,a'\}$ to ~$M_2$.
The forbidden pairs gadget is very similar to the forced pair gadget from \Cref{th:ISMFE};
however, it consists of~$s$ (where $s$ is a sufficiently large number) instead of $k + 1$ many repetitions of the parallel pairs gadget by Cechl{\'{a}}rov{\'{a}} and Fleiner~\cite{DBLP:journals/talg/CechlarovaF05}.
Matching~$M_1$ will contain pairs~$\{a^{\rb}, a^{\rrm}\}$ and $\{a^{\lt}, a^{\lm}\}$ for every of the parallel pairs gadgets.
Thus, taking the pair modeled by the forbidden pairs gadget into~$M_2$ results in $s$ less pairs shared with $M_1$.

We will later replace for each $c\in [\ell]$, for each $e\in E$, and $p\in [\nu m]$ the pairs $\{a_{e,1}, a_{p, 1}^c\}$  and $\{a_{e, 4}, a_{p, 2}^c\}$ by such a forbidden pair gadget. 
We now show how to construct such a forbidden pairs gadget and that  a constant number of master lists are enough to cover all the agents from the introduced forbidden pairs gadgets.

\begin{lemma}\label{lem:forbidden-edges}
Let $\mathcal{I} = (A, \mathcal{P}_1, \mathcal{P}_2, M_1,k)$  be an \ISMT instance where in $\mathcal{P}_1$ and $\mathcal{P}_2$ all agents have complete preferences.
Let~$F = \{\{v_1, w_1\}, \dots, \{v_r, w_r\}\}$ be contained in a set of pairs such that $F \cap M_1 = \emptyset$. 
Further assume that the preferences of~$v_i$ for each~${i\in [r]}$ are derived from the same master list in $\mathcal{P}_2$, the preferences of $w_i$ for each $i \in [r]$ are derived from the same master list in $\mathcal{P}_2$, and for each $k\in [r]$, it holds that $v_i \succ_{w_k} v_j$  and $w_i \succ_{v_k} w_j$ for all~$i<j\in [r]$. 
Then, by adding a so-called forbidden pairs gadget, one can construct an \ISMT instance $\mathcal{I}' = (A', \mathcal{P}_1', \mathcal{P}_2', M_1',k')$  using only~$\mathcal{O} (1)$ additional master lists such that the following holds: 

There is a stable matching $M_2$ in $\mathcal{P}_2$ with $M_2\cap F = \emptyset$ and $|M_1\triangle M_2|\leq k$ if and only if there is a stable matching $M'_2$ in $\mathcal{P}'_2$ with $|M'_1\triangle M'_2|\leq k'$.
\end{lemma}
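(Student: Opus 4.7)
The plan is to adapt the forced-pair gadget used in the proof of Theorem \ref{co:ISMWties} (itself built from the parallel-pair gadget of Cechl{\'a}rov{\'a} and Fleiner) into a \emph{forbidden}-pair gadget. For each $\{v_i,w_i\}\in F$, I would chain $s:=k+1$ copies of the parallel-pair gadget in series, obtaining agents $a^{\lt}_{i,p},a^{\lm}_{i,p},a^{\lb}_{i,p},a^{\rt}_{i,p},a^{\rrm}_{i,p},a^{\rb}_{i,p}$ for $p\in[s]$. In the preferences of $v_i$ (resp.\ $w_i$), replace $w_i$ by $a^{\lm}_{i,1}$ (resp.\ $v_i$ by $a^{\rrm}_{i,s}$), appending the remaining new agents in a fixed order at the tail; the preferences of the gadget agents themselves use the exact template of Theorem \ref{co:ISMWties}. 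The matching $M_1'$ extends $M_1$ by the ``idle'' internal matching of every gadget, namely $\{a^{\lm}_{i,p},a^{\lt}_{i,p}\}$, $\{a^{\lb}_{i,p},a^{\rb}_{i,p}\}$, $\{a^{\rt}_{i,p},a^{\rrm}_{i,p}\}$ for all $p\in[s]$ (together with $\{a^{\rrm}_{i,p},a^{\lm}_{i,p+1}\}$ removed and replaced by the type-$p$ internal matching); I set $k':=k$. This matching is stable in $\mathcal P_1'$ because $M_1$ is stable in $\mathcal P_1$, no pair of $F$ lies in $M_1$, and the Cechl{\'a}rov{\'a}--Fleiner analysis verifies internal stability.

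The delicate step is bounding the number of \emph{additional} master lists in $\mathcal P_2'$ by $\mathcal O(1)$. I would introduce six new master lists, one per position type $\lt,\lm,\lb,\rt,\rrm,\rb$. Inside each list, the $rs$ copies are listed in a fixed block order $(i,p)=(1,1),(1,2),\dots,(r,s)$; within each block the local template of the Cechl{\'a}rov{\'a}--Fleiner preferences is used, and between blocks the relative order of any two gadget agents in a new master list mirrors the block order. Crucially, the monotonicity hypothesis $v_i\succ_{w_k}v_j$ for $i<j$ (and the symmetric hypothesis on the $w_k$) says that the existing master lists for the $v_i$'s and $w_i$'s are already ``block-consistent'' in $i$; substituting $w_i\mapsto a^{\lm}_{i,1}$ and $v_i\mapsto a^{\rrm}_{i,s}$ therefore keeps the $v$-master list and $w$-master list intact (total new master lists: $6$). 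To see that no spurious cross-gadget blocking pair arises, note that a hypothetical blocking pair $\{a^{xy}_{i,p},a^{x'y'}_{i',p'}\}$ with $(i,p)\neq(i',p')$ would require each side to prefer the other to its current partner; but by the block ordering and the fact that $M_1'$ matches each gadget agent with its intra-block neighbour, at least one side strictly prefers its current partner, ruling out the block-crossing blocking pair.

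For the forward direction, given $M_2$ in $\mathcal P_2$ with $M_2\cap F=\emptyset$ and $|M_1\triangle M_2|\le k$, extend $M_2$ by the ``idle'' internal matching of every gadget to obtain $M_2'$; stability in $\mathcal P_2'$ follows from three local checks, namely (i) no blocking pair inside a gadget (standard), (ii) no blocking pair crossing gadgets (master-list argument above), and (iii) no blocking pair involving $v_i$ (or $w_i$) and a gadget agent — the only candidate neighbour is $a^{\lm}_{i,1}$, which prefers its current partner $a^{\lt}_{i,1}$ by construction, so $\{v_i,a^{\lm}_{i,1}\}$ cannot block. One gets $|M_1'\triangle M_2'|=|M_1\triangle M_2|\le k'$. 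For the backward direction, given $M_2'$ with $|M_1'\triangle M_2'|\le k'=k$, I would show every gadget is in its idle state. If for some $i$ the pair $\{v_i,a^{\lm}_{i,1}\}\in M_2'$, propagating stability layer by layer along the chain (exactly as in the Cechl{\'a}rov{\'a}--Fleiner analysis of the forced-pair gadget) forces all $s$ layers of gadget $i$ to flip away from $M_1'$, contributing at least $s=k+1>k$ pairs to $M_1'\triangle M_2'$, a contradiction. Hence $v_i$ is matched inside $A$ in $M_2'$ for every $i$, and $M_2:=M_2'\cap\binom{A}{2}$ is a stable matching in $\mathcal P_2$ with $M_2\cap F=\emptyset$ and $|M_1\triangle M_2|=|M_1'\triangle M_2'|\le k$.

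The main obstacle is the master-list bookkeeping: convincing oneself that a single new master list per position type really suffices, across \emph{all} $r$ forbidden pairs simultaneously, without introducing blocking pairs that cross between different $(i,p)$ blocks or between gadget agents and agents of $A$ far away in the ``$\pend$'' tail. I expect this to reduce cleanly to the monotonicity hypothesis on $F$ together with a careful check that each gadget agent's intra-block neighbours dominate every out-of-block candidate in both directions — essentially an ``upper-triangular'' check whose depth is constant.
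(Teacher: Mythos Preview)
Your forward direction has a genuine gap. You claim that extending $M_2$ by the single idle matching $\{a^{\lm}_{i,p},a^{\lt}_{i,p}\},\{a^{\lb}_{i,p},a^{\rb}_{i,p}\},\{a^{\rt}_{i,p},a^{\rrm}_{i,p}\}$ yields a stable $M_2'$, and your check (iii) only considers the pair $\{v_i,a^{\lm}_{i,1}\}$. But with the Cechl{\'a}rov{\'a}--Fleiner preferences, $a^{\rrm}_{i,s}$ ranks $a^{\rb}_{i,s}\succ w_i\succ a^{\rt}_{i,s}$, so in your idle matching $a^{\rrm}_{i,s}$ is matched to its \emph{last} choice and strictly prefers $w_i$. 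Since $M_2$ is stable with $\{v_i,w_i\}\notin M_2$, it is entirely possible that $w_i$ strictly prefers $v_i$ (i.e.\ $a^{\rrm}_{i,s}$) to $M_2(w_i)$ while $v_i$ does not prefer $w_i$; in that case $\{a^{\rrm}_{i,s},w_i\}$ blocks your $M_2'$. The gadget has \emph{two} internally stable idle configurations, and which one is externally stable depends on which side of the simulated edge $\{v_i,w_i\}$ is ``unhappy'' in $M_2$; a single fixed choice cannot cover both cases.

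This in turn breaks your budget accounting. If you try to repair the forward direction by choosing, per gadget, whichever idle configuration is stable, then with $M_1'$ equal to one fixed idle configuration and $k'=k$, switching a gadget to the \emph{other} idle configuration already contributes $6s$ to $|M_1'\triangle M_2'|$, exceeding $k$. The paper avoids this by putting into $M_1'$ only the two pairs $\{a^{\lt}_{i,p},a^{\lm}_{i,p}\}$ and $\{a^{\rrm}_{i,p},a^{\rb}_{i,p}\}$ per layer (one from each idle configuration), matching the remaining gadget agents into the $\pend$ tail, and recalibrating the budget via the overlap count $t'=t+rs$ with $s=|A|+1$. Then either idle configuration overlaps $M_1'$ in exactly $s$ pairs per gadget, while the ``through'' configuration overlaps in $0$; this is what makes the forward direction go through regardless of which side of $\{v_i,w_i\}$ is unhappy, and what makes $s=|A|+1$ (rather than $k+1$) the right scale to kill the through configuration in the backward direction.
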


\begin{proof}
The forbidden pairs gadget consists of $s$ parts; for our purposes it suffices to set $s:=|A|+1$. 
 The forbidden pairs gadget contains agents $a_{i, j}^{\lm}$, $a_{i, j}^{\lb}$, $a_{i, j}^{\lt}$, $a_{i, j}^{\rrm}$, $a_{i, j}^{\rb}$, and $a_{i, j}^{\rt}$ for~$i \in [r]$ and $j\in [s]$.
For each $i\in [r]$, the preferences of the agents from the forbidden pairs gadget in~$\mathcal{P}_2'$ are as follows (see \Cref{fig:forbidden-edges} for an illustration).
 \begin{align*}
    a^{\lt}_{i, j} &: \mathcal{A}^{\lt}_1 \succ \mathcal{A}^{\lt}_2 \succ \dots \succ \mathcal{A}^{\lt}_{r} \pend  & j \in [s]\\
    \mathcal{A}^{\lt}_i & : a^{\rt}_{i, 1} \succ a^{\lm}_{i, 1} \succ a^{\rt}_{i, 2} \succ a^{\lm}_{i , 2}\succ \dots \succ a^{\rt}_{i, s} \succ a^{\lm}_{r, s};\\
    a^{\lm}_{i, 1} &: a^{\lt}_{1, 1} \succ v_1 \succ a^{\lb}_{1, 1} \succ  a^{\lt}_{2, 1} \succ v_2 \succ a^{\lb}_{2, 1} \succ \dots \succ  a^{\lt}_{r, 1} \succ v_r \succ a^{\lb}_{r, 1}\pend; & 
    \\
    a^{\lm}_{i, j} &: \mathcal{A}^{\lm}_1 \succ \mathcal{A}^{\lm}_2 \succ \dots \succ \mathcal{A}^{\lm}_{r} \pend, \qquad & j \in \{2, 3, \dots, s\}\\
    \mathcal{A}^{\lm}_i &: a^{\lt}_{i, 2} \succ a^{\rrm}_{i, 1} \succ 
    a^{\lb}_{i, 2} \succ a^{\lt}_{i, 3} \succ a^{\rrm}_{i, 2} \succ 
    a^{\lb}_{i, 3} \succ \dots \succ a^{\lt}_{i,s} \succ a^{\rrm}_{i, s-1} \succ 
    a^{\lb}_{i, s} \pend &\\
    a^{\lb}_{i, j} &: \mathcal{A}^{\lb}_1 \succ \mathcal{A}^{\lb}_2 \succ \dots \succ \mathcal{A}^{\lb}_{r} \pend, \qquad & j \in [s];\\
    \mathcal{A}^{\lb}_i & : a^{\lm}_{i, 1} \succ a^{\rb}_{i, 1} \succ a^{\lm}_{i, 2} \succ a^{\rb}_{i, 2}\succ \dots \succ a^{\lm}_{i, s} \succ a^{\rb}_{i, s} \pend &\\
    a^{\rt}_{i, j} &: \mathcal{A}^{\rt}_1 \succ \mathcal{A}^{\rt}_2 \succ \dots \succ \mathcal{A}^{\rt}_{r} \pend, \qquad & j \in [s];\\
    \mathcal{A}^{\rt}_i &: a^{\rrm}_{i, 1} \succ a^{\lt}_{i, 1} \succ a^{\rrm}_{i, 2} \succ a^{\lt}_{i, 2} \succ \dots \succ a^{\rrm}_{i, s} \succ a^{\lt}_{i, s} \pend, \qquad \\
    a^{\rrm}_{i, j} &: \mathcal{A}^{\rrm}_1 \succ \mathcal{A}^{\rrm}_2 \succ \dots \succ \mathcal{A}^{\rrm}_{r} \pend, \qquad & j \in [s -1];\\
    \mathcal{A}^{\rrm}_i & : a^{\rb}_{i, 1} \succ a^{\lm}_{i, 2} \succ 
    a^{\rt}_{i, 1} \succ a^{\rb}_{i, 2} \succ a^{\lm}_{i, 3} \succ 
    a^{\rt}_{i, 2} \succ \dots \succ a^{\rb}_{i, s-1}  \succ a^{\lm}_{i, s} \succ 
    a^{\rt}_{i, s -1} \pend, \qquad \\
    a^{\rrm}_{i, s} &: a^{\rb}_{1, s} \succ w_1 \succ 
    a^{\rt}_{1, s} \succ a^{\rb}_{2, s} \succ w_2 \succ 
    a^{\rt}_{2, s} \succ \dots \succ a^{\rb}_{r, s}  \succ w_r \succ 
    a^{\rt}_{r, s} \pend; &\\
    a^{\rb}_{i, j} &:  \mathcal{A}^{\rb}_1 \succ \mathcal{A}^{\rb}_2 \succ \dots \succ \mathcal{A}^{\rb}_{r} \pend, \qquad & j \in [s ];\\
    \mathcal{A}^{\rb}_i &: a^{\lb}_{i, 1} \succ a^{\rrm}_{i, 1} \succ a^{\lb}_{i, 2} \succ a^{\rrm}_{i, 2} \succ \dots \succ a^{\lb}_{i, s} \succ a^{\rrm}_{i, s}.
 \end{align*}

  Furthermore, for each $i\in [r]$, agent~$a^{\lm}_{i, 1}$ replaces $w_i$ in the preferences of $v_i$, and $a^{\rrm}_{i, s}$ replaces~$v_i$ in the preferences of~$w_i$.
  For each agent~$a \in A$, its preferences are completed by adding all agents which are not contained in the preferences of~$a$ so far at the end of the preferences of~$a$ (in an arbitrary order).
  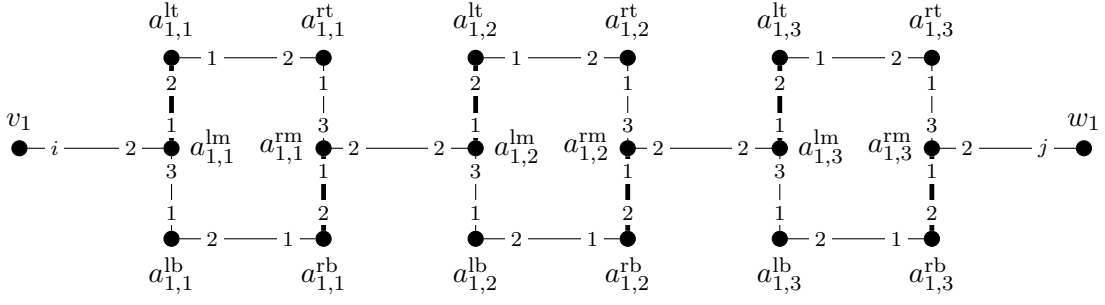
\begin{figure}[bt]
    \begin{center}
        \begin{tikzpicture}[xscale =2 , yscale = 1.2]
        \node[vertex, label=90:$w_1$] (w) at (6, 0) {};

        \draw (1, 0) edge node[pos=0.2, fill=white, inner sep=2pt] 
        {\scriptsize
            ${2}$}  node[pos=0.76, fill=white, inner sep=2pt]
        {\scriptsize $2$} (2,0);
        \draw (3,0) edge node[pos=0.2, fill=white, inner sep=2pt] 
        {\scriptsize
            ${2}$}  node[pos=0.76, fill=white, inner sep=2pt]
        {\scriptsize $2$} (4,0);

        \node[vertex, label=90:$v_1$] (m) at (-1, 0) {};
        \node[vertex, label=0:$a_{1,1}^{\lm}$] (wf) at (0, 0) {};
        \node[vertex, label=90:$a_{1,1}^{\lt}$] (mt) at (0, 1) {};
        \node[vertex, label=270:$a_{1,1}^{\lb}$] (mb) at (0, -1) {};
        \node[vertex, label=90:$a_{1,1}^{\rt}$] (wt) at (1, 1) {};
        \node[vertex, label=270:$a_{1,1}^{\rb}$] (wb) at (1, -1) {}; 
        \node[vertex,
        label=180:$a_{1,1}^{\rrm}$] (mf) at (1, 0) {};
        \draw (m) edge node[pos=0.2, fill=white, inner sep=2pt] 
        {\scriptsize
            $i$}  node[pos=0.76, fill=white, inner sep=2pt] {\scriptsize $2$}
        (wf);
        \draw (mf) edge node[pos=0.2, fill=white, inner sep=2pt] {\scriptsize
            ${3}$}  node[pos=0.76, fill=white, inner sep=2pt]
        {\scriptsize $1$} (wt);
        \draw[bedge] (mf) edge node[pos=0.2, fill=white, inner sep=2pt] {\scriptsize
            $1$}  node[pos=0.76, fill=white, inner sep=2pt] {\scriptsize $2$}
        (wb);
        \draw (wf) edge node[pos=0.2, fill=white, inner sep=2pt] {\scriptsize
            $3$}  node[pos=0.76, fill=white, inner sep=2pt] {\scriptsize $1$}
        (mb);
        \draw[bedge] (wf) edge node[pos=0.2, fill=white, inner sep=2pt] {\scriptsize
            $1$}  node[pos=0.76, fill=white, inner sep=2pt] {\scriptsize $2$}
        (mt);
        \draw (wt) edge node[pos=0.2, fill=white, inner sep=2pt] 
        {\scriptsize
            $2$}  node[pos=0.76, fill=white, inner sep=2pt] {\scriptsize $1$}
        (mt);
        \draw (wb) edge node[pos=0.2, fill=white, inner sep=2pt] 
        {\scriptsize
            $1$}  node[pos=0.76, fill=white, inner sep=2pt] {\scriptsize $2$}
        (mb);

        \begin{scope}[xshift = 2cm]
        \node[vertex, label=0:$a_{1,2}^{\lm}$] (wf) at (0, 0) {};
        \node[vertex, label=90:$a_{1,2}^{\lt}$] (mt) at (0, 1) {};
        \node[vertex, label=270:$a_{1,2}^{\lb}$] (mb) at (0, -1) {};
        \node[vertex, label=90:$a_{1,2}^{\rt}$] (wt) at (1, 1) {};
        \node[vertex, label=270:$a_{1,2}^{\rb}$] (wb) at (1, -1) {}; 
        \node[vertex,
        label=180:$a_{1,2}^{\rrm}$] (mf) at (1, 0) {};
        \draw (mf) edge node[pos=0.2, fill=white, inner sep=2pt] {\scriptsize
            ${3}$}  node[pos=0.76, fill=white, inner sep=2pt]
        {\scriptsize $1$} (wt);
        \draw[bedge] (mf) edge node[pos=0.2, fill=white, inner sep=2pt] {\scriptsize
            $1$}  node[pos=0.76, fill=white, inner sep=2pt] {\scriptsize $2$}
        (wb);
        \draw (wf) edge node[pos=0.2, fill=white, inner sep=2pt] {\scriptsize
            $3$}  node[pos=0.76, fill=white, inner sep=2pt] {\scriptsize $1$}
        (mb);
        \draw[bedge] (wf) edge node[pos=0.2, fill=white, inner sep=2pt] {\scriptsize
            $1$}  node[pos=0.76, fill=white, inner sep=2pt] {\scriptsize $2$}
        (mt);
        \draw (wt) edge node[pos=0.2, fill=white, inner sep=2pt] 
        {\scriptsize
            $2$}  node[pos=0.76, fill=white, inner sep=2pt] {\scriptsize $1$}
        (mt);
        \draw (wb) edge node[pos=0.2, fill=white, inner sep=2pt] 
        {\scriptsize
            $1$}  node[pos=0.76, fill=white, inner sep=2pt] {\scriptsize $2$}
        (mb);

        \end{scope}

        \begin{scope}[xshift = 4cm]
        \node[vertex, label=0:$a_{1,3}^{\lm}$] (wf) at (0, 0) {};
        \node[vertex, label=90:$a_{1,3}^{\lt}$] (mt) at (0, 1) {};
        \node[vertex, label=270:$a_{1,3}^{\lb}$] (mb) at (0, -1) {};
        \node[vertex, label=90:$a_{1,3}^{\rt}$] (wt) at (1, 1) {};
        \node[vertex, label=270:$a_{1,3}^{\rb}$] (wb) at (1, -1) {}; 
        \node[vertex,
        label=180:$a_{1,3}^{\rrm}$] (mf) at (1, 0) {};
        \draw (mf) edge node[pos=0.2, fill=white, inner sep=2pt] {\scriptsize
            ${3}$}  node[pos=0.76, fill=white, inner sep=2pt]
        {\scriptsize $1$} (wt);
        \draw[bedge] (mf) edge node[pos=0.2, fill=white, inner sep=2pt] {\scriptsize
            $1$}  node[pos=0.76, fill=white, inner sep=2pt] {\scriptsize $2$}
        (wb);
        \draw (wf) edge node[pos=0.2, fill=white, inner sep=2pt] {\scriptsize
            $3$}  node[pos=0.76, fill=white, inner sep=2pt] {\scriptsize $1$}
        (mb);
        \draw[bedge] (wf) edge node[pos=0.2, fill=white, inner sep=2pt] {\scriptsize
            $1$}  node[pos=0.76, fill=white, inner sep=2pt] {\scriptsize $2$}
        (mt);
        \draw (wt) edge node[pos=0.2, fill=white, inner sep=2pt] 
        {\scriptsize
            $2$}  node[pos=0.76, fill=white, inner sep=2pt] {\scriptsize $1$}
        (mt);
        \draw (wb) edge node[pos=0.2, fill=white, inner sep=2pt] 
        {\scriptsize
            $1$}  node[pos=0.76, fill=white, inner sep=2pt] {\scriptsize $2$}
        (mb);

        \draw (mf) edge node[pos=0.2, fill=white, inner sep=2pt] 
        {\scriptsize
            ${2}$}  node[pos=0.76, fill=white, inner sep=2pt]
        {\scriptsize $j$} (w);
        \end{scope}

        \end{tikzpicture}

    \end{center}
    \caption{The replacement for $s = 3$ and $F=\{\{v_1,w_1\}\}$, where $v_1$ ranks 
    $w_1$ at the $i$-th position and $w_1$ ranks $v_1$ at the $j$-th position.
    Edges in~$M_1'$ are bold.
    }\label{fig:forbidden-edges}
\end{figure}
  
  Concerning the matching $M_1'$, we add to $M_1$ the pairs $\{\{a_{i, j}^{\lt}, a_{i,j}^{\lm}\}, \{a_{i, j}^{\rrm}, a_{i, j}^{\rb}\} : j \in [s], i\in [r]\}$. 
  Moreover, we extend the matching to be a perfect matching where each currently unmatched agent is matched to an agent that appears in the $\pend$ part of its preferences in $\mathcal{P}'_2$.
  We define~$\mathcal{P}'_1$ in such a way that $M_1'$ is a stable matching in $\mathcal{P}'_1$. 
  From $k$ we can directly compute the number of pairs $t$ in which $M_1$ and a stable matching $M_2$ in $\mathcal{P}_2$ need to overlap such that $|M_1\triangle M_2|\leq k$. 
  Let $t':=t+rs$.  
  We set $k'$ such that $M'_1$ and a stable matching $M'_2$ in $\mathcal{P}'_2$ need to overlap in $t'$ pairs. 
  
  Given a matching $M_2$ in $\mathcal{P}_2$ with $F\cap M_2=\emptyset$ and $|M_1\cap M_2|\geq t$, we construct $M'_2$ as follows. For every $i \in [r]$,
  \begin{itemize}
   \item if $v_i$ prefers $w_i$ to $M_2(v_i)$, then we add to $M_2$ pairs $\{a_{i,j}^{\lt}, a_{i, j}^{\lm}\}$, $\{a_{i,j}^{\rt}, a_{i,j}^{\rrm}\}$, and~$\{a_{i,j}^{\lb}, a_{i,j}^{\rb}\}$ for $j\in [s]$, and
   \item otherwise, we add pairs $\{a_{i,j}^{\lt}, a_{i, j}^{\rt}\}$, $\{a_{i,j}^{\rrm}, a_{i,j}^{\rb}\}$, and~$\{a_{i,j}^{\lb}, a_{i,j}^{\lm}\}$ for $j\in [s]$.
  \end{itemize}
  We get from $|M_1\cap M_2|\geq t$ that $|M'_2\cap M'_1|\geq t'$. 
 Showing by induction on~$i \in [r]$ that $a_{i,\cdot}^{\cdot}$ is not contained in a blocking pair shows that $M'_2$ is stable in $\mathcal{P}'_2$. 

  Given a stable matching $M'_2$ in $\mathcal{P}_2'$,
  it is easy to verify (by induction on~$i$) that for any~$i \in [r]$, matching~$M'_2$ contains either
  \begin{enumerate}
   \item pairs $\{a_{i,j}^{\lt}, a_{i, j}^{\rt}\}$, $\{a_{i,j}^{\lb}, a_{i,j}^{\rb}\}$ for $j\in [s]$, pair $\{v_i, a_{i, 1}^{\lm}\}$, pair $\{a_{i,j}^{\rrm}, a_{i, j+ 1}^{\lm}\}$ for $j \in [s -1]$ and pair~$\{a_{i,s}^{\rrm}, w_i\}$. 
   \item pairs $\{a_{i,j}^{\lt}, a_{i, j}^{\lm}\}$, $\{a_{i,j}^{\rt}, a_{i,j}^{\rrm}\}$, and~$\{a_{i,j}^{\lb}, a_{i,j}^{\rb}\}$ for $j\in [s]$, or
   \item pairs $\{a_{i,j}^{\lt}, a_{i, j}^{\rt}\}$, $\{a_{i,j}^{\rrm}, a_{i,j}^{\rb}\}$, and~$\{a_{i,j}^{\lb}, a_{i,j}^{\lm}\}$ for $j\in [s]$.
  \end{enumerate}
  
 Note that it is further easy to see that no agent from the forbidden pairs gadget can be matched to an agent from the $\pend$ part of its preferences in a stable matching in $\mathcal{P}'_2$. 
 Thus, if the first case applies we directly get that $|M'_1\cap M'_2|\leq |A|+(r-1)s<t'$, a contradiction. 
 Thus the first case never applies. 
 Let $M_2$ be the matching $M'_2$ restricted to the agents from $A$. 
 It is easy to see that $M_2$ is stable in $\mathcal{P}_2$. 
 Moreover, as edges from the forbiddedn edge gadget can contribute at most $rs$ to $|M'_1\cap M'_2|$ from $|M'_1\cap M'_2|\geq t+rs$ it follows that $|M_1\cap M_2|\geq t$. 

\end{proof}

We will call the gadget constructed in \Cref{lem:forbidden-edges} ``forbidden pairs gadget''.

Having described the forbidden pairs gadget,
we now formally describe how we adapt the construction from \Cref{thm:isr-master-lists} to prove an analgous results for \ISMT. 
For this, let $\mathcal{I}' = (A', \mathcal{P}'_1, \mathcal{P}'_2, M'_1)$ be the \ISR instance constructed in the proof of \Cref{thm:isr-master-lists}. 
We will now describe how we modify $\mathcal{I}'$ to arrive at an instance $\mathcal{I}$ of 
\ISMT.
The bipartition in $\mathcal{I}$ will be $U = \{a_{i, 1}^c, a_{i,3}^c : i \in [\largeConst], c \in [\ell]\} \cup \{a_{e, 2}, a_{e, 4} : e \in E\}$ and $W = \{a_{i, 2}^c, a_{i,4}^c : i \in [\largeConst], c \in [\ell]\} \cup \{a_{e, 1}, a_{e, 3} : e \in E\}$.

We start by describing how to construct the preferences of the agents in the second preference profile. 
For this, we start with their preferences in $\mathcal{P}'_2$.
 For each $i\in [m]$, let $e_i=\{v,w\}\in E$ with $v\in V^{c_i}$ and $w\in V^{d_i}$.
 We replace the master lists of $a_{e_i, 1}$, $a_{e_i,2}$, $a_{e_i,3}$, and $a_{e_i,4}$ as follows:
  \begin{align*}
    a_{e, 1} & : a_{e_1, 2} \succ a_{1, 1}^{c_1} \succ a_{1, 1}^{d_1}
    \succ a_{e_1, 4} \succ a_{e_2, 2} \succ a_{\nu + 1, 1}^{c_2} 
    \succ a_{\nu + 1, 1}^{d_2} 
    \succ a_{e_2, 4} \succ \dots \succ  a_{e_m, 2} \\
    & \succ a_{\nu(m-1) + 1,1}^{c_m}
    \succ a_{\nu (m-1) + 1,1}^{d_m}
    \succ a_{e_m, 4} \pend\\
    a_{e, 2} & : a_{e_1, 3} \sim a_{e_1, 1} \succ a_{e_2, 3} \sim a_{e_2, 1} \succ \dots \succ a_{e_m, 3} \sim a_{e_m, 1} \pend\\
    a_{e, 3} & : a_{e_1, 4} \sim a_{e_1, 2} \succ a_{e_2, 4} \sim a_{e_2, 2} \succ \dots \succ a_{e_m, 4} \sim a_{e_m, 2} \pend\\
    a_{e, 4} & : a_{e_1, 3} \succ a_{\nu, 2}^{c_1}
    \succ a_{\nu, 2}^{d_1}
    \succ a_{e_1, 1} \succ a_{e_2, 3} \succ a_{2\nu, 2}^{c_2}
    \succ a_{2\nu, 2}^{d_2}
    \succ a_{e_2, 1} \succ \dots \succ a_{e_m, 3} \\
    & \succ a_{\nu m, 2}^{c_m}
    \succ a_{\nu m, 2}^{d_m}
    \succ a_{e_m, 1} \pend
  \end{align*}
  That is, compared to the construction from \Cref{thm:isr-master-lists}, we delete all agents $a_{i \cdot \nu, 2}^{c_i}$ and $a_{i \cdot \nu, 2}^{d_i}$ from the preferences of $a_{e, 1}$ and add them to the preferences of $a_{e, 4}$.
  Furthermore, $a_{e,2}$ is indifferent between $a_{e_i, 1}$ and $a_{e_i, 3}$ and $a_{e, 3}$ is indifferent between $a_{e_i, 2} $ and $a_{e, 4}$.
  Finally, $a_{e, 4}$ now prefers~$a_{e_i, 3}$ to $a_{e_i, 1}$.

  Regarding the vertex-selection gadgets, for each $c\in [\ell]$ and $i\in [\nu m]$, the preferences of $a_{i, 1}^c$, $a_{i,3}^c $ and $a_{i, 4}^c$ remain unchanged, while agent~$a_{i, 2}^c$ replaces~$a_{e, 1}$ by $a_{e, 4}$ in its preferences.
  Note that the modified instance is now indeed bipartite and that we denote the constructed preference profile as $\mathcal{P}^*_2$ (in the following, we will use this preference profile to construct an intermediate \ISMT instance $\mathcal{I}^*$).
  Moreover, we construct the initial matching $M^*_1$ in the modified instance such that it contains the pairs $\{ \{a_{e,1}, a_{e, 4} \} : e \in E\}$ and such that all other agents are matched to agents that appear in the $\pend$ part of their preferences. 
  As the difference between the first and second preference profile can be unbounded, we construct $\mathcal{P}^*_1$ such that $M^*_1$ is stable.
  As all preferences are complete, the second matching needs to be perfect. 
  Thus, we can set the budget $k^*$ in such a way that $M^*_1$ and the solution need to share at least $\binom{\ell}{2}$ edges.
  Thereby, we arrive at an \ISMT instance $\mathcal{I}^*=(A,\mathcal{P}^*_1,  \mathcal{P}^*_2, M^*_1, k^*)$. 
  Note in particular that in $\mathcal{P}^*_2$ the preferences of all agents can still be derived from $\mathcal{O}(\ell)$ master lists. 
  
  We now continue by modifying $\mathcal{I}^*$ by forbidding some edges. 
  Specifically, for every $c\in [\ell]$, $e\in E$, and $p\in [\nu m]$, we ``forbid'' the pairs~$\{a_{e, 1}, a_{p,1}^{c}\}$ and $\{a_{e, 4}, a_{p, 2}^c\}$. 
  We do this by iteratively applying \Cref{lem:forbidden-edges} to $\mathcal{I}^*$. 
  Specifically, for increasing $c\in [\ell]$,  we tranform the previously construct \ISMT instance by applying \Cref{lem:forbidden-edges} with $F=\{\{a_{e, 1}, a_{p,1}^{c}\}, \{a_{e, 4}, a_{p, 2}^c\}  \mid p\in [\nu m], e\in E\}$. 
  We call the resulting \ISMT instance $\mathcal{I}=(A, \mathcal{P}_1, \mathcal{P}_2, M_1, k)$, which is now our final constructed \ISMT instance. 
  Note that by applying \Cref{lem:forbidden-edges} $\ell$ times we have only added $\mathcal{O}(\ell)$ many master lists, thus still arriving at overall $\mathcal{O}(\ell)$ master lists. 
  By \Cref{lem:forbidden-edges}, we can directly conclude the following: 
  \begin{lemma}\label{lem:equiv}
There is a stable matching $M^*_2$ in $\mathcal{P}^*_2$ without a pair from  
$\{\{a_{e, 1}, a_{p,1}^{c}\}, \{a_{e, 4}, a_{p, 2}^c\}  \mid p\in [\nu m], e\in E, c\in [\ell]\}$ 
for which $|M^*_1\triangle M^*_2|\leq k^*$ if and only if there is a stable matching $M_2$ in $\mathcal{P}_2$ with $|M_1\triangle M_2|\leq k$.
  \end{lemma}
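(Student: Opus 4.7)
The plan is to establish this equivalence by iteratively chaining the equivalences supplied by \Cref{lem:forbidden-edges}. By construction $\mathcal{I}$ is obtained from $\mathcal{I}^*$ in $\ell$ rounds indexed by the colors $c\in[\ell]$, where in round $c$ we apply \Cref{lem:forbidden-edges} with forbidden-pairs set $F^c = \{\{a_{e,1},a_{p,1}^c\},\{a_{e,4},a_{p,2}^c\} : p\in[\nu m], e\in E\}$, splitting $F^c$ into its two bipartite-orientation classes $F^c_1=\{\{a_{e,1},a_{p,1}^c\}\}$ and $F^c_2=\{\{a_{e,4},a_{p,2}^c\}\}$ so that within each class the $U$-side (playing the role of the $v_i$) and $W$-side (playing the role of the $w_i$) each share a single master list. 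The composition of these $2\ell$ equivalences will give the lemma.

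First I would verify the preconditions of \Cref{lem:forbidden-edges} for each application. Completeness of both preference profiles is preserved because the gadget extends preferences by $\pend$-suffixes and $M_1'$ is made perfect via $\pend$-pairs. The requirement $F^c_j\cap M_1^* = \emptyset$ holds because $M_1^*$ contains only pairs $\{a_{e,1},a_{e,4}\}$ and $\pend$-pairs, none of which appear in any $F^c_j$. The shared-master-list requirement is immediate from the explicit definitions in $\mathcal{I}^*$: the agents $\{a_{p,1}^c\}_p$ all share one master list, as do $\{a_{p,2}^c\}_p$, $\{a_{e,1}\}_e$, and $\{a_{e,4}\}_e$. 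The monotone order $v_i\succ_{w_k} v_j$ and $w_i\succ_{v_k} w_j$ for $i<j$ can be arranged by indexing the pairs of $F^c_j$ consistently with the shared master lists of both sides.

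The second step is to check that these preconditions survive later rounds. Each application introduces only $\mathcal{O}(1)$ new master lists, used solely by newly added gadget agents, and modifies the preferences of a participating $v_i$ (respectively $w_i$) by replacing its partner in $F^c_j$ with a single new gadget agent $a^{\lm}_{i,1}$ (respectively $a^{\rrm}_{i,s}$) at the same position. Consequently, for any $c'>c$ the relevant agents of round $c'$ still pairwise share master lists, the relative orderings supplying the monotone condition are untouched, and all preferences remain complete. Thus each of the $2\ell$ rounds is a legal application of \Cref{lem:forbidden-edges} and the cumulative overhead remains $\mathcal{O}(\ell)$ master lists as claimed.

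Finally I would compose the equivalences. Let $\mathcal{I}_0:=\mathcal{I}^*$ and let $\mathcal{I}_t$ denote the instance after the $t$-th application ($1\le t\le 2\ell$), so $\mathcal{I}_{2\ell}=\mathcal{I}$. For every $t$, \Cref{lem:forbidden-edges} gives: $\mathcal{I}_{t-1}$ admits a stable matching avoiding $F_{t}\cup\dots\cup F_{2\ell}$ with symmetric difference at most the current budget iff $\mathcal{I}_t$ admits a stable matching avoiding $F_{t+1}\cup\dots\cup F_{2\ell}$ with the budget shifted by $rs$, using that any stable matching in $\mathcal{I}_t$ automatically avoids $F_t$ (the gadget makes $F_t$-pairs too expensive, as shown in the proof of \Cref{lem:forbidden-edges}). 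Chaining these equivalences for $t=1,\dots,2\ell$ yields exactly the claimed biconditional, with $k$ obtained from $k^*$ by the sum of the round-wise budget adjustments. The main obstacle is the careful bookkeeping: verifying that none of the gadget master lists introduced in rounds $1,\dots,t-1$ interfere with the $v$-side/$w$-side master lists needed in round $t$, and tracking the additive budget shift across rounds; both are tedious but straightforward given the modular structure of \Cref{lem:forbidden-edges}.
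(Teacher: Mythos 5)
Your proposal is correct and takes essentially the same route as the paper: the paper proves \Cref{lem:equiv} simply by declaring it a direct consequence of applying \Cref{lem:forbidden-edges} once per color $c\in[\ell]$ with $F=\{\{a_{e,1},a_{p,1}^{c}\},\{a_{e,4},a_{p,2}^c\}\mid p\in[\nu m], e\in E\}$, and your chaining of the $2\ell$ per-round equivalences with the accumulated budget shifts is exactly that argument written out. Your splitting of each $F^c$ into the two bipartite-orientation classes is a sensible (and arguably necessary) refinement, since the $U$-side agents $a_{p,1}^c$ and $a_{e,4}$ of the combined set do not share a single master list as the hypothesis of \Cref{lem:forbidden-edges} literally requires.
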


We are now ready to prove the correctness of the reduction, thereby proving \Cref{pr:ISMWML} stated in the beginning of this section:

\ISMWML*
\begin{proof}
 We already described how to adapt the reduction from the proof of \Cref{thm:isr-master-lists}.
It remains to show its correctness.
  
  \paragraph*{Forward Direction}
  Let $X$ be a multicolored clique.
  We construct a stable matching~$M^*_2$ the same way as in the proof of \Cref{thm:isr-master-lists}.
  Note that in fact $M^*_2$ only contains pairs that occur in $\mathcal{I}^*$ and thus in particular respects that the instance is bipartite. 
  Further,  using the same arguments as in \Cref{thm:isr-master-lists}, we can prove that $M^*_2$ is stable in $\mathcal{P}^*_2$ and that it overlaps with $M^*_1$ in at least ${\ell}\choose{2}$ edges. 
  Thus we have that $|M^*_1\triangle M^*_2|\leq k^*$. 
  As $M^*_2$ contains no pairs from $\{\{a_{e, 1}, a_{p,1}^{c}\}, \{a_{e, 4}, a_{p, 2}^c\}  \mid p\in [\nu m], e\in E, c\in [\ell]\}$, using \Cref{lem:equiv} it follows that $\mathcal{I}$ is a yes instance. 
  
  \paragraph*{Backward Direction}
    Let $M_2$ be a stable matching in $\mathcal{P}_2$ with $|M_1\triangle M_2|\leq k$. 
    Then by \Cref{lem:equiv}, we get a stable matching in $\mathcal{P}^*_2$ with $|M^*_1\triangle M^*_2|\leq k^*$ which does not contain any edges from $\{\{a_{e, 1}, a_{p,1}^{c}\}, \{a_{e, 4}, a_{p, 2}^c\}  \mid p\in [\nu m], e\in E, c\in [\ell]\}$.
    Specifically, by the design of $k^*$ we have that $M^*_1$ and $M^*_2$ overlap in 
  ${\ell}\choose{2}$ edges. 
  
  Before we proceed, we show that in $M^*_2$ no agent is matched to an agent that appears in the   $\pend $ part of its preferences in $\mathcal{P}_2^*$. 
  \begin{lemma}\label{lem:pend}
    Let $M^*_2$ be a stable matching in $\mathcal{P}^*_2$ with~$|M^*_1 \triangle M^*_2| \le k^*$.
    Then any agent $a \in A^*$ is matched to an agent appearing before $\pend $ in its preferences.
    Moreover, for any edge~$ e\in E$, matching~$M_2^*$ contains either~$ \{a_{e, 1}, a_{e, 2}\} $ and $\{a_{e, 3}, a_{e,4}\} $ or $\{a_{e, 1}, a_{e, 4}\}$ and $\{a_{e, 2}, a_{e,3}\}$.
  \end{lemma}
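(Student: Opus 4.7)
The proof splits into two phases: first establishing the canonical edge-gadget configurations (the ``moreover'' claim), then ruling out $\pend$-matchings for vertex-selection agents. A key ingredient throughout is that, via the application of \Cref{lem:equiv} in the backward direction, $M^*_2$ may be taken to contain no pair from $\{\{a_{e,1}, a_{p,1}^c\}, \{a_{e,4}, a_{p,2}^c\} : p \in [\nu m], e \in E, c\in [\ell]\}$, a fact I will refer to as the \emph{forbidden-pair exclusion}.

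For the first phase, I proceed by induction on $s \in [m]$. Assuming all prior edge gadgets are already canonically matched, the agents $a_{e_{s'}, j}$ for $s' < s$ are unavailable as partners. Since $a_{e_s, 2}$'s preferences start with the tied pair $a_{e_s, 1} \sim a_{e_s, 3}$ and $a_{e_s, 1}$'s top choice is $a_{e_s, 2}$, weak stability forces $a_{e_s, 2}$ to be matched to $a_{e_s, 1}$ or $a_{e_s, 3}$, because otherwise $a_{e_s, 1}$ would have to be matched to an agent it strictly prefers to $a_{e_s, 2}$ (impossible: such agents are earlier edge-gadget agents taken by induction) and $\{a_{e_s, 1}, a_{e_s, 2}\}$ would block. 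In the sub-case $\{a_{e_s, 2}, a_{e_s, 1}\} \in M^*_2$, the mutual top-choice pair $\{a_{e_s, 3}, a_{e_s, 4}\}$ must belong to $M^*_2$ by stability. In the sub-case $\{a_{e_s, 2}, a_{e_s, 3}\} \in M^*_2$, the agents $a_{e_s, 1}$ strictly prefers to $a_{e_s, 4}$ (namely $a_{e_s, 2}, a_{(s-1)\nu+1, 1}^{c_s}, a_{(s-1)\nu+1, 1}^{d_s}$) are either taken or forbidden, and symmetrically the agents $a_{e_s, 4}$ strictly prefers to $a_{e_s, 1}$ (namely $a_{e_s, 3}, a_{s\nu, 2}^{c_s}, a_{s\nu, 2}^{d_s}$) are either taken or forbidden, together with earlier edge agents; stability then forces $\{a_{e_s, 1}, a_{e_s, 4}\} \in M^*_2$.

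For the second phase, fix a color $c$ and suppose towards contradiction that some $a_{p, 1}^c$ is matched to a $\pend$ partner. Since $a_{p, 1}^c$ strictly prefers every $a_{j, 4}^c$ to that partner, stability demands each of the $\largeConst$ women $a_{j, 4}^c$ be matched to a man strictly preferred over $a_{p, 1}^c$; but only $a_{1, 1}^c, \dots, a_{p-1, 1}^c$ qualify, a set of size $p - 1 < \largeConst$, giving a pigeonhole contradiction. A symmetric argument with $a_{p, 3}^c$ and the women $a_{j, 2}^c$ (whose men strictly preferred to $a_{p, 3}^c$ are only $a_{1, 3}^c, \dots, a_{p-1, 3}^c$) rules out $\pend$-matching of $a_{p, 3}^c$. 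Finally, Phase 1 ensures edge-gadget agents are unavailable as partners for vertex-gadget agents, and the forbidden-pair exclusion rules out $\{a_{p, 2}^c, a_{e, 4}\}$ as a partner option; thus the $2\largeConst$ men $a^c_{\cdot, 1}, a^c_{\cdot, 3}$, now all matched before $\pend$, pair precisely with the $2\largeConst$ women $\{a^c_{j, 2}, a^c_{j, 4}\}$, so these women are also matched before $\pend$. The main obstacle is the delicate sub-case of Phase 1 where $\{a_{e_s, 2}, a_{e_s, 3}\} \in M^*_2$: weak stability of $\{a_{e_s, 1}, a_{e_s, 2}\}$ gives no immediate contradiction because of the tie $a_{e_s, 1} \sim a_{e_s, 3}$ in $a_{e_s, 2}$'s preferences, so the contradiction must be extracted from the pair $\{a_{e_s, 1}, a_{e_s, 4}\}$, which requires simultaneously invoking the forbidden-pair exclusion on both sides of the edge and the induction hypothesis.
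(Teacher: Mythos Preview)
Your proof is correct. Phase~1 (the edge-gadget induction) is essentially the paper's argument, though you spell out the two sub-cases more explicitly; the paper compresses this into a single minimality-plus-counting step (``if some $a_{e_i,j}$ with $j\in\{1,3\}$ is not matched in $\{a_{e_i,2},a_{e_i,4}\}$, then by counting some $a_{e_i,j'}$ with $j'\in\{2,4\}$ is not matched in $\{a_{e_i,1},a_{e_i,3}\}$, and these two form a blocking pair'').

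Phase~2 is where you genuinely diverge. The paper attempts to replay the same minimal-index blocking trick for the vertex-selection agents, asserting that if $M_2^*(a^c_{p,j})\notin\{a^c_{p,2},a^c_{p,4}\}$ for minimal~$p$ then one finds a blocking pair $\{a^c_{p,j},a^c_{p,j'}\}$. As written this does not establish the $\pend$-claim: the condition ``$\notin\{a^c_{p,2},a^c_{p,4}\}$'' is not the same as ``matched after $\pend$'' (e.g.\ $a^c_{1,1}$ could be matched to $a^c_{2,2}$), and in such a configuration neither candidate pair $\{a^c_{p,1},a^c_{p,2}\}$ nor $\{a^c_{p,1},a^c_{p,4}\}$ need be blocking. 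Your pigeonhole argument sidesteps this entirely: if $a^c_{p,1}$ sits after $\pend$, then each of the $\nu m$ women $a^c_{j,4}$ must be matched into the $p-1$ men $a^c_{1,1},\dots,a^c_{p-1,1}$, which is impossible. This is both simpler and actually correct for the statement at hand; the paper's argument would need repair (or a charitable reinterpretation) to close the same gap.
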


  \begin{proof}
   We first show the second part of the statement.
   By \Cref{lem:forbidden-edges}, for every~$ r \in [m]$ and $e' \in E$, $a_{e', 1}$ is matched to neither~$a^{c_r}_{\nu (r-1) + 1, 1}$ nor~$a^{d_r}_{\nu (r-1) + 1, 1}$.
   Similarly, for every~$r\in [m]$ and $e' \in E$, $a_{e', 4} $ is matched to neither~$a^{c_r}_{\nu r, 2}$ nor~$a^{d_r}_{\nu r, 2}$. 
   Assume towards a contradiction that there exists some~$a_{e_i, j}$ for~$j \in \{1, 3\}$ with $M^*_2 (a_{e_i, j}) \notin \{a_{e_i, 2}, a_{e_i, 4}\}$, where $i$ is minimal.
   Then using our above observations, there also exists some~$j' \in \{2, 4\}$ with $M^*_2 (a_{e_i, j'}) \notin \{a_{e_i, 1}, a_{e_i, 3}\}$.
   Then $\{a_{e_i, j}, a_{e_i, j'}\} $ blocks~$M_2^*$, a contradiction to the stability of~$M^*$.
   
   To prove the first part, it remains to consider agents~$a^c_{p, j}$.
   The proof is analogous to the case of an agent~$a_{e_i,j}$:
   By \Cref{lem:forbidden-edges},  $a^c_{p, 1}$ is not matched to~$a_{e, 1}$ for every~$c\in [\ell]$, $ p \in [\nu m]$ and $e \in E$.
   Similarly, $a^c_{p, 2} $ is not matched to~$a_{e, 4}$ for every~$c\in [\ell]$, $p \in [\nu m]$ and $e \in E$.
   Assume towards a contradiction that there exists some~$a^c_{p, j}$ for~$j \in \{1, 3\}$ with $M^*_2 (a^c_{p, j}) \notin \{a^c_{p, 2}, a^c_{p, 4}\}$, where $p$ is minimal.
   Then, using our above observations, there also exists some~$j' \in \{2, 4\}$ with $M^*_2 (a^c_{p, j'}) \notin \{a^c_{p, 1}, a^c_{p, 3}\}$.
   Then $\{a^c_{p, j}, a^c_{p, j'}\} $ blocks~$M_2^*$, a contradiction to the stability of~$M^*$.
  \end{proof}
  Using this and our above observation on forbidden edges it follows that $M^*_2$ contains only edges inside vertex-selection and inside edge gadgets.
  Note that for each edge~$e_r = \{v^c_p, v^d_q\}$, matching~$M^*_2$ can contain~$\{a_{e_r, 1}, a_{e_r, 4}\}$ and $\{a_{e_r ,2 }, a_{e_r, 3}\}$ only if agent~$a^c_{\nu (r-1 ) + 1, 1}$ is matched at least as good as $a^c_{\nu (r-1 ) + p, 2}$ (otherwise~$\{a^c_{\nu (r-1) + 1, 1}, a_{e_r, 1}\}$ would block~$M_2^*$) and agent~$a^c_{\nu r, 2}$ is matched at least as good as $a^c_{\nu r - p + 1, 1}$ (otherwise~$\{a^c_{\nu r, 2}, a_{e_r, 4}\}$ would block~$M_2^*$). Symmetrically, $a^d_{\nu (r-1 ) + 1, 1}$ must be matched at least as good as $a^d_{\nu (r-1 ) + q, 2}$ (otherwise~$\{a^d_{\nu (r-1) + 1, 1}, a_{e_r, 1}\}$ would block~$M_2^*$) and agent~$a^d_{\nu r, 2}$ is matched at least as good as $a^d_{\nu r - q + 1, 1}$ (otherwise~$\{a^d_{\nu r, 2}, a_{e_r, 4}\}$ would block~$M_2^*$).
  From here on, the proof is along the lines of the proof of \Cref{thm:isr-master-lists}.
\end{proof}
 
\section{Conclusion}

Among others, answering two open questions of Bredereck et al.~\cite{DBLP:conf/aaai/BredereckCKLN20}, we have contributed to the study of the computational complexity of adapting stable matchings to changing preferences. 
From a broader algorithmic perspective, in particular, the ``propagation'' technique from our XP-algorithm for the number of swaps, and the study of the number of different preference lists/master lists as a new parameter together with the needed involved constructions for the two respective hardness proofs could be of interest. 

There are several possibilities for future work. As direct open questions, for the parameterization by the number of outliers, we do not know whether \ISMT or \ISRT
are fixed-parameter tractable. Moreover, it remains open whether \ISR or \ISMT with complete preferences is polynomial-time solvable for a constant number of master lists.

While we have already considered some (new) parameters measuring the distance of a preference profile from the case of a master list (e.g., the number of outliers and the number of master lists), there also exist various other possibilities to measure the similarity (or, more generally, the structure) of the preferences.
Possible additional parameters to measure the similarity to a master lists that might be worthwhile to explore further include, for instance, the maximum or average distance of an agent's preference list from the master list.  

 \subsection*{Acknowledgments}
 NB was supported by the DFG project MaMu (NI 369/19)
	and by the DFG project ComSoc-MPMS (NI 369/22). KH
	was supported by the DFG Research Training Group 2434
	``Facets of Complexity'' and by the DFG project FPTinP (NI
	369/16).
	
 \newpage

\end{document}